
\documentclass{article}
\usepackage{microtype}
\usepackage{graphicx}
\usepackage{booktabs} 
\usepackage{adjustbox}
\usepackage{comment}
\usepackage{hyperref}


\usepackage[accepted]{icml2023}
\usepackage{amsmath}
\usepackage{amssymb}
\usepackage{mathtools}
\usepackage{diagbox}
\usepackage{amsthm}
\usepackage{bbm}
\usepackage[capitalize,noabbrev]{cleveref}
\usepackage{multirow}
\usepackage{dsfont}
\usepackage{fancyhdr}
\usepackage{graphicx}
\usepackage{fancybox}
\usepackage{makecell}
\usepackage{float}
\usepackage{color}
\usepackage{subfig}
\usepackage{enumitem}
\theoremstyle{plain}
\newtheorem{theorem}{Theorem}[section]
\numberwithin{theorem}{section}
\newtheorem{proposition}[theorem]{Proposition}
\newtheorem{lemma}[theorem]{Lemma}
\newtheorem{corollary}[theorem]{Corollary}
\theoremstyle{definition}
\newtheorem{definition}[theorem]{Definition}

\newtheorem{remark}[theorem]{Remark}

\numberwithin{equation}{section}


\renewcommand{\P} {\mathbb{P}}


\DeclareMathOperator*{\argmax}{arg\,max}


\newcommand{\beq}{ \begin{equation} }
	\newcommand{\eeq}{ \end{equation} }







\newcommand{\caE}{{\mathcal E}}

\newcommand{\caG}{{\mathcal G}}

\newcommand{\caN}{{\mathcal N}}

\newcommand{\caS}{{\mathcal S}}


\newcommand{\bbR}{{\mathbb R}}



\newcommand{\bss}{{\boldsymbol s}}
\newcommand{\bst}{{\boldsymbol t}}

\newcommand{\sign}{{\mathbf{Sign}}}

\usepackage[textsize=tiny]{todonotes}

\icmltitlerunning{Efficient Algorithms for Exact Graph Matching on
           Correlated Stochastic Block Models with Constant Correlation}

\begin{document}

\twocolumn[
\icmltitle{Efficient Algorithms for Exact Graph Matching on\\
           Correlated Stochastic Block Models with Constant Correlation}



\icmlsetsymbol{equal}{*}

\begin{icmlauthorlist}
\icmlauthor{Joonhyuk Yang}{equal,sch}
\icmlauthor{Dongpil Shin}{equal,sch}
\icmlauthor{Hye Won Chung}{sch}
\end{icmlauthorlist}

\icmlaffiliation{sch}{School of Electrical Engineering, KAIST, Daejeon, Korea}

\icmlcorrespondingauthor{Hye Won Chung}{hwchung@kaist.ac.kr}

\icmlkeywords{Machine Learning, ICML}

\vskip 0.3in
]



\printAffiliationsAndNotice{\icmlEqualContribution} 

\begin{abstract}
We consider the problem of graph matching, or learning vertex correspondence, between two correlated stochastic block models (SBMs).
The graph matching problem arises in various fields, including computer vision, natural language processing and bioinformatics, and in particular, matching graphs with inherent community structure has significance related to de-anonymization of correlated social networks. 
Compared to the correlated Erd\H{o}s-R\'enyi (ER) model, where various efficient algorithms have been developed, among which a few algorithms have been proven to achieve the exact matching with constant edge correlation, no low-order polynomial algorithm has been known to achieve  exact matching for the correlated SBMs with constant correlation. In this work, we propose an efficient algorithm for matching graphs with community structure, based on the comparison between partition trees rooted from each vertex, by extending the idea of \citet{MRT21a} to graphs with communities. The partition tree divides the large neighborhoods of each vertex into disjoint subsets using their edge statistics to different communities. Our algorithm is the first low-order polynomial-time  algorithm achieving  exact matching between two correlated SBMs with high probability in dense graphs. 
\end{abstract}

\section{Introduction}\label{sec:introduction}

	Graph matching aims to align two (or more) graphs to reveal a bijection between the vertex sets such that the number of aligned edges is maximized. Given two graphs with $n$ vertices, graph matching finds a solution for a quadratic assignment problem (QAP), $\max _{\Pi \in S_{n}} \langle A, \Pi B \Pi^{\top} \rangle$ over the set of all $n \times n$ permutation matrices $S_{n}$, where $A$ and $B$ denote the adjacency matrices of the two graphs.
	This problem has been studied in various fields, including social network analysis \citep{NS09}, computer vision \citep{SS05}, pattern recognition \citep{CFSV04}, natural language processing \citep{HNM05}, machine learning \citep{LQ12}, and bioinformatics \citep{KHGM16,chen2006detecting}. 

Although the QAP is known to be NP-hard in the worst case  \citep{BCPP98}, the graph matching can be solved in polynomial time for the average case of random graph models. Thus, many previous works have studied the graph matching for random graph models, especially for Erd\H{o}s-R\'enyi (ER) graphs. 
In particular, the correlated ER graph model proposed by \citet{PG11} has been widely studied. In this model, there exists a parent ER graph $G_0\sim \mathcal{G}\left(n, p/(1-\alpha)\right)$, and two subgraphs $G$ and $G'$ are obtained by independent subsampling of $G_0$, where each edge of $G_0$ is removed independently with probability $\alpha$. The parameter $\alpha$ indicates the noise level, and $1-\alpha$ is the correlation between $G$ and $G'$. Assuming a permutation $\pi:[n]\to[n]$ and denoting by $G^\pi$ the graph obtained by permuting the vertices of $G$ with $\pi$, exact graph matching aims to recover $\pi$ from the two graphs $G^\pi,G'\sim  \mathcal{G}(n, p)$.

For the correlated ER model, many works have focused on two fundamental questions: 1) deriving the information-theoretic limit on $(n,p,\alpha)$ for exact matching, and 2) developing polynomial-time algorithms for recovering $\pi$.
Regarding the first question, it was shown in \citep{cullina2016improved,wu2022settling} that exact matching is achievable if $np(1-\alpha)\geq (1+\epsilon)\log n$ for any $\epsilon>0$ where $p/(1-\alpha)=o(1)$. This limit implies that graph matching is information-theoretically possible for constant $\alpha$ if $np=\Theta(\log n)$ and even for $\alpha$ close to 1 if $np\gg \log n$. However, achieving this limit with polynomial-time algorithms is still open, although various efficient algorithms have been proposed \cite{DWMX21,pmlr-v119-fan20a,MRT21b}. Most of these algorithms require the noise level $\alpha$ to be arbitrarily close to 0 in order to guarantee exact matching. Some recent work has developed polynomial-time algorithms for constant correlation in the spare regime \citep{MRT21a} or in the sparse and dense regimes but with $\alpha$ less than some specific constant \citep{MWXY22}. 

While this previous line of work has revealed several important aspects in matching correlated random graphs, it is still largely open how these results generalize to more practical random graph models beyond the Erd\H{o}s-R\'enyi (ER) model. 
In particular, matching graphs with inherent community structure has relevance to real-world applications such as de-anonymization of social networks, but the investigation of efficient graph-matching algorithms for the graphs with community structure has been largely unexplored. 

In this paper, we consider exact graph matching for random graphs with community structure, and develop a polynomial-time matching algorithm for constant edge-correlation. We focus on the correlated stochastic block models (SBMs), where the parent graph $G_0$ is assumed to be sampled from the SBM, which is known to be one of the most natural generative models for networks with community structure. 
To the best of our knowledge, our algorithm is the first low-order polynomial-time algorithm that guarantees exact matching of the correlated SBMs with constant correlation.

	\subsection{Correlated Stochastic Block Models}\label{sec:model}
	Consider an undirected graph of $n$ vertices with a planted partition. 
	Suppose the vertex set $[n]$ is partitioned into disjoint subsets of $k\geq 1$ communities, $C_1, C_2,\dots, C_k$, where the number of vertices in community $C_i$ is $|C_i|=n_i$ and $\sum_{i=1}^k n_i=n$. Without loss of generality, we assume that $n_1\geq n_2\geq \cdots \geq n_k:=n_{\min}$.
	Given the partition $\{C_i\}_{i=1}^k$, the correlated stochastic block models \citep{onaran2016optimal} are parameterized by $p,q \in [0,1]$, $p>q$, and $\alpha \in [0,1)$.
	We assume that there exists a parent graph $G_0$ with the given partition $\{C_i\}_{i=1}^k$, where the edges between each pair of vertices are drawn independently as follows: $u\in C_i$ and $v\in C_j$ are connected with probability $p/(1-\alpha)$ if $i=j\in[k]$ and with probability $q/(1-\alpha)$ otherwise. Then two subgraphs $G$ and $G'$ are obtained by independent subsampling of $G_0$ as follows: $G$ is obtained by removing each edge of $G_0$ independently with probability $\alpha$, and $G'$ is obtained independently in the same way as $G$. Note that
$
		\P\{(i,j)\in \mathcal{E}(G) |  
			(i,j)\in \mathcal{E}(G') \} 
			= 1-\alpha
$ where $\mathcal{E}(G)$ is the edge set of the graph $G$.
	 Given a permutation $\pi:[n]\to[n]$, $G^\pi$ is the graph obtained by permuting the vertices of $G$ by $\pi$.
Our goal is to design a polynomial-time algorithm that can exactly recover the permutation $\pi$ at the constant noise level $\alpha$. We will prove the performance guarantees of the proposed algorithm in terms of the parameters $(p,q, \alpha, k, n_{\min})$, both for the cases with and without the knowledge of the exact community structure $\{C_i\}_{i=1}^k$.

	\subsection{Prior Work and Main Question}
	    \begin{table*}[t]
    \centering
    \caption{Comparison with literature for the exact matching of the correlated SBMs with $k$-balanced communities of size $m$.}
    \label{tbl:previous algorithm}
   \small{    \begin{tabular}{c|c|c|c}
    \toprule
      Algorithm & Density & Noise level & Time complexity   \\ \midrule
      Black swan 
      \citep{BCL18+} & $ m p \geq m^{o(1)}$ & $\alpha \leq 1-(\log m)^{-o(1)}$ & $k\times m^{O(\log m)}$\\
  Degree profile 
  \citep{DWMX21} & $mp \geq (\log m)^{C}$ & $\alpha \leq (\log m)^{-C}$ & $k\times {O}(m^{3}p^{2}+m^{2.5})$ \\
  GRAMPA 
  \citep{FMWX19a} &$mp\geq (\log m)^C$ &$\alpha \leq (\log m)^{-C} $ & $k\times O(m^3)$ \\
     Binary tree 
     \citep{MRT21a} & $ (1+\epsilon) \log m \leq mp\leq m^{{1}/{(C \log \log m)}}$ & $\alpha \leq \min\{\text{const.}, \epsilon /4\} $ & $k\times m^{2+o(1)}$   \\ 
     
     Chandelier 
      \citep{MWXY22}& $(1+\epsilon) \log m \leq mp(1-\alpha)$ & $\alpha^{2} \leq  1-\sqrt{0.338} $ & $\geq k\times m^{25}$ \\ 
     \midrule
    \multirow{2}{*}{Our result ($2^{k'}$-ary partition tree)} & $(\log m)^C \leq mp \leq m^{1/20}$  & \multirow{2}{*}{$\alpha \leq \text{const.}$ } & \multirow{2}{*}{$k\times O(m^{4}p) $}  \\
    & $mq=\Omega\left((\log \log m)^{2}\right)$ & \\  \bottomrule
    \end{tabular}}
    \end{table*}

	Graph matching on correlated stochastic block models has been studied from two different aspects: first, when the community structure is revealed in both $G^\pi$ and $G'$, what are the information-theoretically feasible regimes where exact matching is possible so that the vertex identities can be de-anonymized; second, when the goal is to recover the community structure of the original graph $G_0$ from the subsampled graph $G'$, what is the regime where having $G^\pi$ as side information can be beneficial? The first aspect was studied in \citet{onaran2016optimal,CSKM16} for the two-community case ($k=2$), and the sufficient condition for the optimal maximum likelihood estimator to recover $\pi$ with high probability is shown to be $(1-\alpha) (p+q)n/2>2(\log n)$ when $n_1=n_2=n/2$.
The second aspect was studied in \cite{RS21} for the case of two equal-sized communities, and it was shown that even if the communities of $G^\pi$ and $G'$ are not revealed, the maximum likelihood estimator can recover $\pi$ when $(1-\alpha) (p+q)n/2>(\log n)$, which matches the impossibility results for the exact matching proved in \citep{CSKM16}. Note that $(1-\alpha) (p+q)n/2>(\log n)$ is the necessary condition to guarantee that there are no isolated vertices in $G\cap G'$ with high probability. The result of \citet{RS21} implies that there exist regimes of $(p,q,\alpha)$ where having the correlated graph $G^\pi$ can bring an information advantage in recovering the community of $G'$ (or $G_0$) through the graph matching between $G^\pi$ and $G'$.

These previous works establish the information-theoretic thresholds for exact graph matching of the correlated SBMs with two communities, and in particular, it is shown that when $p=\Theta(q)=\Omega(\log n/n)$, graph matching is information-theoretically possible with constant correlation $(1-\alpha)$, even if the community structure is not revealed in both graphs. However, no polynomial-time algorithm has been proven to guarantee exact graph matching for the correlated SBMs with constant correlation. 

When the partition of vertices based on their community labels is revealed in both $G^\pi$ and $G'$, it is possible to apply the graph matching algorithm for the correlated ER model to each of the communities, since the subgraphs induced by $C_i$ in $G^\pi$ and $G'$ are the correlated ER graphs with a parent graph $G_0(C_i)\sim \mathcal{G}(n_i, p/(1-\alpha))$.
In Table \ref{tbl:previous algorithm}, we summarize the results of the previous algorithms, originally developed for the correlated ER model, when applied to each recovered community $(C_1,\cdots,C_k)$ of the correlated SBMs. Here we simply assume that all the communities have the same size of $m:=n/k$.

The `Black swan' algorithm \citep{BCL18+} allows the largest noise level of $\alpha=1-(\log m)^{-o(1)}$, but it requires quasi-polynomial time complexity $m^{\Theta(\log m)}$. This algorithm defines a rich family of rare subgraphs called `black swans', each consisting of $\Theta(\log m)$ vertices, and finds the rare subgraphs that appear in both graphs with $m^{\Theta(\log m)}$ time complexity. The `Chandelier' \citep{MWXY22} algorithm also uses the idea of `subgraph counting', but with a family of unbalanced rooted trees of size $\Theta(\log m)$, called `chandeliers', which can be counted approximately in polynomial time. This algorithm succeeds in graph matching for both sparse and dense regimes with constant correlation,
but the time complexity scales in high-order polynomials, $\Omega(m^{25})$.

Among computationally more efficient methods with time complexity $O(m^3)$, the only algorithm that has been proven to guarantee exact matching with constant correlation is the `Binary (partition) tree' algorithm by \citet{MRT21a}.
The main idea is to define a signature vector of dimension $\Theta(\log m)$ associated with each vertex $i\in[m]$ by using the degree information of a large neighborhood of each $i$. By measuring the similarity between the signature vectors, the algorithm recovers the exact match for $m$ vertices.
The signature vector is defined by constructing a depth-$\ell$ binary partition tree rooted at each vertex $i\in[m]$, where at each depth $r\in[\ell]$ the neighborhood around the vertex $i$ with radius-$r$ is partitioned into $2^{r}$ disjoint subsets. To define a signature vector of dimension $2^\ell=\Theta(\log m)$, the depth of the binary partition tree needs to be $\ell=\Theta(\log\log m)$. Thus, the graph should be sparse as $mp\leq m^{1/(C\log\log m)}$ to avoid generating a loop in the partition tree, and this limits the application of this algorithm in the denser regime. 

The main open question is how to construct an efficient matching algorithm for the correlated SBMs in the dense regime. 
We focus on the fact that the edges between communities have not been exploited in the previous algorithms. By using the correlation of both intra- and inter-community edges, as well as the known community structure, we develop a matching algorithm using the similarity between the signature vectors of the correct pairs, as in \citet{MRT21a}, but by constructing $2^{k'}$-ary  (rather than binary) partition trees, which effectively reduces the required depth of the tree and thus successfully operates in the dense regime. 

	

\subsection{Main Results}
	
	Our main contribution is the development of a low-order polynomial-time algorithm for the exact matching of the correlated SBMs 
 with constant correlation in the dense regime.
	\begin{theorem}[Exact matching for the correlated SBMs with known community structure]\label{thm:main_dense}
There exist absolute constants $\alpha_1, M, M'>0$ with the properties below. 
Consider the two graphs $G^\pi$ and $G'$, which are generated from the correlated SBMs defined in Sec. \ref{sec:model}, 
with the underlying permutation $\pi$ and correlation $1-\alpha$. Suppose that the community labels $(C_1,C_2,\ldots,C_k)$ are given in both graphs, and assume that $C_k$ is the smallest community of size $n_{\min}$. Assume that $\alpha\in (0,\alpha_1)$, $n_{\min} =\Omega(n^{10/19})$ and 
 \begin{equation}\label{eqn:main_thm_cond1}
			(\log n_{\min})^{1.1} \leq n_{\min}p \leq n_{\min}^{1/20},
\end{equation} 
\begin{equation}\label{eqn:main_thm_cond2}
    n_{\min}q\geq M' (\log \log n_{\min})^{2} ,
\end{equation}
\begin{equation}\label{eqn:main_thm_cond3}
    k \geq \left(\frac{M(\log \log n_{\min}) \cdot (\log n_{\min}p)}{\log n_{\min}} \vee 3\right).
\end{equation}    
Then, there exists a polynomial-time algorithm that recovers $\pi$ exactly with high probability as $n\to\infty$.  The complexity of the algorithm scales as $O(km^4 p)$ for the balanced communities of size $m:=n/k$.
	\end{theorem}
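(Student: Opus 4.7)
The plan is to generalize the partition-tree-signature framework of \citet{MRT21a} from the ER setting to graphs with communities. For each vertex $v$, I would associate a $2^{k'}$-ary partition tree of depth $\ell$ built independently in $G^\pi$ and in $G'$, and extract from it a signature vector $\phi(v) \in \R^{D}$ with $D = 2^{k'\ell}$. The matching would then be determined by pairing, inside each community $C_i$, each vertex $v \in C_i$ of $G'$ with the $u \in C_i$ of $G^\pi$ whose signature is closest. The parameters must be tuned so that $D \gg \log n$ (to enable a union bound over all pairs) while $(n_{\min} p)^{\ell} = o(n_{\min})$ (to keep the depth-$\ell$ neighborhoods tree-like): taking $\ell$ to be a small absolute constant and $k' = \Theta(\log\log n_{\min})$ achieves both, and the lower bound on $k$ in the hypothesis is precisely what permits choosing $k' \leq k$ marker communities.

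For the construction, the tree at $v$ is built inductively: at depth $0$ it is $\{v\}$, and given the depth-$r$ partition of the $r$-neighborhood into $2^{k'r}$ labeled subsets, each subset $S$ is expanded to its new neighbors at distance $r+1$ and these are further partitioned into $2^{k'}$ groups using $k'$ binary statistics based on counts of inter-community edges into pre-selected marker communities $C_{j_1}, \ldots, C_{j_{k'}}$. The condition $n_{\min} q = \Omega((\log\log n_{\min})^{2})$ is used at this stage to ensure, via Bernstein's inequality, that each such count concentrates enough that the $k'$ binary labels agree between $G^\pi$ and $G'$ with probability $1 - o(1/n)$ per vertex, even after $\alpha$-level subsampling. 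The signature $\phi(v)$ would then record the normalized sizes of the $D$ leaves at depth $\ell$.

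Two probabilistic estimates would then drive the matching. First, for the correct pair $(u, \pi(u))$, since the intersection graph $G \cap G'$ retains each $G_0$-edge with probability $(1-\alpha)^2$, a branching-process comparison would show that the partition trees agree on a $c_1(\alpha) > 0$ fraction of vertices at every depth, yielding $\norm{\phi_{G^\pi}(u) - \phi_{G'}(\pi(u))}_2 \leq (1 - c_2(\alpha)) \norm{\phi_{G^\pi}(u)}_2$ for an absolute $c_2(\alpha) > 0$. Second, for $(u, v')$ with $v' \neq \pi(u)$ in the same community, the two signatures are built from essentially independent subsets of vertices (provided $D \ll n_{\min}$), and $D$-dimensional anti-concentration would force their distance above $(1 + c_3) \norm{\phi(u)}_2$ with probability at least $1 - n^{-c_4}$ for any target $c_4$. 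A union bound over the $O(n^2)$ pairs closes the matching argument.

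The hardest part will be the multiplicative propagation of errors across the $\ell$ depths while retaining uniform control over all vertices. Each refinement step introduces a small mismatch between the $G^\pi$ and $G'$ trees of the correct pair, and these compound; the compounded error must stay well below the gap between the correct-pair and wrong-pair signature distances established above. The conditions $n_{\min} p \geq (\log n_{\min})^{1.1}$ and $n_{\min} = \Omega(n^{10/19})$ are precisely what keep each of the $O(n \cdot D)$ concentration events at every level successful with probability $1 - n^{-\omega(1)}$, closing the union bound. A further subtlety beyond \citet{MRT21a} is that the randomness used for partitioning comes from inter-community edges (with density $q$, not $p$), so the informative signal is really the gap $p - q$, and the Bernstein estimates must exploit this gap carefully at every level of the tree.
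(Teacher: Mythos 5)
Your proposal follows the paper's high-level strategy — $2^{k'}$-ary partition trees built inside a community using inter-community edge statistics to split at each depth, then comparison of resulting signature vectors, with the parameter constraint $k'\ell=\Theta(\log\log n_{\min})$ and $(n_{\min}p)^\ell=O(n_{\min})$. But there are several concrete gaps that would break the proof.

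\textbf{Sign agreement is only a constant, not $1-o(1/n)$.} You claim that, under $n_{\min}q=\Omega((\log\log n_{\min})^2)$, the $k'$ binary labels $\sign(\deg^a_G(j)-n_aq)$ and $\sign(\deg^a_{G'}(j)-n_aq)$ agree with probability $1-o(1/n)$ per vertex. This is false. The subsampling fluctuation $\deg^a_G(j)-\deg^a_{G'}(j)$ is of order $\sqrt{\alpha\,n_aq}$, while $\deg^a_{G_0}(j)$ typically falls within $O(\sqrt{n_aq})$ of the threshold $n_aq$; so the two signs disagree with probability $\Theta(\sqrt\alpha)$ — a fixed constant, independent of $n$. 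You cannot increase it to $1-o(1/n)$ by raising $n$; the paper's Lemma on degree correlation gives exactly $1-\epsilon$ with $\epsilon$ depending on $\alpha$. Your later claim that a branching-process argument gives agreement on a ``$c_1(\alpha)>0$ fraction of vertices at every depth'' is the correct mechanism, but it is inconsistent with the $1-o(1/n)$ claim, and it gives overlap $(1-O(\epsilon))^{k'\ell}$, a polylog factor, not a constant. The entire normalized-distance argument has to be designed around this decaying overlap, which drives the paper's choice of thresholds and the $2w(1-(\log m)^{-0.1})$ comparison in Algorithm 2.

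\textbf{Missing sparsification.} For $\bss^\ell\neq \bst^\ell$, the leaves $T^\ell_{\bss^\ell}(i,G)$ and $T^\ell_{\bst^\ell}(i,G')$ of the \emph{same} vertex can intersect substantially, so the $D=2^{k'\ell}$ signature coordinates are dependent. A direct second-moment/union-bound argument over all coordinates does not close; the paper (following \citet{MRT21a}) restricts the comparison to a random subset $J\subset\{-1,1\}^{k'\ell}$ of poly-log size and proves that, with high probability, only $o(\mathrm{polylog})$ members of $J$ have large cross-intersections (the event $\mathcal G$), restoring near-independence. Your proposal omits this step, and Lemma 5.5 of MRT21a is precisely what makes the summed normalized-squared statistic concentrate.

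\textbf{Per-community application does not extend.} The paper runs the tree/signature machinery only in $C_k$, the smallest community, then spreads by seeded matching (Algorithm 5). You propose running it on every community. This would require $n_ip\le n_i^{1/20}$ for every $i$, but the theorem only imposes that for $C_k$; when the community sizes are significantly unbalanced, a larger community can violate the upper density bound and its depth-$\ell$ neighborhoods stop being trees. The seeded stage exists precisely to avoid reproving tree conditions in each community.

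\textbf{Two smaller misattributions.} The gap $p-q$ plays no role in matching once communities are revealed; it is only needed for the community-detection step of Corollary 1.2, not Theorem 1.1. And $\ell$ is not always a ``small absolute constant'': at the sparse end $n_{\min}p = (\log n_{\min})^{1.1}$ the paper sets $\ell=\Theta(\log n_{\min}/\log\log n_{\min})$ and $k'=O(1)$; your constant-$\ell$ parameterization only covers the densest part of the regime.
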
	
Even if the community structure is not revealed in both graphs, we can apply our matching algorithm by imposing an additional assumption on $n_{\min}$, which is sufficient for a polynomial-time algorithm to recover the community structure in both graphs. We combine Theorem \ref{thm:main_dense} with the result of \citet{yan2018provable}, which proposed a semidefinite programming (SDP) for community detection with provable guarantees.
	\begin{corollary}[Exact matching for the correlated SBMs with unknown community structure]\label{cor:main_all}
There exist absolute constants $\alpha_1,M,M’,M_1>0$ with with the properties below. 
Consider the two graphs $G^\pi$ and $G'$, which are generated from the correlated SBMs defined in Sec. \ref{sec:model}, 
with the underlying permutation $\pi$ and correlation $1-\alpha$. Assume that the sizes of all communities are different and the smallest community size is $n_{\min}$. Assume that $\alpha\in(0,\alpha_1/2)$ and \eqref{eqn:main_thm_cond1}, \eqref{eqn:main_thm_cond2} and \eqref{eqn:main_thm_cond3} hold. Also assume that
\begin{equation}\label{eq:main cor dense nmin}
    n_{\min} \geq M_{1}\left( \frac{\sqrt{np}}{p-q} \vee \frac{p\log n}{(p-q)^{2}}\right).
\end{equation}
    Then, there exists a polynomial-time algorithm that recovers $\pi$ exactly with high probability as $n\to\infty$. 
	\end{corollary}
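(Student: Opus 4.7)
The plan is to reduce the unknown-community setting to the known-community setting of Theorem~\ref{thm:main_dense} by first recovering each graph's community partition exactly, then aligning the two partitions, and finally invoking Theorem~\ref{thm:main_dense}. A preliminary observation is that, marginally, each of $G^\pi$ and $G'$ is an SBM on $n$ vertices with intra-community edge probability $p$ and inter-community edge probability $q$, because the independent $\alpha$-subsampling exactly cancels the $1/(1-\alpha)$ factor in the parent model $G_0$.

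The first step is to apply the SDP-based community detection algorithm of \citet{yan2018provable} separately to $G^\pi$ and to $G'$. The hypothesis~\eqref{eq:main cor dense nmin} matches (up to absolute constants) the exact recovery threshold established there, so with probability $1 - o(1)$ both SDPs return the true partitions of the two graphs. The second step is to align the two recovered partitions across the graphs: because the community sizes $n_1, n_2, \ldots, n_k$ are assumed all distinct, sorting both partitions by block size yields a unique correspondence between the communities of $G^\pi$ and those of $G'$, and this correspondence identifies, on both sides, the same underlying community $C_i$ of the parent $G_0$.

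Once the communities are aligned, we are in the setting of Theorem~\ref{thm:main_dense}: the community labels are effectively ``given'' in both graphs, conditions~\eqref{eqn:main_thm_cond1}--\eqref{eqn:main_thm_cond3} are assumed to hold, and the hypothesis $\alpha \in (0, \alpha_1/2)$ sits strictly inside the admissible range $(0, \alpha_1)$, leaving quantitative slack to absorb the extra failure probabilities from the preprocessing stage. Applying the matching algorithm of Theorem~\ref{thm:main_dense} to the aligned pair then recovers $\pi$ with probability $1-o(1)$. The main subtlety I anticipate is that the event on which community detection succeeds is not independent of the edge randomness driving the partition-tree analysis used inside Theorem~\ref{thm:main_dense}, so one cannot naively ``condition on SDP success and re-invoke the theorem.'' I would sidestep this by working on the same underlying probability space and intersecting three high-probability events (SDP success on $G^\pi$, SDP success on $G'$, and matching success with the true $\{C_i\}$); a single union bound then gives that the intersection has probability tending to $1$. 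The overall runtime remains polynomial, dominated by the SDP and the $O(km^4 p)$ matching step inherited from Theorem~\ref{thm:main_dense}.
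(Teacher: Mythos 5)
Your union-bound argument is valid and genuinely different from the paper's. The paper ensures independence between the community-detection stage and the matching stage by randomly splitting the edge set of each graph into two pieces: one piece (an SBM with parameters $\beta p, \beta q$) is fed to the SDP of \citet{yan2018provable}, the other piece (a correlated SBM with correlation $(1-\alpha)(1-\beta)$) is fed to Theorem~\ref{thm:main_dense}. Your approach avoids the split entirely: letting $A$, $B$ be the events that the SDP returns the true partition of $G^\pi$ and of $G'$ respectively, and $C$ the event that the pipeline of Theorem~\ref{thm:main_dense} run on $(G^\pi,G')$ with the \emph{true} community labels recovers $\pi$, you observe that on $A\cap B$ the labels handed to the pipeline coincide with the true ones, so on $A\cap B\cap C$ the whole procedure succeeds, and $\Pr[A\cap B\cap C]\to 1$ by a union bound with no independence ever invoked. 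This is cleaner. One point where your discussion is off, though it does not affect correctness: the factor $\alpha_1/2$ in the hypothesis is \emph{not} there to ``absorb extra failure probabilities'' — a union bound costs nothing on $\alpha$. In the paper it arises because the edge split with $\beta=\alpha_1/2$ degrades the correlation to $(1-\alpha)(1-\beta)$, so one needs $\alpha+\beta<\alpha_1$; your route never pays this tax and would in fact work for any $\alpha\in(0,\alpha_1)$, and a fortiori for the stated $\alpha\in(0,\alpha_1/2)$.

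There is one small omission worth fixing. Theorem~\ref{thm:main_dense} also assumes $n_{\min}=\Omega(n^{10/19})$, which Corollary~\ref{cor:main_all} does not state explicitly. You need to verify it follows from the given hypotheses: since $p>q$ gives $\sqrt{np}/(p-q)\ge\sqrt{n/p}$, condition~\eqref{eq:main cor dense nmin} implies $n_{\min}\gtrsim\sqrt{n/p}$, and $p\le n_{\min}^{-19/20}$ from~\eqref{eqn:main_thm_cond1} then yields $n_{\min}^{21/40}\gtrsim n^{1/2}$, i.e., $n_{\min}=\Omega(n^{20/21})$, which is ample. The paper performs this check; your write-up should include it.
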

	
	
\begin{remark}\label{rmk:balance}
   The assumption in Corollary \ref{cor:main_all}  that community sizes are unique can be satisfied with high probability even if the community label of each vertex is sampled from a distribution, as long as each  label is sampled with probability $\Theta(1/k)$. We prove this in Appendix  \ref{sec:balanced_community_size}. 

\end{remark}

	\subsection{Notation}
Asymptotic dependencies are denoted by the standard notations $O(\cdot), o(\cdot), \Omega(\cdot), \omega(\cdot), \Theta(\cdot)$ with $n\to\infty$. 
	For $n\in \mathbb{N^+}$, we denote the set $[n]:=\{1, 2, \ldots, n\}$. 
	Let $\wedge$ denote the minimum operator and $\vee$ denote the maximum operator.
	For a real number $x$, $\sign(x)=1$ if $x\geq 0$ and $\sign(x)=-1$ if $x<0$. 	
	For a graph $G$, we define $\caN_G(i)$ as the set of neighbors of vertex $i$ in $G$, and $\caN_G(S)$ as the union of $\caN_G(i)$ over all $i$ in set $S \subset [n]$.
For $d\in \mathbb{N^+}$,  $\mathcal{B}_{G}(i, d)$ ($\mathcal{S}_{G}(i, d)$) denotes the ball (sphere) of radius $d$ centered at vertex $i$ in graph $G$. 
	For a subset $R\subset [n] $ and a permutation $\pi:[n]\to[n]$, $G(R)$ denotes the subgraph of $G$ induced by $R$, and $\pi |_{R}$ denotes the permutation over $R$ defined by $\pi$. 
	Let $\deg_{G}(i)$ denote the degree of $i$ in a graph $G$.  
	For a graph $G$ with communities $(C_1,C_2,\dots, C_k)$, $\operatorname{deg}^{a}_{G}(i)$ denotes the size of neighbors of $i$ in the community $C_{a}$. 

\section{Algorithm and Results}\label{sec:alg_result}

\subsection{Overview of Algorithm}
Consider the correlated SBMs $(G^\pi,G')$ with correlation $1-\alpha$. Suppose that the community structure $(C_1,C_2,\dots, C_k)$ is revealed in both graphs, and $|C_1|\geq |C_2|\geq \dots  \geq |C_k|=n_{\min}$. 
Our algorithm consists of three stages.
In the first stage, we focus on the minimum-size community $C_k$ and generate a signature vector for each vertex in $G^\pi(C_k)$ and $G'(C_k)$. In this stage, we use both the edges within the community $C_k$ and the edges across $C_k$ and other (randomly sampled) $k'<k-1$ communities, say $C_1,\dots, C_{k'}$. 
Each vertex $i\in G^\pi(C_k)$ and $j\in G'(C_k)$ is associated with a signature vector $f(i)\in \bbR^{2^{k'\ell}}$ and $f'(j)\in \bbR^{2^{k'\ell}}$, respectively, where the $(2^{k'\cdot \ell})$-dimensional signature is obtained from a depth-$\ell$  $(2^{k'})$-ary partition tree rooted from $i$ ($j$), which partitions the vertices in the sphere $\caS_{G^\pi(G_k)}(i,r)$ ($\caS_{G'(G_k)}(j,r)$) at each depth $r\in[\ell]$ into $2^{k'r}$ disjoint subsets. The detailed steps of generating the partition trees and the signature vectors are explained in Sec. \ref{sec:partition_tree}. We will choose $(k',\ell)$ such that $k'\cdot \ell=\Theta(\log\log n_{\min} )$ so that the signature contains enough information to distinguish and match $n_{\min}$ vertices in $G^\pi (G_k)$ and $G'(G_k)$. The maximum depth $\ell$ of the partition tree needs to satisfy $(n_{\min} p)^\ell=O( n_{\min}) $ to guarantee that the partition tree does not contain a loop with high probability, and this gives a condition on $k'$ such that $k'\geq\frac{C(\log\log n_{\min})\cdot(\log n_{\min} p)}{\log n_{\min}}$ for our algorithm to operate successfully. 

In the second stage, we compare the signature vectors $f(i)$ and $f'(j)$ for each $(i,j)$ pair in $G^\pi(C_k)$ and $G'(C_k)$, and find a pair whose ``normalized distance" is less than some threshold. This gives an almost exact matching for the vertices in $C_k$, and a subsequent refinement step gives the exact matching for $n_{\min}$ vertices.

In the last stage, we use the matched vertex pairs in $G^\pi(C_k)$ and $G'(C_k)$ as initial seeds, and apply a seeded graph matching algorithm multiple times to match the rest of the vertices. 
In the following, we explain each of the three stages in more detail.


\subsection{Partition Trees using Community Structure}\label{sec:partition_tree}
  \begin{algorithm}[t]
    \caption{Vertex Signature Using Community Structure}
    \label{alg:algorithm1}
\begin{algorithmic}[1]
  \REQUIRE{a graph $\Gamma$ with community structure $(C_1, \dots, C_k)$ where $|C_k|=n_k$, parameters $k', \ell\in \mathbb{N}$, and $i\in C_{k}$}
    \ENSURE{$f(i) \in \mathbb{R}^{2^{k'\ell}}$ and $\mathrm{v}(i) \in \mathbb{R}^{2^{k'\ell}}$}
    \STATE \text{$T_{\varnothing}^{0} \leftarrow\{i\}$}
    \FOR{$r=0, \ldots, \ell-1$}
    \FOR{$\bss^{r} \in\{-1,1\}^{k'r}$ and $\bss_{r+1} \in \{-1,1\}^{k'}$}
    \STATE  $T_{(\bss^{r},\bss_{r+1})}^{r+1}(i) \leftarrow $
    
    $\left\{ j \in \mathcal{N}_{\Gamma(C_{k})}\left(T_{\bss^{r}}^{r}(i)\right) \cap \mathcal{S}_{\Gamma(C_{k})}(i, r+1): \right.$

 $\left. \sign \left( \deg^{a}_{\Gamma}(j) - n_{a} q\right) = \bss_{r+1}(a) \text{ for } a\in[k'] \right\}$
    \ENDFOR
    \ENDFOR
    \STATE Define $f(i),\mathrm{v}(i) \in \mathbb{R}^{2^{k' \ell}}$ by 

 $ f_{\bss^\ell}(i):=  \sum_{j \in T_{\bss^\ell}^{\ell}(i)} \left(\operatorname{deg}^{k}_{\Gamma}(j)-1- n_{k}p\right)$, 
 
      $\mathrm{v}_{\bss^\ell}(i):=n_{k} p(1-p)\left|T_{\bss^\ell}^{\ell}(i) \right| $ for $\bss^\ell \in\{-1,1\}^{k'\ell}$. 
     \textbf{return} $f(i) \text{ and } \mathrm{v}(i)$  
\end{algorithmic}
\end{algorithm}
   
In this section, we explain the process of constructing a partition tree for each vertex $i\in C_k$ in a graph $\Gamma\sim\text{SBM}(n,p,q)$ using a known community structure $(C_1,\dots, C_k)$ for $[n]$. 
This partition tree is used to define a signature vector $f(i)$ for each $i\in C_k$.
We randomly select $k'<k-1$ communities and denote them by $C_1,\dots, C_{k'}$ for notational simplicity. 
For a given vertex $i\in C_k$, the partition tree starts from a root $T^0_{\varnothing}:=\{i\}$, and at each level $r=1,\dots,\ell$ the tree partitions the vertices in $\caS_{\Gamma(C_k)}(i,r)$ into $2^{k'r}$ disjoint subsets, denoted by nodes $T^r_{(\bss_1,\dots,\bss_r)}$ for $(\bss_1,\dots,\bss_r)\in\{-1,1\}^{k'r}$. For notational simplicity, let $\bss^r:=(\bss_1,\dots,\bss_r)$. For every $r<\ell$, the node $T^r_{\bss^r}$ has $2^{k'}$ children  $T^{r+1}_{(\bss^r,\bss_{r+1})}$, $\bss_{r+1}\in\{-1,1\}^{k'}$, where $T^{r+1}_{(\bss^r,\bss_{r+1})}$ contains all vertices $j\in\caN_{\Gamma(C_k)}(T^r_{\bss^r})\cap \caS_{\Gamma(C_k)}(i,r+1) $ satisfying $\sign(\deg^a_{\Gamma}(j)-n_aq)=\bss_{r+1}(a)$ for each $a\in[k']$, where $\bss_{r+1}(a)$ is the $a$-th entry of $\bss_{r+1}\in\{-1,1\}^{k'}$. In other words, we check whether the degree of vertex $j$ to each community $C_a$, $a\in[k']$, is greater than or equal to the average degree $n_a q$, encode this information by $\bss_{r+1}\in\{-1,1\}^{k'}$, and assign the vertex $j$ to the corresponding node set $T^{r+1}_{(\bss^r,\bss_{r+1})}$ in the partition tree.
After constructing the partition tree of depth-$\ell$ rooted at $i$, the signature vector $f(i)\in \mathbb{R}^{2^{k'\ell}}$ is generated based on the degrees of the vertices in the leaves $T^\ell_{\bss^\ell}$, $\bss^\ell\in\{-1,1\}^{k'\ell}$.
We also define a variance vector $\mathrm{v}(i)\in \mathbb{R}^{2^{k'\ell}}$ that stores the variances of each entry of the signature vector $f(i)$.

Algorithm \ref{alg:algorithm1} describes the process of inductively constructing the partition tree and generating the signature vector.

\subsection{Exact Matching for the Smallest Community}

The second stage aims to match all the correct vertex pairs in $G^\pi(C_k)$ and $G'(C_k)$ by computing and comparing $(f(i), f'(j))$ for all possible pairs  $i\in G^\pi(C_k)$ and $j\in G'(C_k)$. 
For this stage, we follow the two-step procedures, designed by \citet{MRT21a}, where the first-step finds a rough estimate of the permutation over $[n_k]$ vertices by comparing the similarity between the signature vectors, and the second step refines the output of the first step to recover the exact permutation. 
In the first step, every pair of $i\in G^\pi(C_k)$ and $j\in G'(C_k)$ is compared in terms of the normalized distance $\sum_{\bss^\ell \in J} \frac{\left(f_{\bss^\ell}(i)-f_{\bss^\ell}^{\prime}\left(j\right)\right)^2}{\mathrm{v}_{\bss^\ell}(i)+\mathrm{v}_{\bss^\ell}^{\prime}\left(j\right)}$ where $J$ is a random subset uniformly drawn from $\{-1,1\}^{k'\ell}$ with poly-logarithmic cardinality in $n_{k}$. If there exists $(i,j)$ whose distance is less than the threshold $|J|(1-\frac{1}{\sqrt{\log n_k}})$ then the pair is matched. To resolve the case where more than one vertices $j\in G'(C_k)$ are matched to the same vertex $i\in G^\pi(C_k)$ or no vertex is matched to $i\in G^\pi(C_k)$, a clean-up step is applied to generate a permutation $\tilde{\pi}_k: [n_k]\to [n_k]$ over $C_k$. 
Algorithm \ref{alg:algorithm2} describes the process of obtaining an initial estimate $\tilde{\pi}_k$ of the permutation. 
Given the estimate $\tilde{\pi}_k$, one can apply the refinement step to obtain the permutation $\hat{\pi}_k:[n_k]\to [n_k]$ equal to the true one $\pi|_{C_k}$.
Due to space limitations, we present the refinement algorithm (Algorithm \ref{alg:algorithm4} \citep{MRT21a}) in Appendix \S\ref{app:sec:exact}.

\begin{algorithm}[t]
    \caption{Almost Exact Matching \citep{MRT21a}}\label{alg:algorithm2}
    \begin{algorithmic}[1]
      \REQUIRE{two graphs $\Gamma$ and $\Gamma'$ with community structure $(C_1, \dots, C_k)$ where $|C_k|=n_k$}
       \ENSURE{a permutation $\tilde{\pi}_k:[n_k]\to [n_k]$}
        \STATE $\ell \leftarrow \left\lceil \frac{\log n_k}{40 \log n_kp}\right\rceil \wedge \lceil 42\log \log n_k \rceil$
        \STATE $w \leftarrow $ $\lfloor (\log n_{k})^{5} \rfloor, \; k' \leftarrow \lceil1680 (\log \log n_{k}) \frac{\log n_{k}p}{\log n_{k}}\rceil$
        \FOR{$i\in C_{k}$}
        \STATE $ \left(f(i),\mathrm{v}(i)\right) \leftarrow \text{Algorithm \ref{alg:algorithm1}} (\Gamma,k',\ell, i)$
        \STATE $ \left(f'(i),\mathrm{v}'(i)\right) \leftarrow \text{Algorithm \ref{alg:algorithm1}}(\Gamma',k',\ell, i)$
        \ENDFOR
        \STATE $J \leftarrow$ a random subset uniformly drawn  from $\{-1,1\}^{k'\ell}$ with cardinality $2w$,
        \FOR{$i\in \Gamma(C_{k})$ and $j\in \Gamma'(C_{k})$}
        \IF{ $\sum\limits_{\bss^\ell\in J} \frac{(f_{\bss^\ell}(i)-f'_{\bss^\ell}(j))^{2}}{\mathrm{v}_{\bss^\ell}(i)+\mathrm{v}'_{\bss^\ell}(j)}<2w\left(1-\frac{1}{\sqrt{\log n_{k}}}\right)$}
        \STATE $ B_{i,j}\leftarrow 1$
        \ELSE
        \STATE $ B_{i,j}\leftarrow 0$
        \ENDIF
        \ENDFOR
        \STATE $H \leftarrow$ the bipartite graph with adjacency matrix $B$
        \STATE let $V=V'=[n_k]$ 
        \WHILE{$H$ has at least one edge}
        \STATE choose a random edge $i \sim j$ in $H$ for $(i,j) \in V\times V'$ 
        \STATE define $\tilde{\pi}\left(j\right):=i$ and remove the edge $i \sim j$ from $H$
        \STATE $V \leftarrow V \backslash\{i\}$, $ V^{\prime} \leftarrow V^{\prime} \backslash\left\{j\right\}$
        \ENDWHILE
        \IF{ \text{$V \neq \varnothing$} }
        \STATE define $\left.\tilde{\pi}\right|_V$ as an arbitrary bijection from $V$ to $V^{\prime}$
        \ENDIF
        \STATE \textbf{return} $\tilde{\pi}_k$
    \end{algorithmic}
\end{algorithm}

\begin{theorem}[Almost Exact Matching on $C_k$]\label{thm:almost_smallest}
  For any constant $D>0$ there exist constants $n_{0}, c$ depending on $D$ and absolute constants $\alpha_1,M,M'$ with the properties below. 
  Consider the two graphs $G^\pi$ and $G'$, generated from the correlated SBMs defined in Sec. \ref{sec:model}, 
  with the underlying permutation $\pi:[n]\to[n]$. Suppose that community labels are given in both graphs, and  $C_k$ is the smallest community of size $n_{\min}$. Assume that $\alpha \in (0,\alpha_1)$, $n_{\min} \geq n_{0}$, \eqref{eqn:main_thm_cond1}, \eqref{eqn:main_thm_cond2} and \eqref{eqn:main_thm_cond3} hold.
		Then, Algorithm \ref{alg:algorithm2} applied to the inputs $(G^{\pi},G')$ with the known community structure in both graphs yields the permutation $\tilde{\pi}_k$ over $C_k$ such that
			\begin{equation}
	    |i\in C_k : \tilde{\pi}_k(i) \neq \pi(i)| \leq4 {n^{1-c}_{\min}}.
	\end{equation}
	 with probability at least $1-n_{\min}^{-D}$. 
\end{theorem}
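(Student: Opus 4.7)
The plan is to follow the blueprint of \citet{MRT21a}, adapting the concentration and coupling arguments from the binary partition tree in the ER setting to the $2^{k'}$-ary partition tree built from cross-community degree statistics in the SBM setting. Throughout, I work on the induced subgraphs $G^\pi(C_k)$ and $G'(C_k)$, which are correlated ER graphs with parent density $p/(1-\alpha)$, while the branching labels $\bss_{r+1}(a)=\sign(\deg^a_{G}(j)-n_a q)$ for $a\in[k']$ use edges to the auxiliary communities $C_1,\dots,C_{k'}$, whose distributions are essentially independent of the $C_k$-induced edges.

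Step 1: tree-likeness and size concentration. First I show that with probability $1-n_{\min}^{-D}$ the partition trees rooted at each $i\in C_k$ have no cycles up to depth $\ell$. Because $\ell \leq \log n_k / (40\log n_k p)$, the expected neighborhood size at depth $\ell$ is at most $(n_{\min}p)^\ell \leq n_{\min}^{1/40}$, so a standard birthday bound prevents collisions. Next, conditional on the sign labels being essentially independent of the intra-$C_k$ structure, each node $T^{r+1}_{(\bss^r,\bss_{r+1})}(i)$ receives roughly a $2^{-k'}$-fraction of the parent's outgoing edges, giving $\mathbb{E}|T^r_{\bss^r}(i)| \approx 2^{-k'r}(n_k p)^r$; a martingale/Chernoff argument shows multiplicative concentration with exponentially small failure probability, provided $n_kp \gg (\log n_k)^{1.1}$ and $n_k q \gg (\log\log n_k)^2$ (so that the $k'$-fold sign pattern is close to uniform on $\{-1,1\}^{k'}$).

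Step 2: coupling of the two trees for a correct pair. Since $G^\pi$ and $G'$ are obtained by independent $\alpha$-subsampling of $G_0$, each edge of $G_0$ survives in both with probability $(1-\alpha)^2$. Hence, for a correct pair $(i,\pi(i))$, the trees share a $(1-\alpha)^{2r}$-fraction of vertices at depth $r$. Moreover, because the branching labels use degrees to the \emph{external} communities $C_1,\dots,C_{k'}$, an edge relevant for labeling $j$ is present in $G^\pi$ iff the corresponding edge of $G_0$ survives the $G$-subsampling, and similarly for $G'$; a small-ball estimate for centered binomials shows that the probability that a vertex $j$ is assigned a different sign label in the two trees is small (on the order of $\alpha+1/\sqrt{n_{\min} q}$). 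Choosing $\alpha<\alpha_1$ small and $n_{\min}q \geq M'(\log\log n_{\min})^2$ controls this misclassification probability, so most leaves $T^{\ell}_{\bss^\ell}(i)$ in $G^\pi$ have a matching leaf in $G'$ with a large overlap of common vertices.

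Step 3: signature distance for correct versus incorrect pairs. For a correct pair, I decompose $f_{\bss^\ell}(i) - f'_{\bss^\ell}(\pi(i))$ into contributions from shared and unshared leaf vertices. Shared vertices contribute differences of the form $\deg^k_G(j)-\deg^k_{G'}(j)$, whose variance is $\approx 2\alpha(1-\alpha) n_k p$, while unshared vertices contribute centered binomials with variance $\approx n_k p$. After normalizing by $\mathrm{v}_{\bss^\ell}(i)+\mathrm{v}'_{\bss^\ell}(\pi(i)) \approx 2n_kp|T^\ell_{\bss^\ell}|$, the expected squared normalized distance per entry is at most $2\alpha + o(1)$, which, for $\alpha<\alpha_1$ sufficiently small, is strictly below $1-1/\sqrt{\log n_k}$. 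For an incorrect pair $(i,j)$, the signatures are generated from essentially disjoint vertex sets (after removing a union-bounded small intersection), so $f_{\bss^\ell}(i)$ and $f'_{\bss^\ell}(j)$ are approximately independent centered sums, giving mean normalized distance $\approx 1$ per entry.

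Step 4: sparsification, union bound, and greedy cleanup. Averaging over the random subset $J\subseteq\{-1,1\}^{k'\ell}$ of cardinality $2w=2\lfloor(\log n_k)^5\rfloor$ both decorrelates residual dependencies across signature coordinates and furnishes, via Bernstein/Hanson--Wright, a concentration of the summed normalized distance around its mean with fluctuations $O(1/\sqrt{w})=o(1/\sqrt{\log n_k})$. Taking the failure probability for a single pair to be at most $n_k^{-(D+2)}$ and union-bounding over $n_k^2$ pairs, the indicator matrix $B$ has all true edges $(i,\pi(i))$ present and at most $n_{\min}^{1-c}$ spurious entries per row/column. The greedy random matching in Algorithm~\ref{alg:algorithm2} therefore produces $\tilde\pi_k$ disagreeing with $\pi|_{C_k}$ on at most $4n_{\min}^{1-c}$ vertices, as the standard argument for almost-exact recovery from a sparse bipartite graph bounds the error by a constant multiple of the number of rows/columns with any wrong edge.

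The main obstacle is Step 3, specifically the combined control of (i) the probability that the sign vector $\bss_{r+1}$ of a given vertex $j$ flips between the two trees, and (ii) the induced dependence between the tree topology and the leaf-degree variables that enter $f$. Small-ball bounds on $\deg^a_G(j)-n_a q$ must be sharp enough to give a bias of order $\alpha$ rather than $\Omega(1)$, while the cross-community structure must be exploited to guarantee that the labeling randomness is \emph{independent} of the intra-$C_k$ tree, which is why the algorithm samples $C_1,\dots,C_{k'}$ disjointly from $C_k$ and why condition \eqref{eqn:main_thm_cond3} is needed (so that $2^{k'\ell}$ is poly-logarithmic in $n_{\min}$ with $\ell$ small enough to keep the tree loop-free in the dense regime $n_{\min}p\leq n_{\min}^{1/20}$).
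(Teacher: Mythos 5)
Your proposal follows the correct high-level structure — tree construction, overlap analysis for correct pairs, signature-distance comparison, sparsification and greedy cleanup — matching the paper's decomposition into Lemmas on sizes of partition nodes, typical vertices, Proposition on overlap of $T^\ell_{\bss^\ell}(i,G)\cap T'^\ell_{\bss^\ell}(i,G')$, and the sparsification trick inherited from \citet{MRT21a}. However, Step~3 contains a quantitative error that breaks the argument, and it is already incompatible with your own Step~2.

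In Step~2 you correctly note that a leaf vertex survives into both trees with probability roughly $(1-\alpha)^{2\ell}$ (from edge subsampling along a length-$\ell$ path), multiplied by a factor that decays multiplicatively per labeling coordinate due to the $O(\alpha + 1/\sqrt{n_{\min}q})$ sign-flip probability. Since the paper runs with $k'\ell = \Theta(\log\log n_{\min})$, this overlap fraction is
$(1-c\alpha)^{\Theta(\log\log n_{\min})} = (\log n_{\min})^{-c'(\alpha)}$
for a small constant $c'(\alpha)>0$ — a \emph{vanishing} fraction of the leaf set as $n_{\min}\to\infty$, even for very small constant $\alpha$. But in Step~3 you declare that the normalized squared distance per entry for a correct pair is $2\alpha + o(1)$: this would only follow if the overlap fraction were $1-O(\alpha)$. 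When the overlap $O$ satisfies $O/|T^\ell_{\bss^\ell}| = o(1)$, the variance reduction of $f_{\bss^\ell}(i)-f'_{\bss^\ell}(i)$ relative to an independent pair is only on the order of $O/|T^\ell_{\bss^\ell}|$, so the normalized distance is $1 - \Theta((\log n_{\min})^{-c'(\alpha)})$, not $2\alpha$. This is exactly what the paper records in the variance bound that leads to the threshold $1 - (\log n_{\min})^{-0.1}$ for correct pairs versus $1 - (\log n_{\min})^{-0.9}$ for wrong pairs, with the comparison threshold $1-1/\sqrt{\log n_{\min}}$ wedged between these two rates.

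This gap propagates into Step~4: with a separation of only $\Theta((\log n_{\min})^{-0.1})$ instead of $\Theta(1)$, the fluctuation budget for the per-pair union bound over $n_{\min}^2$ wrong pairs must be pushed below a polylog-small margin, which is why the paper needs the precise choice $w=\lfloor(\log n_{\min})^5\rfloor$ and the careful Bernstein/truncated-moment estimates in the correct- and wrong-pair lemmas — not merely $O(1/\sqrt{w})$ being negligible against a constant gap. To repair your proof, you must carry the $(1-\alpha)^{2\ell}\cdot(1-O(\alpha+1/\sqrt{n_{\min}q}))^{k'\ell}$ overlap ratio through Step~3 explicitly, replace the constant gap claim with the polylog-small gap, and tune $w$ and the choice of threshold exponent accordingly; the remaining structure of Steps~1, 2, and 4 is then essentially correct.
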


By combining Theorem \ref{thm:almost_smallest} with the result of \citet{MRT21a}, where it was shown that the exact matching of the correlated ER model can be obtained from the almost exact matching under some mild conditions on the graphs, which can already be satisfied by the conditions in Theorem \ref{thm:almost_smallest}, we can obtain the result for the exact matching. 
\begin{corollary}[Exact matching on $C_k$]\label{cor:exact_smallest} 
Assume the conditions of Theorem \ref{thm:almost_smallest}. Let $\tilde{\pi}_k:[n_k]\to[n_k]$ be the output of Algorithm \ref{alg:algorithm2} when the inputs are the correlated SBMs $G^{\pi}$ and $G'$. Then, when the refinement matching algorithm (Algorithm \ref{alg:algorithm4} \citep{MRT21a}) is applied to  $\tilde{\pi}_k$, the output $\hat{\pi}_k:[n_k]\to[n_k]$ recovers the original permutation over $C_k$, i.e., $\hat{\pi}_k=\pi|_{C_k}$, with probability at least $1-2n_{\min}^{-D}$. 
\end{corollary}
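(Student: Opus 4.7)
The plan is to invoke Theorem \ref{thm:almost_smallest} to obtain an almost-exact estimate $\tilde{\pi}_k$ and then show that $\tilde{\pi}_k$ satisfies the input requirements of the refinement procedure of \citet{MRT21a}, so that its output $\hat{\pi}_k$ coincides with $\pi|_{C_k}$ with the desired probability. The key observation is that, conditional on the community structure, the subgraphs $G^\pi(C_k)$ and $G'(C_k)$ are themselves a pair of correlated Erd\H{o}s--R\'enyi graphs on $n_{\min}$ vertices with edge probability $p$ and edge correlation $1-\alpha$, since edges inside $C_k$ in the parent graph $G_0$ are i.i.d.\ Bernoulli$(p/(1-\alpha))$ and the two subsamplings are performed independently across edges. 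This lets me transfer the refinement guarantee of \citet{MRT21a}, which was stated for the correlated ER model, to the induced pair $\left(G^\pi(C_k),G'(C_k)\right)$ essentially verbatim.

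Concretely, I would proceed in three steps. First, apply Theorem \ref{thm:almost_smallest} with the same constants $\alpha_1,M,M'$ to conclude that with probability at least $1-n_{\min}^{-D}$, the permutation $\tilde{\pi}_k$ agrees with $\pi|_{C_k}$ on all but at most $4n_{\min}^{1-c}$ vertices. Second, verify that the hypotheses of the refinement lemma from \citet{MRT21a} are met on the pair $\left(G^\pi(C_k), G'(C_k)\right)$: the error fraction is $O(n_{\min}^{-c})$ which is polynomially small and therefore well below the tolerance required, the intra-community density satisfies $(\log n_{\min})^{1.1} \leq n_{\min} p \leq n_{\min}^{1/20}$ by \eqref{eqn:main_thm_cond1}, and the noise level $\alpha<\alpha_1$ is a sufficiently small absolute constant. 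Third, invoke the refinement guarantee to conclude that, on a further event of probability at least $1-n_{\min}^{-D}$, Algorithm \ref{alg:algorithm4} applied to $\tilde{\pi}_k$ outputs $\hat{\pi}_k = \pi|_{C_k}$. A union bound over the two events yields the claimed failure probability $2 n_{\min}^{-D}$.

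The main obstacle is the second step: to legitimately quote the refinement guarantee of \citet{MRT21a} I have to make sure that (i) the induced pair really is distributed as the correlated ER model that their refinement was analyzed on, and (ii) the constants in the almost-exact error bound $4n_{\min}^{1-c}$ are compatible with whatever polynomial error tolerance they require. Point (i) is immediate because, conditional on $C_k$, the internal edges of $G_0$ restricted to $C_k$ are Bernoulli$(p/(1-\alpha))$ and the subsamplings are independent; moreover the refinement only uses the two induced graphs and the input permutation, and does not interact with edges leaving $C_k$. Point (ii) is a matter of checking inequalities in the conditions \eqref{eqn:main_thm_cond1}--\eqref{eqn:main_thm_cond3}, which are already strong enough because $n_{\min}^{-c}$ is polynomially small in $n_{\min}$ while the refinement tolerates any polynomially small error fraction. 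Once these are verified, the union bound closes the argument.
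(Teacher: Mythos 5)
Your proposal matches the paper's approach: the paper obtains this corollary exactly by combining Theorem~\ref{thm:almost_smallest} with the refinement guarantee of \citet{MRT21a}, noting that the induced pair $\left(G^\pi(C_k),G'(C_k)\right)$ is a correlated Erd\H{o}s--R\'enyi pair whose parameters already satisfy the refinement hypotheses, and applying a union bound. Your implicit appeal to the fact that MRT21a's refinement holds uniformly over the initial almost-exact permutation on a high-probability event determined by the graphs alone (which is what legitimizes the union bound despite $\tilde{\pi}_k$ depending on intra-community edges) is the correct reading of their result.
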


\begin{remark}[Conditions on $k$ and $q$]
In Theorem \ref{thm:almost_smallest}, in addition to the density condition \eqref{eqn:main_thm_cond1},
we need two additional conditions on the number $k$ of communities \eqref{eqn:main_thm_cond3} and on the edge density $q$ between communities \eqref{eqn:main_thm_cond2}.
The condition on $k$ is required to have the signature vector of dimension $2^{k'\ell}=\Theta(\log n_{\min})$ at the maximum tree depth $\ell$, the diameter of $C_k$. If $n_{\min}p=\Theta(n_{\min}^{1/R\log \log n_{\min}})$, we only need a constant number of communities, but in the denser regime $n_{\min}p=n_{\min}^c$, $c>0$, we need $k=\Theta(\log\log n_{\min})$, since the diameter is a constant.
The condition on $q$ is needed to guarantee that there are enough edges between $C_k$ and other communities so that the $2^{k'}$-ary partition tree is well constructed by using the edges between communities.
\end{remark}

\subsection{Seeded Graph Matching}
\begin{algorithm}[t]
    \caption{Seeded Matching}\label{alg:algorithm5}
    \begin{algorithmic}[1]
     \REQUIRE{two graphs $\Gamma$ and $\Gamma'$  with community $(C_1,\dots, C_k)$}, a permutation $\pi_s:[n_s]\to[n_s]$ over $C_s$
     \ENSURE{a permutation $\hat{\pi}_t:[n_t]\to [n_t]$ over $C_t$}
    \FOR{$i \in C_{t}$ and $i' \in C_{t}$}
    \STATE let the weight $w(i,i')=0$
    \FOR{each $j \in C_{s}$}
    \IF{$\exists z \in \caN_{\Gamma(C_{t})}(i)$ and $\exists  z'\in \caN_{\Gamma'(C_{t})}(i')$ such that $(z,{\pi}_{s}(j))\in \caE(\Gamma)$ and $(z',j)\in \caE(\Gamma')$}
    \STATE $w(i,i')=w(i,i')+1$
    \ENDIF
    \ENDFOR
    \IF{$w(i,i') \geq ({n_{t}n_{s}pq})/{8}$}
    \STATE $\hat{\pi}_{t}(i')=i$
    \ENDIF
    \ENDFOR
    \STATE \textbf{return} $\hat{\pi}_{t}$
    \end{algorithmic}
\end{algorithm}
The last stage aims to recover the complete permutation $\pi:[n]\to[n]$ using the matched pairs in $C_k$ as initial {\it seeds}. The main idea is to apply a seeded graph matching algorithm (similar to that in \citep{KL13}), where vertex pairs $i\in G^\pi(C_t)$ and $i' \in G'(C_t)$ are compared by counting the number of common seed pairs in the 2-hop neighborhoods of $i$ and $i'$. Algorithm \ref{alg:algorithm5} describes the exact seeded matching algorithm, and Theorem \ref{thm:seeded graph matching} describes the condition for recovering a permutation $\pi_t$ over $C_t$ by using a known permutation $\pi_s$ over $C_s$.

\begin{theorem}[Seeded Matching for a Community]\label{thm:seeded graph matching}
There exists an absolute constant $K>0$ with the properties below. 
Consider the two graphs $G^\pi$ and $G'$, generated from the correlated SBMs defined in Sec. \ref{sec:model}, 
with the underlying permutation $\pi:[n]\to[n]$. Suppose that the community labels $(C_1,C_2,\dots, C_k)$ are given in both graphs. Additionally, the exact permutation $\pi_s:[n_s]\to[n_s]$ for the vertices in community $C_s$ is given. Assume that the parameters $(p, q, n_t, n_s, \alpha)$ of the model satisfy $\alpha<\frac{1}{10}$, $p\leq \frac{1}{256}$, and
        \begin{equation}\label{eq:seeded matching condition_main}
            n_{t}p \geq K\log n_{t},\; n_{t}n_{s}pq \geq K\log n_{t},\; n_{t}pq\leq 1/256.
        \end{equation}
		Then Algorithm \ref{alg:algorithm5} applied to $G^{\pi}$ and $G'$ yields $\hat{\pi}_t=\pi|_{C_t}$ with probability at least $1-{4}/{n_{t}}$.
\end{theorem}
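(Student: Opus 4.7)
My plan is as follows. By relabeling vertices I may assume $\pi|_{C_t}$ and $\pi_s$ are the identity, so the ``correct'' pairs are $(i,i)$. For each pair $(i,i')\in C_t\times C_t$, write $w(i,i')=\sum_{j\in C_s}\caW_j(i,i')$, where $\caW_j=\indi[A_j]\,\indi[A'_j]$ with $A_j=\{\exists z\in N:(z,j)\in\caE(\Gamma)\}$, $A'_j=\{\exists z'\in N':(z',j)\in\caE(\Gamma')\}$, $N=\caN_\Gamma(i)\cap C_t$, and $N'=\caN_{\Gamma'}(i')\cap C_t$. The crucial structural observation is that, conditionally on $(N,N')$, each $\caW_j$ depends only on the edges from $N\cup N'$ to the single vertex $j$, and these edge-sets are vertex-disjoint across $j\in C_s$; hence $\{\caW_j\}_{j\in C_s}$ is conditionally independent, enabling a Chernoff argument pair-by-pair. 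A Chernoff bound also gives $|N|,|N'|=(1\pm o(1))(n_t-1)p$ with high probability under $n_tp\geq K\log n_t$.

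For the correct pair $(i,i)$, each $z\in C_t\setminus\{i\}$ lies in $N\cap N'$ independently with probability $p(1-\alpha)$ (the parent edge is present with probability $p/(1-\alpha)$ and survives both subsamplings with probability $(1-\alpha)^2$), so by Chernoff $|N\cap N'|\geq \tfrac{1}{2}(n_t-1)p(1-\alpha)$ with probability $\geq 1-n_t^{-\Omega(K)}$. Restricting to aligned witnesses $z=z'\in N\cap N'$ whose common parent edge to $j$ survives in both graphs (probability $q(1-\alpha)$), I get $\E[\caW_j\mid N,N']\geq 1-(1-q(1-\alpha))^{|N\cap N'|}\geq (1-o(1))|N\cap N'|\,q(1-\alpha)$ using $n_tpq\leq 1/256$. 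Summing over $j\in C_s$ and using $\alpha<1/10$ yields $\E[w(i,i)\mid N,N']\geq (1-\alpha)^2(1-o(1))\,n_sn_tpq\geq \tfrac{3}{4}n_sn_tpq$, comfortably above the threshold $n_sn_tpq/8$. A Chernoff lower-tail bound on the conditionally independent sum, together with $n_sn_tpq\geq K\log n_t$, gives $w(i,i)\geq n_sn_tpq/8$ with probability $\geq 1-n_t^{-\Omega(K)}$.

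For an incorrect pair $(i,i')$ with $i\neq i'$, the parent edges incident to $i$ and to $i'$ are disjoint, so $N$ and $N'$ are independent with $\E|N\cap N'|\leq (n_t-2)p^2$. I upper bound $\P[A_j\cap A'_j\mid N,N']$ by splitting into an ``aligned'' contribution ($z=z'\in N\cap N'$, joint probability $q(1-\alpha)\leq q$) and a ``misaligned'' contribution ($z\neq z'$, independent edges contributing $q^2$), giving $\P[A_j\cap A'_j\mid N,N']\leq |N\cap N'|\,q+|N|\,|N'|\,q^2$. Concentration of $|N|,|N'|$ around $n_tp$ and of $|N\cap N'|$ around $n_tp^2$ (Chernoff again) then yields $\E[w(i,i')\mid N,N']\leq n_sn_tpq\,(p+n_tpq)\leq n_sn_tpq/128$ via $p\leq 1/256$ and $n_tpq\leq 1/256$. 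This is $16$ times below threshold, so a Chernoff upper-tail bound (using $n_sn_tpq\geq K\log n_t$) gives $w(i,i')<n_sn_tpq/8$ with probability $\geq 1-n_t^{-\Omega(K)}$.

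The final step is a union bound over the at most $n_t^2$ pairs and the $O(n_t)$ neighborhood-concentration events; choosing the absolute constant $K$ large enough makes the total failure probability bounded by $4/n_t$. The main technical subtlety is the dependence structure: the $\caW_j$'s are not unconditionally independent because they share $N,N'$, and across pairs $(i,i')$ the neighborhoods overlap, so unconditional analysis is delicate. Conditioning on $(N,N')$ is exactly what decouples the witnesses; and within the correct-pair bound the coupling between $\Gamma$ and $\Gamma'$ forces a $(1-\alpha)^2$ factor (both the ``spoke'' edge $(z,i)$ and the ``rim'' edge $(z,j)$ must survive both subsamplings), which is precisely why the hypothesis $\alpha<1/10$ suffices to keep the aligned-witness signal a constant factor above the threshold.
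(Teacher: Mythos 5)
Your proof is correct and takes essentially the same route as the paper's: condition on the neighborhoods $N,N'$ of $i,i'$ in $C_t$, bound the per-seed witnessing probability by splitting into aligned ($z=z'\in N\cap N'$) and misaligned ($z\neq z'$) contributions, then apply binomial concentration to the conditionally independent indicator sum and union-bound over pairs. One minor arithmetic slip: from $|N\cap N'|\geq\frac12(n_t-1)p(1-\alpha)$ the correct-pair expectation is roughly $\frac12(1-\alpha)^2 n_sn_tpq\approx 0.4\, n_sn_tpq$ rather than $\frac34 n_sn_tpq$, but this is still comfortably above the $\frac18 n_sn_tpq$ threshold, so the argument goes through unchanged.
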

Using the recovered permutation $\pi_k:[n_k]\to [n_k]$ over $C_k$,  we first apply Algorithm \ref{alg:algorithm5} to recover the permutation over $C_r$ for some $r\notin\{1,\dots, k', k\}$, which is a community not used in the previous stage of signature vector generation. 
We can check that all the conditions in \eqref{eq:seeded matching condition_main} except the last one are satisfied by the conditions on $|C_k|:=n_{\min}$ and $(p,q)$ in \eqref{eqn:main_thm_cond1} and \eqref{eqn:main_thm_cond2}, since $n_t\geq n_s=n_{\min}$. To satisfy $n_t pq \leq 1/256$ for any $t\in\{1,\dots, k-1\}$, we need an additional condition on $n_{\min}$ such that $n_{\min} \geq M_1 n^{10/19}$. 
With these conditions, the exact matching for the vertices in $C_r$ can be recovered by Algorithm  \ref{alg:algorithm5}.
Then, given $\pi_r:[n_r]\to[n_r]$, we use the matched pairs over $C_r$ as seeds and recover the matching over the rest of the communities $(C_1,\dots, C_{r-1}, C_{r+1}, \cdots, C_{k-1})$. The reason we do not use $C_k$ but $C_r$ as seeds in the permutation recovery is to avoid the dependency issue caused by reusing the edges over $C_{k}$ and $(C_1,\dots, C_{k-1})$ that were used in generating the signature vectors for the vertices of $C_k$. 

By combining Corollary \ref{cor:exact_smallest}  and Theorem \ref{thm:seeded graph matching}, we obtain our complete result to exactly recover $\pi:[n]\to[n]$.
  
\begin{corollary}[Exact Matching]\label{cor:combined}
For any constant $D>0$ there exists a constant $n_0$ depending on $D$ and absolute constants $\alpha_{1}, M,M',M_1>0$ with the properties below. 
Consider the two graphs $G^\pi$ and $G'$, generated from the correlated SBMs defined in Sec. \ref{sec:model}, 
with the underlying permutation $\pi:[n]\to[n]$ and correlation $1-\alpha$. 
Suppose that community labels $(C_1,C_2,\dots, C_k)$ are given in both graphs, and assume that $C_k$ is the smallest community with size $n_{\min}$. Assume that $\alpha\in(0,\alpha_1)$, \eqref{eqn:main_thm_cond1}, \eqref{eqn:main_thm_cond2} and \eqref{eqn:main_thm_cond3} hold. Also assume that $n_{\min} \geq M_1 n^{10/19}$.
Run Algorithm \ref{alg:algorithm2} with input graphs $(G^{\pi},G')$ to get an initial estimate of the permutation $\tilde{\pi}_k:[n_k]\to[n_k]$ over $C_k$. Using $\tilde{\pi}_k$ as input, run the refinement matching algorithm (Algorithm \ref{alg:algorithm4}) to obtain $\hat{\pi}_k:[n_k]\to[n_k]$ over $C_k$. Using $\hat{\pi}_k$ as input, run Algorithm \ref{alg:algorithm5} to find a matching $\hat{\pi}_r:[n_r]\to[n_r]$ over $C_r$ that was not used in Algorithm \ref{alg:algorithm2}. Then, by using $\hat{\pi}_r$ as input, run Algorithm \ref{alg:algorithm5} repeatedly to recover $\hat{\pi}_i$, for all $i\in[k]\backslash\{r,k\}$. By combining the recovered permutations over each community, we get $\hat{\pi}:=(\hat{\pi}_1,\dots, \hat{\pi}_k):[n]\to[n]$ such that $\hat{\pi}=\pi$ with probability at least $1-2n_{\min}^{-D}-n^{-1/19}$.
\end{corollary}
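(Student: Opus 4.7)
The plan is to stitch together the guarantees of Corollary \ref{cor:exact_smallest} (exact matching on the smallest community) and Theorem \ref{thm:seeded graph matching} (seeded matching between two communities) along a carefully chosen sequence of communities, and then control the overall failure probability by a union bound. Since the hypotheses \eqref{eqn:main_thm_cond1}, \eqref{eqn:main_thm_cond2}, \eqref{eqn:main_thm_cond3} are exactly those of Corollary \ref{cor:exact_smallest}, running Algorithm \ref{alg:algorithm2} followed by Algorithm \ref{alg:algorithm4} on $(G^{\pi}, G')$ immediately produces $\hat{\pi}_k = \pi|_{C_k}$ with probability at least $1 - 2n_{\min}^{-D}$. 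This is the only step that uses the signature-vector machinery, and it consumes the edges inside $C_k$ together with the edges between $C_k$ and the $k'$ sampled communities $(C_1,\dots,C_{k'})$.

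Next I would bootstrap from $C_k$ to the rest of the graph via Algorithm \ref{alg:algorithm5}. To avoid a dependency problem --- the edges between $C_k$ and $(C_1,\dots,C_{k'})$ have already been inspected while producing $\hat{\pi}_k$ --- I first apply Algorithm \ref{alg:algorithm5} with seeds $\hat{\pi}_k$ to an untouched community $C_r$ with $r \notin \{1,\dots,k',k\}$. Such an $r$ exists because \eqref{eqn:main_thm_cond3} forces $k\geq 3$ and the algorithm picks $k'\leq k-2$. To invoke Theorem \ref{thm:seeded graph matching} with $(s,t)=(k,r)$, I verify $\alpha<1/10$ (by tightening $\alpha_1\leq 1/10$), $p\leq 1/256$ (using $p\leq n_{\min}^{-19/20}$ from \eqref{eqn:main_thm_cond1}, which holds once $n_{\min}$ is sufficiently large), and the three conditions in \eqref{eq:seeded matching condition_main}. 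The two logarithmic conditions $n_r p\geq K\log n_r$ and $n_r n_k p q\geq K\log n_r$ follow from \eqref{eqn:main_thm_cond1} and \eqref{eqn:main_thm_cond2} combined with $n_r\geq n_{\min}$. The crucial condition is $n_r pq\leq 1/256$: using $n_r\leq n$, $p\leq n_{\min}^{-19/20}$, and $q\leq p$ gives $n_r pq\leq n\cdot n_{\min}^{-19/10}$, which is at most $1/256$ once $n_{\min}\geq M_1 n^{10/19}$ for sufficiently large $M_1$. Theorem \ref{thm:seeded graph matching} then produces $\hat{\pi}_r=\pi|_{C_r}$ with probability at least $1-4/n_r$.

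For every remaining community $C_t$ with $t\in[k]\setminus\{r,k\}$, I would run Algorithm \ref{alg:algorithm5} once using seeds $\hat{\pi}_r$. The same verification applies verbatim to $(s,t)=(r,t)$: the $t$-dependent conditions in \eqref{eq:seeded matching condition_main} hold uniformly for every $n_{\min}\leq n_t\leq n$ under the standing assumptions on $(p,q,n_{\min})$. Each call succeeds with probability at least $1-4/n_t\geq 1-4/n_{\min}$, so a union bound over the $k-2$ remaining seeded-matching invocations, the single $C_k\to C_r$ call, and the signature-plus-refinement step yields total failure probability at most $2n_{\min}^{-D}+4k/n_{\min}$. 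Because $k\leq n/n_{\min}$, enlarging $M_1$ so that $n_{\min}\geq M_1 n^{10/19}$ forces $4k/n_{\min}\leq 4n/n_{\min}^2\leq n^{-1/19}$, giving the claimed bound.

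The main obstacle is the independence bookkeeping just described. Theorem \ref{thm:seeded graph matching} treats the seed permutation $\pi_s$ as independent of the edges of $G^{\pi}(C_s,C_t)\cup G^{\pi}(C_t)\cup G'(C_s,C_t)\cup G'(C_t)$ that the algorithm inspects, and a direct invocation from $C_k$ to $C_a$ with $a\in[k']$ would violate this, because the edges between $C_k$ and $C_a$ were already consumed by Algorithm \ref{alg:algorithm1} when building the partition trees. Routing through the fresh community $C_r$ breaks this dependency, after which every remaining reduction is a clean application of the cited results with parameters verified as above.
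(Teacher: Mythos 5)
Your argument is essentially the same as the paper's: both apply Corollary~\ref{cor:exact_smallest} to get $\hat\pi_k$, bound $n_t p q$ via $n p^2 \leq M_1^{-19/10} (n_{\min}p)^2 / n_{\min}^{1/10} \leq M_1^{-19/10}$ using $n_{\min}\geq M_1 n^{10/19}$ and $n_{\min}p\leq n_{\min}^{1/20}$, route the first seeded-matching call through a fresh community $C_r$ precisely to avoid reusing the $C_k\!-\!C_a$ edges consumed by the partition trees, iterate from $C_r$ to the rest, and finish with a union bound that turns $4k/n_{\min}$ into $n^{-1/19}$. The only cosmetic difference is that you make the $\alpha_1\leq 1/10$ and $p\leq 1/256$ checks explicit, which the paper leaves implicit; otherwise this is the paper's proof.
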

Note that Corollary \ref{cor:combined} implies Theorem \ref{thm:main_dense}.


\begin{remark}[Time complexity]
We summarize the total time complexity. For simplicity, we assume that the size of each community is equal to $m:=n/k$.
First, the time complexity to run Algorithm \ref{alg:algorithm1} is $O((m p)^\ell \cdot (k' m q) )$, since the depth-$\ell$ partition tree contains $(m p)^\ell $ vertices, and to assign each vertex to one of the partitioning nodes at each level,  $(k' m q)$ neighboring vertices need to be checked on average. Since we will choose $(mp)^\ell=O(m)$, this time-complexity is bounded by $O( k'm^2 q )$.
In Algorithm \ref{alg:algorithm2}, we apply Algorithm \ref{alg:algorithm1} to $m$ vertices in $C_k$ and compare $m^2$ pairs of signature vectors of dimension $(\log m)^5$. Thus, the complexity is $O( k'm^3 q +m^2(\log m)^5)$. In addition, the refinement matching by Algorithm \ref{alg:algorithm4}  requires time complexity of $O(m^3)$. The seeded graph matching by Algorithm \ref{alg:algorithm5} requires $O(m^2 (m p)(m))$-time complexity, and we run this algorithm a total of $(k-1)$ times. Thus, the complexity for seeded matching is $O(km^4p)$. The total time complexity is dominated by that of the seeded graph matching, ${O}(km^4p)$.
\end{remark}


\begin{remark}[Density regime]\label{rmk:general density}
We discuss the most general density regime, where there exist low-order polynomial-time algorithms for exact matching of the correlated SBMs with constant correlation. Assume approximately balanced community sizes, i.e., $n_{\min}=\Theta(n/k)$.
Our main result (Theorem \ref{thm:main_dense}) shows that exact matching is achievable in polynomial time for the correlated SBMs  whose average degrees within the smallest community satisfy $(\log n_{\min})^{1.1} \leq n_{\min}p \leq n_{\min}^{{1}/{20}}$ with two additional conditions on the number $k$ of communities \eqref{eqn:main_thm_cond3} and the edge density $q$ between communities \eqref{eqn:main_thm_cond2}. 
The result of \citet{MRT21a} (summarized in Table \ref{tbl:previous algorithm}), on the other hand, shows the existence of an efficient algorithm in the sparse regime $(1+\epsilon)\log n_{\min}\leq n_{\min} p \leq n_{\min}^{1/(C\log\log n_{\min})}$ for some constant $C>0$.
For both the algorithms, we need an additional assumption on $n_{\min}$ in \eqref{eq:main cor dense nmin}, which is equivalent to $ k\leq C' (\sqrt{\frac{n}{p}}(p-q) \wedge \frac{n(p-q)^{2}}{p \log n})$ for some constant $C'>0$, in order to recover the community labels in both graphs before performing graph matching. In summary, for the correlated SBMs with the density
$
   (1+\epsilon) \log n \leq np \leq n^{1/20},
$
there exists a low-order polynomial-time algorithm that can achieve exact matching of vertices if $\alpha\in(0,\text{const.})$ and
         $
            \frac{C\log \log n \cdot \log np}{\log n}\leq k\leq C' (\frac{\sqrt{n}(p-q)}{\sqrt{p}}\wedge \frac{n(p-q)^{2}}{p \log n}).
      $
  For the dense regime, we also need     
       $
             nq/k = \Omega \left( (\log \log n)^{2} \right).
        $
\end{remark}

\section{Outline of Proof}\label{sec:proof}

In this section, we outline the proof of Theorem \ref{thm:almost_smallest}.
Recall that Algorithm \ref{alg:algorithm1} constructs a $2^{k'}$-ary partition tree of depth $\ell$, rooted from each vertex $i\in[C_k]$, of size $|C_k|=n_{\min}$, where the tree rooted from $i$ is constructed within $C_k$ by using the intra-community edges, while the partitioning of the vertices at each depth of the tree is done using the inter-community edges between $C_k$ and $(C_1,\dots,C_{k'})$, the randomly selected $k'$ communities.
Our proof begins by showing that for $\ell= \lceil\frac{\log n_{\min
}}{40\log n_{\min}p}\rceil \wedge \lceil 42\log \log n_{\min} \rceil$ and $k'\ell=\Theta(\log\log n_{\min})$, the length-$\ell$ neighborhoods of the majority of vertices in $C_k$ form trees, and the vertices in the length-$\ell$ sphere $\mathcal{S}_{G(C_{k})}(i,\ell)$ of $i$ are uniformly partitioned into the leaf nodes $T^\ell_{\bss^{\ell}}(i)$, $\bss^\ell\in\{-1,1\}^{k'\ell}$, each with a rough size of $({n_{\min}p}/{2^{k'}})^\ell$.

We then compare the overlap between $T^{\ell}_{\bss^\ell}(i,G)$ and $T^{\ell}_{\bss^\ell}(i,G')$, the leaf nodes of the partition trees of $i\in C_k$ in graph $G$ and $G'$, and show that there is significant overlap, $\left| T^{\ell}_{\bss^\ell}(i,G) \cap T^{\ell}_{\bss^\ell}(i,G') \right| \geq ({n_{\min}p}/{2^{k'}})^\ell (1-\epsilon)^{k'\ell}$, where $\epsilon$ is an arbitrary small constant. This overlap leads to a correlation between the signature vectors $f(i)$ and $f'(i)$. 
When comparing the $\ell_2$-normalized distance between $f(i)$ and $f'(i)$, we use the sparsification procedure suggested by \citet{MRT21a}, to mitigate the dependency in the entries of the signature vector of $i$ caused by the overlap between the sets $T^\ell_{\bss^{\ell}}(i,G)$ and $T^\ell_{\bst^{\ell}}(i,G)$ for $\bss^{\ell}\neq \bst^{\ell}$. After sparsification, the normalized distance between the sparsified signatures of correct pairs is shown to be smaller than a given threshold with high probability. 


For incorrect pairs of vertices $i\neq i'$, the sizes of $T^{\ell}_{\bss^\ell}(i,G) \cap \mathcal{B}_{G_{0}(C_{k})}\left(i', \ell\right)$ and $T^{\ell}_{\bss^\ell}\left(i',G'\right) \cap \mathcal{B}_{G_{0}(C_{k})}(i, \ell)$ summed over the sparsified set are well controlled. As a result, the normalized distance between the sparsified signatures of distinct vertices is larger than the threshold with high probability.  
Detailed proofs for theorems are available in Appendix.
\section{Experiment}\label{sec:experiment}
In this section, we evaluate our algorithm on both synthetic and real networks with inherent community structure. 
\begin{figure*}[t]
\centering
	\subfloat[SBM varying $p$ and $1-\alpha$ \label{fig:SBM_p}]{\includegraphics[width=0.25\textwidth, height=3.5cm]{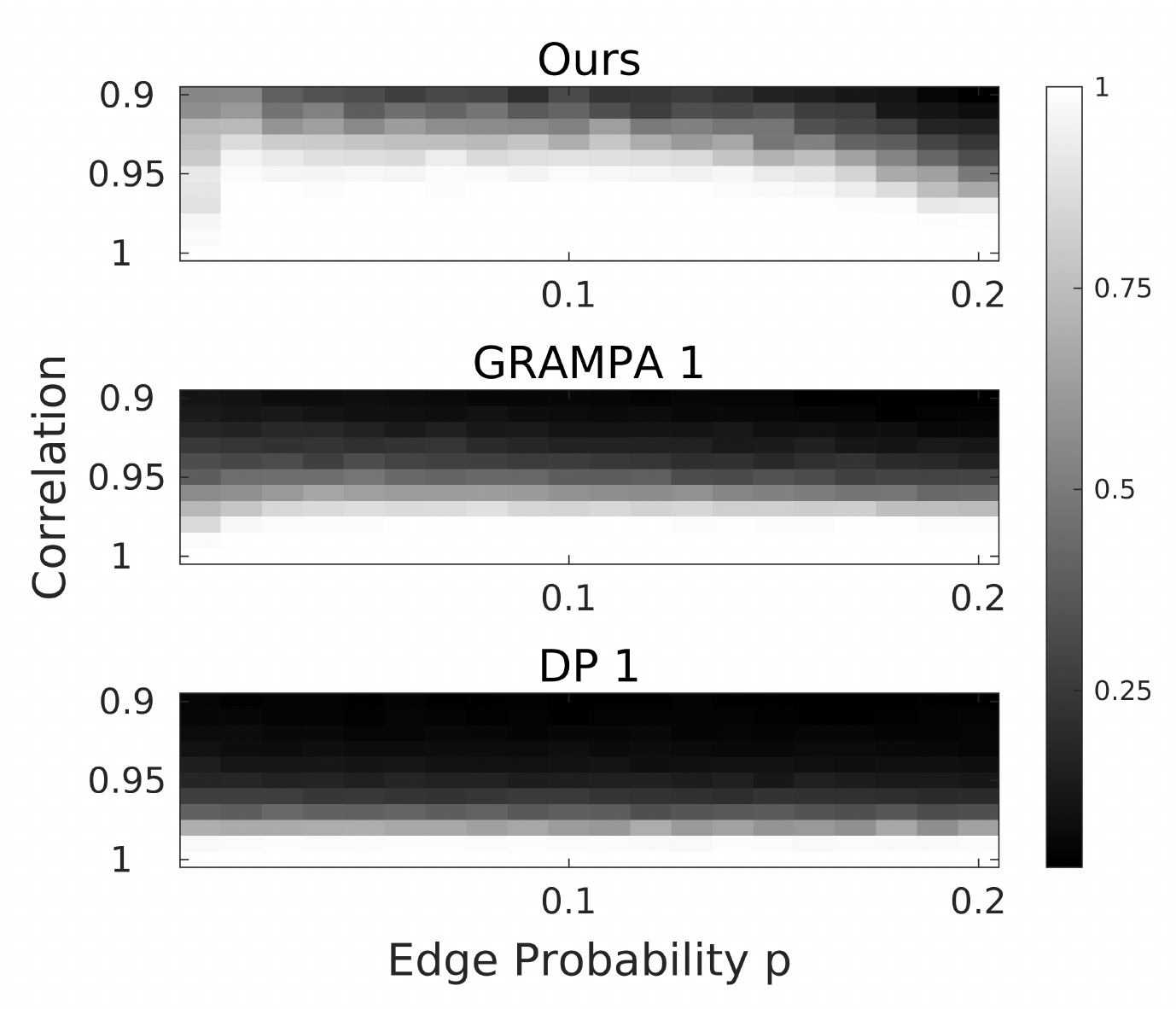}}
    \subfloat[SBM varying $1-\alpha$ \label{fig:SBM_overall}]{\includegraphics[width=0.3\textwidth, height=3.5cm]{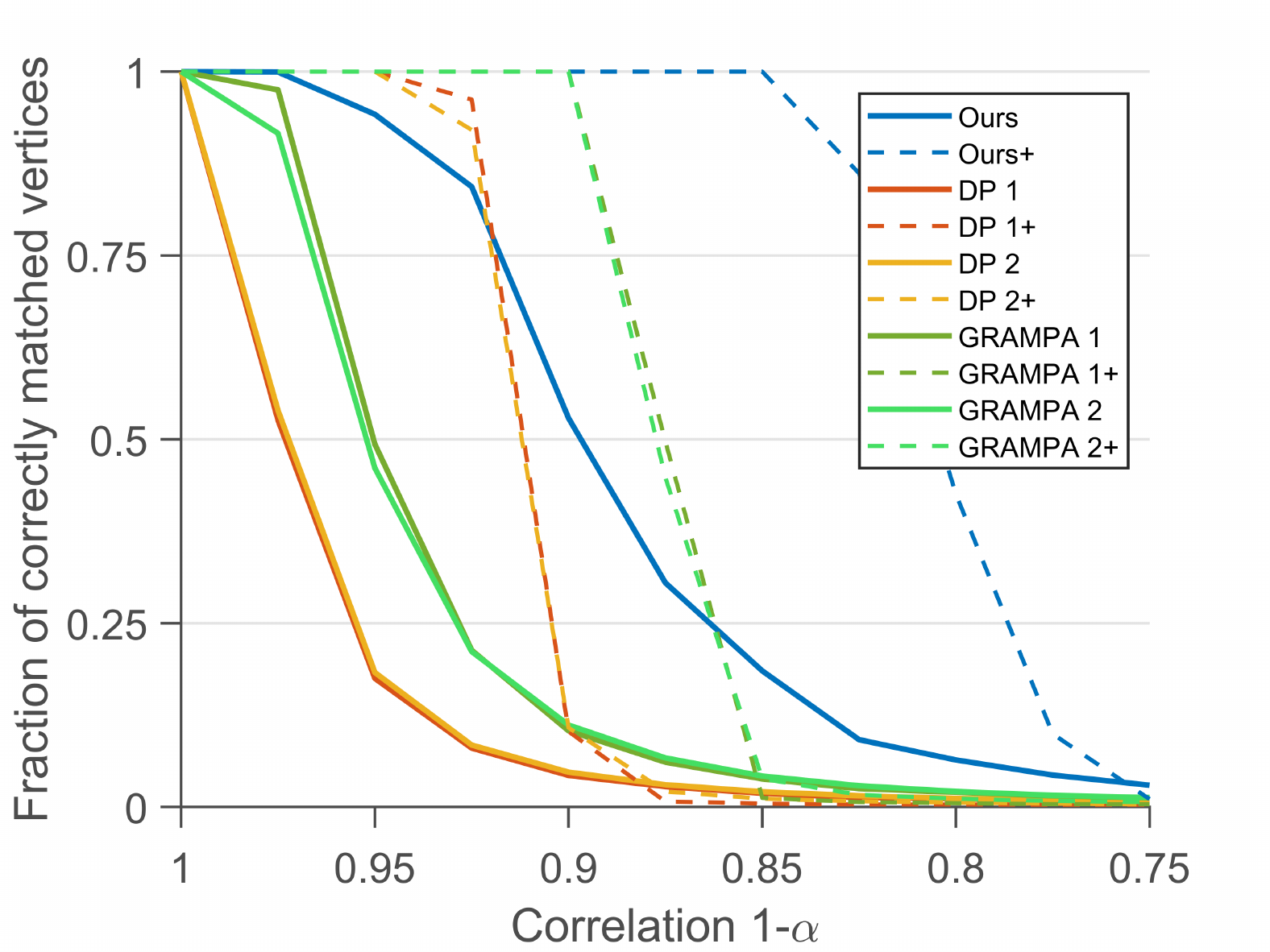}}
    \subfloat[BlogCatalog varying $1-\alpha$ \label{fig:Blog}]{\includegraphics[width=0.25\linewidth, height=3.5cm]{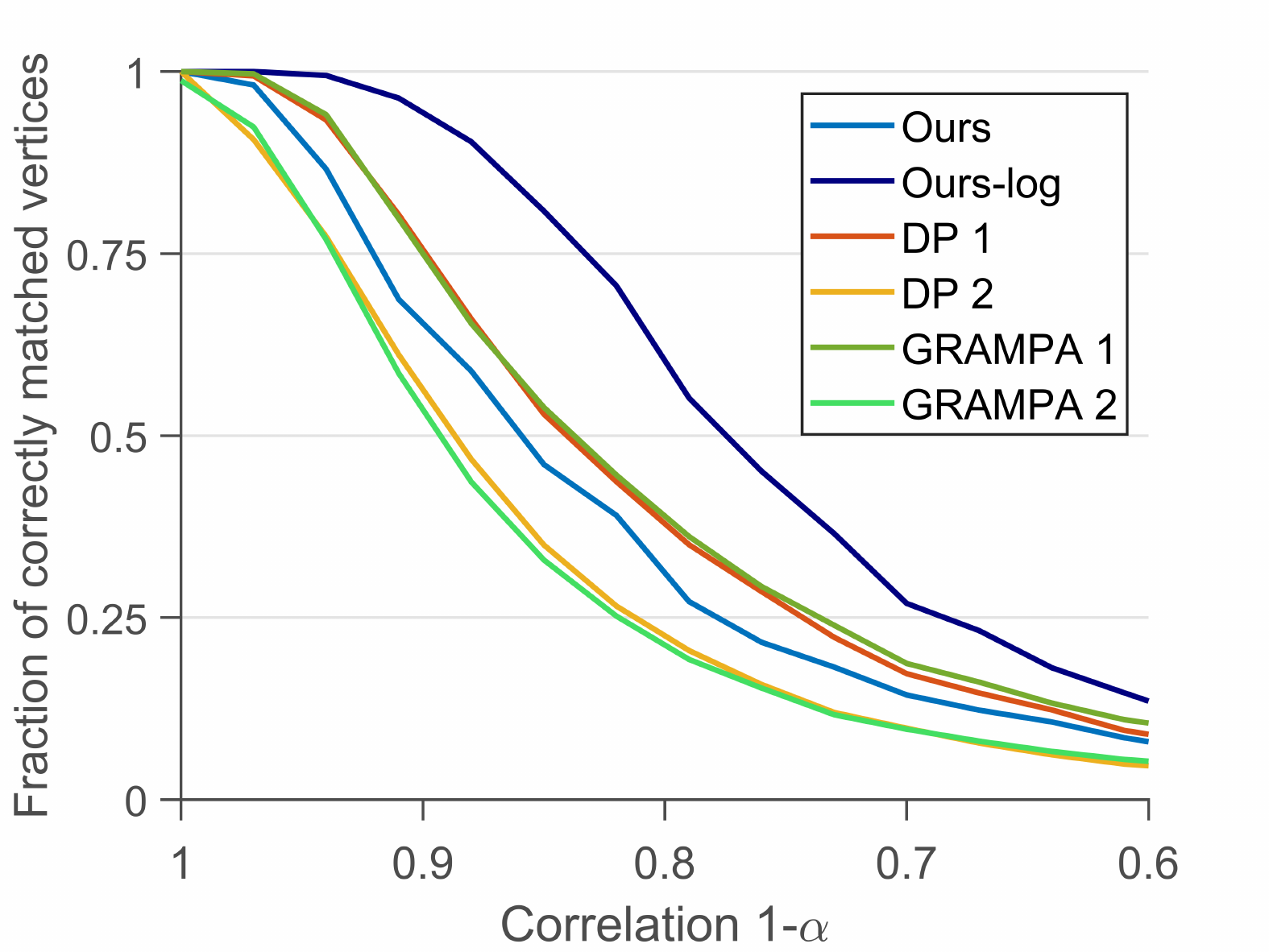}} 
    \subfloat[Movie \label{fig:Movie}]{\includegraphics[width=0.2\linewidth, height=3.5cm]{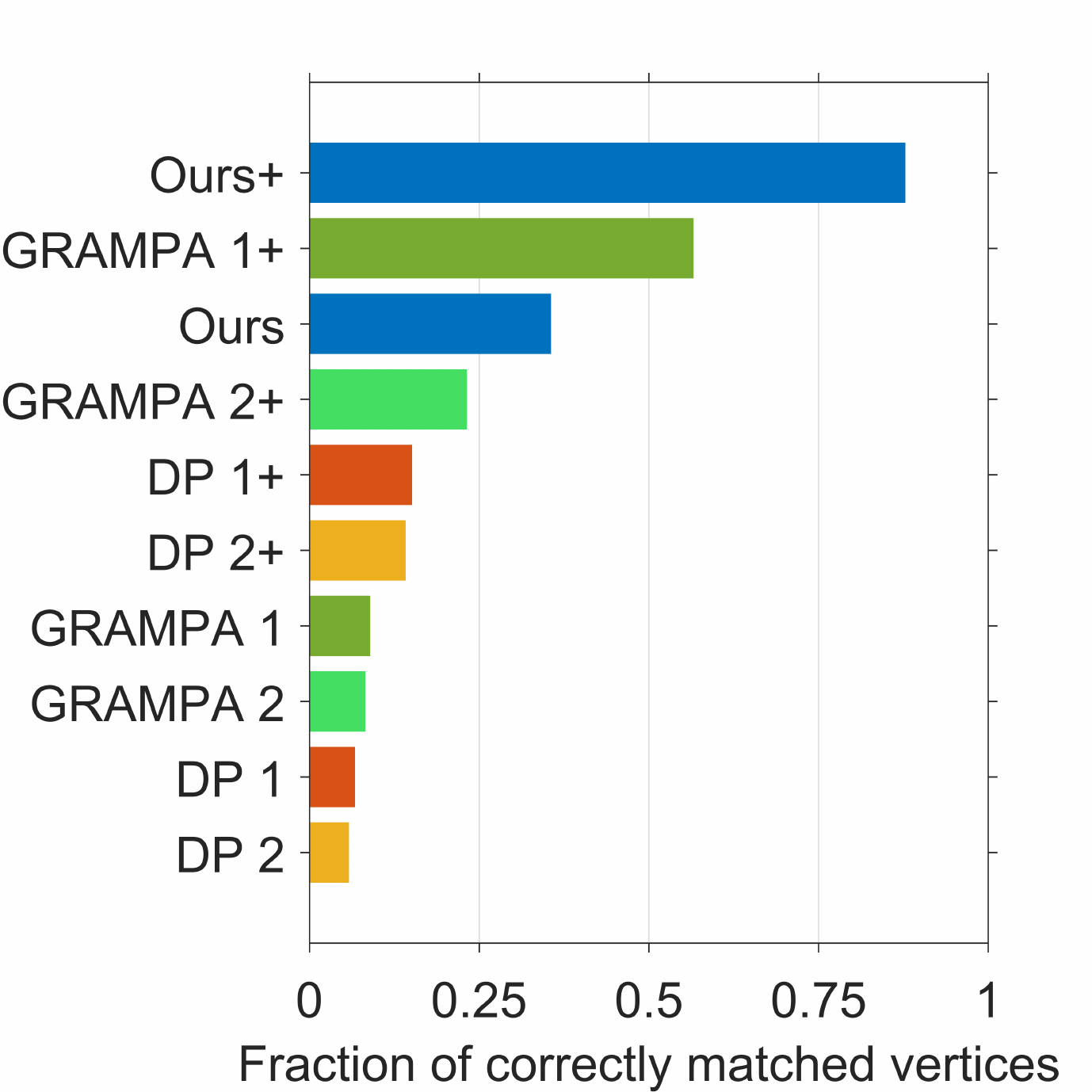}} 
    \caption{Comparison of our partition tree algorithm with other baselines, GRAMPA and Degree Profile (DP) for matching networks of (a)-(b) correlated SBMs, (c) BlogCatalog network and (d) two correlated movie networks (`RT' and 'IMDb').  }
    \label{fig:SBM}
\end{figure*}
\begin{table}[b]
\centering
    \caption{Details of three networks.}
     \label{tbl:dataset}  
\small{
\begin{tabular}{  c|  c  c c c }
 \toprule
 Type & Model/Source  & $n$ & $\#$ of edges & $k$\\
 \midrule
Synthetic   & SBM & 4,998 & 139k & 6 \\
\midrule
Sampled & BlogCatalog & 5,196 & 171k & 6\\
\midrule
 \multirow{2}{*}{Correlated } 
 & Rotten-Tomatoes & 4,780 & 110k & 6 \\ 
 & IMDb & 4,780 & 90k & 6 \\
 \bottomrule
\end{tabular}}
\end{table}
\paragraph{Datasets}
We use three types of datasets described in Table \ref{tbl:dataset}. 
For the synthetic dataset, we sample a parent graph $G_0$ from SBM with $p=0.025$ and $q=p/3$, where the nodes are partitioned into six communities of equal size. Then, two graphs $G$ and $G'$ are subsampled independently from $G_0$ by removing the edges of $G_0$ with probability $\alpha\in[0,1)$. For real datasets, we obtain correlated graphs in two different ways: 1) For the `BlogCatalog' dataset \citep{10.1145/2806416.2806501}, the parent graph is given by the network of the blogger community, where each user is treated as a vertex and the edges are connected between the users if they follow each other. We subsample the parent graph twice independently to generate the correlated graphs. The community label is given based on the predefined categories of the blogs. 2) For the `Movie' dataset, there is no parent graph, but two correlated networks are constructed from two different movie datasets from `Rotten Tomatoes' and `IMDb' that share common vertices (movie items). Edges are connected between vertices if there is at least one common actor in the top cast list, where the `Rotten Tomatoes' dataset contains up to six actors per movie and the `IMDb' dataset contains up to five actors per movie. The community label is assigned based on six groups of release years. 
We also perform experiments on two more real datasets, Flickr data \citep{10.1145/2806416.2806501} and ACM-DBLP data \citep{8443159}, and present the results in Appendix \ref{app:exp:add_exp}.


\paragraph{Baselines} 
We compare the performance of our algorithm to two baselines: `GRAMPA' \citep{pmlr-v119-fan20a} and `Degree Profile (DP)' \citep{DWMX21}.
As reviewed in Table \ref{tbl:previous algorithm}, these algorithms are computationally efficient graph-matching algorithms with provable performance guarantees. 
GRAMPA is an algorithm based on a spectral method where a similarity matrix is constructed using all eigenvector pairs of the two graphs. Then, a linear assignment problem applied to the similarity matrix yields a match (permutation over the vertices).
Degree Profile computes and compares the degree profile of the neighbors of each vertex, which is used to generate a similarity matrix. 
Since these baselines are developed in a way that does not use the community structure, for a fair comparison, we consider two variants of these algorithms (Methods 1 and 2): Method 1 applies the original algorithm (either GRAMPA or Degree Profile) to the whole graph of $[n]$ vertices to generate a $n\times n$ similarity matrix, but then selects a subpart of the matrix corresponding to the vertices in $C_k$ to define the similarity matrix over $[n_k]$. After solving the assignment problem to generate a permutation $\tilde{\pi}_k:[n_k]\to[n_k]$, we apply the rest of the steps  same as ours, including the refinement matching for $C_k$ by Algorithm \ref{alg:algorithm4} \citep{MRT21a} and seeded graph matching (Algorithm \ref{alg:algorithm5}) using the matched pairs in $C_k$ as initial seeds. Method 2 applies the original algorithm to each of the recovered communities to obtain the similarity matrices, and then solves the assignment problem and refinement matching for each of the communities.
To compare the performance of our algorithm with these baselines, we also define a $(n_k\times n_k)$-dimensional similarity matrix $S$ with its $(i,j)$-th entry $S_{ij}: =\sum_{\bss^\ell} \frac{(f_{\bss^\ell}(i)-f'_{\bss^\ell}(j))^{2}}{\mathrm{v}_{\bss^\ell}(i)+\mathrm{v}'_{\bss^\ell}(j)}$. Instead of thresholding the entries of $S$ as in our Algorithm \ref{alg:algorithm2}, we apply the linear assignment problem to $S$, as in GRAMPA \citep{pmlr-v119-fan20a}, to compare the accuracy of the recovered permutation over $C_k$. We also compare the final accuracy of these three algorithms.

Our code is publicly available at \url{https://github.com/cabaksa/cSBM_Matching}.
\paragraph{Results}

We present the experimental results for the SBM networks in Fig.~\ref{fig:SBM_p}--\ref{fig:SBM_overall}, the BlogCatalog networks in Fig. \ref{fig:Blog} and the Movie networks in Fig. \ref{fig:Movie}, respectively. Each baseline (GRAMPA and DP) has two versions (1 and 2), as explained before. The lines without $+$ indicate the fraction of correctly matched vertices within the comparison set $C_k$ after solving the linear assignment problem on the similarity matrix of each algorithm, and those with $+$ indicate the final matching accuracy over the $[n]$ vertices.

In Fig. \ref{fig:SBM_p}, we plot the empirical performances of our algorithm, GRAMPA1, and Degree Profile1, averaged over 10 runs, as we change the intra-community edge density $p$ ($q=p/3$) and the correlation parameter $1-\alpha$. A lighter color indicates a lower fractional error. We can observe that our algorithm is more robust against the decrease of the correlation $1-\alpha$ for all $p\in[0.01,0.2]$ range. In Fig. \ref{fig:SBM_overall} we can observe that our algorithm outperforms other baselines in recovering the permutation $\pi_k: [n_k]\to[n_k]$ over $C_k$ (solid lines) and maintains high accuracy longer as the correlation decreases. The final matching accuracy (dotted lines) is also better for our algorithm. 




The degree distribution of BlogCatalog graphs is known to follow a power law \citep{10.1145/2806416.2806501}.
Since Algorithm \ref{alg:algorithm2} computes the normalized distance between the signature vectors using the normal approximation of the binomial distribution, we adjust the definition of the signature vectors by using the log of the degree instead of the degree itself. This variation is denoted `Ours-log' in Fig. \ref{fig:Blog}. We can observe that `Our-log' outperforms other baselines, although the accuracy of our original algorithm without the degree fixing is lower than that of GRAMPA1 and DP1.


For `Movie' networks, unlike other data, there is no parent graph, and the two correlated graphs have different edge densities. The common edges between the two graphs are $\sim$80k, which means that the correlation $1-\alpha$ is 0.73 and 0.89, respectively, from the perspective of each graph. 
Figure \ref{fig:Movie} shows the fraction of correctly matched vertices on this dataset, where our algorithm achieves the best performance.

\section{Discussion and Open Problems}\label{sec:discussion}


We presented a polynomial-time algorithm with low-order complexity for achieving exact matching on correlated SBMs with constant correlation. Our result is the first of its kind in the dense regime.
In the sparse regime of the graphs where  $(1+\epsilon)\log n_{\min}\leq n_{\min} p \leq n_{\min}^{1/(C\log\log n_{\min})}$, by directly applying the binary partition tree algorithm from \citet{MRT21a} to each recovered community, one can achieve the exact graph matching with constant correlation in polynomial-time complexity. By generalizing the idea from  \citet{MRT21a}, we developed a $2^{k'}$-ary partition tree algorithm for exact graph matching for the correlated SBMs, using the known community structure and the edges between communities. Our algorithm achieves the exact matching in $(\log n_{\min})^{1.1} \leq n_{\min}p \leq n_{\min}^{1/20}$ with constant correlation, with two additional conditions on the minimum number of communities $k$ and the edge density across communities. Our work leaves several important open questions, as discussed below. 

\paragraph{Exact graph matching without exact community recovery} In the correlated SBMs, where $G^{\pi}$ and $G'$ are sampled from a parent graph $G_0$, one can try to match the vertices of the correlated graphs $G^{\pi}$ and $G'$ using a known community structure, or one can try to recover the community labels by first matching the two correlated graphs. The second problem was considered in \cite{RS21} for the case of two equal-sized communities, and it was shown that there exist regimes of $(p,q,\alpha)$ where having the correlated graph $G^\pi$ as side information can provide an information advantage in recovering the community of $G'$ (or $G_0$) by graph matching between $G^\pi$ and $G'$.
Our work addressed the first problem and showed that there exists a low-order polynomial-time algorithm that can achieve the exact graph matching with constant correlation by using the known community structure in both graphs. However, it requires an additional assumption on $n_{\min}$, the minimum size of the community, as in \eqref{eq:main cor dense nmin}, to recovery the community structure exactly in polynomial time.  Then the natural question is whether exact community recovery is necessary to achieve exact graph matching with constant correlation in polynomial time. The answer may be no, and this is an interesting future work. 

\paragraph{Exact graph matching via exact community recovery} Even if we assume that the community labels in both graphs are given as side information, there is a gap between the information-theoretic  limits for exact graph matching and the regime where the current polynomial algorithms can exactly recover the matching between the vertices of the correlated SBMs.
First, our algorithm achieves the exact graph matching in the dense regime, but up to $n_{\min} \leq n_{\min}^{1/20}$. The upper bound on $n_{\min}$ is needed in the proof of Lemma \ref{prop:type}  to guarantee that the majority of vertices in $C_k$ form trees up to depth $\ell$.
Thus, we may need another algorithmic approach beyond the partition tree to achieve the exact matching in the denser regime, e.g., $n_{\min}p=n_{\min}^{1-o(1)}$. 
For the sparse regime, to apply the algorithm from \citet{MRT21a} to each community, we need the condition $(1+\epsilon)\log n_{\min}\leq n_{\min} p \leq n_{\min}^{1/(C\log\log n_{\min})}$. Consider the correlated SBMs with two balanced (size $n/2$ each) communities. The information-theoretic limit from \citet{RS21} shows that exact matching is possible if $\frac{a+b}{2}(1-\alpha)>1$ on the correlated SBMs with two balanced communities,  where $p=\frac{a \log n}{(1-\alpha) n}$, and $q=\frac{b \log n}{(1-\alpha)n}$ for positive constant $a,b$. Thus, if $a=1.5$, $b=1$, and $\alpha$ is a sufficiently small constant, the exact matching is information-theoretically possible. However, since the density of each community in this regime is  $n_{\min}p\approx 0.75\log n_{\min}<\log n_{\min}$, we cannot apply the matching algorithm from \citet{MRT21a} to each community. Thus, there is  a gap between the information-theoretic limit and the computational limit when we apply the matching algorithm to each recovered community. This implies that we need another algorithmic approach that uses both inter-/intra-community edges even in the sparse regime to bridge the gap between the information-theoretic limit and the computational limit. 
\paragraph{Using seeded graph matching} Our algorithm first uses Algorithm \ref{alg:algorithm2} to achieve the almost exact matching  on community $C_k$, and then uses the refinement matching (Algorithm \ref{alg:algorithm4}) to achieve exact matching on community  $C_k$. Finally, it uses vertex pairs from the community $C_k$ as seeds to complete exact matching for the rest of the vertices in the correlated SBMs. In the correlated ER model, the seeded graph matching algorithm has been extensively studied. There are seeded graph matching algorithms  based on percolation \citep{YG13,Percolation}, and algorithms using large neighbor statistics \citep{MX20}. Furthermore, \citet{KHG15,LS18,YXL21} have proposed algorithms for seeded graph matching, even when the initial seeds are noisy. It is also worth investigating how we can improve the conditions for our exact graph matching algorithm by adjusting the first step and allowing some noisy matching from the initial seed set $C_k$. 

\section{Acknowledgements}
This research was supported by the National Research Foundation of Korea under grant 2021R1C1C11008539, and  by the MSIT(Ministry of Science and ICT), Korea, under the ITRC(Information Technology Research Center) support program(IITP-2023-2018-0-01402) supervised by the IITP(Institute for Information \& Communications Technology Planning \& Evaluation).

 	\newpage
\bibliography{main_CSBM_matching_final}
\bibliographystyle{icml2023}

    \newpage
\appendix
    \onecolumn
    The appendix of this paper is organized as follows.
Additional experimental details and results are presented in Sec. \ref{app:sec:experiment}, and the proofs of the theoretical results follow.
The proof of Corollary \ref{cor:main_all}, where we give the sufficient conditions for exact matching of the correlated SBMs with unknown community structure, is presented in Sec. \ref{app:sec:cor_main_all}.
Theorem \ref{thm:seeded graph matching}, which presents the performance of seeded graph matching (Algorithm \ref{alg:algorithm5}), is proved in Sec. \ref{app:sec:thm_seeded}.
The proof of Corollary \ref{cor:combined} is given in Sec. \ref{app:sec:cor_combined}.
The proof of our main theorem (Theorem \ref{thm:almost_smallest}) is presented in Sec. \ref{sec:proof of almost}. Various lemmas to prove Theorem \ref{thm:almost_smallest} are presented in Sections \ref{sec:vertex classes}, \ref{sec:proof of correct final} and \ref{sec:proof of wrong final}. Some technical tools to prove the main results are summarized in Sec. \ref{sec:appx tool}, and the previous results of \citet{MRT21a} regarding the refinement algorithm for exact matching and its performance guarantees are summarized in Sec. \ref{app:sec:exact}.

\section{Experiment}\label{app:sec:experiment}
\subsection{Experimental Details}
\begin{itemize}
\item Movie data: To generate two correlated networks sharing a common set of vertices, we use the `Rotten Tomatoes'\footnote{\url{https://www.kaggle.com/datasets/ayushkalla1/rotten-tomatoes-movie-database?resource=download}} dataset and the `IMDb'\footnote{\url{https://www.kaggle.com/datasets/jyoti1706/imdbmoviesdataset?datasetId=9670}} dataset. We generate two networks that share a common set of vertices corresponding to 4780 movies. Edges are connected between vertices if there is at least one common actor in the top cast list, where the `Rotten Tomatoes' dataset contains up to six actors per movie and the `IMDb' datasets contains up to five actors per movie. The community label is assigned based on six groups of release years. 

\item Community labels: In all the experiments reported in Section \ref{sec:experiment}, to focus on the performance comparisons between different matching algorithms, we assume that the ground-truth community labels are given in the networks. 

\item  Parameters: In order to apply our algorithm (Algorithm \ref{alg:algorithm1}), the model parameters $(p,q)$ are needed to obtain the values of $n_a q, n_k p$ and $n_k p(1-p)$, which are used to compute the signatures. Since these parameters are unknown in real datasets, we estimate these values from the data. Specifically,  the median of node degrees within and across communities is used instead of $n_kp$ and $n_aq$, respectively, and the empirical variance of node degrees within $C_k$ replaces $n_k p(1-p)$.

\item Signature vector: The two parameters $(k',\ell)$, which determine the dimension of the signature vectors, can be considered as hyperparameters of our algorithm. In our experiments, we choose $k'=4$ and $\ell=2$. In Appendix \S\ref{sec:varyingkl}, we provide additional experiments with varying $(k',\ell)$.

\item Similarity matrix and assignment problem: The two main baselines (GRAMPA and Degree Profile) use two-step procedures, generating the similarity matrix over the vertices and solving the assignment problem to output a permutation from the similarity matrix. To compare the performance of our algorithm with these baselines, we also define a $(n_k\times n_k)$-dimensional similarity matrix $S$ by defining its $(i,j)$-th entry as $S_{ij}: =\sum_{\bss^\ell} \frac{(f_{\bss^\ell}(i)-f'_{\bss^\ell}(j))^{2}}{\mathrm{v}_{\bss^\ell}(i)+\mathrm{v}'_{\bss^\ell}(j)}$. 
If both the partitioning nodes $T^{\ell}_{\bss^\ell}(i)$ and $T'^{\ell}_{\bss^\ell}(j)$ are empty for some $\bss^\ell\in \{-1,1\}^{k'\ell}$, we set $\frac{(f_{\bss^\ell}(i)-f'_{\bss^\ell}(j))^{2}}{\mathrm{v}_{\bss^\ell}(i)+\mathrm{v}'_{\bss^\ell}(j)}:=0$. 
Instead of thresholding the entries of $S$ as in our Algorithm \ref{alg:algorithm2} to generate the permutation over $C_k$, we apply the linear assignment problem to $S$ suggested in \citep{pmlr-v119-fan20a}, to compare the accuracy of the permutation obtained from our similarity matrix with that of the two baselines.

\item Seeded matching : In Algorithm \ref{alg:algorithm5}, we count the number of common seeds on 2-hop neighborhood when comparing each pair of vertices, and match the vertices if the number of common seeds exceeds the given threshold. 
The reason we considered 2-hop neighborhood instead of 1-hop neighborhood was to make the algorithm work even from very sparse regime of inter-community edge density, $n_{\min}q= C (\log \log n_{\min})^2$ for $C>0$. 
In the simulation, since the inter-community edge density is not very sparse, it is sufficient to count the number of seeds in the 1-hop neighborhood as the seeded matching algorithm proposed by \citet{KL13}. Thus, we use the greedy seeded matching algorithm, where we match the pair of vertices with the maximum number of common seeds in the 1-hop neighborhood among the set of remaining vertices as we match the nodes one by one.

\item Refinement matching : After the initial matching, refinement matching is performed on each community for $T$ rounds. 
Theorem 2.4 in \citep{MRT21a} shows that exact matching is achievable from partial matching using refinement matching by Algorithm \ref{alg:algorithm4}. In Algorithm \ref{alg:algorithm4}, the refinement step runs over $\left\lceil \log_{2}n \right\rceil$ rounds by checking whether the number of common pairs in the 1-hop neighborhood of the matched pairs exceeds a certain threshold and whether it is below a certain threshold for all unmatched pairs. Since setting the threshold requires knowledge of the model parameters $(p,q)$, instead we solve the linear assignment problem $\pi_{t} = \argmax_{\pi_{*} \in \mathcal{S}_n} \sum_i |\pi^{-1}_{t-1}\left(\caN_{G^{\pi}}(\pi_{*}(i))\right) \cap \caN_{G'}(i)|$ where $\mathcal{S}_n$ is the permutation matrix. 
Appendix \S\ref{app:exp:iteration} shows the effect of the number of iteration rounds $T$ on the final accuracy of the matching.

\end{itemize}



\subsection{Degree Distribution of Datasets}

\begin{figure}[!htb]
\centering
	\subfloat[SBM \label{fig:deg_sbm}]{\includegraphics[width=0.3\textwidth, height=3.5cm]{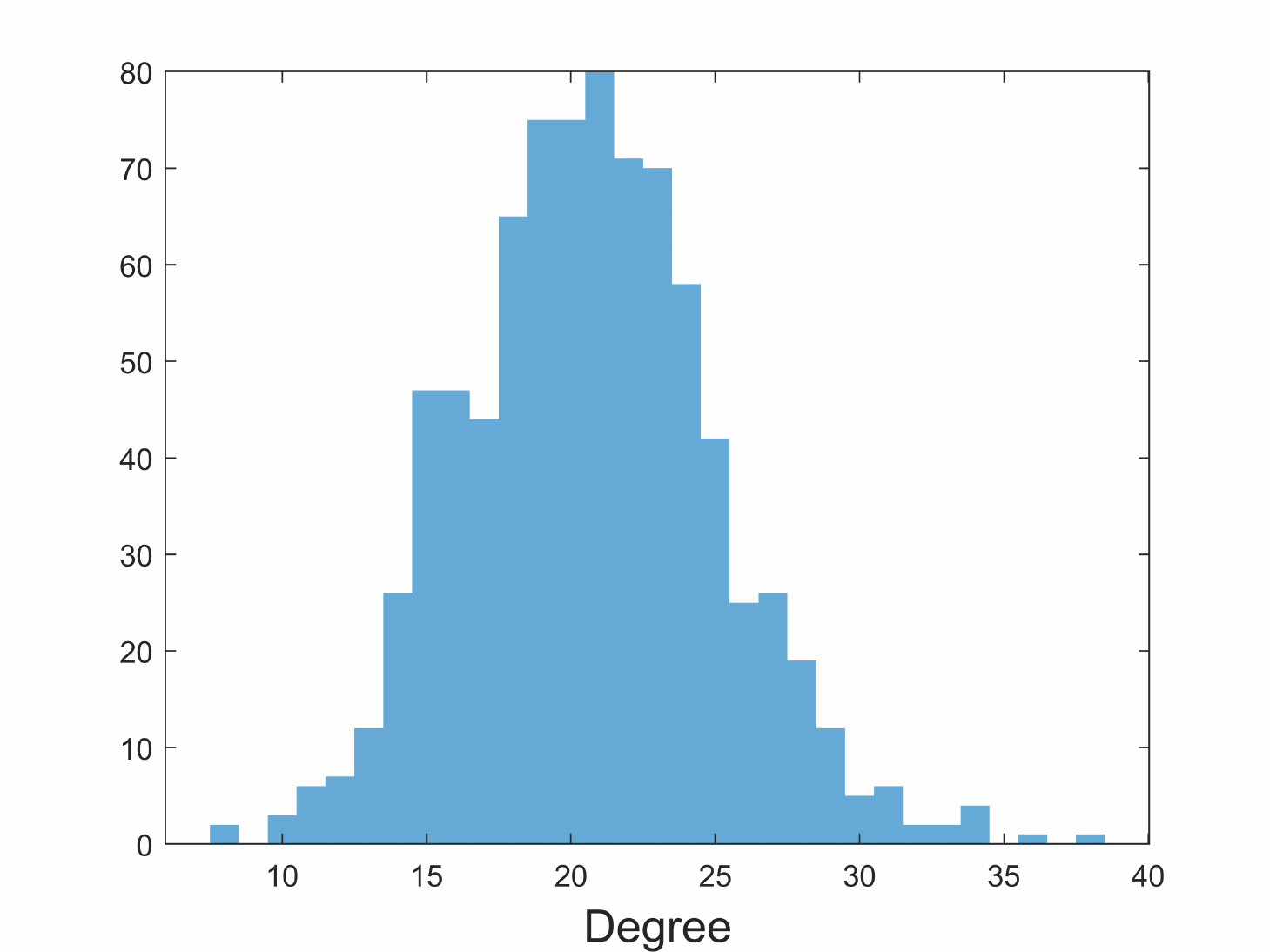}}
    \subfloat[BlogCatalog \label{fig:deg_blog}]{\includegraphics[width=0.3\textwidth, height=3.5cm]{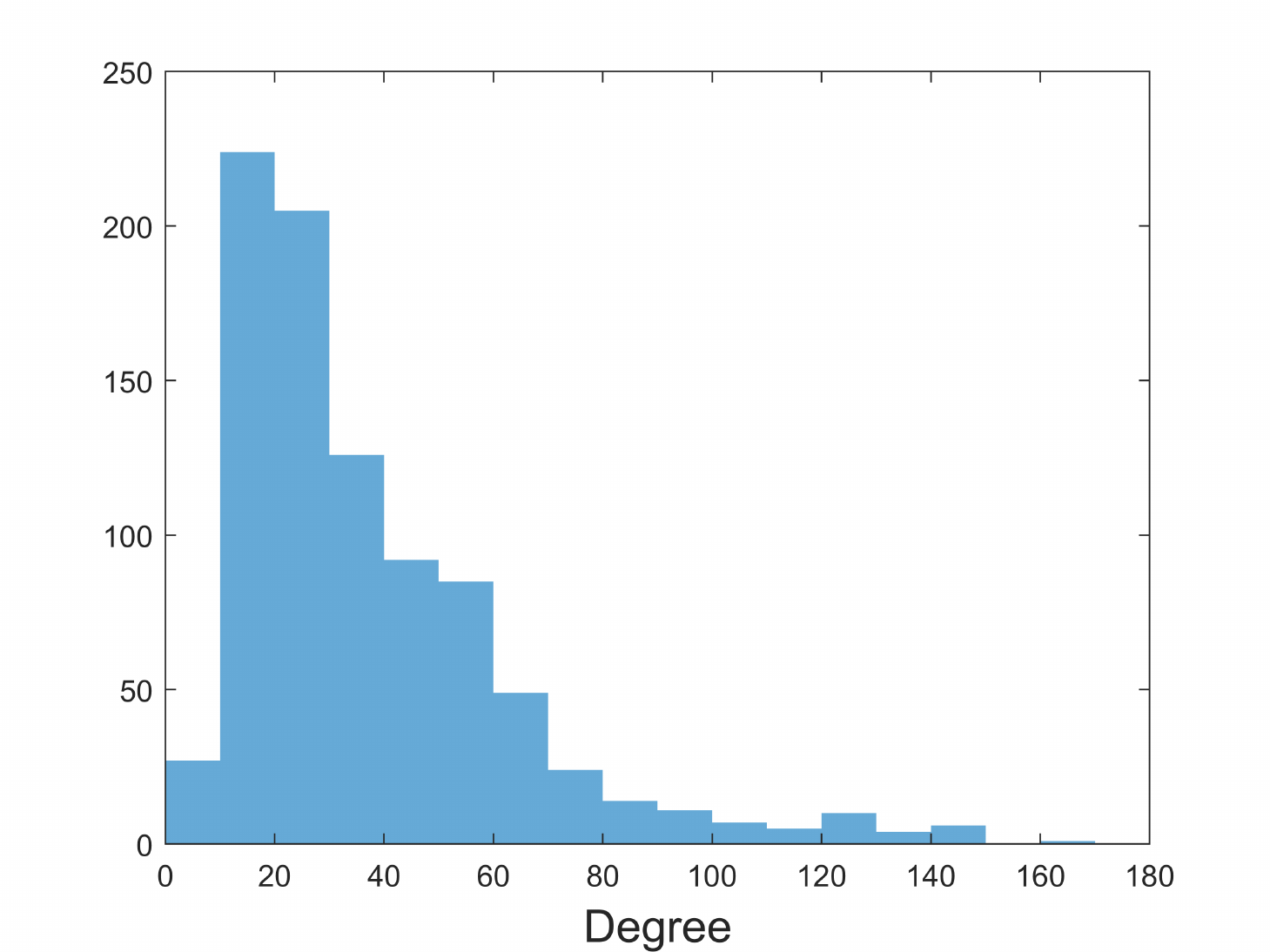}}
    \subfloat[Movie \label{fig:deg_mv}]{\includegraphics[width=0.3\linewidth, height=3.5cm]{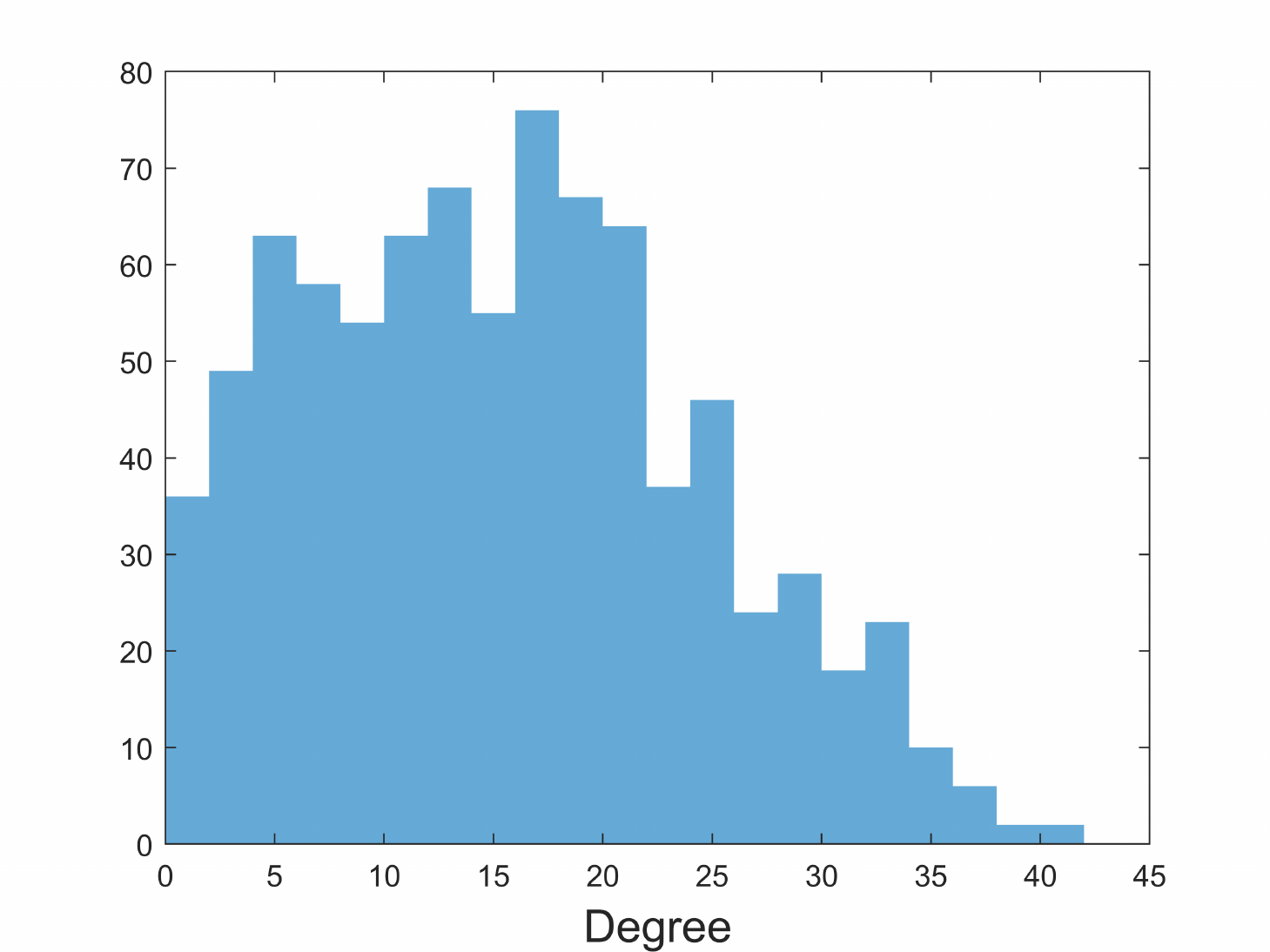}} 
    \caption{Degree histogram for three different datasets}
    \label{fig:deg}
\end{figure}

Figure \ref{fig:deg} shows the histograms of vertex degrees within the community for three different datasets, considered in Sec. \ref{sec:experiment}. 
Unlike the `SBM' dataset, where we sampled the network from the statistical model and thus the degree distribution within each community follows the binomial distribution, the degree distribution in real datasets deviates from the SBM. In particular, the degree distribution of the `BlogCatalog' dataset is very different from the SBM and follows a power law. Since Algorithm \ref{alg:algorithm2} calculates the normalized distance between the signature vectors using the normal approximation to the binomial distribution, we modify the definition of the signature vectors by using the logarithm of the degree instead of the degree itself for this dataset.

\subsection{Impact of Hyperparamters $k'$ and $\ell$ for Partition Tree Algorithm}\label{sec:varyingkl}

\begin{table*}[!hbt]
\centering
\caption{Fraction of correctly matched vertices per $k'$ and $\ell$}
\begin{tabular}{c|cc|cc|cc}
\toprule
                    & \multicolumn{2}{c|}{\begin{tabular}[c]{@{}c@{}}SBM\\ ($1-\alpha=0.9$)\end{tabular}} & \multicolumn{2}{c|}{\begin{tabular}[c]{@{}c@{}}BlogCatalog\\ ($1-\alpha=0.85$)\end{tabular}} & \multicolumn{2}{c}{Movie}                                         \\ \hline
\backslashbox{$k'$}{$\ell$}              & \multicolumn{1}{c|}{1}       & \multicolumn{1}{c|}{2}              & \multicolumn{1}{c|}{1}           & \multicolumn{1}{c|}{2}                 & \multicolumn{1}{c|}{1}    & \multicolumn{1}{c}{2}       \\ \hline
 1 & \multicolumn{1}{c|}{0.0099}    & \multicolumn{1}{c|}{0.0148}    & \multicolumn{1}{c|}{0.0605}    & \multicolumn{1}{c|}{0.1283}     & \multicolumn{1}{c|}{0.0342} & \multicolumn{1}{c}{0.0565} \\ \hline
 2 & \multicolumn{1}{c|}{0.0298}    & \multicolumn{1}{c|}{0.0683}       & \multicolumn{1}{c|}{0.1552}        & \multicolumn{1}{c|}{0.3582}            & \multicolumn{1}{c|}{0.0907}  & \multicolumn{1}{c}{0.2167} \\ \hline 
 3 & \multicolumn{1}{c|}{0.0740}    & \multicolumn{1}{c|}{0.2339}      & \multicolumn{1}{c|}{0.2622}        & \multicolumn{1}{c|}{0.6317}        & \multicolumn{1}{c|}{0.1567} & \multicolumn{1}{c}{0.2945}  \\ \hline
 4 & \multicolumn{1}{c|}{0.1505}    & \multicolumn{1}{c|}{0.5598}    & \multicolumn{1}{c|}{0.3747}        & \multicolumn{1}{c|}{0.8015}           & \multicolumn{1}{c|}{0.1908} & \multicolumn{1}{c}{0.3557} \\ \hline
5 & \multicolumn{1}{c|}{0.2241}    & \multicolumn{1}{c|}{0.8252}          & \multicolumn{1}{c|}{0.4452}        & \multicolumn{1}{c|}{0.8989}           & \multicolumn{1}{c|}{0.2073} & \multicolumn{1}{c}{0.3416}\\ \bottomrule
\end{tabular}
\label{tbl:impact of k' and ell}
\end{table*}

In our partition tree algorithm with $2^{k'}$-ary tree of depth-$\ell$, the two parameters $(k',\ell)$, which determine the dimension of the signature vectors, can be considered as hyperparameters.
In Table \ref{tbl:impact of k' and ell}, we show the impact of these parameters by reporting the performance of almost exaction matching over $C_k$ for different $(k',\ell)$ pairs, where the numbers indicate the fraction of correctly matched vertices over $C_k$ after solving the linear assignment problem for the similarity matrix obtained by the signature vectors from the partition tree. The numbers reported for the `SBM' and `BlogCatalog' datasets are the results averaged over 20 independent runs by randomly sampling the networks, while the numbers reported for the `Movie' dataset are the result from a given fixed correlated networks (without any sampling or averaging).
For a fixed $\ell$, it can be seen that the accuracy increases with $k'$ in most cases.
For a fixed $k'$, the accuracy increases as $\ell$ increases in all the cases considered.
When we generate a $2^{k'}$-ary partition tree of depth $\ell$, if $k'\ell>8$, the number of leaves $2^{k'\ell}$ in a partition tree becomes larger than the number of vertices in $C_k$ (which is about $830$), and thus some of the leaf nodes become empty. So we choose $(k',\ell)=(4,2)$ in our experiment.

\subsection{Final Accuracy of Refinement Matching with Different Iterations on BlogCatalog Dataset}\label{app:exp:iteration}
\begin{figure}[!htb]
\centering
	\subfloat[Ours-log \label{fig:T_ours}]{\includegraphics[width=0.31\textwidth, height=3.5cm]{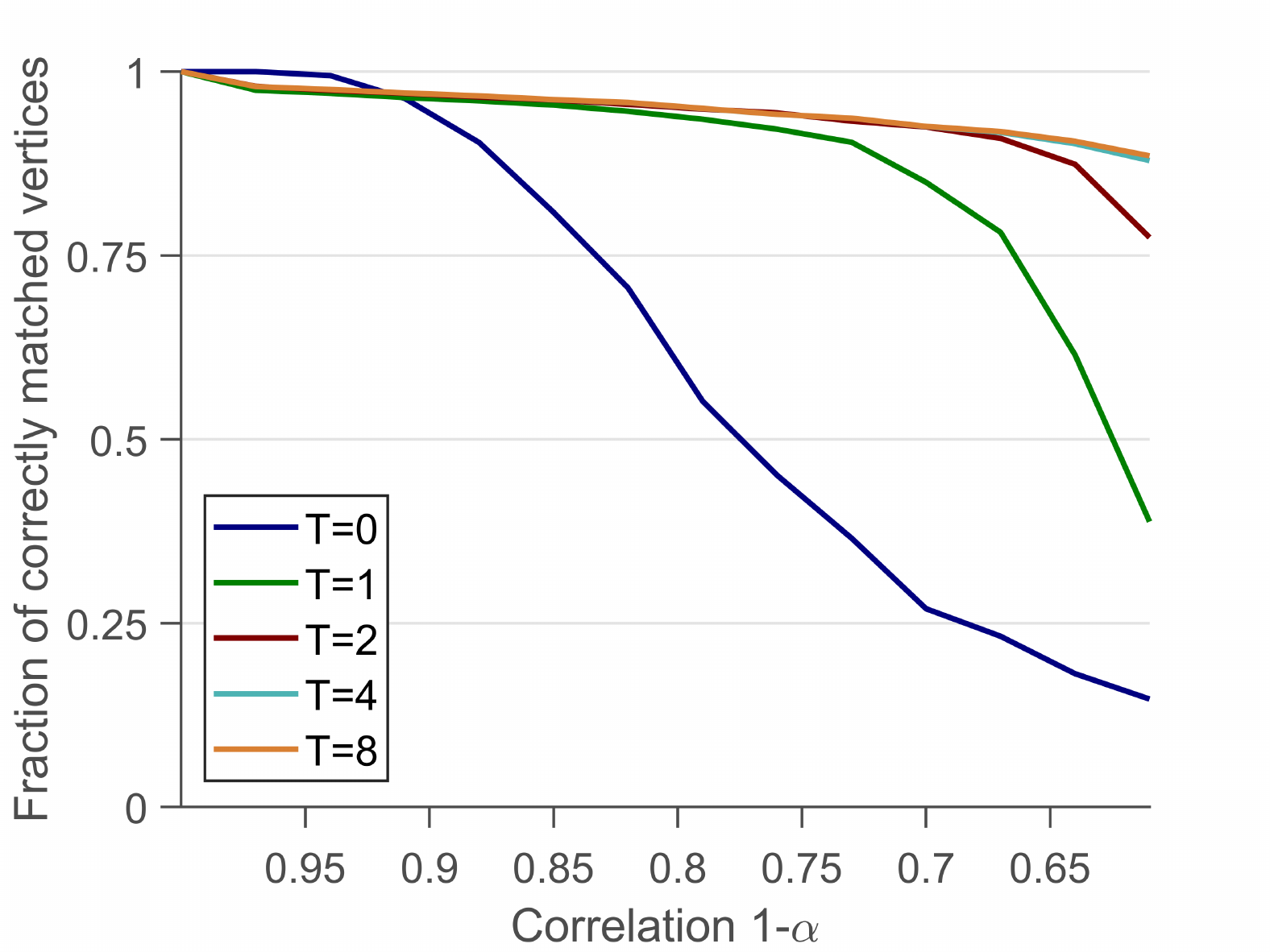}} \hfill
    \subfloat[Degree Profile 1 \label{fig:T_dp1}]{\includegraphics[width=0.31\textwidth, height=3.5cm]{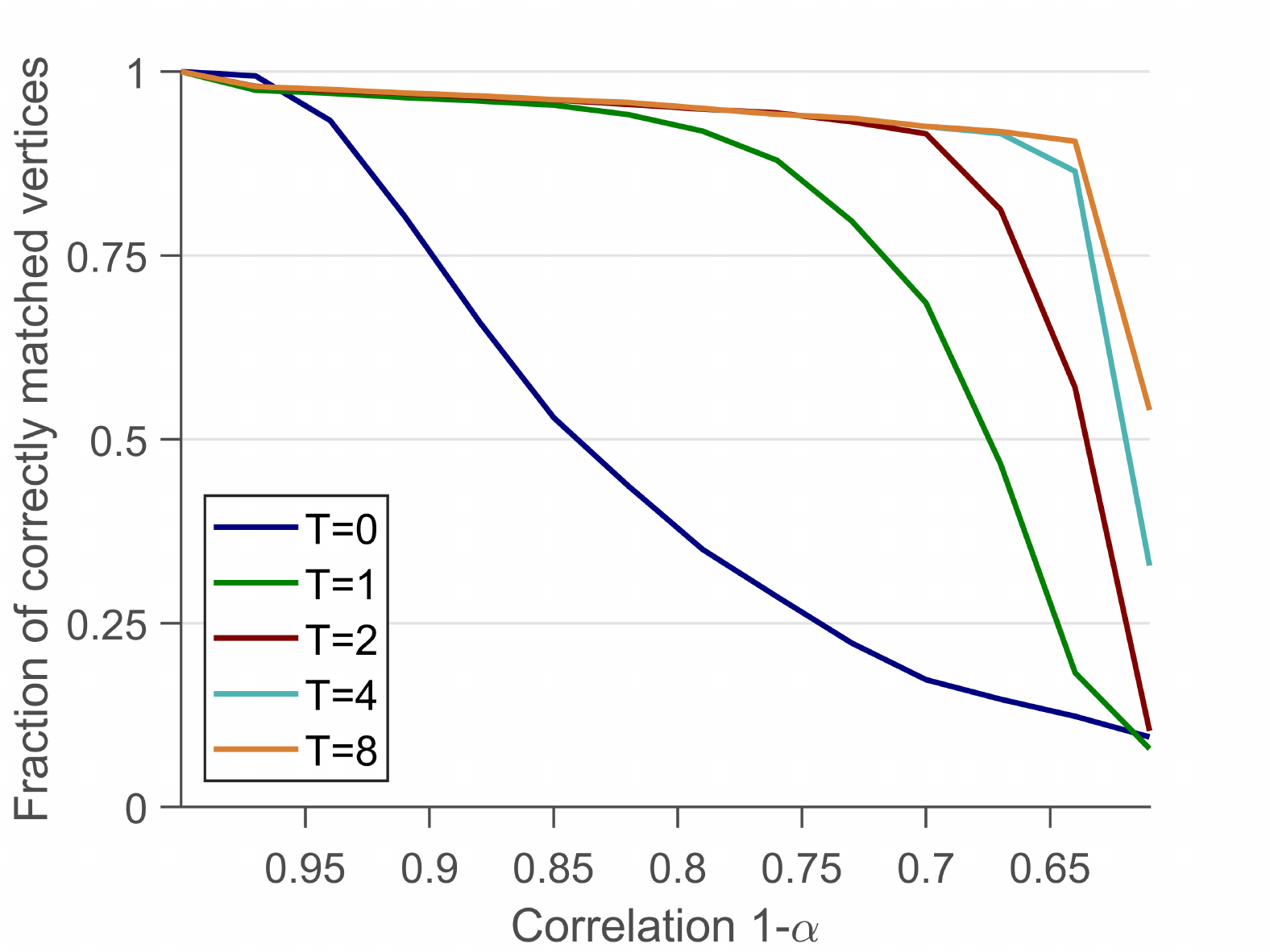}} \hfill
    \subfloat[Degree Profile 2 \label{fig:T_dp2}]{\includegraphics[width=0.31\linewidth, height=3.5cm]{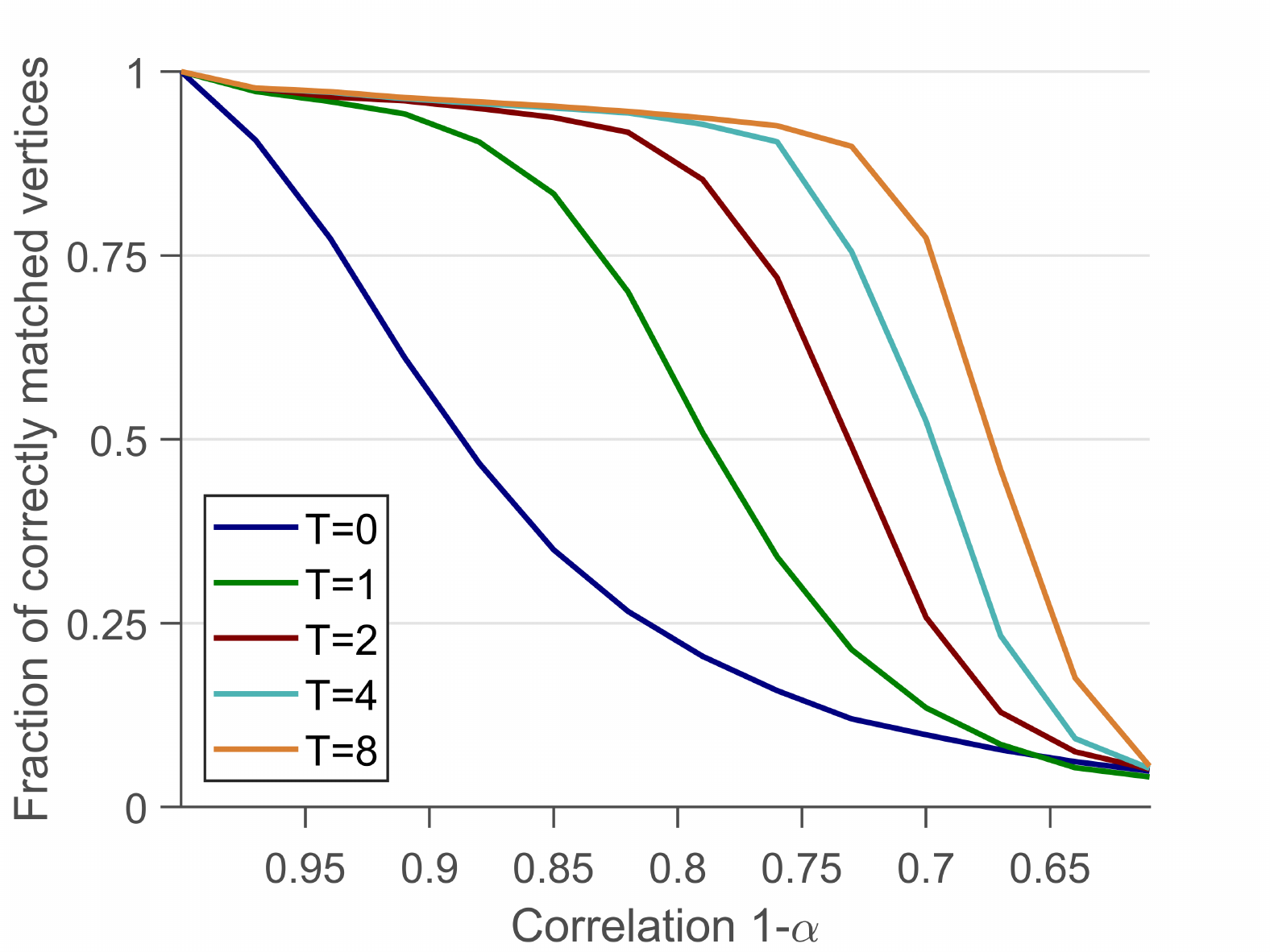}}
    
    \subfloat[GRAMPA 1 \label{fig:T_gp1}]{\includegraphics[width=0.31\textwidth, height=3.5cm]{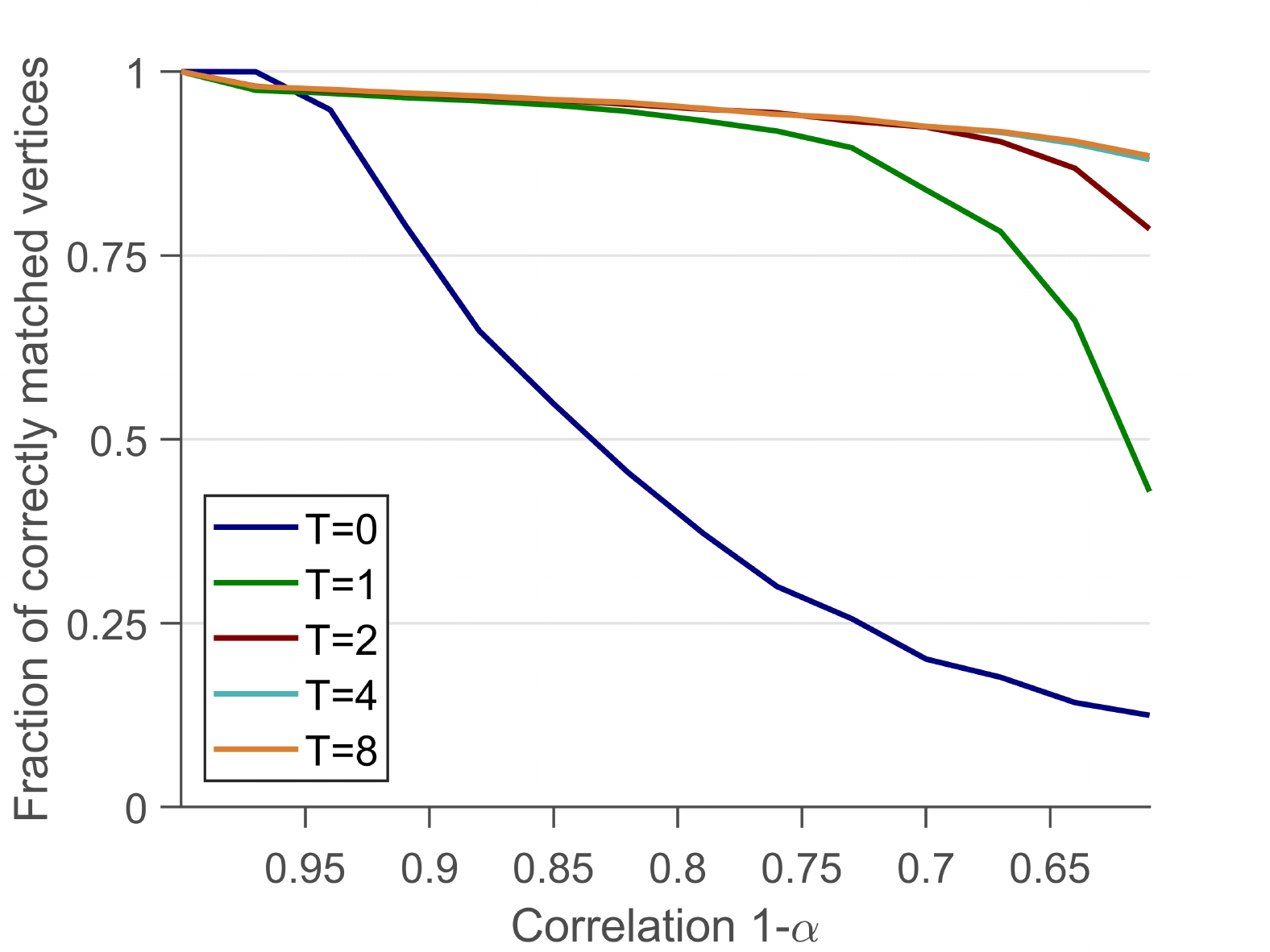}}
    \subfloat[GRAMPA 2 \label{fig:T_gp2}]{\includegraphics[width=0.31\linewidth, height=3.5cm]{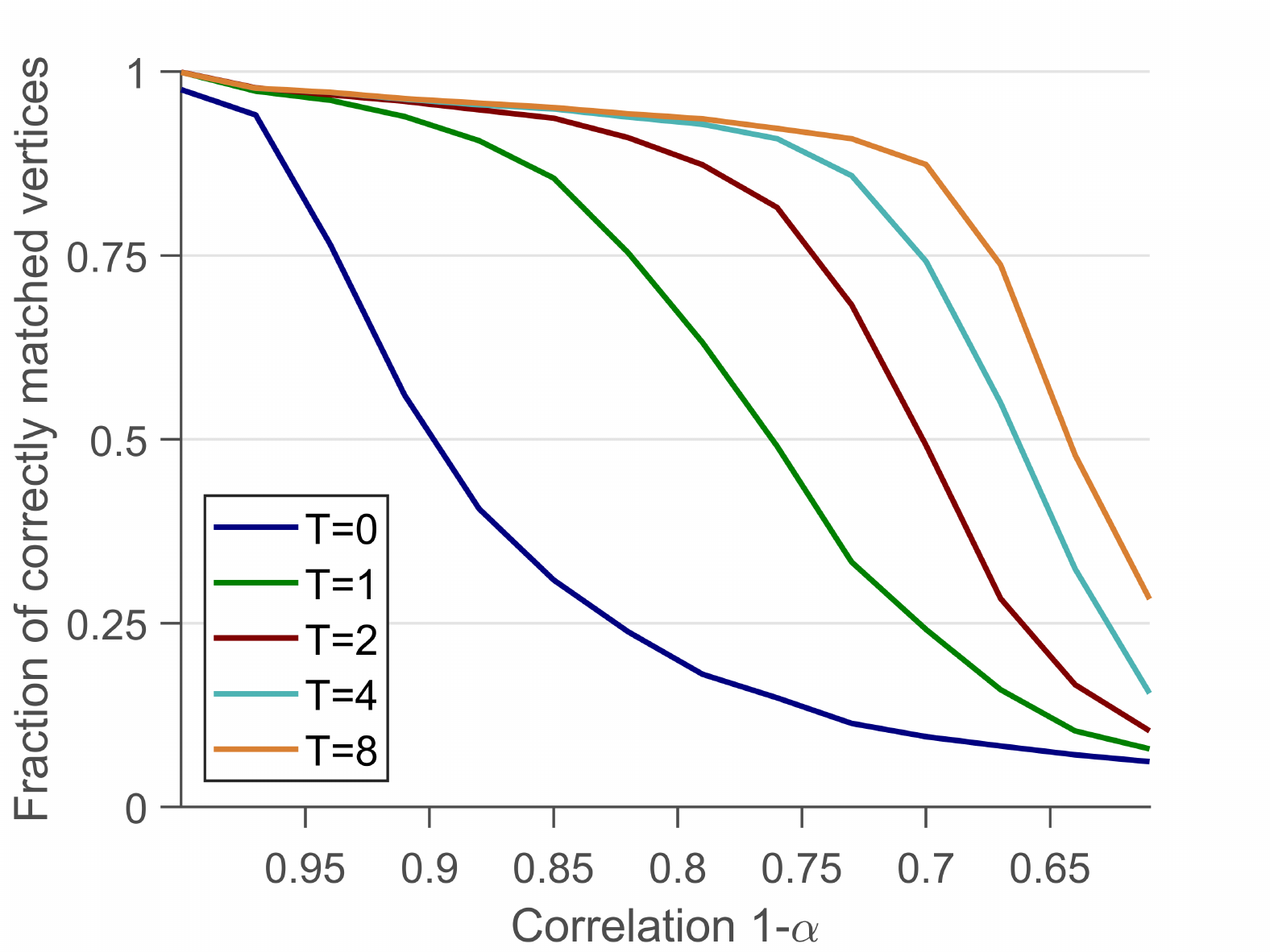}}   
    \caption{The fraction of correctly matched vertices in the final matching with respect to the correlation $1-\alpha$, for different number $T$ of iteration rounds.}
    \label{fig:test_T}
\end{figure}
Figure \ref{fig:test_T} shows the effect of iteration rounds on the refinement matching for the BlogCatalog Dataset. 
In most experiments, the accuracy improves as the number of iteration rounds $T$ increases, but each algorithm requires a different number of rounds to converge, depending on the initial accuracy of the algorithm. 
For example, for $T=2$ and $0\leq 1-\alpha \leq 0.65$, our algorithm and GRAMPA 1 converge, while others require more iterations to converge. This result implies that our algorithm requires less computation time for the final matching. 

\subsection{Additional Experiments}\label{app:exp:add_exp}
\begin{table}[b]
\centering
    \caption{Details of additional datasets.}
     \label{tbl:add_dataset}  
\small{
\begin{tabular}{  c|  c  c c c }
 \toprule
 Type & Model/Source  & $n$ & $\#$ of edges & $k$\\
\midrule
Sampled & Flickr & 7,575 & 240k & 9\\
\midrule
 \multirow{2}{*}{Correlated } 
 & ACM & 6,299 & 25k & 4 \\ 
 & DBLP & 6,299 & 28k & 4 \\
 \bottomrule
\end{tabular}}
\end{table}
\begin{figure}[!htb]
\centering
	\subfloat[Flickr \label{fig:deg_fli}]{\includegraphics[width=0.3\textwidth, height=3.5cm]{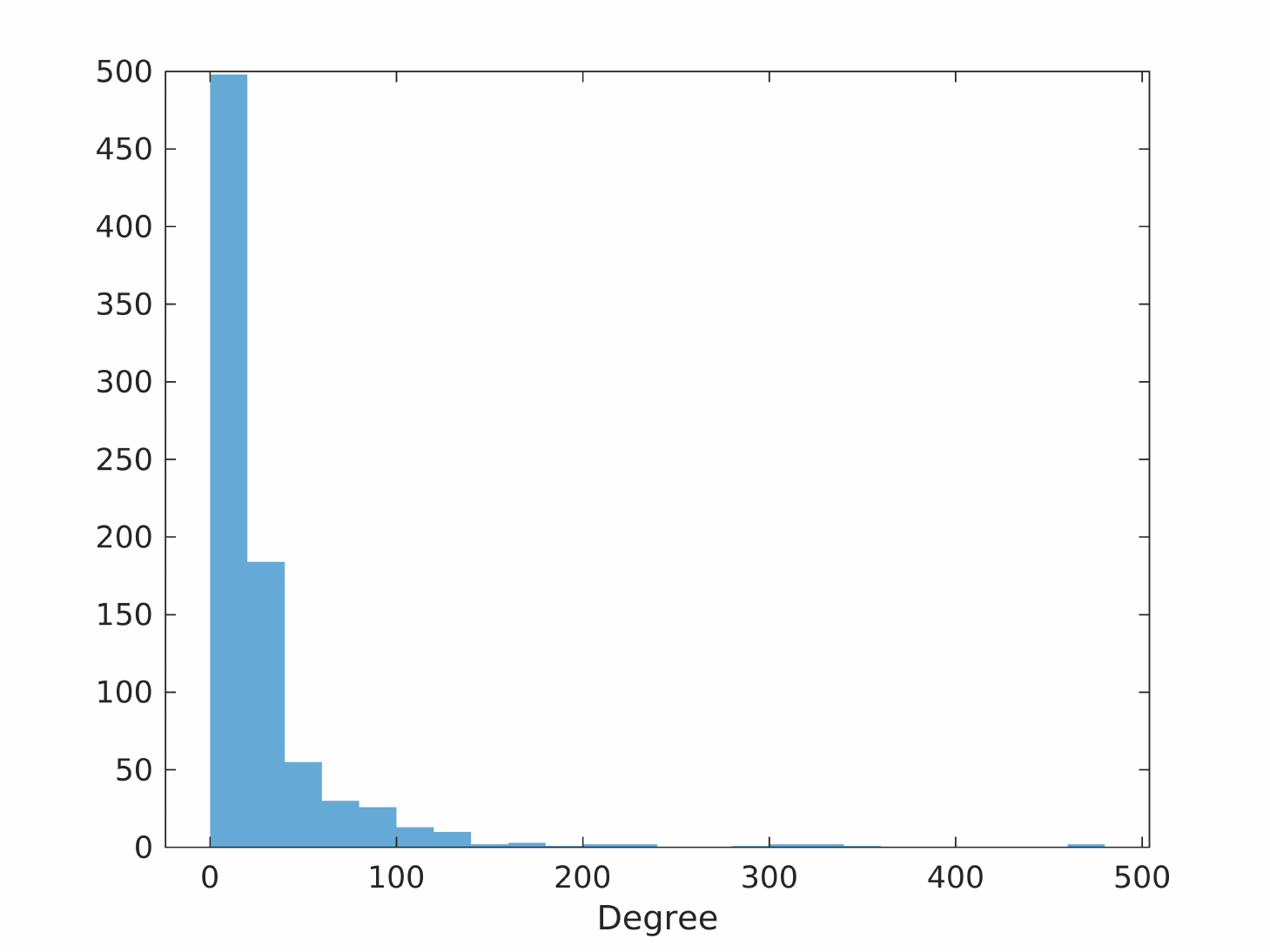}}
    \subfloat[ACM-DBLP \label{fig:deg_Acm}]{\includegraphics[width=0.3\textwidth, height=3.5cm]{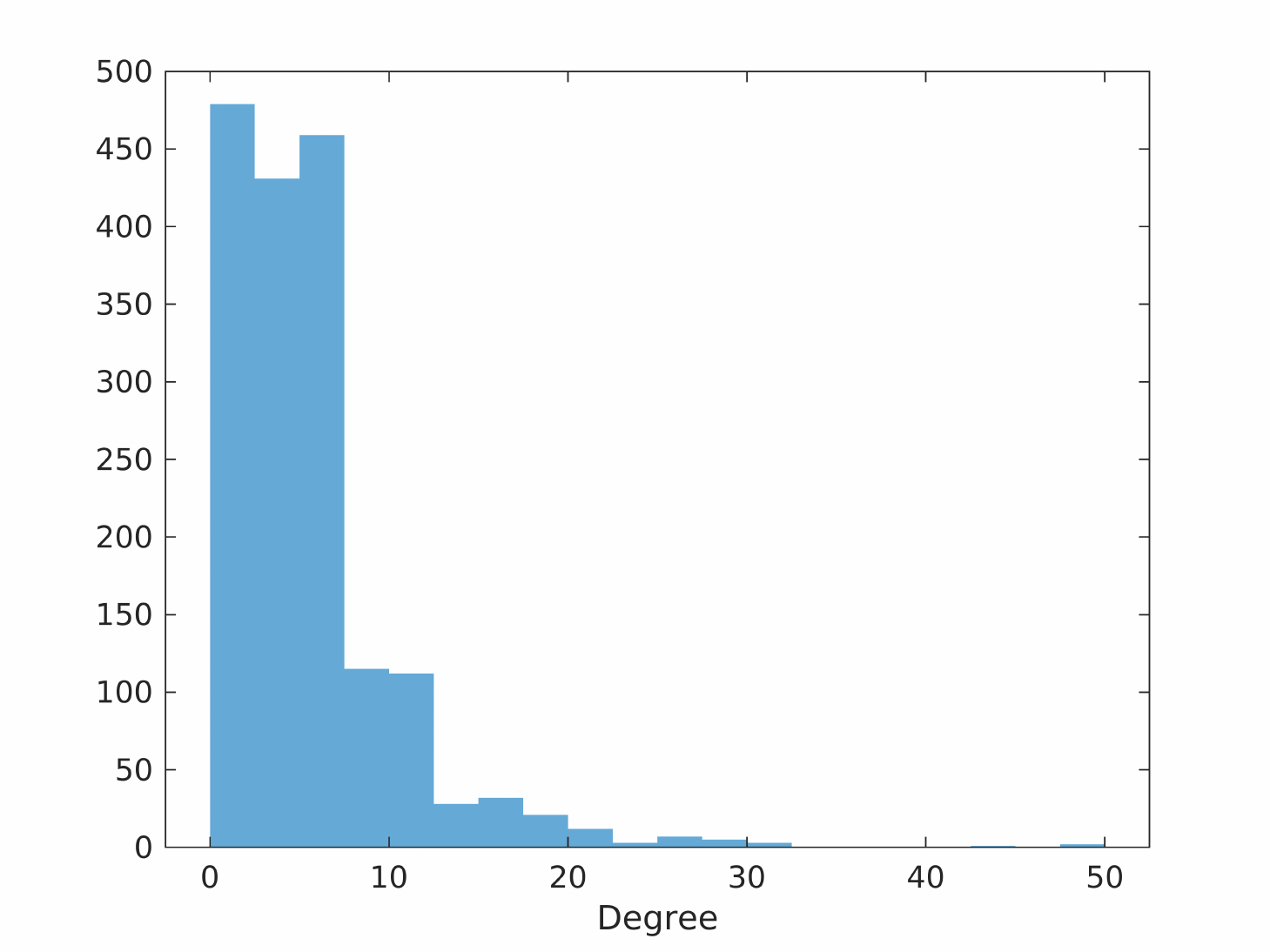}}
    \caption{Degree histogram for additional datasets}
    \label{fig:deg_add}
\end{figure}

We compare our algorithm with baselines on two additional real-world datasets summarized in Table \ref{tbl:add_dataset}. For the `Flickr' dataset \citep{10.1145/2806416.2806501}, the parent graph is given by the network of the image hosting and sharing website, where each user is treated as a vertex and the edges are connected between the vertices if they follow each other. This data consists of 7,575 vertices, 239,738 edges, and 9 communities. 
ACM-DBLP data \citep{8443159} consists of two correlated networks regarding the papers posted on ACM and DBLP in 2016, respectively. 
Authors are treated as vertices and an edge is added when two authors are co-authors. 
The community label of each author is given by their research area. 
This data consists of two graphs of different sizes, 9,872 / 9,916 vertices with $\sim$ 6,000 ground-truth pairs. Since the baseline algorithm GRAMPA runs on correlated graphs of the same size, we only choose vertices in the ground-truth pairs to compare the performance of our algorithm and the baselines. 
 Thus, we use the correlated graphs consisting of 6,299 vertices, 24,822 / 27,892 edges, and 4 communities. The correlations between the two graphs are 0.9511 / 0.8464 from the perspective of each graph.

Figure \ref{fig:deg_add} shows the histograms of vertex degrees within the comparison set $C_k$ for these two datasets. We use the logarithm of the degree when computing the signature vectors in our algorithm, since the degree distributions of these two additional datasets follow a power law like the BlogCatalog dataset considered in the main experiment.
In the case of the ACM-DBLP dataset, each community has a size of about 1,600 vertices, which is twice the size of the BlogCatalog data, and the average degree within the comparison set $C_k$ is lower, indicating a sparser network compared to the BlogCatalog. Therefore, we set the hyperparameter $\ell=3$ instead of $\ell=2$. In addition, since the maximum number of communities that can be used to construct signatures is $4-1$ (the number of communities $-1$), we set $k'=3$. On the other hand, in the case of the Flickr dataset, we set $\ell=2$, the same as before. When comparing the degree distributions (Figure \ref{fig:deg_fli} and Figure \ref{fig:deg_blog}), we observe that the Flickr data has a long-tail distribution compared to the BlogCatalog data. In order to evenly partition the neighbors of high-degree nodes belonging to the long tail for the Flickr dataset, we set $k'$ to be a larger value, 6 instead of 4.

\begin{figure}[!htb]
\centering
	\subfloat[Flickr \label{fig:flickr}]{\includegraphics[width=0.35\textwidth, height=4.5cm]{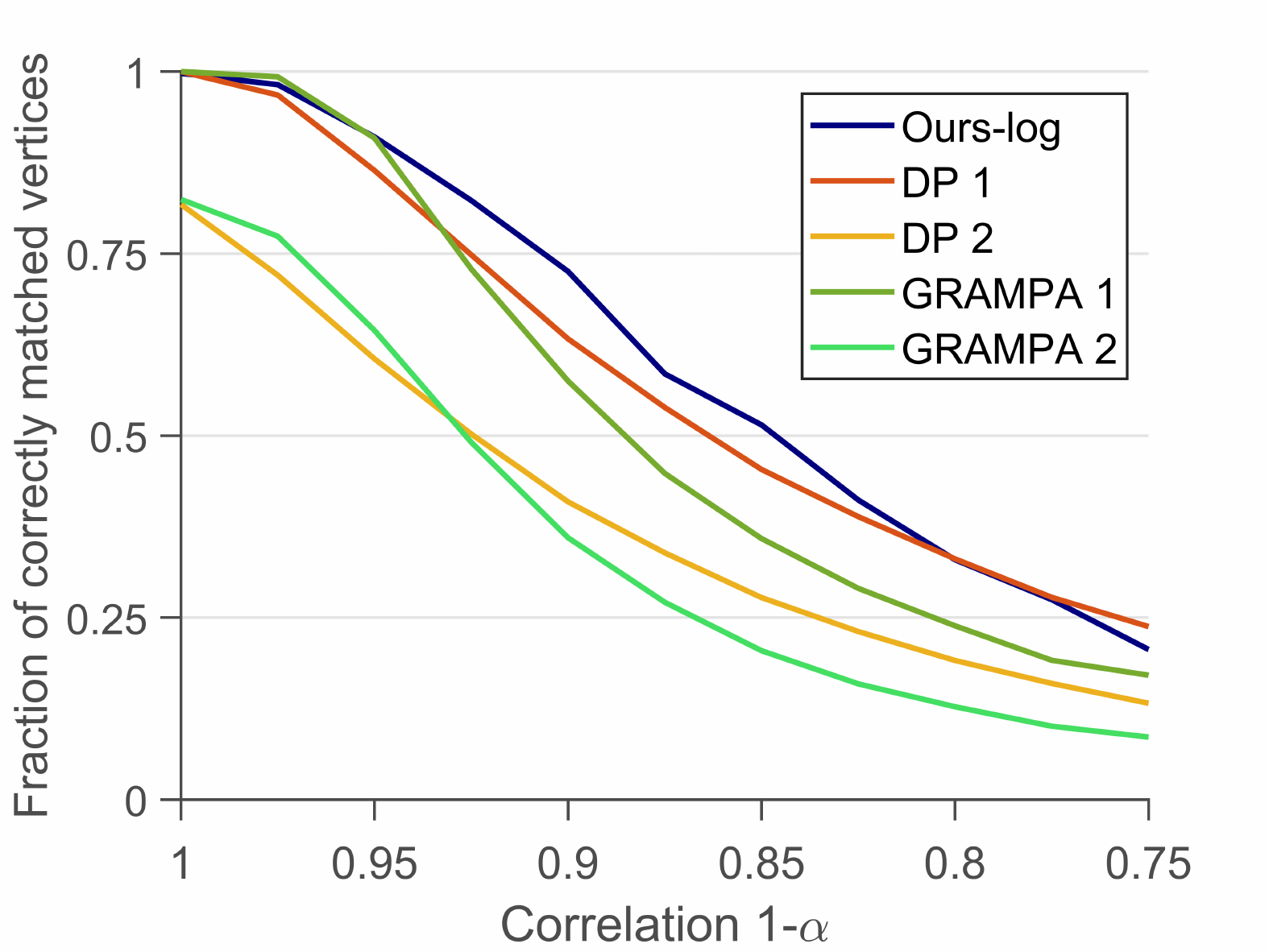}} 
    \subfloat[ACM-DBLP \label{fig:pb}]{\includegraphics[width=0.3\textwidth, height=4.5cm]{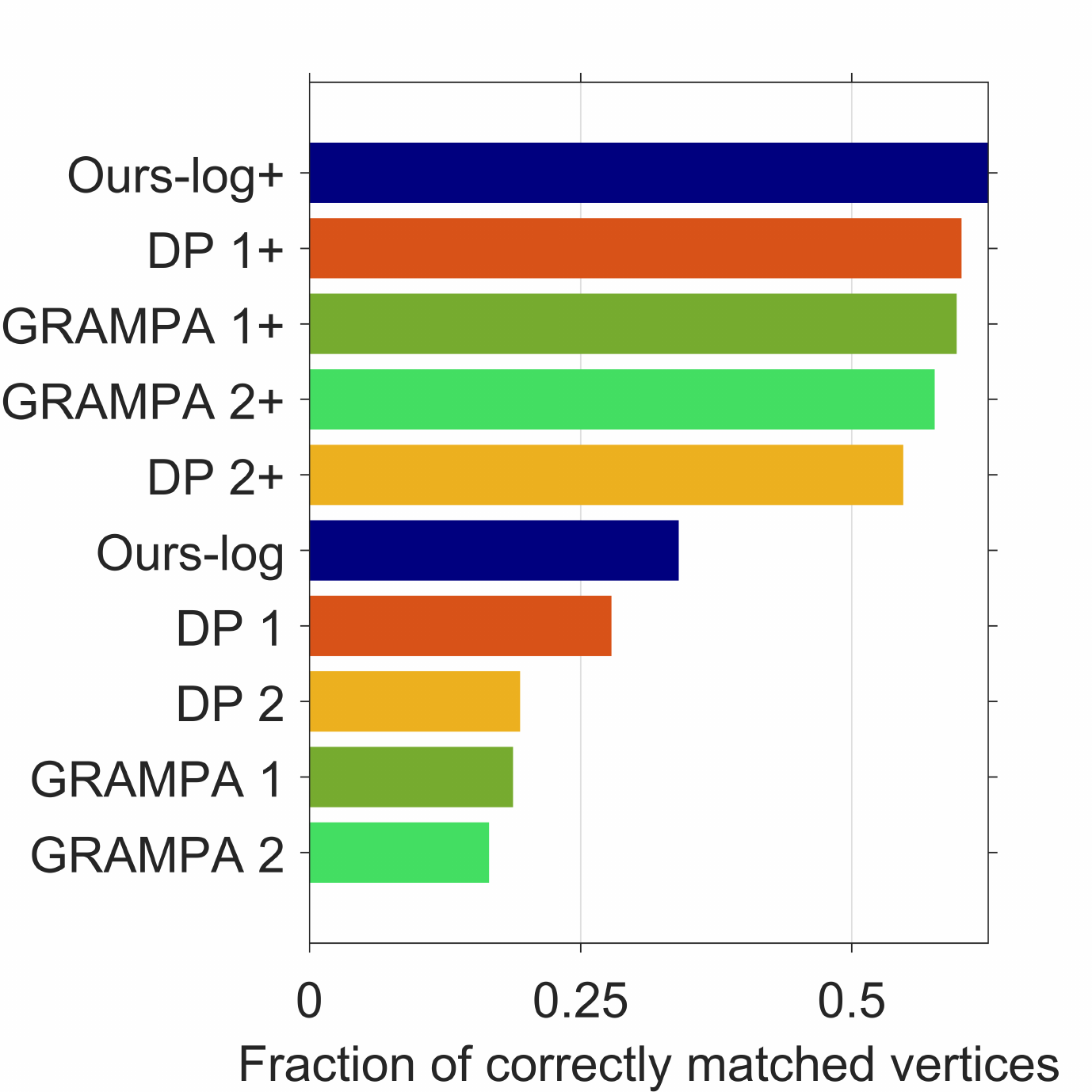}} 
    \caption{Comparison of our partition tree algorithm with other baselines, GRAMPA and Degree Profile (DP) for matching networks of (a) Flickr (b) ACM-DBLP}
    \label{fig:add_exp}
\end{figure}	

In Figure \ref{fig:flickr}, the result shows the fraction of correctly matched vertices within the comparison set $C_k$ on the Flickr dataset with different correlations, averaged over 20 independent runs. Although the performance gap between our algorithm and other algorithms gets smaller, our algorithm still outperforms other algorithms on this dataset with a long-tail distribution of vertex degrees.

For ACM-DBLP data, Figure \ref{fig:pb} shows the fraction of correctly matched vertices within the comparison set $C_k$ (without $+$ mark) and the accuracy after final matching, which is indicated with $+$ mark. For this dataset, since there are some vertices that do not have any intra-community edges, 
we added the refinement matching step using an adjacency matrix for all vertices in the final step. For this dataset, our algorithm also achieves the best performance.

\newpage	
\section{Proof of Corollary \ref{cor:main_all}}\label{app:sec:cor_main_all}
When the community structure is not known in both graphs $G^\pi$ and $G'$, we first perform community detection on each graph separately. Then, we apply our graph matching algorithm to the two correlated graphs, using the recovered community labels.
To ensure the independence between the two-step procedures, we use a random edge splitting approach in both graphs $G^\pi$ and $G'$. Specifically, we randomly partition the edges of each graph into two sub-graphs $(G^{\pi}_1,G^{\pi}_2)$ and $(G'_1,G'_2)$ with probabilities $\beta$ and $1-\beta$ respectively, where $\beta\in(0,1)$. We then apply the community detection algorithm from \citet{yan2018provable} to $G^{\pi}_1$ and $G'_1$, and our graph matching algorithm to $(G^{\pi}_2,G'_2)$. This ensures that the edges used for community detection are not reused in the graph matching process.

Note that both $G^{\pi}_1$ and $G_1'$ are graphs generated by the stochastic block model, with intra- and inter-community edge densities of $p\beta$ and $q\beta$, respectively. The result from \citep{yan2018provable} can be expressed in the following simple form.
\begin{theorem}[Theorem 1 in \citep{yan2018provable}]
    There exists an absolute constant $C_0>0$ with the properties below. Let $n_{\min}$ be the minimum community size. Let $B$ be the $k\times k$ symmetric matrix, whose $(i,j)$-th entry, $B_{ij}$, is equal to the probability that a node in community $i$ is connected to a node in community $j$.  If 
$$
\begin{aligned}
\min _k\left(B_{k k}-\max _{l \neq k} B_{k l}\right) \geq & 2 \sqrt{6 \log n} \max _k \sqrt{B_{k k} / n_k}+6 \max _{1 \leq k<l \leq r} \sqrt{B_{k l} \log n / n_{\min }}  +C_0 \sqrt{\left(n / n_{\min }^2\right)\left(\max B_{k k}\right)},
\end{aligned}
$$
then the exact community detection is possible with high probability by using the semidefinite programming (SDP) proposed in \citep{yan2018provable}.
\end{theorem}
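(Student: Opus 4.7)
The plan is to prove the theorem by constructing a dual certificate for the SDP that certifies the ground-truth clustering matrix as the unique primal optimum. First I would write down the SDP proposed in \citep{yan2018provable}: given the adjacency matrix $A$, maximize $\langle A, Z\rangle$ over $Z \succeq 0$, $Z \geq 0$ entrywise, with $Z_{ii}=1$ and trace/row-sum constraints enforcing the community structure. The ground-truth matrix $Z^{\star} := \sum_{a=1}^k \mathbf{1}_{C_a}\mathbf{1}_{C_a}^{\top}$ is primal feasible, so proving the theorem reduces to showing that $Z^{\star}$ is the unique maximizer with high probability.

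By convex duality it suffices to exhibit dual variables --- a diagonal matrix $D$ for the constraints $Z_{ii}=1$, an entrywise nonnegative matrix $\Lambda \geq 0$ for $Z \geq 0$, and a PSD matrix $\Xi \succeq 0$ for $Z \succeq 0$ --- related by $\Xi = D - \Lambda - A$, together with complementary slackness $\Lambda \odot Z^{\star} = 0$ and $\Xi Z^{\star} = 0$. Equivalently, $S := D - \Lambda - A$ must be positive semidefinite with nullspace exactly $\mathrm{span}\{\mathbf{1}_{C_a}: a \in [k]\}$. The natural candidate sets $D_{ii}$ equal to the within-community degree $\sum_{j \in C_{\sigma(i)}} A_{ij}$ and chooses $\Lambda_{ij}$ on cross-community pairs so as to absorb the contribution of $D-A$ on off-diagonal blocks; one then checks $S \mathbf{1}_{C_a} = 0$ automatically, reducing the problem to verifying $\mathbf{v}^{\top} S \mathbf{v} \geq 0$ for every $\mathbf{v}$ orthogonal to the span of the community indicators.

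This verification decomposes into three concentration estimates that produce precisely the three terms in the sufficient condition. Writing $A = \E[A] + W$ with $W := A - \E[A]$, the deterministic contribution to $\mathbf{v}^{\top} S \mathbf{v}$ on the orthogonal complement has smallest eigenvalue of order $n_{\min}(B_{kk} - \max_{l\neq k} B_{kl})$; the rest is fluctuation that must be dominated by this signal. Scalar Bernstein inequalities applied to the within-community degree sums control the diagonal fluctuation of $D$ and yield the first term $2\sqrt{6\log n}\max_k\sqrt{B_{kk}/n_k}$; a second application of Bernstein to the number of neighbors of a fixed vertex in a fixed other community generates the inter-community term $6\max_{k<l}\sqrt{B_{kl}\log n/n_{\min}}$; and a matrix-concentration bound on $\|W\|$ in operator norm, rescaled by the projection onto the complement of the community indicators, produces the third term $C_0\sqrt{(n/n_{\min}^2)\max B_{kk}}$. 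A union bound over $O(n)$ diagonal events and $O(k^2)$ block events, together with a single matrix-level tail bound, gives overall failure probability $o(1)$.

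The hard part will be obtaining a sharp operator-norm bound on the centered random matrix $W$: naive matrix Bernstein loses a $\sqrt{\log n}$ factor, whereas the stated hypothesis insists on the clean $\|W\| \lesssim \sqrt{n \max_k B_{kk}}$ scaling (without logarithmic slack), forcing the use of either the Bandeira--van Handel non-asymptotic inequality for inhomogeneous Bernoulli matrices or a tailored second-moment / trace-method argument. A secondary subtlety is the heterogeneity of community sizes: the quadratic form $\mathbf{v}^{\top} S \mathbf{v}$ must be analysed block-by-block so the $\sqrt{n/n_{\min}^2}$ factor correctly tracks the worst ratio between the ambient size and the smallest block, and one must simultaneously verify that the proposed $\Lambda$ remains entrywise nonnegative uniformly across blocks --- itself a concentration statement on individual inter-community edge counts. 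Once these three deviation bounds are combined and $C_0$ is taken sufficiently large, the signal strictly dominates the fluctuation on the orthogonal complement, $S$ has the desired spectral structure, and the SDP optimum coincides uniquely with $Z^{\star}$.
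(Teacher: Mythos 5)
This statement is not proved in the paper at all: it is Theorem~1 of \citet{yan2018provable}, quoted verbatim in Appendix~\S\ref{app:sec:cor_main_all} and then applied as a black box to deduce Corollary~\ref{cor:main_all}. There is therefore no in-paper proof for me to compare your proposal against. That said, your dual-certificate sketch is the standard route for SDP exact recovery in block models and is, as far as I can tell, essentially the argument used in the cited reference: the three terms in the hypothesis line up precisely with the three deviation sources you identify (within-community degree concentration giving $2\sqrt{6\log n}\max_k\sqrt{B_{kk}/n_k}$, single-vertex cross-community edge counts giving $6\max_{k<l}\sqrt{B_{kl}\log n/n_{\min}}$, and a non-logarithmic operator-norm bound on the centered adjacency matrix, e.g.\ Bandeira--van~Handel, giving $C_0\sqrt{(n/n_{\min}^2)\max_k B_{kk}}$), with the $\sqrt{n/n_{\min}^2}$ scaling coming from the projection onto $\mathrm{span}\{\mathbf{1}_{C_a}\}^{\perp}$ worked out block-by-block for unequal community sizes. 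Your reconstruction is plausible and internally consistent; just be aware that the present paper imports this result by citation rather than proving it.
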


By applying the result from \citet{yan2018provable}, under the condition \eqref{eq:main cor dense nmin}, we can complete the community detection on the graphs $G^{\pi}_{1}$ and $G'_{1}$ with high probability, where $M_{1}$ is a sufficiently large constant depending only $\beta$. Because of the assumption that the sizes of all the communities are different, once the communities are detected, we can complete the matching between the communities of the two graphs just by comparing their sizes.

Note that $G^{\pi}_{2} = G^{\pi} \backslash \caE(G^{\pi}_{1})$ and $G'_{2}= G' \backslash \caE(G'_{1})$ where $\caE(G)$ is the edge set of $G$. Then, $(1-\alpha)(1-\beta)$ is the correlation between the two graphs $G^{\pi}_{2} $ and $G'_{2}$, and the community structures are now revealed in both  $G^{\pi}_{2}$ and $G'_{2}$. If we choose $\beta=\alpha_1/2$ and $\alpha \in (0,\alpha_1/2)$, the correlation $(1-\alpha)(1-\beta)$ is greater than $1-\alpha_1$. The conditions \eqref{eqn:main_thm_cond1}, \eqref{eqn:main_thm_cond2}, and \eqref{eqn:main_thm_cond3} in Theorem \ref{thm:main_dense} are all satisfied on the graphs $G^{\pi}_2$ and $G'_2$ for any constant $\beta$ when $n$ is large enough.  Moreover, combining the maximum density condition in \eqref{eqn:main_thm_cond1} with the condition for community detection in \eqref{eq:main cor dense nmin} gives $n_{\min} =\Omega(n^{20/21})$, which is a stricter condition than $n_{\min}=\Omega(n^{10/19})$. Thus, all conditions in Theorem \ref{thm:main_dense} are satisfied in both $G^{\pi}_{2}$ and $G'_{2}$, showing that the exact matching of ($G^{\pi}_{2}, G'_{2}$) is possible by using our graph matching algorithm. This completes the proof.

\section{Proof of Theorem \ref{thm:seeded graph matching}}\label{app:sec:thm_seeded}
Let the set of seed node pairs given by $\pi_s:[n_s]\to[n_s]$ over $C_s$ be
	\begin{equation}
	    S:=\{(\pi_s(u_1),u_1),(\pi_s(u_2),u_2),\ldots,(\pi_s(u_{n_s}),u_{n_s}) \}.
	\end{equation}
Let $V_{t}$ and $V'_t$ be the vertex sets in the community $C_{t}$ of the graph $G^{\pi}$ and $G'$, respectively. Let $\mathcal{E}(G^{\pi})$ and $\mathcal{E}(G')$ be the sets of edges in the graphs $G^{\pi}$ and $G'$, respectively. For every vertex pair $(i,i') \in V_{t} \times V'_{t}$, we define the weight function 
	\begin{equation}
	    w(i,i'):=\left|\{j\in[n_s] \; | \; \exists z\in \caN_{G^{\pi}(C_{t})}(i) \text{ and }  \exists z'\in \caN_{G'(C_{t})}(i')  \text{ such that } (z,\pi_s(u_{j})) \in \mathcal{E}(G^{\pi}) \text{ and } (z',u_{j}) \in \mathcal{E}(G')   \}\right|.
	\end{equation}
	For a correct pair $(\pi(i),i) \in V_{t} \times V'_{t}$, the distribution of $\left|\mathcal{N}_{G^{\pi}(C_{t})}(\pi(i))\cap \mathcal{N}_{G'(C_{t})}(i)\right|$ is given by Binomial$\left(n_{t},p(1-\alpha)\right)$. Thus, applying the bounds on binomial tail (Theorem \ref{thm:binomial tail}), we have
 \begin{equation}
     \left|\mathcal{N}_{G^{\pi}(C_{t})}(\pi(i))\cap \mathcal{N}_{G'(C_{t})}(i)\right| \geq \frac{1}{2}n_tp(1-\alpha)
 \end{equation}
 with probability at least $1-{1}/{n_t^2}$, where $\alpha < 1/10$ and $n_tp\geq K\log n_t $ for a sufficiently large constant $K>0$.
 Thus, for any seed pair $\left(\pi_s(u_{j}),u_{j}\right)$, we have
 \begin{equation}
 \begin{aligned}
     &\P\left\{ \exists z\in \mathcal{N}_{G^{\pi}(C_{t})}(\pi(i))\cap \mathcal{N}_{G'(C_{t})}(i) \text{ s.t. } (z,\pi_s(u_{j})) \in \mathcal{E}(G^{\pi}) \text{ and } (z,u_{j}) \in \mathcal{E}(G') \right\}\\
     &=1-\left(1-q(1-\alpha)\right)^{\left|\mathcal{N}_{G^{\pi}(C_{2})}(\pi(i))\cap \mathcal{N}_{G'(C_{2})}(i)\right|}\\
     & \geq 1-\left(1-q(1-\alpha)\right)^{\frac{1}{2}n_{t}p(1-\alpha)}\\
     & =q(1-\alpha)\left(1+(1-q(1-\alpha))+(1-q(1-\alpha))^2+\cdots+\left(1-q(1-\alpha)\right)^{\frac{1}{2}n_{t}p(1-\alpha)-1}\right)\\
          &\stackrel{(a)}{\geq} \frac{1}{2}n_{t}pq(1-\alpha)^{2}\left(1-\frac{1}{2}n_{t}pq(1-\alpha)^{2}\right) \stackrel{(b)}{\geq} \frac{1}{3}n_{t}pq(1-\alpha)^{2} \geq \frac{1}{4}n_{t}pq.
 \end{aligned}
      \end{equation}
      where $(a)$ holds since $(1-x)^n \geq 1-nx$ for any $x<1$, $(b)$ holds since $n_{t}pq \leq 2/3$, and the last inequality holds since $\alpha<1/10$.
 Since the weight function $w(\pi(i),i)$ counts the number of seed pairs connected to at least one neighboring node pair of $(\pi(i),i)$, we have
	\begin{equation}
	    \begin{aligned}
	        \P\left\{w(\pi(i),i) \leq \frac{1}{8}n_{s}n_{t}pq \right\}  & \leq \P\left\{\operatorname{Bin}\left(n_{s},\frac{1}{4}n_{t}pq \right) \leq \frac{1}{8}n_{s}n_{t}pq \right\}  \stackrel{(a)}{\leq} \frac{1}{n_{t}^{2}},
	    \end{aligned}
	\end{equation}
	where $(a)$ holds by Theorem \ref{thm:binomial tail} with $n_sn_tpq\geq K\log n_t$ for a sufficiently large constant $K>0$. By applying a union bound over all $i\in V_t$, we have 
\begin{equation}\label{eq:seed correct}
	    w(\pi(i),i)>\frac{1}{8}n_{s}n_{t}pq
	\end{equation}
	for all correct pairs with probability at least $1-\frac{2}{n_{t}}$.
	
	For a wrong pair $(\pi(i),i') \in V_{t} \times V'_{t}$ with $i\neq i'$, we have $\left|\mathcal{N}_{G^{\pi}(C_{t})}(\pi(i))\cap \mathcal{N}_{G'(C_{t})}(i')\right|\sim \operatorname{Bin}\left(n_{t},p^{2}(1-\alpha)^{2} \right)$. Thus,
 \begin{equation}
    \begin{aligned}
         \P\left\{\left|\mathcal{N}_{G^{\pi}(C_{t})}(\pi(i))\cap \mathcal{N}_{G'(C_{t})}(i')\right| \geq 4n_{t}p^{2} \vee 8\log n_{t} \right\} &= \P\left\{\operatorname{Bin}\left(n_{t},p^{2}(1-\alpha)^{2} \right) \geq 4n_{t}p^{2} \vee 8\log n_{t} \right\} \\
         &\leq \P\left\{\operatorname{Bin}\left(n_{t},p^{2} \right) \geq 4n_{t}p^{2} \vee 8\log n_{t} \right\} \stackrel{(a)}{\leq} \frac{1}{n_{t}^{3}},
    \end{aligned}
 \end{equation}
where $(a)$ holds from the binomial tail bounds (Theorem \ref{thm:binomial tail}). Moreover, from $\left|\mathcal{N}_{G_{0}(C_{t})}(i)\right| \sim \operatorname{Bin}\left(n_t,p/(1-\alpha) \right)$, we can show that
  \begin{equation}\label{eq:max neigh}
 \begin{aligned}
      \P\left\{ \left|  \mathcal{N}_{G_{0}(C_{t})}(i) \right| \geq 2n_tp \right\} &= \P\left\{\operatorname{Bin}\left(n_t,p/(1-\alpha) \right) \geq 2n_tp \right\} \leq \frac{1}{n_t^3},
 \end{aligned}
 \end{equation}
 where the last inequality holds by Theorem \ref{thm:binomial tail} since $\alpha<1/10$ and $n_tp\geq K\log n_t $ for a sufficiently large constant $K>0$.
\eqref{eq:max neigh} implies that 
$$\left|\mathcal{N}_{G^{\pi}(C_{t})}(\pi(i))|\vee |\mathcal{N}_{G'(C_{t})}(i')\right| \leq 2n_{t}p $$
holds with probability at least $1-\frac{1}{n^{3}_{t}}$. Thus, for any seed pair $\left(\pi_s(u_{j}),u_{j})\right)$ and $i\neq i'$, we have
\begin{equation}
    \begin{aligned}
        &\P\left\{\exists z\in \mathcal{N}_{G^{\pi}(C_{t})}(\pi(i)) \text{ and }\exists  z'\in \mathcal{N}_{G'(C_{t})}(i') \text{ s.t. } (z,\pi_s(u_{j})) \in \mathcal{E}(G^{\pi}) \text{ and } (z',u_{j}) \in \mathcal{E}(G') \right\} \\
        & \stackrel{(a)}{\leq} \left\{1-(1-q(1-\alpha))^{4n_{t}p^{2} \vee 8 \log n_{t}} \right\}+ \left\{1-(1-q)^{2n_{t}p} \right\}^{2} \stackrel{(b)}{\leq} 4n_{t}p^{2}q\vee 8q \log n_{t} +4n_t^{2}p^{2}q^{2} \\
        & \leq 16\left( q \log n_t \vee n_t^{2}p^{2}q^{2} \vee n_tp^{2}q \right) \leq \frac{1}{16}n_{t}pq.
    \end{aligned}
\end{equation}
The inequality $(a)$ holds since the first term is the probability that there exists a common neighbor pair that is connected to the seed pair and the second term is the probability that two different neighboring nodes $z\in\mathcal{N}_{G^{\pi}(C_{t})}(\pi(i))$ and $z'\in \mathcal{N}_{G'(C_{t})}(i')$ are connected to the seed pair. The inequality $(b)$ holds since $(1-x)^{n} \geq 1-nx$ for $x<1$, and the last inequality holds since $n_tp\geq K\log n_t $ for a sufficiently large constant $K$, $n_tpq\leq 1/256$ and $p\leq 1/256$. Therefore, applying Theorem \ref{thm:binomial tail}, we get
\begin{equation}
    \begin{aligned}
        \P\left\{w(\pi(i),i') \geq \frac{1}{8}n_{s}n_{t}pq \right\} & \leq  \P\left\{\operatorname{Bin}\left(n_{s},\frac{1}{16}n_{t}pq\right) \geq \frac{1}{8}n_{s}n_{t}pq \right\} \leq \frac{1}{n_{t}^{3}},
    \end{aligned}
\end{equation}
 where $n_sn_tpq\geq K\log n_t $ for a large enough constant $K$. Applying the union bound over $(\pi(i),i') \in V_{t} \times V'_{t}$ with $i\neq i'$, we have
	\begin{equation}\label{eq:seed wrong}
	    w(\pi(i),i')<\frac{1}{8}n_{s}n_{t}pq
	\end{equation}
for all wrong pairs with probability at least $1-\frac{2}{n_{t}}$. By combining \eqref{eq:seed correct} and \eqref{eq:seed wrong}, the proof is complete.
	
\section{Proof of Corollary \ref{cor:combined}}\label{app:sec:cor_combined}
    Applying Corollary \ref{cor:exact_smallest}, we obtain the permutation $\hat{\pi}_k$ over $C_k$ such that $\hat{\pi}_k=\pi|_{C_k}$ with probability at least $1-2n_{\min}^{-D}$. 
    From the assumptions \eqref{eqn:main_thm_cond1}, \eqref{eqn:main_thm_cond2} and \eqref{eqn:main_thm_cond3}  in Theorem \ref{thm:almost_smallest}, all conditions of Theorem \ref{thm:seeded graph matching} except $n_{t}pq\leq 1/256$ are satisfied. The condition $n_{t}pq\leq 1/256$ can be satisfied when $n_{\min} \geq M_1 n^{10/19}$ for a sufficiently large constant $M_1>0$, since for any $t \in [k]$, we have
    \begin{equation}\label{eq:condition}
        n_{t}pq \leq npq \leq np^{2} \stackrel{(a)}{\leq } M_1^{-19/10}n_{\min}^{19/10}p^{2}= M_1^{-19/10}\frac{(n_{\min}p)^{2}}{n^{1/10}_{min}} \leq M_1^{-19/10},
    \end{equation}
    where the inequality $(a)$ holds when $n_{\min}\geq M_1n^{10/19}$ and the last inequality holds since $n_{\min}p \leq n_{\min}^{1/20}$ from \eqref{eqn:main_thm_cond1}. 
    
    Let us choose a community $C_{r}$ that has not been used to generate the signature vector for vertices in $C_k$. Then, applying Theorem \ref{thm:seeded graph matching} with $C_{k}$, we can obtain the permutation $\hat{\pi}_r$ over $C_r$ such that $\hat{\pi}_r=\pi|_{C_r}$ with probability at least $1-4/n_{r}$. Similarly, we can complete the exact matching over the remaining communities with probability at least $1-\sum^{k}_{i=1}\frac{4}{n_i}$  by applying Theorem \ref{thm:seeded graph matching} with $C_{r}$ as seeds. Since $n_{\min}\geq M_1 n^{10/19}$, the number of communities $k$ should be less than  $\frac{1}{M_1}n^{9/19}$. Thus, we have
    \begin{equation}
        1-\sum^{k}_{i=1}\frac{4}{n_{i}} \geq 1-\frac{4k}{n_{\min}} \geq 1-\frac{1}{n^{1/19}},
    \end{equation}
    which completes the proof.

\begin{remark}[Relation between $q$ and $n_{\min}$]
    The reason we proposed a new seeded matching algorithm (Algorithm \ref{alg:algorithm5} beyond \citep{KL13}) and proved Theorem \ref{thm:seeded graph matching} was to do the seeded matching at a smaller $q$ such as $n_{\min}q= M' (\log \log n_{\min})^2$ for a sufficiently large constant $M'>0$. But if $q$ is large enough such as  $n_{\min}q  \geq C\log n$ for a sufficiently large constant $C>0$, then the seeded matching can be performed by simply counting the number of common seed pairs in the 1-hop neighbors as proposed by \citet{KL13}. In this case, $n_{\min}\geq M_1 n^{10/19}$ is not required.
\end{remark}

\subsection{Balanced Community Size}\label{sec:balanced_community_size}

  In Remark \ref{rmk:general density}, we discussed the case where the community sizes are approximately balanced, i.e., $n_{\min}=\Theta(n/k)$, and derived the conditions to guarantee the exact recovery of the underlying permutation for the correlated SBMs with constant correlation within a low-order polynomial-time complexity. We prove that when the community label is given stochastically with a probability distribution $[r_1,r_2,\dots, r_k]$ with each $r_i=\Theta(1/k)$ the community sizes are approximately balanced and the sizes of all communities are different  with high probability as mentioned in Remark \ref{rmk:balance}
 \begin{lemma}\label{lem:proof rmk}
    Suppose that there are $n$ vertices and $k$ communities. Let $r_{i}$ be the probability that each vertex belongs to community $C_{i}$, where 
    \begin{equation}
        \sum^{k}_{i=1}r_{i}=1 \text{ and } r_{i}=\Theta\left( \frac{1}{k} \right).
    \end{equation}
    If the number of communities
    \begin{equation}\label{eq:rmk k}
       k =o(n^{1/5})
    \end{equation}
    then the communities are approximately balanced, i.e., $n_{i}=\Theta(n/k)$ for all $i\in[k]$, and the sizes of all communities are different with high probability as $n\to \infty$.
 \end{lemma}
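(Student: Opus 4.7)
The plan is to treat the community sizes $(n_1,\dots,n_k)$ as a multinomial vector with $n$ trials and cell probabilities $(r_1,\dots,r_k)$, and then establish the two claims separately.

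\textbf{Balanced sizes.} For each fixed $i\in[k]$, the marginal distribution of $n_i$ is $\mathrm{Bin}(n,r_i)$ with mean $nr_i=\Theta(n/k)$. A standard multiplicative Chernoff bound (Theorem \ref{thm:binomial tail}) gives
\[
\P\!\left\{\,\absb{n_i-nr_i}>\tfrac12 nr_i\,\right\}\le 2\exp\!\pb{-c\,nr_i}
\]
for an absolute constant $c>0$. Since $r_i=\Theta(1/k)$ and $k=o(n^{1/5})$, we have $nr_i=\Omega(n^{4/5})\gg \log k$, so a union bound over $i\in[k]$ shows that $n_i=\Theta(n/k)$ for every $i$ with probability $1-o(1)$. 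In particular, on this event $n_i+n_j=\Theta(n/k)$ for every pair.

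\textbf{Distinct sizes.} The main work is to bound, for each pair $i<j$, the collision probability $\P\{n_i=n_j\}$. I would use the standard conditioning trick: letting $N_{ij}:=n_i+n_j$, the conditional law of $n_i$ given $N_{ij}$ is $\mathrm{Bin}(N_{ij},\,r_i/(r_i+r_j))$. Hence
\[
\P\{n_i=n_j\}=\E\qB{\indi\{N_{ij}\text{ even}\}\cdot \P\hb{\mathrm{Bin}(N_{ij},\tfrac{r_i}{r_i+r_j})=N_{ij}/2\mid N_{ij}}}.
\]
The anti-concentration fact I would invoke is the elementary bound
\[
\max_{k}\P\{\mathrm{Bin}(N,p)=k\}\le \frac{C}{\sqrt{Np(1-p)}},
\]
which follows from Stirling's formula applied to the mode (or equivalently a local CLT estimate). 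Restricting the outer expectation to the good event $\{N_{ij}=\Theta(n/k)\}$ from the first part, and using $r_i/(r_i+r_j)=\Theta(1)$ since both $r_i,r_j=\Theta(1/k)$, this yields
\[
\P\{n_i=n_j\}=O\pa{\sqrt{k/n}}+o(1/k^{3})
\]
where the second term absorbs the negligible contribution from the complement of the good event.

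\textbf{Union bound and the exponent $1/5$.} A union bound over the $\binom{k}{2}=O(k^2)$ pairs then gives
\[
\P\{\exists\,i\ne j:\ n_i=n_j\}=O\pa{k^{2}\sqrt{k/n}}=O\pa{k^{5/2}/\sqrt{n}},
\]
which vanishes exactly under the assumption $k=o(n^{1/5})$. Combined with the balanced-size conclusion from the first part, this proves the lemma.

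\textbf{Expected obstacle.} The routine part is the Chernoff step. The delicate point is the anti-concentration bound $\P\{\mathrm{Bin}(N,p)=k\}=O(1/\sqrt{Np(1-p)})$ uniformly in $k$: I need it to hold with constants that do not depend on $i,j$, and I need to handle the small-probability event that $N_{ij}$ is far from its mean (where the bound on $\sqrt{Np(1-p)}$ could degrade). Conditioning on the balanced-sizes event from Step~1 isolates the regime where the Stirling estimate is clean, so the two steps must be carried out in this order.
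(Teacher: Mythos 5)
Your proposal is correct and follows essentially the same route as the paper: binomial tail bounds give balance, conditioning on $n_i+n_j$ reduces collisions to a binomial point probability, and a union bound over the $O(k^2)$ pairs closes the argument under $k=o(n^{1/5})$. The only cosmetic difference is how the anti-concentration step is phrased — the paper bounds $\P\{n_i=n_j\mid n_i+n_j=2t\}$ by $\binom{2t}{t}4^{-t}=O(1/\sqrt{t})$ using $\tfrac{r_i}{r_i+r_j}\cdot\tfrac{r_j}{r_i+r_j}\le\tfrac14$ and Stirling, whereas you invoke the general mode bound $\max_\kappa\P\{\mathrm{Bin}(N,p)=\kappa\}=O(1/\sqrt{Np(1-p)})$; under the $r_i=\Theta(1/k)$ hypothesis these are interchangeable.
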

 \begin{proof} Without loss of generality, assume that $r_{1}\geq \ldots \geq r_{k}$. Let $n_{i}$ be the community size of $C_{i}$ and $n_{\min}$ be the smallest community size. Define an event 
 \begin{equation*}
      \mathcal{F}:=\left\{n_{\min} \geq \frac{1}{2}nr_{k}\right\}.
 \end{equation*}

 Then, we can get
 \begin{equation}\label{eq:nmin bound}
 \begin{aligned}
     \P\{ \mathcal{F}^{c}\} &\leq \sum^{k}_{i=1} \P\{ n_{i} \leq \frac{1}{2}nr_{k} \} \leq k\P\left\{n_{k} \leq \frac{1}{2}nr_{k} \right\} \\
     & \leq k \P\left\{\operatorname{Bin}(n,r_{k})  \leq \frac{1}{2}nr_{k}\right\} \stackrel{(a)}{\leq} k \exp(-3\log n) \leq \frac{k}{n^{3}}.
 \end{aligned}
      \end{equation}
      The inequality $(a)$ holds by Theorem \ref{thm:binomial tail} with $\frac{n}{k}=\omega(n^{4/5})$. Moreover, we can show that $n_{\max} \leq 2nr_{1}$ with probability at least $1-\frac{k}{n^3}$ using a similar approach as in \eqref{eq:nmin bound}. Therefore, $n_i=\Theta(n/k)$ holds with high probability as $n\to \infty$.
      
 For any $i,j\in[k]$, $i\neq j$, we have
 \begin{equation}\label{eq:event nmin}
 \begin{aligned}
     \P\{ n_{i}=n_{j} | \mathcal{F}\} &\leq \sum_{t\geq \frac{1}{2}nr_{k}} \P\{ n_{i}=n_{j} | n_{i}+n_{j}=2t \}\P\{n_{i} + n_{j}=2t\}\\
     & \leq \max_{t\geq \frac{1}{2}nr_{k}} \P\{ n_{i}=n_{j} | n_{i}+n_{j}=2t \} \\
     & \leq \max_{t\geq \frac{1}{2}nr_{k}} {2t \choose t} \left(\frac{r_{i}}{r_{i}+r_{j}} \right)^{t} \left(\frac{r_{j}}{r_{i}+r_{j}} \right)^{t} \\
     & \leq \max_{t\geq \frac{1}{2}nr_{k}} {2t \choose t} \frac{1}{2^{2t}} \stackrel{(a)}{=}O\left(\frac{1}{\sqrt{nr_{k}}}\right).
 \end{aligned}
\end{equation}
 The equality (a) holds by Stirling's approximation.  Thus, we have
 \begin{equation}
     \P\{ n_{i}=n_{j} \} \leq \P\{ n_{i}=n_{j} | \mathcal{F}\} + \P\{\mathcal{F}^{c}\} =O\left(1/\sqrt{nr_{k}}\right)=O\left(\sqrt{k/n}\right).
 \end{equation}
 Applying a union bound over $i,j$, we can show that the community sizes are all different with probability $1-O\left(k^{2}\sqrt{\frac{k}{n}}\right)$. We have $k^{2}\sqrt{\frac{k}{n}} \to 0 $ as $n\to \infty$ since \eqref{eq:rmk k}. 
\end{proof}
In Remark \ref{rmk:general density},  we assumed that $k\leq C'\sqrt{\frac{n}{p}}(p-q)$. On the condition $np\leq n^{1/20}$, we have
\begin{equation*}
    C'\sqrt{\frac{n}{p}}(p-q)\leq C'\sqrt{np} \leq n^{1/40} =o(n^{1/5})
\end{equation*}
as $n\to \infty$. Therefore, the condition \eqref{eq:rmk k} in Lemma \ref{lem:proof rmk} can be satisfied.

\section{Proof of Theorem \ref{thm:almost_smallest}}\label{sec:proof of almost}
From now on, assume that the permutation $\pi:[n]\to[n]$ is the identity, so that $G^{\pi}=G$.
Let $C_k$ be the smallest community and $m$ denote the size of the community $C_k$, i.e, $m=n_{\min}$. 

 In proving Theorem \ref{thm:almost_smallest} (almost exact matching of $G^\pi(C_k)$ and $G'(C_k)$), we follow the proof structure mostly similar to that of \citet{MRT21a}, where the almost exact matching of the correlated ER model was proved based on the correlation between the signature vectors of the correct pair of vertices, where the signature vector is defined in terms of the binary partition tree rooted from each vertex. The main difference of our analysis from that of \citet{MRT21a} is that we use both inter- and intra-community edges when constructing the $2^{k'}$-ary partition tree, while that of \citet{MRT21a} uses only inter-community edges (since there is no community structure in the ER model) both in generating the binary tree and in partitioning the vertices within the tree. 
In our analysis,  the intra-community edges are used to generate the tree structure, while the inter-community edges are used to partition the vertices at each level of the tree into different nodes based on their inter-community edge degrees. The proof becomes simpler for our analysis of the correlated SBMs due to this difference.

In this section, we present two main lemmas, Lemma \ref{lem:correct final} and Lemma \ref{lem:wrong final}, to prove Theorem \ref{thm:almost_smallest}. Lemma \ref{lem:correct final} shows that for most correct vertex pairs, the value of  the normalized distance $\sum_{\bss^{\ell} \in J} \frac{\left(f_{\bss^{\ell}}(i)-f'_{\bss^{\ell}}(i)\right)^{2}}{\mathrm{v}_{\bss^{\ell}}(i)+\mathrm{v}'_{\bss^{\ell}}(i)}$ between the signature vectors in Algorithm \ref{alg:algorithm2} is less than the threshold $2w\left(1-\frac{1}{\sqrt{\log m}}\right)$ with high probability. 

\begin{lemma}\label{lem:correct final}

For any constants $C, D > 0$, there exist constants $R, Q_1, Q_2, m_0, \alpha_1, c > 0$ with the properties below. Consider the graphs $G^\pi$ and $G'$, which are defined as the correlated Stochastic Block Models (SBMs) described in Section \ref{sec:model} with correlation $1 - \alpha$. 
Consider a random subset $J$ uniformly drawn from $\{-1,1\}^{k'\ell}$ with cardinality $2w$, where $w$ is an integer.
Suppose that $m \geq m_0$, $\alpha \in (0, \alpha_1)$, $w \geq (\log m)^4$, and the following conditions hold:
    \beq
    \begin{split}\nonumber
    (\log m)^{1.1} &\leq mp \leq m^{1/20},\\
    mq&\geq Q_1 k'^2 \ell^2,\\
  2\log \left(w^{4}(\log m)\right) \leq k'\ell \leq C \log \log m,& \quad k'\leq Q_2 \log mp,  \quad \ell \leq \frac{\log m}{R \log mp} \wedge C \log \log m.
    \end{split}
    \eeq
  Then, there are at least $m-m^{1-c}$ vertices $i \in C_k$ satisfying
  \begin{equation}\label{eq:correct result}
      \sum_{\bss^\ell \in J} \frac{\left(f_{\bss^\ell}(i)-f'_{\bss^\ell}(i)\right)^{2}}{\mathrm{v}_{\bss^\ell}(i)+\mathrm{v}'_{\bss^\ell}(i)} \leq 2 w\left(1-\frac{1}{(\log m)^{0.1}}\right) ,
  \end{equation}
  with probability at least $1-m^{-D}$.		
\end{lemma}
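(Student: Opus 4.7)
The plan is to follow the three-stage skeleton used by \citet{MRT21a} for the binary tree on correlated ER graphs, but adapted to the $2^{k'}$-ary partitioning driven by the inter-community degree signs. I will (i) identify almost all vertices in $C_k$ as having a well-behaved partition tree in the parent graph $G_0$; (ii) show that for such vertices the correct-pair signatures $f(i)$ and $f'(i)$ are correlated leaf-by-leaf; and (iii) use the randomly sampled set $J$ to convert this leaf-wise correlation into concentration of the normalized distance below the threshold. For step (i), call $i\in C_k$ \emph{typical} if $\caB_{G_0(C_k)}(i,\ell)$ is a tree, every sphere $\caS_{G_0(C_k)}(i,r)$ has size within a constant factor of $(n_k p)^r$ for $r\leq \ell$, and the inter-community degrees $\deg^a_\Gamma(j)$ are concentrated for all $a\in[k']$ and $j$ in this ball. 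The tree property fails on at most an $m^{-c}$-fraction of vertices because the bound $\ell\leq \log m/(R\log mp)$ forces $(mp)^\ell\leq m^{1/R}$, so a standard path-counting argument makes loops rare; sphere-size control follows from Chernoff applied inductively. The hypothesis $mq\geq Q_1 k'^2\ell^2$ (together with $n_a=\Theta(n/k)$) gives that $\sign(\deg^a_\Gamma(j)-n_aq)$ agrees across $G$ and $G'$ with probability $1-O(\alpha+(n_aq)^{-1/2})$, and that each sign pattern $\bss^\ell$ is attained by roughly $(n_k p/2^{k'})^\ell$ vertices, so leaves are approximately balanced.

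For step (ii), fix a typical $i$ and set $I_{\bss^\ell}:=T^\ell_{\bss^\ell}(i,G)\cap T^\ell_{\bss^\ell}(i,G')$. The overlap size is at least $(1-\alpha)^\ell(1-\varepsilon)^{k'\ell}|T^\ell_{\bss^\ell}(i,G)|$ for a small constant $\varepsilon$: a depth-$\ell$ path survives in both trees with probability $(1-\alpha)^\ell$, and its last vertex is assigned to the same cell with probability $(1-\varepsilon)$ per inter-community component. Splitting
\begin{equation*}
f_{\bss^\ell}(i)-f'_{\bss^\ell}(i)=\sum_{j\in I_{\bss^\ell}}\bigl(\deg^k_G(j)-\deg^k_{G'}(j)\bigr)+R_{\bss^\ell},
\end{equation*}
where $R_{\bss^\ell}$ collects contributions of leaves lying in only one of the two trees, a direct second-moment calculation then yields
\begin{equation*}
\mathbb{E}\!\left[\frac{(f_{\bss^\ell}(i)-f'_{\bss^\ell}(i))^2}{\mathrm{v}_{\bss^\ell}(i)+\mathrm{v}'_{\bss^\ell}(i)}\right]\leq 1-\frac{c_1}{(\log m)^{0.1}},
\end{equation*}
for sufficiently small $\alpha$ and $\varepsilon$, using $k'\ell\leq C\log\log m$ to keep the overlap a slowly-vanishing fraction.

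For step (iii), the entries $\{f_{\bss^\ell}(i)\}_{\bss^\ell}$ are not independent across $\bss^\ell$, since distinct cells share ancestor edges and share the inter-community degree variables $\deg^a_\Gamma(\cdot)$. The sparsification trick of \citet{MRT21a} handles this: sampling $J$ uniformly at random from $\{-1,1\}^{k'\ell}$ with $|J|=2w$ and $k'\ell\geq 2\log(w^4\log m)$ makes the sampled cells pairwise well-separated in Hamming distance with high probability, which keeps the overlap $|T^\ell_{\bss^\ell}\cap T^\ell_{\bst^\ell}|$ negligible for distinct $\bss^\ell,\bst^\ell\in J$. Conditioning on the first $\ell-1$ tree levels and on this sparsification event, the summands become essentially independent bounded random variables; Bernstein's inequality combined with the mean bound from step (ii) yields \eqref{eq:correct result} with failure probability at most $\exp(-c_2 w/(\log m)^{0.2})$, absorbed by $m^{-D}$ since $w\geq(\log m)^4$. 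A union bound over typical vertices delivers the claimed bound of $m^{1-c}$ exceptions.

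The hardest point, I expect, is the tight second-moment estimate in step (ii). The overlap shrinks like $(1-\alpha)^\ell(1-\varepsilon)^{k'\ell}$ and, since $k'\ell$ can be as large as $C\log\log m$, this is only $(\log m)^{-O(1)}$; the threshold $2w(1-(\log m)^{-0.1})$ is calibrated to precisely this scaling. Showing that the non-overlap remainder $R_{\bss^\ell}$ contributes less than the gap $c_1/(\log m)^{0.1}$---rather than competing with it---requires careful bookkeeping of how inter-community sign-switching correlates with edge subsampling inside $G_0$. This is where the sign-concentration estimate driven by $mq\geq Q_1 k'^2\ell^2$ does most of the work, and tracking the constants from the sphere sizes, through the leaf overlap, and into the second moment is the delicate part of the argument.
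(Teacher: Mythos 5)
Your proposal matches the paper's high-level three-step architecture (typicality of the partition tree in $G_0$; leaf-level overlap of the two trees; concentration of the normalized distance via sparsification of $J$), and steps (i) and (ii) track the paper's Proposition~\ref{prop:type} and Proposition~\ref{prop:common Ts} closely enough. However, there is a genuine gap in how you describe and justify step (iii).

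First, the mechanism you invoke for sparsification is not correct. You claim that a random $J$ with $|J|=2w$ and $k'\ell\geq 2\log(w^4\log m)$ gives ``pairwise Hamming separation'' of the sampled sign-strings, and that this keeps $|T^\ell_{\bss^\ell}(i,G)\cap T'^\ell_{\bst^\ell}(i,G')|$ small for $\bss^\ell\neq\bst^\ell$. Hamming distance between $\bss^\ell$ and $\bst^\ell$ does not control that intersection: a given vertex $j$ can lie in $T^\ell_{\bss^\ell}(i,G)$ because its sign pattern in $G$ equals $\bss^\ell$ while simultaneously lying in $T'^\ell_{\bst^\ell}(i,G')$ because its sign pattern in $G'$ equals $\bst^\ell$, for any two patterns whatsoever. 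The paper's actual argument (via Lemma~\ref{lem:sparsification lem}) is a pigeonhole/averaging one: since $\sum_{\bst^\ell}|T^\ell_{\bss^\ell}\cap T'^\ell_{\bst^\ell}|\leq |T^\ell_{\bss^\ell}|\leq 6(mp/2^{k'})^\ell$, only a small number of $\bst^\ell$ can have a large intersection with any fixed $\bss^\ell$, and a uniformly sampled $J$ of polylogarithmic size is unlikely to capture many such bad pairs. The role of $k'\ell\geq 2\log(w^4\log m)$ is also not what you state: it is used in the final bound on $\hat{\mathbb{E}}[X_{\bss^\ell}^2]$ on the ``good'' subset $\tilde{J}(i)$ (equation~\eqref{eq:Xs2 correct Jhat}), where one needs $2^{k'\ell}\gtrsim w^4\log m$ so that the bias term $25K^2w^4\log m/\bigl(2^{k'\ell}(1-6\epsilon)^{k'\ell}\bigr)$ does not swamp the gain $(1-6\epsilon)^{k'\ell}/13$ in variance.

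Second, your decomposition
\[
f_{\bss^\ell}(i)-f'_{\bss^\ell}(i)=\sum_{j\in I_{\bss^\ell}}\bigl(\deg^k_G(j)-\deg^k_{G'}(j)\bigr)+R_{\bss^\ell}
\]
lumps together what the paper keeps carefully separate. In the paper's Lemma~\ref{lem:entrywise difference between signatures}, the remainder is split into a conditionally mean-zero random part $Z_{\bss^\ell}$ and a deterministic bias $\Delta_{\bss^\ell}$, where the bias is driven by the overlap sets $R_{\bss^\ell}(i)=T^\ell_{\bss^\ell}(i)\cap T'^\ell_{J\setminus\{\bss^\ell\}}(i)$ and the quantities $\eta_{\bss^\ell}$; these degree sums are \emph{conditioned out} (their edges are fixed) specifically so that the $Z_{\bss^\ell}$'s become conditionally independent across $\bss^\ell\in J$. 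Your ``direct second-moment calculation'' on $R_{\bss^\ell}$ skips this, and without the conditioning-out step you cannot cleanly apply Bernstein/Hoeffding across $\bss^\ell\in J$, because the sums share vertices (and hence degree variables) across different cells. You would need to introduce the $\eta_{\bss^\ell}$ bookkeeping of Lemma~\ref{lem:small overlap correct} and the $Z/\Delta$ split to close this.

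These are not cosmetic: the sparsification and the $Z/\Delta$ decomposition are precisely what let the paper argue conditional independence of the leaf contributions and bound the deterministic bias, and the lower bound on $k'\ell$ enters there, not via any Hamming-distance geometry.
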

The proof of Lemma \ref{lem:correct final} can be found in Section \ref{sec:proof of correct final}.


Lemma \ref{lem:wrong final}, on the other hand, demonstrates that for most of the incorrect pairs $(i,i')$ in $C_k$, the normalized distance value $\sum_{\bss^\ell \in J} \frac{\left(f_{\bss^\ell}(i)-f'_{\bss^\ell}(i')\right)^{2}}{\mathrm{v}_{\bss^\ell}(i)+\mathrm{v}'_{\bss^\ell}(i')}$ between the signature vectors exceeds the threshold $2w\left(1-\frac{1}{\sqrt{\log m}}\right)$ with high probability.

\begin{lemma}\label{lem:wrong final}
    For any constants $C, D>0$, there exists constant $R,Q_1,Q_2, m_{0},\alpha_{1},c>0$ with the properties below.  Consider the graphs $G^\pi$ and $G'$, which are defined as the correlated Stochastic Block Models (SBMs) described in Section \ref{sec:model} with correlation $1 - \alpha$. Consider a random subset $J$ uniformly drawn from $\{-1,1\}^{k'\ell}$ with cardinality $2w$, where $w$ is an integer.  Suppose that $m\geq m_{0}, \;\alpha \in(0, \alpha_{1})$, $w \geq \lfloor (\log m)^{5} \rfloor$, and the following conditions hold:
        \beq
    \begin{split}\nonumber
    (\log m)^{1.1} &\leq mp \leq m^{1/20},\\
    mq&\geq Q_1 k'^2 \ell^2,\\
   k'\ell \leq C \log \log m, \quad k'\leq Q_2& \log mp,  \quad \ell \leq \frac{\log m}{R \log mp} \wedge C \log \log m.
    \end{split}
    \eeq
  Then, there exists a subset $\mathcal{I} \subset C_k$ with a size $|\mathcal{I}| \geq m-m^{1-c}$ such that for any $i,i' \in \mathcal{I}$, $i\neq i'$, we have 
  \begin{equation}\label{eq:wrong result}
      \sum_{\bss^\ell \in J} \frac{\left(f_{\bss^\ell}(i)-f'_{\bss^\ell}(i')\right)^{2}}{\mathrm{v}_{\bss^\ell}(i)+\mathrm{v}'_{\bss^\ell}(i')} \geq 2 w\left(1-\frac{1}{(\log m)^{0.9}}\right) ,
  \end{equation}
  with probability at least $1-m^{-D}$.		
\end{lemma}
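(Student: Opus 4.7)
The plan is to reduce the lemma to a concentration statement for sums of nearly independent chi-squared-like random variables. First I would extract a ``good'' subset $\mathcal{I} \subset C_k$ of size at least $m - m^{1-c}$ consisting of vertices whose depth-$\ell$ partition trees in both $G$ and $G'$ are well-behaved: the induced $\ell$-neighborhoods in $G(C_k)$ and $G'(C_k)$ are free of cycles, and each leaf $T^\ell_{\bss^\ell}(i)$ has cardinality concentrated around $(n_k p / 2^{k'})^\ell$ with the expected sub-community degree profile. The existence of such $\mathcal{I}$ is already guaranteed by the good-vertex lemmas supporting Lemma \ref{lem:correct final}, combined with the bound $\ell \leq \log m / (R \log mp)$, which ensures $(mp)^\ell \ll m$ so that random pairs $i \neq i'$ satisfy $d_{G_0(C_k)}(i, i') > 2\ell + 1$ with high probability.

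The core of the argument is to show that for $i, i' \in \mathcal{I}$ with $i \neq i'$, the signatures $f(i)$ and $f'(i')$ are effectively independent, in the sense that, conditionally on the partition-tree skeletons, each difference $f_{\bss^\ell}(i) - f'_{\bss^\ell}(i')$ is a centered sum of nearly independent Bernoulli-like contributions with variance essentially equal to $\mathrm{v}_{\bss^\ell}(i) + \mathrm{v}'_{\bss^\ell}(i')$. The key technical quantity to control is the overlap
\begin{equation}\nonumber
\absb{T^\ell_{\bss^\ell}(i, G) \cap \mathcal{B}_{G_0(C_k)}(i', \ell)} \quad \text{and} \quad \absb{T^\ell_{\bss^\ell}(i', G') \cap \mathcal{B}_{G_0(C_k)}(i, \ell)},
\end{equation}
which, by the disjointness of $\ell$-balls for typical pairs and the bound $\absb{\mathcal{B}_{G_0(C_k)}(j, \ell)} \leq 2(mp)^\ell$ for $j \in \mathcal{I}$, vanishes for all but $o(2^{k'\ell})$ values of $\bss^\ell$. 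Summed over the random sparsification set $J$, the contribution of overlapping terms is negligible relative to the main term, and a perturbation argument analogous to the one in \citet{MRT21a} yields conditional independence of the signatures up to an arbitrarily small additive error.

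Once near-independence is established, each of the $2w$ summands $X_{\bss^\ell} := (f_{\bss^\ell}(i) - f'_{\bss^\ell}(i'))^2 / (\mathrm{v}_{\bss^\ell}(i) + \mathrm{v}'_{\bss^\ell}(i'))$ is approximately chi-squared with mean $1$, so the expected value of the sum over $J$ is approximately $2w$. The random sparsification $J$ decouples the structural dependencies between leaves sharing common ancestors in the tree, enabling a Bernstein-type lower-tail bound
\begin{equation}\nonumber
\P\hB{\sum_{\bss^\ell \in J} X_{\bss^\ell} < 2w\p{1 - (\log m)^{-0.9}}} \leq \exp\p{-c w (\log m)^{-1.8}} \leq \exp\p{-c (\log m)^{3.2}},
\end{equation}
which, given $w \geq \lfloor (\log m)^5 \rfloor$, dominates $m^{-D-2}$ for $m$ large. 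A union bound over the at most $m^2$ pairs in $\mathcal{I} \times \mathcal{I}$ then completes the proof. The main obstacle is rigorously handling the structural dependencies between the entries $\{f_{\bss^\ell}(i)\}_{\bss^\ell}$ that arise because a common interior vertex of the partition tree contributes edges to many leaves simultaneously; this is precisely where the random sparsification $J$ and the restriction to the good set $\mathcal{I}$ become essential, so that after conditioning on the tree skeletons the remaining randomness decouples cleanly across $\bss^\ell \in J$.
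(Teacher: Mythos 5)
Your high-level plan — condition on the tree skeletons, show near-independence of the signatures of distinct vertices, control the residual dependence by bounding overlap, and finish with concentration plus a union bound — is the same scaffolding as the paper. The gap is in the first step, where you justify the structure of $\mathcal{I}$.

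You claim that for $i, i' \in \mathcal{I}$ one can take $d_{G_0(C_k)}(i, i') > 2\ell + 1$, so the two $\ell$-balls are disjoint and the overlaps $T^\ell_{\bss^\ell}(i, G) \cap \mathcal{B}_{G_0(C_k)}(i', \ell)$ vanish. This cannot hold for \emph{all} ordered pairs drawn from a set of cardinality $m - m^{1-c}$. Each vertex has $\Theta((mp)^{2\ell})$ vertices within distance $2\ell$ in $G_0(C_k)$, so the ``close-pair'' graph on $[m]$ with average degree $\Theta((mp)^{2\ell}) \geq (\log m)^{2.2}$ has $\Theta\bigl(m(mp)^{2\ell}\bigr)$ edges; to make the remaining $m - m^{1-c}$ vertices an independent set you would need the removed $m^{1-c}$ vertices to cover all of those edges, and $m^{1-c}(mp)^{2\ell} \ll m(mp)^{2\ell}$. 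So $\mathcal{I}$ unavoidably contains pairs whose $\ell$-balls intersect, and your subsequent claim that the overlap ``vanishes for all but $o(2^{k'\ell})$ values of $\bss^\ell$'' has no support for those pairs.

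The paper's proof is written precisely to avoid this assumption. Condition $(D1)$ only asks that $G_0(\mathcal{B}_{G_0(C_k)}(i,\ell) \cup \mathcal{B}_{G_0(C_k)}(i',\ell))$ is a tree or a forest of two trees, explicitly allowing the balls to intersect. The overlap is then quantified, not assumed away: Lemma~\ref{lem:Sizes of neighborhoods and their intersections} gives $|\mathcal{B}_{G_0(C_k)}(i,\ell) \cap \mathcal{B}_{G_0(C_k)}(i',\ell)| \leq K_1 (mp/(1-\alpha))^{\ell-1}$ whenever the $3\ell$-balls are trees; this is the extra power of $(mp)^{-1}$ that makes the correction terms small. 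Lemma~\ref{lem:sparsification of wrong} then uses Bernstein for sampling without replacement to show that over the random $J$ of size $2w$ the overlaps $\sum_{\bss^\ell \in J}|L_{\bss^\ell}(i,i')|$ are controlled, and the resulting deterministic shifts $\zeta_{\bss^\ell}$ enter the mean and variance bounds in \eqref{eq:E_var} and \eqref{eq:Var Xs wrong}. You need this explicit overlap-budget argument; the ``disjoint balls so near-independence'' shortcut fails for a positive density of pairs in $\mathcal{I}$. Your concentration step (Hoeffding with truncation, yielding a failure probability $\exp(-c(\log m)^{3.2}) \ll m^{-D-2}$ when $w \geq (\log m)^5$) is essentially correct once the mean and variance of $X_{\bss^\ell}$ are bounded, but those bounds are exactly what the overlap analysis supplies.
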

The proof of Lemma \ref{lem:wrong final} can be found in Section \ref{sec:proof of wrong final}.

\begin{proof}[Proof of Theorem \ref{thm:almost_smallest}]
We will set the order of $\ell$ as large as possible to minimize the order of $k'$ (or $k$).  Let $\ell= \left\lceil \frac{\log m}{40 \log mp}\right\rceil \wedge \lceil 42\log \log m \rceil$ and $k'=\left\lceil 1680 \log \log m \frac{\log mp}{\log m} \right\rceil $. We consider a random subset $J\in \{-1,1\}^{k'\ell}$ of a size $|J|=2w$ where $w =\left\lfloor(\log m)^{5}\right\rfloor$. 
By Lemma \ref{lem:correct final} and \ref{lem:wrong final}, we can find a subset $\mathcal{I} \subset C_k$ with its size $|\mathcal{I}| \geq m-m^{1-c}$ for a positive constant $c$ such that 	$\sum_{\bss^\ell \in J} \frac{\left(f_{\bss^\ell}(i)-f'_{\bss^\ell}(i)\right)^{2}}{\mathrm{v}_{\bss^\ell}(i)+\mathrm{v}'_{\bss^\ell}(i)} <  2 w\left(1-\frac{1}{\sqrt{\log m}}\right)$ and $	\sum_{\bss^\ell \in J} \frac{\left(f_{\bss^\ell}(i)-f'_{\bss^\ell}(i')\right)^{2}}{\mathrm{v}_{\bss^\ell}(i)+\mathrm{v}'_{\bss^\ell}(i')} >  2 w\left(1-\frac{1}{\sqrt{\log m}}\right) $ for any distinct $i,i' \in \mathcal{I}$. Hence, by running Algorithm \ref{alg:algorithm2}, we can get a matrix $B$ satisfying that 
$$
B_{\pi(i'), i}=\begin{cases}
    1 &\text{ if } \; i=i' \in \mathcal{I},\\
    0 &\text{ if }  \; i,i'(i\neq i') \in \mathcal{I} .
\end{cases} 
$$
By Proposition 2.2 in \citep{MRT21a} with matrix $B$ as an input, we can obtain a permutation $\hat{\pi}$ on community $C_{k}$ that satisfies
    \begin{equation}
    |i\in [m] : \hat{\pi}(i) \neq \pi(i)| \leq 4m^{1-c}.
    \end{equation}
    This completes the proof.
\end{proof}

\section{Structural Properties of Partition Tree}\label{sec:vertex classes} 
We analyze the statistical behavior of large neighborhoods of vertices in  Erd\H{o}s-R\'enyi (ER) graph and the partition trees rooted from each vertex, which will be used to prove the main theoretical results. 
Section \ref{lem:degree} is dedicated to the statement of lemmas concerning the statistics of the large neighborhoods in the ER graph $\caG(n,p)$. In Section \ref{sec:size of partition tree}, we show that  the sizes of the nodes $T^d_{\bss^d}(i)$ at each level $d\in[\ell]$ of the partition tree are well balanced for every $\bss^d\in\{-1,1\}^{k'd}$. Furthermore, in Section \ref{app:sec:tree} we establish that the $\ell$-neighborhoods of a majority of vertices in $C_k$ form a tree with high probability.

\subsection{Large Neighborhoods in ER Graph}\label{sec:large neighbor}
In this subsection, we state lemmas that analyze the statistics of large neighborhoods in Erd\H{o}s-R\'enyi (ER) graph $\caG(n,p)$.

\begin{lemma}[Sizes of neighbors]\label{lem:degree}
    For any constant $D,R,\delta>0$, there exists a constant $n_{0}$ which depends only on $D,R$ and $\delta$ with the properties below. Consider an Erd\H{o}s-R\'enyi (ER)  graph $G$ defined as $\mathcal{G}(n, p)$. Assume that
    \begin{equation}\label{eq:density}
         n\geq n_0, \quad np \geq  (\log n)^{1+\delta} .
    \end{equation}
    Then, we have that
    \begin{equation}\label{eq:degree 1}
        \left|\mathcal{S}_{G}(i, 1) -np \right| \leq \frac{np}{R \log \log n}	\quad \forall i \in [n] ,
    \end{equation}
    with probability at least $1-n^{-D}$.
\end{lemma}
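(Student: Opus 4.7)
The plan is to apply a standard Chernoff-type concentration bound to each vertex degree and then take a union bound over all $n$ vertices.

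First I would observe that for any fixed vertex $i \in [n]$ in $G \sim \caG(n,p)$, the size $|\caS_G(i,1)|$ of the $1$-sphere equals $\deg_G(i)$, which is distributed as $\operatorname{Bin}(n-1, p)$. Its mean differs from $np$ by exactly $p$, which is negligible compared to the deviation we need to control since $np/(R \log \log n) \gg p$ under the assumption $np \geq (\log n)^{1+\delta}$. So it suffices to show that a $\operatorname{Bin}(n,p)$ random variable $X$ concentrates within $np/(R \log \log n)$ of $np$.

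Next I would invoke a multiplicative Chernoff bound (as in Theorem \ref{thm:binomial tail}, which is cited elsewhere in the paper): for any $\epsilon \in (0,1)$,
\[
\P\!\left(|X - np| \geq \epsilon \, np\right) \leq 2 \exp\!\left(-\frac{\epsilon^2 \, np}{3}\right).
\]
Setting $\epsilon = 1/(R \log \log n)$, the right-hand side becomes $2 \exp(-np / (3 R^2 (\log \log n)^2))$. Using the hypothesis $np \geq (\log n)^{1+\delta}$, the exponent is at least
\[
\frac{(\log n)^{1+\delta}}{3 R^2 (\log \log n)^2} = (D+1)(\log n) \cdot \frac{(\log n)^{\delta}}{3 R^2 (D+1)(\log \log n)^2},
\]
and the second factor tends to infinity with $n$. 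Hence for $n \geq n_0(D, R, \delta)$ large enough, the tail probability is bounded by $n^{-(D+1)}$.

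Finally I would take a union bound over all $i \in [n]$ to conclude
\[
\P\!\left(\exists \, i \in [n] : \bigl||\caS_G(i,1)| - np\bigr| > \tfrac{np}{R \log \log n}\right) \leq n \cdot n^{-(D+1)} = n^{-D}.
\]
No step here is a real obstacle; the only mild care is in choosing $n_0$ large enough so that the $(\log n)^\delta / (\log \log n)^2$ factor dominates the constants $3R^2(D+1)$, and in absorbing the harmless difference between $n$ and $n-1$ in the binomial parameter (for instance by replacing $R$ with a slightly smaller constant before applying Chernoff, which only changes $n_0$).
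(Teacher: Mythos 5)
Your approach is essentially identical to the paper's: apply a binomial concentration inequality to $\deg_G(i)=|\mathcal{S}_G(i,1)|$, use $np \geq (\log n)^{1+\delta}$ to make the resulting exponent $\Omega\!\left(np/(\log\log n)^2\right)$ super-logarithmic in $n$ so the per-vertex failure probability is at most $n^{-D-1}$, and union-bound over all $n$ vertices (the paper invokes Bernstein's inequality, Lemma \ref{lem:bernstein's inequality}, where you invoke the multiplicative Chernoff bound; both yield the same order exponent). One small inaccuracy in the write-up: the bound $\P\{|X-np|\geq \epsilon np\} \leq 2\exp(-\epsilon^2 np/3)$ is the standard multiplicative Chernoff inequality, not what is actually stated in Theorem \ref{thm:binomial tail} (Okamoto's square-root-form bounds), but since the multiplicative Chernoff bound is a well-known fact your argument goes through unchanged.
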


\begin{proof} Applying Bernstein's inequality (stated in Lemma \ref{lem:bernstein's inequality}), for any vertex $i$ we have
\begin{equation}\label{eqn:SG1}
    \begin{aligned}
        &\P\left\{  \left|\mathcal{S}_{G}(i, 1) -np \right| \geq \frac{np}{R \log \log n} \right\}\leq 2 \exp\left(-R'\frac{np}{(\log \log n)^2} \right) \leq n^{-D-1},
    \end{aligned}
\end{equation}
where $R'$ depends only on $R$. The last inequality holds since $(\log n)^{1+\delta} \leq np $ and $n\geq n_0(D,R,\delta)$. By applying a union bound over all vertices $i\in[n]$, we can complete the proof.
\end{proof}

\begin{lemma}[Sizes of intersections of neighbors (Lemma 4.1 in \citep{MRT21a})] \label{lem:Sizes of neighborhoods and their intersections}
 For any constant $D>1$, there exist constants $K>0$ and $n_{0} $ which depend only on $D$ with the properties below. Consider an Erd\H{o}s-R\'enyi (ER) graph  $G\sim \caG(n, p)$ with $n \geq n_{0}$ and $ n p\geq \log n$. With probability at least $1-n^{-D}$, the following inequality holds:
    \beq\label{eqn:lemma41ref}
    \left|\mathcal{B}_{G}(i, l)\right| \leq K(n p)^{l} \quad \text { for any } i, l \in[n].
    \eeq
    Under the event defined by \eqref{eqn:lemma41ref}, consider any positive integer $m$ and $i, j \in [n]$ satisfying that $i \neq j$ and $G\left(\mathcal{B}_{G}(i, 3m)\right)$ forms a tree. If the distance $d = \operatorname{dist}_{G}(i, j) \leq 2m$, then we have:
    $$
    \left|\mathcal{B}_{G}(i, m) \cap \mathcal{B}_{G}(j, m)\right| \leq K(n p)^{m-\left\lceil d / 2\right\rceil}.
    $$
\end{lemma}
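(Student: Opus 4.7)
The lemma is a structural statement about ER graphs and is cited from \citep{MRT21a}, so the plan is to reproduce the standard argument: first control the growth of balls by level-by-level Chernoff concentration, then, on the tree-like event, bound the intersection by decomposing it along the unique geodesic from $i$ to $j$.

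For the ball-size bound \eqref{eqn:lemma41ref}, I would induct on $l$. Conditioning on $\mathcal{B}_G(i,l)$, the newly revealed vertices in $\mathcal{S}_G(i,l+1)$ are stochastically dominated by $\mathrm{Bin}(n,|\mathcal{S}_G(i,l)|p)$, since each vertex outside $\mathcal{B}_G(i,l)$ joins $\mathcal{S}_G(i,l+1)$ with probability at most $1-(1-p)^{|\mathcal{S}_G(i,l)|}\leq |\mathcal{S}_G(i,l)|p$. A Chernoff bound then gives $|\mathcal{S}_G(i,l+1)|\leq 2np\cdot|\mathcal{S}_G(i,l)|$ with failure probability at most $\exp(-c\cdot np)$, and since $np\geq \log n$ this beats $n^{-(D+2)}$. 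Iterating yields $|\mathcal{B}_G(i,l)|\leq\sum_{r\leq l}(2np)^r\leq K(np)^l$ for $K$ depending on $D$, and a union bound over $i\in[n]$ and $l\leq n$ (only $l=O(\log_{np} n)$ is meaningful because the bound is vacuous once $(np)^l\geq n$) gives the stated $1-n^{-D}$ probability.

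For the intersection bound, work under the tree-like event. Since $\mathrm{dist}_G(i,j)=d\le 2m\leq 3m$ and $G(\mathcal{B}_G(i,3m))$ is a tree, the unique geodesic $i=u_0,u_1,\ldots,u_d=j$ lies inside this tree. The key observation is that for every $v\in\mathcal{B}_G(i,m)\cap\mathcal{B}_G(j,m)$, the tree structure determines a unique \emph{meeting point} $u_{k(v)}$ on this path (the first vertex of the path encountered when walking from $v$), and
\begin{equation}\nonumber
\mathrm{dist}_G(i,v)=k+\mathrm{dist}_G(u_k,v),\qquad \mathrm{dist}_G(j,v)=(d-k)+\mathrm{dist}_G(u_k,v),
\end{equation}
where $k=k(v)$. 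The two inequalities $\mathrm{dist}_G(i,v)\le m$ and $\mathrm{dist}_G(j,v)\le m$ therefore force $\mathrm{dist}_G(u_k,v)\le m-\max(k,d-k)$.

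Since the vertex sets corresponding to distinct meeting points lie in disjoint subtrees hanging off the path, this yields
\begin{equation}\nonumber
\bigl|\mathcal{B}_G(i,m)\cap\mathcal{B}_G(j,m)\bigr|\;\le\;\sum_{k=0}^{d}\bigl|\mathcal{B}_G(u_k,\,m-\max(k,d-k))\bigr|.
\end{equation}
Applying the first part of the lemma to each ball gives a geometric sum whose maximal term is at $k=\lfloor d/2\rfloor$ or $\lceil d/2\rceil$ with exponent $m-\lceil d/2\rceil$. Because $np\ge\log n\gg 1$, this sum is dominated (up to a constant factor absorbable into $K$) by its largest term, yielding $K(np)^{m-\lceil d/2\rceil}$ as claimed.

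The only subtlety is ensuring the Chernoff bound in the first step holds uniformly across all relevant radii after conditioning on the already-revealed portion of the BFS; this is standard and requires only that $np\ge\log n$ with a multiplicative slack chosen so the per-step failure probability absorbs the union bound over $(i,l)$. Everything else is tree combinatorics and book-keeping.
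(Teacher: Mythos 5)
Your reconstruction of the intersection bound (the second half of the lemma) is sound: on the tree event, every $v\in\mathcal{B}_G(i,m)\cap\mathcal{B}_G(j,m)$ has a well-defined meeting point $u_{k(v)}$ on the unique $i$--$j$ geodesic, additivity of distances along the tree forces $\operatorname{dist}_G(u_k,v)\le m-\max(k,d-k)$, the subtrees hanging off distinct $u_k$ are disjoint, and plugging part one into the resulting sum $\sum_{k=0}^d|\mathcal{B}_G(u_k,m-\max(k,d-k))|$ gives a geometric series dominated by its $k\approx d/2$ term. (Note, incidentally, that this paper does not prove the lemma at all --- it is imported verbatim from \citet{MRT21a} --- so you are reconstructing a cited argument rather than matching a proof given here.)

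The gap is in your ball-growth bound. The level-by-level recursion $|\mathcal{S}_G(i,l+1)|\le 2\,np\,|\mathcal{S}_G(i,l)|$ iterates to $|\mathcal{B}_G(i,l)|\le\sum_{r\le l}(2np)^r=\Theta\bigl(2^l(np)^l\bigr)$; the closing step ``$\sum_{r\le l}(2np)^r\le K(np)^l$'' is simply false for any $l$-independent constant $K$. The stray factor $2^l$ matters: in the regime this paper cares about, $l$ can be $\Theta(\log n/\log(np))$, so $2^l=n^{\Theta(1/\log\log n)}=n^{o(1)}$ --- not a constant. (There is also a milder issue that a factor-$2$ Chernoff bound at $np=\log n$ gives failure probability only $n^{-c}$ for a universal $c<1$, which does not beat $n^{-(D+2)}$ for general $D$; fixing this by inflating $2$ to a $D$-dependent constant only makes the $C_D^l$ blow-up worse.) The correct way to obtain $|\mathcal{B}_G(i,l)|\le K(np)^l$ with $K=K(D)$ a true constant is to let the per-level expansion factor be $1+\delta_l$ with $\delta_l$ decaying: at level one, only a constant-factor deviation is affordable since $np\ge\log n$; but once $|\mathcal{S}_G(i,l)|\lesssim K(np)^l$, the dominating binomial has mean $\gtrsim(np)^{l+1}$, so Chernoff already guarantees relative deviation $\delta_{l+1}\lesssim\sqrt{(D+2)\log n/(np)^{l+1}}$ at failure probability $n^{-D-2}$. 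These $\delta_l$ are geometrically summable, $\sum_{l\ge 2}\delta_l=O\bigl(\sqrt{\log n}/np\bigr)=o(1)$, so $\prod_l(1+\delta_l)=O(D)$ and the cumulative prefactor stays bounded. You allude to a ``multiplicative slack'' subtlety at the end, but as written the argument has a genuine hole, and the fix is not just a tuned constant --- it is the telescoping product with decaying deviations.
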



\subsection{Sizes of Nodes in the Partition Tree}\label{sec:size of partition tree}
We next provide the bounds on the sizes of the nodes $\{T_{\bss^{d}}^{d}(i)\}_{\bss^d\in\{-1,1\}^{k'd}, d\in[\ell]}$ in the $2^{k'}$-ary partition tree.

\begin{lemma}[Sizes of nodes in the partition tree]\label{lem:sizes of vertex class} For any constants $D,R,\delta>0$ , there exist a constant $m_0$ depending on $D,R,\delta$, a constant $k_0$ depending on $\delta$ and an absolute constant $Q>0$  with the properties below. Let $G\sim \text{SBM}(n, p, q,k)$ with communities $(C_1,\dots, C_k)$. Assume that the community labels $(C_1,C_2,\dots, C_k)$ are given. Assume that
    \begin{equation}\label{eq:lem sizes of vertex class}
        m\geq m_0,\quad (\log m)^{1+\delta} \leq mp ,  \quad k'\leq k_0 \log mp,\quad mq \geq Qk'^{2}\ell^2,
    \end{equation}
for a fixed positive integer $\ell \leq \left\lfloor \frac{\log m}   {\log mp} \right\rfloor \wedge R\log \log m$. For any $d \in[\ell], \bss^{d} \in\{-1,1\}^{k'd}$, and $i \in C_{k}$, recall $T_{\bss^d}^{d}(i)$ defined in  Sec. \ref{sec:partition_tree}. Then, for any $i \in C_{k}$ satisfying that $G\left(\mathcal{B}_{G(C_{k})}(i, \ell)\right)$ forms a tree, we have
    \beq\label{eqn:bound_node_parttree}
    \left|T_{\bss^{d}}^{d}(i)\right| \leq 6\left(\frac{m p}{2^{k'}}\right)^{d}
    \eeq
    with probability at least $1-m^{-D}$.
\end{lemma}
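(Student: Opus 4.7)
The proof will be by induction on $d \in \{0, 1, \dots, \ell\}$, using the strengthened hypothesis $|T_{\bss^d}^d(i)| \leq K_d (mp/2^{k'})^d$ with $K_0 = 1$ and $K_{d+1} = K_d (1 + 1/(10\ell))^3$; since $(1 + 1/(10\ell))^{3\ell} \leq e^{3/10} < 2$, this yields $K_\ell < 6$ and hence the claim. The base case is trivial from $T_\varnothing^0(i) = \{i\}$. For the inductive step I will decompose
\[
|T_{(\bss^d,\bss_{d+1})}^{d+1}(i)| = \sum_{v \in A} X_v, \qquad A := \caN_{G(C_k)}(T_{\bss^d}^d(i)) \cap \caS_{G(C_k)}(i, d+1),
\]
where $X_v$ is the indicator that $\sign(\deg_G^a(v) - n_a q) = \bss_{d+1}(a)$ for all $a \in [k']$, and then separately bound $|A|$, the sign probability $\Pr[X_v = 1]$, and the concentration of the sum around its conditional mean.

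Since $G(\caB_{G(C_k)}(i, \ell))$ is a tree, the children sets $\caN_{G(C_k)}(j) \cap \caS_{G(C_k)}(i, d+1)$ for distinct $j \in T_{\bss^d}^d(i)$ are disjoint and each has size $\deg_{G(C_k)}(j) - 1$. Because $G(C_k) \sim \caG(m, p)$ is Erd\H{o}s--R\'enyi, Lemma \ref{lem:degree} with its internal constant taken large enough (depending on $C$) yields $\deg_{G(C_k)}(j) \leq mp(1 + 1/(10\ell))$ uniformly over $j \in C_k$ with probability at least $1 - m^{-D-1}$; combined with the inductive hypothesis this gives $|A| \leq K_d (mp)^{d+1} 2^{-k'd}(1 + 1/(10\ell))^2$ on that event.

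The main technical step is controlling $\Pr[X_v = 1]$. The edges from $v$ to $C_a$ for $a \in [k']$ are mutually independent, independent of the intra-$C_k$ edges that built the tree, and independent of the sign indicators at $v' \neq v$, so $X_v = \prod_{a=1}^{k'} Y_{v,a}$ with independent factors $Y_{v,a} = \mathbf{1}[\sign(\mathrm{Bin}(n_a, q) - n_a q) = \bss_{d+1}(a)]$. The Berry--Esseen theorem applied to $\mathrm{Bin}(n_a, q)$ with $n_a q \geq m q \geq Q k'^2 \ell^2$ gives $|\Pr[Y_{v,a} = 1] - 1/2| \leq C_0/(k'\ell\sqrt{Q})$, so using $1 + x \leq e^x$ the product of $k'$ factors is at most $2^{-k'}\exp(2 C_0/(\ell\sqrt{Q})) \leq 2^{-k'}(1 + 1/(10\ell))$ once $Q$ is chosen sufficiently large in terms of $C_0$. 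I expect this to be the main obstacle, because it is the only place where the condition $mq \geq Q k'^2 \ell^2$ enters, and it has to transfer the anti-concentration of a single centered binomial into sharp product control across $k'$ dimensions.

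Conditioning on the intra-$C_k$ edges, which determine $A$, the $\{X_v\}_{v \in A}$ are independent Bernoullis, so $|T_{(\bss^d,\bss_{d+1})}^{d+1}(i)|$ is stochastically dominated by $\mathrm{Bin}(|A|, 2^{-k'}(1 + 1/(10\ell)))$ with expectation of order $(mp/2^{k'})^{d+1}$. Because $k' \leq k_0 \log mp$ with $k_0$ small enough gives $mp/2^{k'} \geq (mp)^{1 - o(1)} \geq (\log m)^{1.05}$, this expectation dominates $\ell^2 \log m$ by a polynomial-in-$\log m$ factor, so Bernstein's inequality applied with relative deviation $1/(10\ell)$ yields failure probability at most $m^{-D-2} 2^{-k'\ell}\ell^{-1}$. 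A union bound over the $m \cdot \ell \cdot 2^{k'\ell}$ triples $(i, d, \bss^d)$, combined with the uniform degree event above, then establishes the claim with probability at least $1 - m^{-D}$.
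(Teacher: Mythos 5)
Your proposal is correct and follows essentially the same route as the paper's proof: induction on the depth $d$, bounding $\lvert A\rvert$ by the degree bound together with disjointness of children (which is where the tree hypothesis enters), controlling the sign probability $\Pr[X_v = 1]$ via anti-concentration of the binomial at its mean, and then concentration of $\sum_{v\in A} X_v$ given the intra-$C_k$ edges. The paper's Lemmas~\ref{lem:binomial 1} and~\ref{lem:binomial 2} (via Lemma~\ref{lem:1+n2x}) play exactly the role you give to Berry--Esseen: both yield $\lvert\Pr[Y_{v,a}=1]-\tfrac12\rvert \lesssim 1/\sqrt{mq}$, and the condition $mq\ge Q k'^2\ell^2$ enters in both proofs precisely to shrink this error to $O(1/(k'\ell))$ so that the product over $k'$ coordinates is $2^{-k'}(1+O(1/\ell))$. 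One harmless cosmetic difference: the paper gives itself an extra multiplicative factor of $2$ at the first level ($Q_1=2(1+1/\ell)$) and then tracks $Q_d=2(1+1/\ell)^d$ via Hoeffding, whereas you keep the relative deviation at $1/(10\ell)$ from the start and absorb a factor $(1+1/(10\ell))^3$ per level; this yields a marginally tighter constant but a slightly weaker exponent in the tail bound (an extra $1/\ell^2$), which is still far more than enough for the union bound since $\ell=O(\log\log m)$. Two tiny slips worth noting but not affecting correctness: your $\lvert A\rvert$ bound picks up only one factor of $(1+1/(10\ell))$, not two, so you could sharpen $K_\ell<2$ slightly; and for $d=0$ the child count is $\deg_{G(C_k)}(i)$ rather than $\deg-1$, which your upper bound still covers.
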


\begin{proof} For notational simplicity, we will use $G_{k}$ instead of $G(C_{k})$. We will prove \eqref{eqn:bound_node_parttree} using mathematical induction. We start with the base case $d=1$, and by applying Lemma \ref{lem:degree}, we obtain the following result:
\begin{equation}\label{eq:S1}
    \P\left\{ |\mathcal{S}_{G_{k}}(i,1)| \leq mp\left(1+\frac{1}{5\ell}\right) \right\} \geq 1-m^{-D-2}
\end{equation}
since $mp\geq (\log m)^{1+\delta}$ and $m\geq m_{0}(D,R,\delta)$.
Let $\bss_{1} \in \{-1,1\}^{k'}$. 
By applying the bounds from Lemma \ref{lem:binomial 1} and \ref{lem:binomial 2} for a binomial random variable, we can derive the following inequality for $j\in \mathcal{S}_{G_{k}}(i,1)$ and $a\in[k']$:
\begin{equation}\label{eq:signature}
    \left|\P\{\sign(\operatorname{deg}^{a}_{G}(j)-n_{a}q)=1\}-\frac{1}{2}\right| \leq \frac{C}{\sqrt{n_{a}q}} \leq \frac{C}{\sqrt{mq}}
\end{equation}
where $C$ is a universal constant.
Therefore, we can show that
\begin{equation}\label{eq:s}
    \left| \P\{\sign(\operatorname{deg}^{a}_{G}(j)-n_{a}q)= \bss_{1}(a) \text{ for all } a\in[k']\} -\frac{1}{2^{k'}}\right| \leq  \frac{4Ck'}{2^{k'}\sqrt{mq}}
\end{equation}
by applying Lemma \ref{lem:1+n2x} with the condition $mq\geq 4C^2k'^{2}$.
If $mq \geq Q k'^{2}\ell^2$ for a sufficiently large constant $Q$,  we obtain
\begin{equation}\label{eq:mq condition}
    \P\{\sign(\operatorname{deg}^{a}_{G}(j)-n_{a}q)= \bss_{1}(a) \text{ for all } a\in[k']\} \leq  \frac{1}{2^{k'}}\left(1+\frac{1}{5\ell}\right).
\end{equation}
Thus, by applying Hoeffding's inequality (Lemma \ref{lem:Hoeffding}), we have
\begin{equation}\label{eq:T_1}
\begin{aligned}
 \P\left\{ \left|T^{1}_{\bss_1}(i)\right| \leq  2\frac{mp}{2^{k'}}\left(1+\frac{1}{\ell}\right)  \right\} \stackrel{(a)}{\geq} 1-\exp\left(-C'\frac{mp}{2^{2k'} }\right) \geq 1-m^{-D-1},
\end{aligned}
    \end{equation}
where the inequality $(a)$ holds for a constant $C'>0$ by \eqref{eq:S1} and \eqref{eq:mq condition}. The last inequality holds since $k'\leq k_0\log mp$ for a sufficiently small constant $k_0$ depending on $\delta$ such that $2^{2k_0 \log mp} \leq (mp)^{0.5\delta/(1+\delta)}$, $mp\geq (\log m)^{1+\delta}$ and $m\geq m_{0}(D,\delta)$.

 Define an event
\begin{equation}
    \mathcal{F}_{d}:=\left\{\operatorname{deg}_{G_{k}}(j) \leq mp\left(1+\frac{1}{5\ell}\right), \; \forall j\in \mathcal{B}_{G_{k}}(i,d)  \right\}.
\end{equation}
Let $\P_{d}$ denote the conditional probability given the subgraph $G_{k}(\mathcal{B}(i,d))$ under the condition that $\mathcal{F}_{d-1}$ holds.
For $\bss_{u} \in \{-1,1\}^{k'}$, let $\bss^{d}=(\bss_{1},\ldots,\bss_{d})$. Define 
	\begin{equation}
	    Q_{d}:=2\left(1+\frac{1}{\ell} \right)^{d}.
	\end{equation}
	By Lemma \ref{lem:degree}, we have
	\begin{equation}
		\P_{d}\left\{ \mathcal{F}_{d} \text{ occurs} \; \left| \; \left|T^{d}_{\bss^{d}}(i)\right| \leq Q_{d}\left(\frac{mp}{2^{k'}}\right)^{d}  \right\}\right. \geq 1-m^{-D-2}
	\end{equation}
	and
	\begin{equation}
		\left.\P_{d}\left\{ \left|\mathcal{N}_{G_{k}}(T^{d}_{\bss^{d}}(i)) \cap \mathcal{S}_{G_{k}}(i,d+1) \right| \leq |T^{d}_{\bss^{d}}(i)|mp\left(1+\frac{1}{5\ell}\right) \right\rvert \left|T^{d}_{\bss^{d}}(i)\right| \leq Q_{d}\left(\frac{mp}{2^{k'}}\right)^{d} \text{ and } \mathcal{F}_{d} \text{ occurs} \right\} \geq 1-m^{-D-2}.
	\end{equation}
	Thus, we can get
	\begin{equation}\label{eq:cond1}
	\begin{split}
		&\left.\P_{d}\left\{  \left|\mathcal{N}_{G_{k}}(T^{d}_{\bss^{d}}(i)) \cap \mathcal{S}_{G_{k}}(i,d+1) \right| \leq 2^{k'}Q_{d}\left(\frac{mp}{2^{k'}}\right)^{d+1} \left(1+\frac{1}{5\ell}\right) \text{ and }\mathcal{F}_{d} \text{ occurs }  \right\rvert |T^{d}_{\bss^{d}}(i)| \leq Q_{d}\left(\frac{mp}{2^{k'}}\right)^{d} \right\}\\
		& \geq 1-2m^{-D-2}.
	\end{split}
	\end{equation}
	Let $H:=\mathcal{N}_{G_{k}}(T^{d}_{\bss^{d}}(i)) \cap \mathcal{S}_{G_{k}}(i,d+1)$. For $\bss_{d+1} \in \{-1,1\}^{k'}$ and any $j \in H$, 	\begin{equation}
		\P_{d}\{\sign(\operatorname{deg}^{a}_{G}(j)-n_{a}q)= \bss_{d+1}(a) \text{ for all } a\in[k']\} \leq \frac{1}{2^{k'}}\left(1+ \frac{1}{5\ell}\right).
	\end{equation}
	The above result follows from \eqref{eq:mq condition}. Similar to \eqref{eq:T_1}, by applying Hoeffding's inequality we have
	\begin{equation}\label{eq:cond2}
	    \begin{aligned}
	    &\left.\P_{d}\left\{ |T^{d+1}_{\bss^{d+1}}(i)| \leq Q_{d+1}\left(\frac{mp}{2^{k'}}\right)^{d+1}   \right\rvert  |H| \leq 2^{k'}Q_{d} \left(\frac{mp}{2^{k'}} \right)^{d+1}\left(1+ \frac{1}{5\ell}\right), \mathcal{F}_{d} \text{ occurs and  }   |T^{d}_{\bss^{d}}(i)| \leq Q_{d}\left(\frac{mp}{2^{k'}}\right)^{d} \right\}\\
	    &=\left.\P_{d}\left\{ |T^{d+1}_{\bss^{d+1}}(i)| \leq Q_{d+1}\left(\frac{mp}{2^{k'}}\right)^{d+1}   \right\rvert  |H| \leq 2^{k'}Q_{d} \left(\frac{mp}{2^{k'}} \right)^{d+1}\left(1+ \frac{1}{5\ell}\right) \right\}  \geq 1-m^{-D-2}.
	    \end{aligned}
	\end{equation}
	  By \eqref{eq:cond1} and \eqref{eq:cond2}, we finally get
	\begin{equation}\label{eq:T_s}
	\begin{aligned}
		&\left.\P_{d}\left\{ |T^{d+1}_{\bss^{d+1}}(i)| \leq Q_{d+1}\left(\frac{mp}{2^{k'}}\right)^{d+1} \text{ and }   \mathcal{F}_{d} \text{ occurs} \right\rvert |T^{d}_{\bss^{d}}(i)| \leq  Q_{d}\left(\frac{mp}{2^{k'}}\right)^{d} \right\}\\
		&=\left.\P_{d}\left\{ |T^{d+1}_{\bss^{d+1}}(i)| \leq Q_{d+1}\left(\frac{mp}{2^{k'}}\right)^{d+1}   \right\rvert  |H| \leq 2^{k'}Q_{d} \left(\frac{mp}{2^{k'}} \right)^{d+1}\left(1+ \frac{1}{5\ell}\right), \mathcal{F}_{d} \text{ occurs and  }   |T^{d}_{\bss^{d}}(i)| \leq Q_{d}\left(\frac{mp}{2^{k'}}\right)^{d} \right\}\\
		&\quad \times 	\left.\P_{d}\left\{  \left|H \right| \leq 2^{k'}Q_{d}\left(\frac{mp}{2^{k'}}\right)^{d+1} \left(1+\frac{1}{5\ell}\right) \text{ and } \mathcal{F}_{d} \text{ occurs }  \right\rvert |T^{d}_{\bss^{d}}(i)| \leq Q_{d}\left(\frac{mp}{2^{k'}}\right)^{d} \right\} \\
		&\geq 1-m^{-D-1}.
	\end{aligned}
	\end{equation}
	Applying a union bound over \eqref{eq:T_1} and \eqref{eq:T_s} for $d\in[\ell -1]$ and $\bss^{d}\in\{-1,1\}^{k'd}$, we can get
	\begin{equation}
		\P\left\{ |T^{d}_{\bss^{d}}(i)| \leq Q_d\left(\frac{mp}{2^{k'}}\right)^{d} \; \forall d\in[\ell], \; \bss^{d}\in \{-1,1\}^{k'd}  \right\} \geq 1-\Sigma^{\ell}_{d=1} 2^{k'd}m^{-D-1} \geq 1-m^{-D},
	\end{equation}
	where the last inequality holds since $k'\leq k_0 \log mp$ and $\ell \leq \left\lfloor \frac{\log m}   {\log mp} \right\rfloor$. Moreover, we have $Q_\ell =2\left(1+\frac{1}{\ell}\right)^{\ell} \leq 2e \leq    6$. Thus, the proof is complete.
	\end{proof}

	\subsection{Tree Structure and Typical Vertices}\label{app:sec:tree}
Consider the parent graph $G_{0}$, which follows an SBM$(n,p/(1-\alpha),q/(1-\alpha),k)$ with community labels $(C_1,C_2,\ldots,C_k)$. 
Recall that $C_k$ is the smallest community with size $m$.
 
	Following Definition 4.10 in \citep{MRT21a}, we introduce the concept of a ``typical" vertex in the parent graph, which includes certain conditions that will be useful for analyzing the signature vectors.  Compared to Definition 4.10 in \citep{MRT21a}, our definition has fewer conditions, due to the main difference of our algorithm that the edges used to construct the tree structure are not reused to partition the vertices at each level of the tree.

	\begin{definition}\label{def:type}
	We define a vertex $i \in C_{k}$ of the parent graph $G_{0}$ as typical with parameters $\ell \in \mathbb{N}$ and $\epsilon>0$, denoted as $i \in \operatorname{Typ}_{G_{0}(C_{k})}(\ell,\epsilon)$, if it satisfies the following conditions:
		\begin{flalign*}
        &(A1) \; G_{0}\left(\mathcal{B}_{G_{0}(C_{k})}(i, \ell)\right) \text{ forms a tree}.\\
		&(A2) \; \operatorname{deg}_{G_{0}(C_{k})}(j) \leq 2m\frac{p}{1-\alpha} \text{ for any } j\in \mathcal{B}_{G_{0}(C_{k})}(i,\ell-1).\\
		&(A3) \; \operatorname{deg}_{G_{0}(C_{k})}(j) > (1-\epsilon)m\frac{p}{1-\alpha} \text{ for any } j\in \mathcal{B}_{G_{0}(C_{k})}(i,\ell-1). &&
        \end{flalign*} 
	\end{definition}
	
\begin{lemma}[Lemma 4.9 in \citep{MRT21a}]\label{lem:tree}
Consider an  Erd\H{o}s-R\'enyi graph $\Gamma\sim \mathcal{G}(n, r)$ for a positive integer $n$ and $r \in (0,1)$. For any positive integer $x$, the probability that there exist at least $(5 x)^{3}(\log n)^{6}(n r)^{3 x}$ vertices $i \in \Gamma$ satisfying $\Gamma\left(\mathcal{B}_\Gamma(i,x)\right)$ is not a tree, is upper bounded by $\exp \left(-\log ^{2} n\right)$.
	\end{lemma}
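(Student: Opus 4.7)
My plan is to bound $N$, the number of ``bad'' vertices $i \in [n]$ for which $\mathcal{B}_\Gamma(i,x)$ is not a tree, by first reducing this event to the presence of a rooted unicyclic (``tadpole'') subgraph at $i$, then first-moment-bounding the expected number of such witnesses, and finally high-moment-boosting to the quasi-polynomial tail $\exp(-\log^2 n)$.

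I would start by observing that if $i$ is bad, then $\mathcal{B}_\Gamma(i,x)$ contains a cycle $C$ of length $\ell \in \{3,\dots,2x+1\}$, together with a shortest path from $i$ to some $v^\ast \in C$ of length $a$; containment of $C$ in the $x$-ball forces $a + \lfloor \ell/2 \rfloor \leq x$. Their union is a tadpole on $\ell + a$ vertices with $\ell + a$ edges, rooted at $i$. For each admissible $(\ell, a)$ and fixed $i$, the expected number of such rooted tadpoles in $\Gamma$ is at most $C_0\, n^{\ell+a-1}\, r^{\ell+a} = C_0\, (nr)^{\ell+a}/n$ (choose the $\ell + a - 1$ other labels, then each edge is independent with probability $r$); summing over the $O(x^2)$ admissible $(\ell, a)$ gives $\Pr[i \text{ bad}] \leq C_1 x^2 (nr)^{2x+1}/n$ and hence $\mathbb{E}[N] \leq C_1 x^2 (nr)^{2x+1}$.

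Since $T^\ast := (5x)^3 (\log n)^6 (nr)^{3x}$ exceeds $\mathbb{E}[N]$ only by a factor of order $x(\log n)^6 (nr)^{x-1}$, which is only polylogarithmic when $x$ is small, plain Markov on $N$ falls short of the target tail. I would instead bound $\mathbb{E}[N^k]$ for an integer $k \asymp (\log n)^2 / \log\log n$ by expanding it as a sum over $k$-tuples of rooted tadpoles in $\Gamma$ and stratifying the sum by the overlap pattern: vertex-disjoint $k$-tuples would contribute at most $\mathbb{E}[N]^k$, while partially overlapping tuples would be absorbed into an extra factor $(C_2 k)^k$. Applying Markov to $N^k$ would then give $\Pr[N \geq T^\ast] \leq \mathbb{E}[N^k]/(T^\ast)^k \leq \exp(-\log^2 n)$.

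The hard part will be the $k$-th moment estimate, specifically establishing that the overlap correction is bounded by $(C_2 k)^k$ with an absolute constant $C_2$. This is an intricate combinatorial enumeration over intersection patterns of $k$ tadpoles, and the essential cancellation (as in second-moment analyses of subgraph counts in Erd\H{o}s--R\'enyi graphs) is that each shared vertex or edge between two tadpoles simultaneously removes one free vertex-label from the count and one random edge from the probability, so that the per-tadpole expected contribution stays of order $\mathbb{E}[N]$ up to a polynomial-in-$k$ loss. Once this moment bound is in hand, the choice $k \asymp (\log n)^2/\log\log n$ balances $\log k$ against $\log(T^\ast/\mathbb{E}[N])$ to yield the claimed tail.
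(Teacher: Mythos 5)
Your reduction to rooted tadpoles and your first-moment bound $\mathbb{E}[N]\lesssim x^{2}(nr)^{2x+1}$ are both correct, and your choice $k\asymp (\log n)^{2}/\log\log n$ together with Markov on $N^{k}$ would indeed yield the $\exp(-\log^{2}n)$ tail \emph{if} the $k$-th moment estimate held. But that estimate is precisely where the proof lives, and your sketch of it has a genuine gap. The intuition ``each shared vertex or edge simultaneously removes one free vertex-label from the count and one random edge from the probability'' conflates the two types of overlap: a shared vertex removes a vertex-label but removes no edge. The correct structural fact is weaker --- the intersection of two unicyclic graphs is a subgraph of a unicyclic graph, hence has no more edges than vertices, so each merged pair of tadpoles still contributes $n^{|V|}r^{|E|}\le (nr)^{|E_{1}|+|E_{2}|}$ --- and turning that into a clean $\mathbb{E}[N^{k}]\le (C_{2}k)^{k}\,\mathbb{E}[N]^{k}$ requires a careful enumeration over the (super-exponentially many) overlap patterns of $k$ rooted tadpoles. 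Moreover, that product form is not even the right shape when $\mathbb{E}[N]$ is small: bad vertices come in clusters (all those near a single short cycle), so $N$ has Poisson-like rather than product moments, and the bound you want is of the additive form $\mathbb{E}[N^{k}]\le (C(k+\mathbb{E}[N]))^{k}$. None of this is worked out, and it is far from routine.

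It is also worth noting that the paper you are reading does not reprove this lemma --- it cites it verbatim as Lemma~4.9 of \citet{MRT21a}. The argument there sidesteps the high-moment combinatorics entirely by going through a structural decomposition: every bad vertex lies within distance $x-\lfloor\ell/2\rfloor$ of some cycle of length $\ell\le 2x+1$, so $N\le\sum_{\ell}Z_{\ell}\cdot\ell\cdot\max_{v}|\mathcal{B}_{\Gamma}(v,x-\lfloor\ell/2\rfloor)|$, where $Z_{\ell}$ is the number of $\ell$-cycles. One then only needs two separately manageable concentration statements (an upper tail for $Z_{\ell}$ and a ball-size bound as in their Lemma~4.1, both with $\exp(-\log^{2}n)$ failure probability), and the generous $(5x)^{3}(\log n)^{6}(nr)^{3x}$ threshold with its extra $(nr)^{x-1}$ slack is there exactly to absorb those two polylogarithmic losses. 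If you want to salvage your high-moment route, you would essentially have to rediscover this decomposition inside the moment computation (grouping the $k$ tadpoles by the cycles they share), at which point the direct argument is both shorter and more transparent.
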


The following result shows that the majority of vertices in $G_0(C_k)$ are typical, satisfying Definition \ref{def:type}.
	\begin{proposition}\label{prop:type}
	 For any constants $D,\delta,\epsilon>0$, there exist $R>1$, $c\in(0,1/2)$ and $m_0$ depending on $D,\delta,\epsilon$ with the properties below. If
		\begin{equation}
		m\geq m_{0}, \quad (\log m)^{1+\delta} \leq m\frac{p}{1-\alpha} \leq m^{1/20} , \quad \ell \leq  \frac{\log m}{R \log mp} ,
		\end{equation}
        then
		$$
		\P\left\{\left|\operatorname{Typ}_{G_{0}(C_{k})}(\ell,\epsilon)\right| \geq m-m^{1-c}\right\} \geq 1-m^{-D}.
		$$
	\end{proposition}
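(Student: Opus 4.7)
The plan is to verify the three defining conditions of typicality separately and then combine them by a union bound. The crucial observation is that the induced subgraph $G_{0}(C_{k})$ is itself an Erd\H{o}s--R\'enyi graph $\mathcal{G}(m,\,p/(1-\alpha))$, so all the ER tools from Section~\ref{sec:large neighbor} apply directly with $n$ replaced by $m$ and the edge probability replaced by $p/(1-\alpha)$.

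For conditions (A2) and (A3), which together say that every vertex $j$ in the $(\ell-1)$-ball of $i$ has degree in the window $\bigl((1-\epsilon)mp/(1-\alpha),\,2mp/(1-\alpha)\bigr]$, a single application of Lemma~\ref{lem:degree} to $G_{0}(C_{k})$ suffices. Since $mp/(1-\alpha)\geq (\log m)^{1+\delta}$, Lemma~\ref{lem:degree} (applied with the constant $R$ from its statement chosen so that $1/(R\log\log m) < \min(\epsilon,1)$ for large $m$) yields that with probability at least $1-m^{-D-1}$ the degree of \emph{every} vertex of $C_{k}$ in $G_{0}(C_{k})$ lies in that window. On this event, (A2) and (A3) automatically hold for every $i\in C_{k}$, regardless of the shape of their neighborhoods; consequently the only condition that can separate typical from atypical vertices is (A1).

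For condition (A1), I would apply Lemma~\ref{lem:tree} to $G_{0}(C_{k})$ with $x=\ell$. It gives that, with probability at least $1-\exp(-\log^{2} m)$, the number of vertices $i\in C_{k}$ whose $\ell$-ball fails to be a tree is at most $(5\ell)^{3}(\log m)^{6}\bigl(mp/(1-\alpha)\bigr)^{3\ell}$. Using the density bound $mp/(1-\alpha)\leq m^{1/20}$ and the depth bound $\ell\leq \log m/(R\log mp)$, one gets $\bigl(mp/(1-\alpha)\bigr)^{3\ell}\leq m^{3/R}\cdot O(1)$, where the $O(1)$ absorbs the constant-power $(1-\alpha)^{-3\ell}$. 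Choosing $R$ large enough (as a function of the desired $c$, e.g.\ $R>3/(1-2c)$) makes this bound at most $m^{1-c}/\mathrm{polylog}(m)$, swallowing the polylogarithmic prefactor $(5\ell)^{3}(\log m)^{6}$ and keeping us well inside the $m^{1-c}$ budget.

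A union bound over the two high-probability events above then gives $|\operatorname{Typ}_{G_{0}(C_{k})}(\ell,\epsilon)|\geq m-m^{1-c}$ with probability at least $1-m^{-D}$, as required. The only real piece of work is bookkeeping of constants: $R$ must be taken large enough (depending on $c$) to beat the density $m^{1/20}$ in the tree-failure count, while $c$ must be small enough to absorb the polylogarithmic prefactor of Lemma~\ref{lem:tree}. This is precisely the reason the proposition imposes the logarithmic-depth restriction $\ell\leq \log m/(R\log mp)$, and it is the only mildly delicate point in the argument; no new concentration inequalities beyond Lemmas~\ref{lem:degree} and~\ref{lem:tree} are needed.
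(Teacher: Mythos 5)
Your proposal follows essentially the same route as the paper's own proof: apply Lemma~\ref{lem:tree} to the induced ER graph $G_{0}(C_{k})\sim\mathcal{G}(m,p/(1-\alpha))$ to bound the number of vertices whose $\ell$-ball fails to be a tree, apply Lemma~\ref{lem:degree} to get the degree window of (A2)--(A3) for all vertices simultaneously, and finish with a union bound. The only small imprecision is the remark that $(1-\alpha)^{-3\ell}$ is ``constant-power'' / $O(1)$ -- since $\ell\to\infty$ it is in fact $m^{o(1)}$ (because $\ell\leq\log m/(R\log mp)$ and $mp\geq(\log m)^{1+\delta}$), but that factor is still harmlessly absorbed into the $m^{1-c}$ budget once $R$ is chosen large enough, so the argument is sound and matches the paper.
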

	\begin{proof}
	Recall that $G_0(C_k)$ is a  $\mathcal{G}(m, p/(1-\alpha))$ Erd\H{o}s-R\'enyi graph. Thus, by applying Lemma \ref{lem:tree} for $x=\ell \leq  \frac{\log m}{R \log mp}$ with a sufficiently large $R$, we can obtain that there are at most 
	\begin{equation}\label{eq:tree number}
		(5\ell)^{3}(\log m)^6\left(\frac{mp}{1-\alpha}\right)^{3\ell} \leq \sqrt{m}
	\end{equation}
	vertices whose $\ell $-neighborhoods are not trees in $G_{0}(C_{k})$ with probability at least $1-\exp\left(-(\log m)^2 \right) \geq 1-m^{-D-1}$ for $m\geq m_0(D)$. Thus, the condition $(A1)$ holds. By Lemma \ref{lem:degree}, it can be easily checked that the two conditions $(A2)$ and $(A3)$ are satisfied for all $i\in C_k$ with probability $1-m^{-D-1}$ for $m\geq m_0(D,\delta,\epsilon)$.
\end{proof}

\section{Proof of Lemma \ref{lem:correct final}}\label{sec:proof of correct final}

From now on, for simpleness in the notation we will use $T^{d}_{\bss^{d}}(i)$ and $T'^{d}_{\bss^{d}}(i)$ instead of $T^{d}_{\bss^{d}}(i,G)$ and $T^{d}_{\bss^{d}}(i,G')$, respectively.
Consider a random subset $J$ uniformly drawn  from $\{-1,1\}^{k'\ell}$ with cardinality $2w$, where $w$ is a positive integer satisfying $w > 2(\log m)^2$.
We say that the event $\mathcal{G}$ holds if and only if there exists a subset $\tilde{J}(i) \subset J$ such that
		\begin{equation}\label{eq:jhat}
			|J \backslash \tilde{J}(i)|<(\log m)^{2}
		\end{equation}
		and
		\begin{align}
			&\label{eq:R_s in Jhat}\left|R_{\bss^\ell}(i)\right| \vee\left|R'_{\bss^\ell}(i)\right| \leq 96e w^{4} \frac{(m p)^{\ell}}{4^{k'\ell}} \quad \forall \bss^\ell \in \tilde{J}(i), \\
			&\label{eq:R_s in J}\left|R_{\bss^\ell}(i)\right| \vee\left|R'_{\bss^\ell}(i)\right| \leq 6\frac{(mp)^{\ell}}{2^{k'\ell}} \quad \forall \bss^\ell \in J,
		\end{align}
		where
		\begin{align}
			R_{\bss^\ell}(i) &\label{eq:R_s}:=T^{\ell}_{\bss^\ell}(i) \cap T'^{\ell}_{J \backslash\{\bss^\ell\}}(i) , \\
			R'_{\bss^\ell}(i) &\label{eq:R'_s}:=T'^{\ell}_{\bss^\ell}(i) \cap T^{\ell}_{J \backslash\{\bss^\ell\}}(i) .
		\end{align}
Let us define the five conditions $(B1)-(B5)$ as follows. 
\begin{flalign*}
    &(B1) \;  G_{0}\left(\mathcal{B}_{G_{0}(C_{k})}(i, \ell)\right) \text{ forms a tree}.\\
    &(B2) \; \left|\mathcal{B}_{G_{0}(C_k)}(i, \ell)\right| \leq m^{0.1}.\\
    &(B3) \; \left|T^{\ell}_{\bss^\ell}(i)\right| \vee\left|T'^{\ell}_{\bss^\ell}(i)\right| \leq 6(m p / 2^{k'})^{\ell} \text{ for all } \bss^\ell \in\{-1,1\}^{k'\ell}.\\
    &(B4) \; \text{Event } \mathcal{G} \text{ holds.}        \\
    &(B5) \;   \left|T^{\ell}_{\bss^\ell}(i) \cap T'^{\ell}_{\bss^\ell}(i)\right| \geq(m p / 2^{k'})^{\ell}(1-6\epsilon)^{k'\ell} \text{ for a constant } \epsilon \text{  and for all }\bss^\ell \in\{-1,1\}^{k'\ell}  .&&
\end{flalign*} 
We will show that there are enough number of vertices $i\in C_k$ satisfying the conditions $(B1)-(B5)$, and for all such vertices \eqref{eq:correct result} holds with high probability. 
The conditions $(B1)-(B3)$ will be shown to be held with high probability by using the analysis from Sec. \ref{sec:vertex classes}, and the conditions $(B4)-(B5)$ will be proved in Sec. \ref{sec:proof of B4} and Sec. \ref{sec:proof of B5}, respectively.

\subsection{Proof of Lemma \ref{lem:correct final}}
We consider a fixed vertex $i \in C_k$, and subsets $J$ and $\tilde{J}(i) \subset J$.  Additionally, we condition on the neighborhoods $G_0(\mathcal{B}_{G_0(C_k)}(i,\ell))$, $G(\mathcal{B}_{G_0(C_k)}(i,\ell))$, and $G'(\mathcal{B}_{G_0(C_k)}(i,\ell))$ such that conditions $(B1)-(B5)$ hold.
Let us define 
\begin{equation}\label{eq:tilde m}
    \tilde{m}:=m-\left|\mathcal{B}_{G_{0}(C_{k})}(i, \ell)\right|.
\end{equation}
Note that on the conditions $(B1)$ and $(B2)$, we can see that for any $j\in \mathcal{S}_{G_0(C_{k})}(i,\ell)$, $\operatorname{deg}_{G(C_{k})}(j)-1$  follows a binomial distribution with parameters $\tilde{m}$ defined in \eqref{eq:tilde m} with $\tilde{m} \geq m-m^{0.1}$ and $p$. Furthermore, it is independent for each $j\in \mathcal{S}_{G_0(C_{k})}(i,\ell)$. The same statement holds for $\operatorname{deg}_{G'(C_{k})}(j)-1$.

	

\begin{lemma}[Similar to Lemma 5.7 in \citep{MRT21a}]\label{lem:small overlap correct}
    For any constant $D > 0$, there exists a constant $K > 0$ which depends on $D$ with the properties below. For a fixed vertex $i\in C_k$, a subset $J$ and $\tilde{J}(i) \subset J$, define $R_{\bss^\ell}(i)$ and $R'_{\bss^\ell}(i)$ as in \eqref{eq:R_s} and \eqref{eq:R'_s}. Suppose that conditions $(B1)-(B5)$ hold. Then, we have
    \begin{align}
        &\label{eq:eta in J hat}\left|\eta_{\bss^\ell}(i)\right| \vee\left|\eta'_{\bss^\ell}(i)\right| \leq K \frac{(mp)^{(\ell +1) / 2}}{2^{k'\ell}} w^{2} \sqrt{\log m} \quad \forall \bss^\ell \in \tilde{J}(i) \\
        &\label{eq:eta in J}\left|\eta_{\bss^\ell}(i)\right| \vee\left|\eta'_{\bss^\ell}(i)\right| \leq K \frac{(m p)^{(\ell +1 ) / 2}}{2^{k'\ell / 2}} \sqrt{\log m} \quad \forall \bss^\ell \in J
    \end{align}
    where
    \begin{equation}\label{eq:eta}
        \eta_{\bss^\ell}(i):=\sum_{j \in R_{\bss^\ell}(i)}\left(\operatorname{deg}_{G(C_{k})}(j)-1-m p\right) \quad \text { and } \quad \eta'_{\bss^\ell}(i):=\sum_{j \in R'_{\bss^\ell}(i)}\left(\operatorname{deg}_{G'(C_{k})}(j)-1-m p\right)
    \end{equation}
    with probability at least $1-m^{-D}$.
\end{lemma}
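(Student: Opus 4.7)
The plan is to view $\eta_{\bss^\ell}(i)$ as a sum of conditionally independent shifted Binomial random variables and apply a concentration inequality. I condition on the triple $\bigl(G_0(\mathcal{B}_{G_0(C_k)}(i,\ell)),\, G(\mathcal{B}_{G_0(C_k)}(i,\ell)),\, G'(\mathcal{B}_{G_0(C_k)}(i,\ell))\bigr)$, chosen so that $(B1)$--$(B5)$ hold. Under this conditioning, the partition trees up to depth $\ell$ in both graphs are already determined, so the sets $R_{\bss^\ell}(i)$ and $R'_{\bss^\ell}(i)$ are fixed. By $(B1)$, the conditioned ball is a tree, so for every $j$ in $\mathcal{S}_{G_0(C_k)}(i,\ell)$ the quantity $\deg_{G(C_k)}(j)-1$ counts exactly the edges from $j$ to $C_k\setminus \mathcal{B}_{G_0(C_k)}(i,\ell)$ (the $-1$ removes the unique tree edge to the parent). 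These potential edges are disjoint across different boundary vertices and are independent of the conditioning, so $\deg_{G(C_k)}(j)-1 \sim \mathrm{Bin}(\tilde m,p)$ independently for $j$ on the boundary, where $\tilde m := m-|\mathcal{B}_{G_0(C_k)}(i,\ell)|$.

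Set $N := |R_{\bss^\ell}(i)|$. Then $\eta_{\bss^\ell}(i) + N(m-\tilde m)p$ is a sum of $N$ independent, centered Binomial increments, each of variance at most $mp$ and almost-surely bounded. By Bernstein's inequality (Lemma \ref{lem:bernstein's inequality}), for any $t>0$,
\begin{equation}\nonumber
\P\bigl\{\abs{\eta_{\bss^\ell}(i) + N(m-\tilde m)p} \geq t\bigr\} \leq 2\exp\!\left(-\frac{c\,t^2}{N\cdot mp + t}\right).
\end{equation}
Choosing $t = K\sqrt{N\, mp\, \log m}$ with $K=K(D)$ sufficiently large gives $\abs{\eta_{\bss^\ell}(i) + N(m-\tilde m)p}\leq K\sqrt{N\,mp\,\log m}$ with probability at least $1-m^{-D-2}$. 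The bias correction satisfies $N(m-\tilde m)p\leq N\,m^{0.1}p$ by $(B2)$, which is absorbed into the deviation term since $mp\leq m^{1/20}$ and $\ell\leq C\log\log m$ force $N\,m^{0.1}p \ll \sqrt{N\,mp\,\log m}$ in both size regimes for $N$.

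Finally, I feed in the two regimes for $N$ guaranteed by $(B4)$: for $\bss^\ell\in \tilde J(i)$ we have $N\leq 96e\,w^{4}(mp)^\ell/4^{k'\ell}$, which after substitution yields the bound \eqref{eq:eta in J hat}; for general $\bss^\ell\in J$ we use $N\leq 6(mp)^\ell/2^{k'\ell}$, yielding \eqref{eq:eta in J}. The same argument applies verbatim to $\eta'_{\bss^\ell}(i)$ by interchanging the roles of $G$ and $G'$. Taking a union bound over the $|J|=2w$ values of $\bss^\ell$ (polylogarithmic in $m$) absorbs a factor of $4w$ into the failure probability, giving the claim with probability at least $1-m^{-D}$. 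The main obstacle is the bookkeeping to verify the bias-versus-deviation comparison in both regimes; once the independence of boundary degrees is established via $(B1)$, the rest is a routine Bernstein-style computation.
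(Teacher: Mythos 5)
Your proof is correct and follows essentially the same route as the paper's: condition on the three $\ell$-balls so that $R_{\bss^\ell}(i)$ is fixed, observe that $(\deg_{G(C_k)}(j)-1)\sim\mathrm{Bin}(\tilde m,p)$ independently for boundary vertices $j$, apply a Bernstein-type tail bound, absorb the $(m-\tilde m)p$ bias using $(B2)$ and the density assumptions, then substitute the two size regimes from $(B4)$ and union-bound. The only cosmetic difference is that the paper union-bounds over all $2^{k'\ell}$ leaves while you restrict to the $2w$ elements of $J$; both counts are polylogarithmic in $m$, so the conclusion is unchanged.
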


\begin{proof}
Since $R_{\bss^\ell} (i) \subset \mathcal{S}_{G_0 (C_k)}(i,\ell)$, for any $j \in R_{\bss^\ell} (i)$, $(\operatorname{deg}_{G(C_k)}(j)-1)$ is distributed as  $\operatorname{Bin}(\tilde{m},p)$ with $\tilde{m}:=m-\left|\mathcal{B}_{G_{0}(C_{k})}(i, \ell)\right| \geq m-m^{0.1}$. In addition, it is independent for each $j \in R_{\bss^\ell} (i)$.  Therefore we have $\sum_{j \in R_{\bss^\ell}(i)}\left(\operatorname{deg}_{G(C_{k})}(j)-1\right) \sim \operatorname{Bin}\left(\tilde{m}\left|R_{\bss^\ell}(i)\right|, p\right)$. From this result, we can obtain
    \begin{equation}\label{eq:eta correct}
         \begin{aligned}
    \left|\eta_{\bss^\ell}(i)\right| &=\left|\sum_{j \in R_{\bss^\ell}(i)}\left(\operatorname{deg}_{G(C_{k})}(j)-1-m p\right)\right| \\
    &= \left|\sum_{j \in R_{\bss^\ell}(i)}\left(\operatorname{deg}_{G(C_{k})}(j)-1\right) -\tilde{m}p|R_{\bss^\ell} (i)| -(m-\tilde{m})p|R_{\bss^\ell} (i)|\right|\\
    & \leq K_{1}\left(\sqrt{m p\left|R_{\bss^\ell}(i)\right| \log m}+\log m\right)+(m-\tilde{m}) p\left|R_{\bss^\ell}(i)\right| \\
    & \leq K_{2} \sqrt{m p\left(\left|R_{\bss^\ell}(i)\right|+1\right) \log m}
    \end{aligned}
    \end{equation}
    with probability at least $1-m^{-D-1}$, where $K_1, K_2$ are sufficiently large constants depending on $D$. 
    In the same way, a bound for $\eta'_{\bss^\ell} (i)$ can be obtained. By combining \eqref{eq:eta correct} with $(B4)$, we obtain \eqref{eq:eta in J} and \eqref{eq:eta in J hat}. By applying a union bound over ${\bss^\ell} \in \{-1,1\}^{k'\ell}$, we can conclude the proof.
\end{proof}

For a fixed vertex $i$, we omit $(i)$ from $T_{\bss^\ell}(i),$ $R_{\bss^\ell}(i),$ $f_{\bss^\ell}(i),$ and $\mathrm{v}_{\bss^\ell}(i)$ for the sake of brevity in the notation.

	\begin{lemma}[Similar to Lemma 5.9 in \citep{MRT21a}]\label{lem:entrywise difference between signatures}
		Assume a realization of edges between $(R_{\bss^\ell} \cup R'_{\bss^\ell})$ and $\mathcal{S}_{G_{0}(C_{k})}(i, \ell+1)$ in the graphs $G_{0}, G$, and $G'$. Consider the corresponding conditional probability and expectation, denoted by $\hat{\mathbb{P}}$ and $\hat{\mathbb{E}}$, respectively. Then, for any $\bss^\ell \in J$, we have 
		$$
		f_{\bss^\ell}-f'_{\bss^\ell}=Z_{\bss^\ell}+\Delta_{\bss^\ell}.
		$$
		Here, $Z_{\bss^\ell}$ is a random variable and $\Delta_{\bss^\ell}$ is a deterministic value that satisfy the following:
		\begin{itemize}
			\item $\hat{\mathbb{E}}\left[Z_{\bss^\ell}\right]=0$;
			\item $\hat{\mathbb{E}}\left[Z_{\bss^\ell}^{2}\right] \leq \mathrm{v}_{\bss^\ell}+\mathrm{v}'_{\bss^\ell}-2 \tilde{m} p(1-p-\alpha)\left|T^{\ell}_{\bss^\ell}(i)\cap T'^{\ell}_{\bss^\ell}(i) \right|$;
			\item $\hat{\mathbb{P}}\left\{\left|Z_{\bss^\ell}\right| \geq t\right\} \leq 2 \exp \left(\frac{-t^{2} / 2}{\hat{E}\left[Z_{\bss^\ell}^{2}\right]+t / 3}\right)$;
			\item $\left|\Delta_{\bss^\ell}\right| \leq\left|\eta_{\bss^\ell}\right|+\left|\eta'_{\bss^\ell}\right|+2 m^{0.2} p$.
		\end{itemize}
		Furthermore, the random variables $Z_{\bss^\ell}$ are conditionally independent for distinct $\bss^\ell \in J$.
	\end{lemma}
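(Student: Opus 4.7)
The plan is to isolate the contributions from $R_{\bss^\ell}$ and $R'_{\bss^\ell}$ (which couple different entries of the signature vectors) into a deterministic term $\Delta_{\bss^\ell}$, and leave the remainder $Z_{\bss^\ell}$ as a sum of conditionally independent mean-zero increments. First, I would decompose each leaf into three disjoint pieces: $T^{\ell}_{\bss^\ell}(i)=A_{\bss^\ell}\sqcup R_{\bss^\ell}\sqcup U_{\bss^\ell}$ with $A_{\bss^\ell}:=T^{\ell}_{\bss^\ell}(i)\cap T'^{\ell}_{\bss^\ell}(i)$ and $U_{\bss^\ell}$ the residual vertices lying in no $T'^{\ell}_{\bst}(i)$ for $\bst\in J$; similarly $T'^{\ell}_{\bss^\ell}(i)=A_{\bss^\ell}\sqcup R'_{\bss^\ell}\sqcup U'_{\bss^\ell}$. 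This yields
\begin{equation*}
f_{\bss^\ell}-f'_{\bss^\ell}=\sum_{j\in A_{\bss^\ell}}\bigl(\deg^{k}_{G}(j)-\deg^{k}_{G'}(j)\bigr)+\sum_{j\in R_{\bss^\ell}\cup U_{\bss^\ell}}\bigl(\deg^{k}_{G}(j)-1-mp\bigr)-\sum_{j\in R'_{\bss^\ell}\cup U'_{\bss^\ell}}\bigl(\deg^{k}_{G'}(j)-1-mp\bigr).
\end{equation*}

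I would then set $\Delta_{\bss^\ell}:=\hat{\bbE}[f_{\bss^\ell}-f'_{\bss^\ell}]$ and $Z_{\bss^\ell}:=(f_{\bss^\ell}-f'_{\bss^\ell})-\Delta_{\bss^\ell}$, so that $\hat{\bbE}[Z_{\bss^\ell}]=0$ holds automatically. To bound $|\Delta_{\bss^\ell}|$, I would compute the three types of conditional means separately: for $j\in A_{\bss^\ell}$ the marginals of $\deg^{k}_{G}(j)$ and $\deg^{k}_{G'}(j)$ coincide so that contribution vanishes; for $j\in R_{\bss^\ell}$ the conditioning fixes the relevant edge indicators of $j$, so the sum equals $\eta_{\bss^\ell}$ (and analogously $-\eta'_{\bss^\ell}$ for $R'_{\bss^\ell}$) up to lower-order terms; and for each $j\in U_{\bss^\ell}\cup U'_{\bss^\ell}$ the conditional mean of $\deg^{k}_{G}(j)-1-mp$ equals $-|\caB_{G_0(C_k)}(i,\ell)|\,p$, which summed over at most $12(mp/2^{k'})^{\ell}\le 12m^{1/40}$ vertices (by $(B3)$ and $mp\le m^{1/20}$) and combined with $(B2)$ contributes at most $2m^{0.2}p$. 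Combining these yields $|\Delta_{\bss^\ell}|\le |\eta_{\bss^\ell}|+|\eta'_{\bss^\ell}|+2m^{0.2}p$.

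For the variance of $Z_{\bss^\ell}$, I would view $Z_{\bss^\ell}$ as a sum of independent centered increments indexed by edge pairs $(j,k)$ with $j\in A_{\bss^\ell}\cup U_{\bss^\ell}\cup U'_{\bss^\ell}$ and $k\in C_k\setminus\caB_{G_0(C_k)}(i,\ell)$. For $j\in U_{\bss^\ell}\cup U'_{\bss^\ell}$ each edge indicator contributes variance $p(1-p)$, summing to $\tilde{m}\, p(1-p)(|U_{\bss^\ell}|+|U'_{\bss^\ell}|)$. For $j\in A_{\bss^\ell}$, expanding $\indi_{(j,k)\in G}-\indi_{(j,k)\in G'}$ via $G_0\sim\caG(m,p/(1-\alpha))$ and the independent $(1-\alpha)$-subsampling yields variance $2p\alpha$ per free edge, totaling $2\tilde{m}\, p\alpha|A_{\bss^\ell}|$. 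Adding these gives $\hat{\bbE}[Z_{\bss^\ell}^{2}]=\tilde{m}\, p(1-p)(|U_{\bss^\ell}|+|U'_{\bss^\ell}|)+2\tilde{m}\, p\alpha|A_{\bss^\ell}|$, which by direct algebra using $\tilde{m}\le m$ is at most $\mathrm{v}_{\bss^\ell}+\mathrm{v}'_{\bss^\ell}-2\tilde{m}\, p(1-p-\alpha)|A_{\bss^\ell}|$. Since each increment is bounded by $1$, Bernstein's inequality (Lemma \ref{lem:bernstein's inequality}) applied to this sum of independent centered random variables gives the stated sub-exponential tail on $|Z_{\bss^\ell}|$.

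Conditional independence of $\{Z_{\bss^\ell}\}_{\bss^\ell\in J}$ follows because the leaves $\{T^{\ell}_{\bss^\ell}\}_{\bss^\ell}$ (and likewise on the $G'$ side) are pairwise disjoint, so the free edge indicators driving distinct $Z_{\bss^\ell}$'s have disjoint supports within the joint edge randomness of $(G_0,G,G')$. The main obstacle I foresee is the exact variance accounting on $A_{\bss^\ell}$: one must carefully track the coupled subsampling $G,G'\subset G_0$ with $G_0\sim\caG(m,p/(1-\alpha))$ so that the covariance between $\indi_{(j,k)\in G}$ and $\indi_{(j,k)\in G'}$ produces precisely the factor $(1-p-\alpha)$ in the final coefficient; the remaining pieces are size-accounting via $(B1)$--$(B5)$ and a textbook application of Bernstein.
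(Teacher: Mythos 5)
Your proof is correct and follows essentially the same route as the paper: the same decomposition of each leaf into the shared overlap $A_{\bss^\ell}=T^{\ell}_{\bss^\ell}\cap T'^{\ell}_{\bss^\ell}$, the cross-leaf overlap $R_{\bss^\ell}$ (absorbed into $\Delta_{\bss^\ell}$ by the conditioning), and the residual $U_{\bss^\ell}$, the same per-edge variance accounting (variance $p(1-p)$ from $U,U'$ and $2p\alpha$ from the coupled subsampling on $A$), and Bernstein for the tail. Defining $\Delta_{\bss^\ell}:=\hat{\bbE}[f_{\bss^\ell}-f'_{\bss^\ell}]$ rather than by an explicit centering formula is a minor cosmetic simplification relative to the paper's sketch (the two differ only by an $O(m^{0.2}p)$ deterministic term and both satisfy the stated bound).
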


\begin{proof}
        Lemma \ref{lem:entrywise difference between signatures} can be proved in a similar way as that of Lemma 5.9 in \citep{MRT21a}. 
        The main observation behind the proof is that $f_{\bss^\ell}-f'_{\bss^\ell}=Z_{\bss^\ell}+\Delta_{\bss^\ell}$, where
$$
\begin{aligned}
Z_{\bss^\ell}:= & \sum_{j \in T^{\ell}_{\bss^\ell} \cap T'^\ell_{\bss^\ell}}\left(\operatorname{deg}_{G(C_k)}(j)-\operatorname{deg}_{G'(C_k)}(j)\right) \\
& +\sum_{j \in T^\ell_{\bss^\ell} \backslash T'^\ell_{J}}\left(\operatorname{deg}_{G(C_k)}(j)-1-\tilde{m} p\right) \\
& -\sum_{j \in T'^\ell_{\bss^\ell} \backslash T^\ell_{J}}\left(\operatorname{deg}_{G'(C_k)}(j)-1-\tilde{m} p\right)
\end{aligned}
$$
and
$$
\Delta_{\bss^\ell}:=\eta_{\bss^\ell}-\eta'_{\bss^\ell}+(\tilde{m}-m) p\left(\left|T^\ell_{\bss^\ell}\right|-\left|T'^\ell_{\bss^\ell}\right|\right).
$$
\end{proof}        

	\begin{lemma}[Upper bound on the normalized distance of sparsified signature vectors for a correct pair]\label{lem:correct}
        For any constants $C, D,K,\delta>0$, there exist constant $\epsilon$ and $m_0$ with the properties below. Suppose that $m\geq m_{0}, \; \alpha \in(0, \epsilon),\; w \geq(\log m)^{4},$ and
    \beq
    \begin{split}\nonumber
    (\log m)^{1+\delta} \leq m \frac{p}{1-\alpha} \leq m^{1/20}\\
  2\log \left(w^{4}\log m\right) \leq k'\ell \leq C \log \log m.
    \end{split}
    \eeq
		Moreover, suppose that a vertex $i\in C_k$ satisfies conditions $(B1) - (B5)$ with a constant $\epsilon>0$. Consider a subset $J \subset \{-1.1\}^{k'\ell}$ satisfying $|J|=2 w$. 
		Under the same conditions as stated in Lemma \ref{lem:entrywise difference between signatures}, where $\eta_{\bss^\ell}$ and $\eta'_{\bss^\ell}$ satisfy \eqref{eq:eta in J hat} and \eqref{eq:eta in J} with a constant $K>0$, we have	
		$$
		\sum_{\bss^\ell \in J} \frac{\left(f_{\bss^\ell}(i)-f'_{\bss^\ell}(i)\right)^{2}}{\mathrm{v}_{\bss^\ell}(i)+\mathrm{v}'_{\bss^\ell}(i)} \leq 2 w\left(1-\frac{1}{(\log m)^{0.1}}\right),
		$$
  with a conditional probability of at least $1 - m^{-D}$.
	\end{lemma}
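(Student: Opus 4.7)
The plan is to use the decomposition $f_{\bss^\ell}(i) - f'_{\bss^\ell}(i) = Z_{\bss^\ell} + \Delta_{\bss^\ell}$ supplied by Lemma \ref{lem:entrywise difference between signatures}, where $Z_{\bss^\ell}$ is conditionally mean zero (and independent across $\bss^\ell \in J$) while $\Delta_{\bss^\ell}$ is deterministic. Expanding $(Z_{\bss^\ell} + \Delta_{\bss^\ell})^{2} \leq Z_{\bss^\ell}^{2} + 2|Z_{\bss^\ell}||\Delta_{\bss^\ell}| + \Delta_{\bss^\ell}^{2}$, I will estimate the three contributions to $\sum_{\bss^\ell \in J} (f_{\bss^\ell}(i)-f'_{\bss^\ell}(i))^{2}/(\mathrm{v}_{\bss^\ell}(i) + \mathrm{v}'_{\bss^\ell}(i))$ separately and show that their combined value sits safely below $2w$.

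For the main (noise) term, Lemma \ref{lem:entrywise difference between signatures} gives the per-coordinate bound $\hat{\mathbb{E}}[Z_{\bss^\ell}^{2}]/(\mathrm{v}_{\bss^\ell}+\mathrm{v}'_{\bss^\ell}) \leq 1 - 2\tilde{m}p(1-p-\alpha)|T^{\ell}_{\bss^\ell} \cap T'^{\ell}_{\bss^\ell}|/(\mathrm{v}_{\bss^\ell}+\mathrm{v}'_{\bss^\ell})$. Substituting the upper bound $|T^{\ell}_{\bss^\ell}|,|T'^{\ell}_{\bss^\ell}| \leq 6(mp/2^{k'})^{\ell}$ from $(B3)$, the overlap lower bound $|T^{\ell}_{\bss^\ell} \cap T'^{\ell}_{\bss^\ell}| \geq (mp/2^{k'})^{\ell}(1-6\epsilon)^{k'\ell}$ from $(B5)$, together with $\mathrm{v}_{\bss^\ell} = mp(1-p)|T^{\ell}_{\bss^\ell}|$, this ratio is at most $1 - c_{0}(1-6\epsilon)^{k'\ell}$ for an absolute constant $c_{0}>0$ (using that $p,\alpha$ are small). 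Since $k'\ell \leq C\log\log m$, choosing $\epsilon$ a sufficiently small constant depending on $C$ ensures $(1-6\epsilon)^{k'\ell} \geq (\log m)^{-0.01}$, and therefore $\hat{\mathbb{E}}[\sum_{\bss^\ell \in J} Z_{\bss^\ell}^{2}/(\mathrm{v}_{\bss^\ell}+\mathrm{v}'_{\bss^\ell})] \leq 2w(1 - c_{0}(\log m)^{-0.01})$.

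Next, I will apply Bernstein's inequality to $\sum_{\bss^\ell \in J} Z_{\bss^\ell}^{2}/(\mathrm{v}_{\bss^\ell}+\mathrm{v}'_{\bss^\ell})$, exploiting the conditional independence across $\bss^\ell$ and the sub-exponential tail for $Z_{\bss^\ell}$ in Lemma \ref{lem:entrywise difference between signatures}. Each normalized summand is $O(1)$ in expectation and sub-exponential, so over $2w \geq 2(\log m)^{4}$ terms the deviation of the sum from its expectation is $O(\sqrt{w\log m})$ with failure probability at most $m^{-D-1}$, which after dividing by $2w$ is $O((\log m)^{-1.5})$ and thus much smaller than $c_{0}(\log m)^{-0.01}$. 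Consequently the $Z^{2}$ contribution is at most $2w(1 - (c_{0}/2)(\log m)^{-0.01})$.

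It remains to absorb the deterministic and cross terms into the leftover slack, which is where the sparsification event $(B4)$ enters. Splitting $J = \tilde{J}(i) \sqcup (J\setminus \tilde{J}(i))$ with $|J \setminus \tilde{J}(i)| < (\log m)^{2}$, I plug the sharp bound \eqref{eq:eta in J hat} into $|\Delta_{\bss^\ell}| \leq |\eta_{\bss^\ell}|+|\eta'_{\bss^\ell}|+2m^{0.2}p$ on $\tilde{J}(i)$ and the coarser bound \eqref{eq:eta in J} on the remaining indices. Combined with the lower bound $\mathrm{v}_{\bss^\ell} + \mathrm{v}'_{\bss^\ell} \geq 2mp(1-p)(mp/2^{k'})^{\ell}(1-6\epsilon)^{k'\ell}$ coming from $(B5)$, the per-coordinate contribution on $\tilde{J}(i)$ is $O(w^{4}(\log m) \cdot 2^{-k'\ell}(1-6\epsilon)^{-k'\ell})$, which by the standing assumption $k'\ell \geq 2\log(w^{4}\log m)$ sums to $o(1)$; on $J \setminus \tilde{J}(i)$ each term is $O(\log m)$ and the total is $O((\log m)^{3}) = o(w(\log m)^{-0.1})$. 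The cross term $\sum|Z_{\bss^\ell}\Delta_{\bss^\ell}|/(\mathrm{v}_{\bss^\ell}+\mathrm{v}'_{\bss^\ell})$ is controlled by Cauchy--Schwarz using the two preceding estimates. The main obstacle is the bookkeeping: the gap $c_{0}(\log m)^{-0.01}$ produced by the overlap must simultaneously dominate the Bernstein deviation, the $\Delta^{2}$ contribution on $J\setminus \tilde{J}(i)$, and the Cauchy--Schwarz cross term, while staying strictly larger than the target slack $(\log m)^{-0.1}$; this forces the constant $\epsilon$ (and hence $\alpha_{1}$) to be chosen in terms of $C$ and $D$, and drives the lower bound $k'\ell \geq 2\log(w^{4}\log m)$ assumed in the hypotheses.
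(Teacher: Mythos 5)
Your proposal follows essentially the same route as the paper's proof: it bounds the per-coordinate second moment of $X_{\bss^\ell}=(f_{\bss^\ell}-f'_{\bss^\ell})/\sqrt{\mathrm{v}_{\bss^\ell}+\mathrm{v}'_{\bss^\ell}}$ via the variance formula in Lemma~\ref{lem:entrywise difference between signatures} combined with the overlap bound $(B5)$ and the node-size bound $(B3)$, isolates the contribution of $J\setminus\tilde J(i)$ using $(B4)$, and then concentrates $\sum_{\bss^\ell\in J}X_{\bss^\ell}^2$. The only cosmetic difference is that the paper applies the prepackaged concentration result (Lemma~\ref{lem:hoeffding's with truncation}) directly to the $X_{\bss^\ell}$, which handles the mean shift $\hat{\mathbb{E}}[X_{\bss^\ell}]=\Delta_{\bss^\ell}/\sqrt{\mathrm{v}_{\bss^\ell}+\mathrm{v}'_{\bss^\ell}}$ automatically and bundles the truncation needed to justify a Bernstein-type bound for $Z_{\bss^\ell}^2/(\mathrm{v}_{\bss^\ell}+\mathrm{v}'_{\bss^\ell})$, whereas you separately expand $(Z_{\bss^\ell}+\Delta_{\bss^\ell})^2$ into three pieces and control the cross term by Cauchy--Schwarz.
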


 Lemma \ref{lem:correct} will be proved in Section \ref{sec:proof of lemma correct}.
\begin{proof}[Proof of Lemma \ref{lem:correct final}]
   First, we will show that conditions $(B1)-(B5)$ hold for at least $m-m^{1-c}$ vertices $i \in C_{k}$ with probability at least $1-m^{-D-1}$.
    \begin{itemize}
    \item By Proposition \ref{prop:type}, there exist at least $m-m^{1-c_1}$ vertices $i\in C_k$ for a constant $c_1 \in (0,0.5) $ that satisfy condition $(B1)$ with probability at least $1-m^{-D-2}$. 
    
    \item Based on Lemma \ref{lem:Sizes of neighborhoods and their intersections}, we have $|\mathcal{B}_{G_{0}(C_{k})}(i,\ell)| =O\left( (mp/(1-\alpha))^{\ell}\right) \leq m^{0.1}$,  $\forall i\in C_k$ and  $\ell \leq \frac{\log m}{11 \log mp}$, with probability at least $1-m^{-D-2}$. Thus, the condition $(B2)$ holds for all $i\in C_k$ with probability at least $1-m^{-D-2}$.
    
    \item On the conditions $(B1)$ and $(B2)$, the condition $(B3)$ holds for all $i\in C_k$ with probability at least $1-m^{-D-2}$ by Lemma \ref{lem:sizes of vertex class}. 
    
    \item Condition $(B4)$ holds for all $i\in C_k$ satisfying conditions $(B1)$ and $(B3)$  with probability at least $1-m^{-D-2}$ by Lemma \ref{lem:sparsification of correct}.
    
    \item By Proposition \ref{prop:type}, $|\operatorname{Typ}_{G_0(C_k)}(\ell,\epsilon)|\geq m-m^{1-c_2}$ holds for a constant $c_2 \in (0,0.5)$ with probability $1-m^{-D-2}$. Thus, condition $(B5)$ holds for at least $m-m^{1-c_2}$ vertices $i \in C_k$ with probability at least $1-2m^{-D-2}$ by Lemma \ref{prop:common Ts}.
    \end{itemize}
Moreover, by applying Lemma \ref{lem:small overlap correct} to the vertex $i$ satisfying conditions $(B1)-(B5)$, we obtain $\eta_{\bss^\ell}$ and $\eta'_{\bss^\ell}$ that satisfy \eqref{eq:eta in J hat} and \eqref{eq:eta in J} with a probability of at least $1 - m^{-D-2}$. Therefore, by applying Lemma \ref{lem:correct}, the proof is complete.

\end{proof}

    \subsection{Proof of Lemma \ref{lem:correct}}\label{sec:proof of lemma correct}
  	By condition $(B5)$, we have that
	\begin{equation}
		|T^{\ell}_{\bss^\ell}(i) \cap T'^{\ell}_{\bss^\ell}(i)| \geq (mp/2^{k'})^{\ell}(1-6\epsilon)^{k'\ell}
	\end{equation}
        for any $\bss^\ell \in\{-1,1\}^{k'\ell}$. 	Thus, we obtain
	\begin{equation}\label{eq:vs1}
	    \begin{aligned}
		\mathrm{v}_{\bss^\ell} &=m p(1-p)\left|T^{\ell}_{\bss^\ell}(i) \right| \geq m p(1-p)|T^{\ell}_{\bss^\ell}(i) \cap T'^{\ell}_{\bss^\ell}(i)|  \\
		& \geq (1-p)\left(\frac{1-6\epsilon}{2}\right)^{k'\ell}(mp)^{\ell+1}.
	\end{aligned}
	\end{equation}
	   Moreover, by condition $(B3)$,
	\begin{equation}\label{eq:vs2}
	    \mathrm{v}_{\bss^\ell}=m p(1-p)\left|T^{\ell}_{\bss^\ell}(i)\right| \leq 6(1-p)\frac{(mp)^{\ell+1}}{2^{k'\ell}} .
	\end{equation}
	 In the same way, the bounds for $\mathrm{v}'_{\bss^\ell}$ can be obtained as follows:
        \begin{equation}
          (1-p)\left(\frac{1-6\epsilon}{2}\right)^{k'\ell}(mp)^{\ell+1}  \leq \mathrm{v}'_{\bss^\ell} \leq 6(1-p)\frac{(mp)^{\ell+1}}{2^{k'\ell}}.
        \end{equation}
  One can see from Lemma \ref{lem:entrywise difference between signatures} that for a correct pair of vertices the difference  between the entries of signatures $f_{\bss^\ell}$ and $f'_{\bss^\ell}$ at some $\bss^\ell\in\{-1,1\}^{k'\ell}$ can be decomposed into the  random variable part $Z_{\bss^\ell}$ and the deterministic part $\Delta_{\bss^\ell}$.  Define
	$$
	X_{\bss^\ell}:=\frac{f_{\bss^\ell}-f'_{\bss^\ell}}{\sqrt{\mathrm{v}_{\bss^\ell}+\mathrm{v}'_{\bss^\ell}}}=\frac{Z_{\bss^\ell}+\Delta_{\bss^\ell}}{\sqrt{\mathrm{v}_{\bss^\ell}+\mathrm{v}'_{\bss^\ell}}},
	$$
	where $Z_{\bss^\ell}$ and $\Delta_{\bss^\ell}$ are defined in Lemma \ref{lem:entrywise difference between signatures}.	
 
We first estimate the expectation and variance of $X_{\bss^\ell}$. Recall $\hat{\mathbb{E}}$ defined in Lemma \ref{lem:entrywise difference between signatures}. By Lemma \ref{lem:entrywise difference between signatures}, \eqref{eq:eta in J hat},\eqref{eq:eta in J},\eqref{eq:vs1} and \eqref{eq:vs2}, we can have
	\begin{equation}\label{eq:Xs correct Jhat}
	    \begin{aligned}
		\left|\hat{\mathbb{E}}\left[X_{\bss^\ell}\right]\right|=\frac{\left|\Delta_{\bss^\ell}\right|}{\sqrt{\mathrm{v}_{\bss^\ell}+\mathrm{v}'_{\bss^\ell}}} & \leq \frac{\left|\eta_{\bss^\ell}\right|+\left|\eta'_{\bss^\ell}\right|+2 m^{0.2} p}{\sqrt{2(1-p)(mp)^{\ell+1}((1-6\epsilon)/2)^{k'\ell}}} \\
		& \leq \frac{2 K \frac{(m p)^{(\ell+1) / 2}}{2^{k'\ell}} w^{2} \sqrt{\log m}+2 m^{0.2} p}{\sqrt{2(1-p)(mp)^{\ell+1}((1-6\epsilon)/2)^{k'\ell}}} \leq \frac{5 K w^{2} \sqrt{\log m}}{2^{k'\ell / 2}(1-6\epsilon)^{k'\ell / 2}} \quad \text { for } \bss^\ell \in \tilde{J}(i), 
		\end{aligned}
  \end{equation}

	\begin{equation}\label{eq:Xs correct J}
		\left|\hat{\mathbb{E}}\left[X_{\bss^\ell}\right]\right|=\frac{\left|\Delta_{\bss^\ell}\right|}{\sqrt{\mathrm{v}_{\bss^\ell}+\mathrm{v}'_{\bss^\ell}}} \leq \frac{2 K \frac{(m p)^{(\ell+1) / 2}}{2^{k'\ell/2}} \sqrt{\log m}+2 m^{0.2} p}{\sqrt{2(1-p)(mp)^{\ell+1}((1-6\epsilon)/2)^{k'\ell}}} \leq \frac{5 K  \sqrt{\log m}}{(1-6\epsilon)^{k'\ell / 2}} \quad \text { for } \bss^\ell \in J,
	\end{equation}
	
	and
	\begin{equation}\label{eq:Var Xs}
	    \begin{aligned}
		\operatorname{Var}\left(X_{\bss^\ell}\right)=\frac{\hat{\mathbb{E}}\left[Z_{\bss^\ell}^{2}\right]}{\mathrm{v}_{\bss^\ell}+\mathrm{v}'_{\bss^\ell}} & \leq \frac{\mathrm{v}_{\bss^\ell}+\mathrm{v}'_{\bss^\ell}-2 \tilde{m} p(1-p-\alpha)\left|T^{\ell}_{\bss^\ell} \cap T'^{\ell}_{\bss^\ell} \right|}{\mathrm{v}_{\bss^\ell}+\mathrm{v}'_{\bss^\ell}} \\
		& \leq 1-\frac{2\tilde{m} p(1-p-\alpha)(m p / 2^{k'})^{\ell}(1-6\epsilon)^{k'\ell}}{12(1-p)(m p)^{\ell+1} / 2^{k'\ell}} \leq 1-\frac{(1- 6\epsilon)^{k'\ell}}{13},
	\end{aligned}
	\end{equation}
	if $\alpha \leq \frac{1}{10}$.
	By \eqref{eq:Xs correct Jhat} and \eqref{eq:Var Xs}, we can get
 \begin{equation}\label{eq:Xs2 correct Jhat}
     \hat{\mathbb{E}}\left[X_{\bss^\ell}^{2}\right]=\frac{\hat{\mathbb{E}}\left[Z_{\bss^\ell}^{2}\right]+\Delta_{\bss^\ell}^{2}}{\mathrm{v}_{\bss^\ell}+\mathrm{v}'_{\bss^\ell}} \leq 1-\frac{(1- 6\epsilon)^{k'\ell}}{13}+\frac{25 K^{2} w^{4} \log m}{2^{k'\ell}(1-6\epsilon)^{k'\ell }} \leq 1-\frac{(1-6\epsilon)^{k'\ell}}{14} \quad \text { for } \bss^\ell \in \tilde{J}(i)
 \end{equation}
	 if $2\log \left(w^{4}\log m\right) \leq k'\ell$, $\epsilon$ is small enough and $m\geq m_0(K,\epsilon)$. Similarly, we have
	\begin{equation}\label{eq:Xs2 correct J}
	    \hat{\mathbb{E}}\left[X_{\bss^\ell}^{2}\right]=\frac{\hat{\mathbb{E}}\left[Z_{\bss^\ell}^{2}\right]+\Delta_{\bss^\ell}^{2}}{\mathrm{v}_{\bss^\ell}+\mathrm{v}'_{\bss^\ell}} \leq 1-\frac{(1- 6\epsilon)^{k'\ell}}{14}+\frac{25 K^{2}\log m}{(1-6\epsilon)^{k'\ell }} \leq \frac{26 K^{2}\log m}{(1-6\epsilon)^{k'\ell }} \quad \text { for } \bss^\ell \in J ,
	\end{equation}
        if $k'\ell \leq C \log \log m$, $\epsilon$ is a small enough value depending on $C$ and $m\geq m_0(K,\epsilon)$.  Moreover, we can obtain   $\hat{\mathbb{P}}\left\{\left|X_{\bss^\ell}-\hat{\mathbb{E}}\left[X_{\bss^\ell}\right]\right| \geq t\right\} \leq 2 \exp \left(\frac{-t^{2} / 2}{1+t / 3}\right)$ since $\hat{\mathbb{P}}\left\{\left|Z_{\bss^\ell}\right| \geq t\right\} \leq 2 \exp \left(\frac{-t^{2} / 2}{\hat{E}\left[Z_{\bss^\ell}^{2}\right]+t / 3}\right)$ holds by Lemma \ref{lem:entrywise difference between signatures} and  $\operatorname{Var}\left(X_{\bss^\ell}\right) \leq 1$ from \eqref{eq:Var Xs}. Therefore, applying Lemma \ref{lem:hoeffding's with truncation}, we obtain 
	\begin{equation}\label{eq:Xs bound}
	    \sum_{\bss^\ell \in J} X_{\bss^\ell}^{2} \leq \sum_{\bss^\ell \in J} \hat{\mathbb{E}}\left[X_{\bss^\ell}^{2}\right]+C_{1} \sqrt{w}(\log m)^{3 / 2}+C_{1}\left(\max _{\bss^\ell \in J}\left|\hat{\mathbb{E}}\left[X_{\bss^\ell}\right]\right|\right)(\sqrt{w \log m}+\log m),
	\end{equation}
     with conditional probability at least $1-m^{-D}$, for $J \subset \{-1,1\}^{k'\ell}$ such that $|J|=2w$ and a sufficiently large constant $C_{1}$ depending on $D$. By \eqref{eq:Xs2 correct J}, \eqref{eq:Xs2 correct Jhat} and \eqref{eq:jhat} in condition $(B4)$, we can get 
	\begin{equation}\label{eq:Xs bound-1}
	    \begin{aligned}
		\sum_{\bss^\ell \in J} \hat{\mathbb{E}}\left[X_{\bss^\ell}^{2}\right] &=\sum_{\bss^\ell \in \tilde{J}(i)} \hat{\mathbb{E}}\left[X_{\bss^\ell}^{2}\right]+\sum_{\bss^\ell \in J \backslash \tilde{J}(i)} \hat{\mathbb{E}}\left[X_{\bss^\ell}^{2}\right] \\
		& \leq 2 w\left(1-\frac{(1-6\epsilon)^{k'\ell}}{14}\right)+(\log m)^{2} \frac{26 K^{2}\log m}{(1-6\epsilon)^{k'\ell }} \leq 2 w\left(1-\frac{(1-6\epsilon)^{k'\ell}}{15}\right)
	\end{aligned}
	\end{equation}
	if $k'\ell \leq C \log \log m$, $\epsilon>0$ is small enough which depends on $C$, $w \geq(\log m)^{4}$ and $m \geq m_0(K,\epsilon)$. Finally, by \eqref{eq:Xs correct J}, \eqref{eq:Xs bound} and \eqref{eq:Xs bound-1}, we obtain
	$$
	\begin{aligned}
		\sum_{\bss^\ell \in J} X_{\bss^\ell}^{2} & \leq2 w\left(1-\frac{(1-6\epsilon)^{k'\ell}}{15}\right)+C_{1} \sqrt{w}(\log m)^{3 / 2}+C_{1} \frac{5 K  \sqrt{\log m}}{(1-6\epsilon)^{k'\ell / 2}}(\sqrt{w \log m}+\log m) \\
		& \leq 2 w\left(1-\frac{(1-6\epsilon)^{k'\ell}}{16}\right) \leq 2 w\left(1-\frac{1}{(\log m)^{0.1}}\right)
	\end{aligned}
	$$
	if $k'\ell \leq C \log \log m$, $\epsilon$ is a sufficiently small constant depending on $C$, $w \geq(\log m)^{4}$ and  $m\geq m_0(K,\epsilon)$. 

	\subsection{Proof of Condition $(B4)$}\label{sec:proof of B4}
    In this subsection, we will show that event $\mathcal{G}$ (or condition $(B4)$) holds with high probability. The idea of comparing sparsified signature vectors was addressed and used in \cite{MRT21a, MRT21b} to mitigate the dependency issue between the signature vectors. We will follow the same trick here.

	\begin{lemma}[Similar to Lemma 5.6 in \citep{MRT21a}]\label{lem:sparsification of correct}
		Consider a fixed $i\in C_k$ and a random subset $J$ uniformly drawn from $\{-1,1\}^{k'\ell}$ with cardinality $2w$, for a positive integer  $w$ satisfying $w > 2(\log m)^2$. If a vertex $i$ satisfies that
\begin{flalign*} 
&(C1) \; G_{0}\left(\mathcal{B}_{G_{0}(C_{k})}(i, \ell)\right) \text{ forms a tree};\\
&(C2) \;\left|T^{\ell}_{\bss^\ell}(i)\right| \vee \left|T'^{\ell}_{\bss^\ell}(i)\right| \leq 6\left(\frac{m p}{2^{k'}}\right)^{\ell} \text{ for all } \bss^\ell \in\{-1,1\}^{k'\ell},&&
\end{flalign*} 
		then event $\mathcal{G}$ (or condition $(B4)$) holds with probability at least $1-\exp \left(-(\log m)^{1.5}\right)$.
	\end{lemma}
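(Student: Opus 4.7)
Two halves of the event $\mathcal{G}$ can be handled separately, and the second is essentially free: since $R_{\bss^\ell}(i) \subseteq T^{\ell}_{\bss^\ell}(i)$ and $R'_{\bss^\ell}(i) \subseteq T'^{\ell}_{\bss^\ell}(i)$ by \eqref{eq:R_s}--\eqref{eq:R'_s}, condition $(C2)$ directly gives $|R_{\bss^\ell}(i)| \vee |R'_{\bss^\ell}(i)| \leq 6(mp)^\ell/2^{k'\ell}$ for every $\bss^\ell \in \{-1,1\}^{k'\ell}$, so \eqref{eq:R_s in J} holds deterministically, independently of $J$. All of the work is therefore in producing $\tilde{J}(i)$.

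I would freeze both $G$ and $G'$, so that every partition tree $T^{\ell}_{\bss^\ell}(i)$ and $T'^{\ell}_{\bss^\ell}(i)$ becomes deterministic, and take the probability only over the uniformly random $J$. Defining $a_{\bss^\ell,\bst^\ell} := |T^{\ell}_{\bss^\ell}(i) \cap T'^{\ell}_{\bst^\ell}(i)|$ for $\bss^\ell \neq \bst^\ell$ yields the clean row decomposition $|R_{\bss^\ell}(i)| = \sum_{\bst^\ell \in J\setminus\{\bss^\ell\}} a_{\bss^\ell,\bst^\ell}$, and analogously for $R'_{\bss^\ell}(i)$. Two deterministic estimates are then in play: the row bound $\sum_{\bst^\ell \neq \bss^\ell} a_{\bss^\ell,\bst^\ell} \leq |T^{\ell}_{\bss^\ell}(i)| \leq 6(mp)^\ell/2^{k'\ell}$ from $(C2)$, and the total bound $\sum_{\bss^\ell \neq \bst^\ell} a_{\bss^\ell,\bst^\ell} \leq |\mathcal{S}_{G_0(C_k)}(i,\ell)| = O((mp)^\ell)$, which follows under $(C1)$ from Lemma \ref{lem:Sizes of neighborhoods and their intersections}. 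Writing $S := \sum_{\bss^\ell \in J} (|R_{\bss^\ell}(i)| + |R'_{\bss^\ell}(i)|)$, each unordered pair of signatures lies in $J$ with probability at most $(2w)^2/4^{k'\ell}$, so $\mathbb{E}[S] = O(w^2 (mp)^\ell/4^{k'\ell})$.

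It then remains to upgrade this first-moment control into an exponential tail. I would construct a Doob martingale by exposing the elements of $J$ one at a time: the worst-case increment is controlled by the row bound from $(C2)$, while the conditional variance per step is controlled by the total bound from $(C1)$, so a Freedman / Bernstein-type inequality for sampling without replacement delivers
\[
\mathbb{P}\!\left[S \geq \frac{96\, e\, w^4 (mp)^\ell}{4^{k'\ell}} (\log m)^2\right] \leq \exp\!\left(-(\log m)^{1.5}\right)
\]
for $w \geq (\log m)^4$ in the allowed parameter range. On the complementary event a one-line pigeonhole shows that at most $(\log m)^2$ signatures $\bss^\ell \in J$ can violate \eqref{eq:R_s in Jhat}; taking $\tilde{J}(i) \subseteq J$ to be the complement of this offending set realizes $\mathcal{G}$. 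The principal obstacle is precisely this concentration step: the first-moment estimate alone yields only polynomial decay in $w$, and both of the deterministic estimates above must be combined to suppress the worst-case martingale increment enough to produce the near-Gaussian tail at level $(\log m)^{1.5}$.
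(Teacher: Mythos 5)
The opening is correct: \eqref{eq:R_s in J} is indeed immediate from $(C2)$, and the row decomposition $|R_{\bss^\ell}(i)|=\sum_{\bst^\ell\in J\setminus\{\bss^\ell\}}|T^{\ell}_{\bss^\ell}(i)\cap T'^{\ell}_{\bst^\ell}(i)|$ together with the total bound from $(C1)$ gives $\mathbb{E}[S]=O(w^2(mp)^\ell/4^{k'\ell})$. But the concentration step is where the argument breaks, and for two reasons.

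First, the reduction from $\mathcal{G}$ to the event $\{S<T\}$ with $T=96ew^4(mp)^\ell(\log m)^2/4^{k'\ell}$ is lossy in precisely the wrong direction. You need to bound the \emph{number} of bad indices $\bss^\ell\in J$ with $|R_{\bss^\ell}(i)|\vee|R'_{\bss^\ell}(i)|>96ew^4(mp)^\ell/4^{k'\ell}$; the pigeonhole $S<T\Rightarrow$ few bad indices is fine, but the converse fails. Under $(C2)$ and $(C1)$ one can have a single heavy column pair, say $a_{\bss^\ell_0,\bst^\ell_0}=6(mp)^\ell/2^{k'\ell}$ with all other entries negligible; then whenever $\{\bss^\ell_0,\bst^\ell_0\}\subset J$ one has $S\geq 12(mp)^\ell/2^{k'\ell}\gg T$ in the parameter regime ($2^{k'\ell}\geq w^8\log m$), yet only two indices are bad, so $\mathcal{G}$ still holds with $\tilde{J}(i)=J\setminus\{\bss^\ell_0,\bst^\ell_0\}$. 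The probability of this is $\Theta(w^2/4^{k'\ell})$, which is only polynomially small in $\log m$, so the target bound $\mathbb{P}[S\geq T]\leq\exp(-(\log m)^{1.5})$ is simply false. Second, even setting that aside, the announced Freedman/Bernstein inequality cannot certify anything here: the per-step martingale increment is bounded only by $B\asymp 24(mp)^\ell/2^{k'\ell}$ (the row/column sums from $(C2)$), so $T/B\asymp 4ew^4(\log m)^2/2^{k'\ell}\leq 4e(\log m)/w^4\ll1$ under $2^{k'\ell}\geq w^8\log m$. A single step can overshoot the entire threshold, and the bounded-increment term $Bt/3$ in Freedman's denominator kills the exponent: the resulting bound is $\exp(-O((\log m)/w^4))=1-o(1)$, trivial.

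What you are missing is that the right random quantity to control is not the sum $S$ but the \emph{count} of bad pairs, and the right tool is not a martingale inequality but the combinatorial high-moment estimate in Lemma~\ref{lem:sparsification lem} (Lemma~5.5 of \citet{MRT21a}). That lemma directly bounds $\mathbb{P}\bigl\{\,|\{\bss^\ell\in J:\exists\bst^\ell\in J\setminus\{\bss^\ell\}\text{ with }|T^{\ell}_{\bss^\ell}(i)\cap T'^{\ell}_{\bst^\ell}(i)|\geq LS/k^2\}|\geq 2\rho w\,\bigr\}\leq(8w^3/L)^{\rho w}$, by a union bound over tuples of $\rho w$ disjoint heavy pairs inside $J$; taking $L=8ew^3$ and $\rho w=(\log m)^2/4$ yields $\exp(-(\log m)^2/4)$, and one takes $\tilde{J}(i)$ to be the (superset of the) complement of the $<\tfrac12(\log m)^2+\tfrac12(\log m)^2$ offending indices from the two applications (for $R$ and $R'$). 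The exponential decay comes from the multiplicativity of the $\rho w$-fold inclusion probabilities, not from concentration of a sum, and this is exactly what a single heavy pair cannot spoil.
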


\begin{proof}   
By the condition $(C2)$, we have
	\begin{equation}
		\left|T^{\ell}_{\bss^\ell}(i) \right| \vee \left|T'^{\ell}_{\bss^\ell}(i)  \right| \leq 6\frac{(mp)^{\ell}}{2^{k'\ell}}
	\end{equation}
    for all $\bss^\ell \in \{-1,1\}^{k'\ell}$.
    Thus, it is easy to see that \eqref{eq:R_s in J} is established since $R_{\bss^\ell}(i) \subset T^\ell _{\bss^\ell} (i)$ and $R'_{\bss^\ell}(i) \subset T'^\ell _{\bss^\ell} (i)$.

	Next, we apply Lemma \ref{lem:sparsification lem} with $\Omega=\mathcal{S}_{G(C_{k})}(i, \ell), \Omega'=\mathcal{S}_{G'(C_{k})}(i, \ell), k=2^{k'\ell}$, $S=6(m p)^{\ell}, L=8 e w^{3}$, and $\rho=\frac{1}{4 w}(\log m)^{2}$. Moreover, we use $T^{\ell}_{\bss^\ell}(i)$ and $T'^{\ell}_{\bss^\ell}(i)$ for the partition set of $\Omega=\mathcal{S}_{G(C_{k})}(i, \ell)$ and $ \Omega'=\mathcal{S}_{G'(C_{k})}(i, \ell)$. Then, we can get 
	\begin{equation}\label{eq:sparsification}
		\left| \left\{\bss^\ell \in J: \exists \bst^\ell \in J \backslash\{\bss^\ell\} \text { s.t. 	}\left|T^{\ell}_{\bss^\ell}(i) \cap T'^{\ell}_{\bst^\ell}(i) \right| \geq 48 e w^{3}  \frac{(m p)^{\ell}}{4^{k'\ell}}\right\} \right| 	<\frac{1}{2}(\log m)^{2},
	\end{equation}
	with probability at least $1-\exp \left(-(\log m)^{2} / 4\right)$. 
 In a similar way, the following results can be obtained.
 \begin{equation}\label{eq:sparsification swap}
		\left| \left\{\bss^\ell \in J: \exists \bst^\ell \in J \backslash\{\bss^\ell\} \text { s.t. 	}\left|T'^{\ell}_{\bss^\ell}(i) \cap T^{\ell}_{\bst^\ell}(i) \right| \geq 48 e w^{3}  \frac{(m p)^{\ell}}{4^{k'\ell}}\right\} \right| 	<\frac{1}{2}(\log m)^{2},
	\end{equation}
	with probability at least $1-\exp \left(-(\log m)^{2} / 4\right)$. Define
	$$
	\tilde{J}(i):=\left\{\bss^\ell \in J:\left|R_{\bss^\ell}(i)\right| \vee\left|R'_{\bss^\ell}(i)\right| \leq 96 e w^{4}  \frac{(m p)^{\ell}}{4^{k'\ell}}\right\}.
	$$
We can see that $\tilde{J}(i)$ is a superset of
	$$
	\begin{aligned}
		&\left\{\bss^\ell \in J: \forall \bst^\ell  \in J \backslash\{\bss^\ell\},\;\left|T^{\ell}_{\bss^\ell}(i) \cap T'^{\ell}_{\bst^\ell}(i) \right| \leq 48 e w^{3}  \frac{(m p)^{\ell}}{4^{k'\ell}}\right\} \\
		&\bigcap\left\{\bss^\ell \in J: \forall \bst^\ell \in J \backslash\{\bss^\ell\},\;\left|T'^{\ell}_{\bss^\ell}(i) \cap T^{\ell}_{\bst^\ell}(i) \right| \leq 48 e w^{3}  \frac{(m p)^{\ell}}{4^{k'\ell}}\right\} .
	\end{aligned}
	$$
By combining the above result with \eqref{eq:sparsification} and \eqref{eq:sparsification swap}, we see that \eqref{eq:jhat} and \eqref{eq:R_s in Jhat} hold. By taking a union bound over all $i \in C_{k}$, we can complete the proof.
\end{proof}


	\subsection{Proof of Condition $(B5)$}\label{sec:proof of B5}
	
	In this subsection, we will show that  `typical' vertices, defined Definition \ref{def:type}, satisfy condition $(B5)$ with high probability, i.e., $\left|T^{\ell}_{\bss^\ell}(i) \cap T'^{\ell}_{\bss^\ell}(i)\right| \geq(m p / 2^{k'})^{\ell}(1-6\epsilon)^{k'\ell}$ for a constant $\epsilon$  and for all $\bss^\ell \in\{-1,1\}^{k'\ell} $ with high probability.
	\begin{lemma}[Degree correlation between correct pairs of vertices]\label{lem:correct signature}
         For any constant $\epsilon>0$, there exist constants $\alpha_{0},L>0$ depending only on $\epsilon$ with the properties below. Consider the two graphs $G^\pi$ and $G'$, which are generated from the correlated SBMs defined in Sec. \ref{sec:model}, with correlation $1-\alpha$. Suppose that community labels $(C_1,C_2,\dots, C_k)$ are given in both graphs and $n_{a}q \geq L$. If $\alpha \in (0,\alpha_{0})$, then for all $i\in C_{k}$ and any $a\in[k-1]$,
		\begin{equation}
		  \P\left\{ \sign(\operatorname{deg}^{a}_{G}(i)-n_{a}q) =  \sign(\operatorname{deg}^{a}_{G'}(i)-n_{a}q)  \right\} \geq 1-\epsilon.
		\end{equation}
	\end{lemma}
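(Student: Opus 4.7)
The plan is to reduce the statement to an anti-concentration/concentration dichotomy after conditioning on the parent graph. Fix $i \in C_k$ and $a \in [k-1]$, and let $Z := \deg^a_{G_0}(i) \sim \mathrm{Bin}(n_a,q/(1-\alpha))$. By the definition of the correlated SBM, conditional on $Z$ the two degrees $\deg^a_G(i)$ and $\deg^a_{G'}(i)$ are independent $\mathrm{Bin}(Z,1-\alpha)$ variables. Writing
\[
W := Z(1-\alpha) - n_a q, \qquad \tilde X := \deg^a_G(i) - Z(1-\alpha), \qquad \tilde Y := \deg^a_{G'}(i) - Z(1-\alpha),
\]
the quantities inside the two sign functions are $W+\tilde X$ and $W+\tilde Y$. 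Hence whenever $|W| > |\tilde X| \vee |\tilde Y|$, the two signs agree (both equal the sign of $W$). So the task reduces to showing that, with probability at least $1-\epsilon$, the ``common'' shift $W$ dominates the ``subsampling noise'' $\tilde X, \tilde Y$.

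The main step is to balance two competing estimates by choosing a threshold $\tau = \tau(\alpha)$. First, an anti-concentration bound on the centered binomial $Z - n_a q/(1-\alpha)$ (for instance via the local CLT, or a direct comparison of adjacent binomial atoms) gives, for any $\tau > 0$,
\[
\P\bigl\{|W| \leq \tau \sqrt{n_a q}\bigr\} \leq C_1\,\tau + C_1 (n_a q)^{-1/2},
\]
so large $n_a q$ (ensured by the hypothesis $n_a q \geq L$) makes the second term negligible. Second, conditional on $Z$ with $Z \leq 2 n_a q/(1-\alpha)$ (which holds with probability $1-e^{-c n_a q}$ by Bernstein), the variables $\tilde X$ and $\tilde Y$ are independent centered binomials with variance at most $Z\alpha(1-\alpha) \leq 4 n_a q\, \alpha$. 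Another Bernstein application yields
\[
\P\bigl\{|\tilde X| \vee |\tilde Y| \geq \tau \sqrt{n_a q}\,\big|\,Z\bigr\} \leq 4\exp\!\bigl(-c_2\,\tau^{2}/\alpha\bigr).
\]
Combining the two, $\P\{\mathrm{sign}(W+\tilde X)\neq \mathrm{sign}(W+\tilde Y)\} \leq C_1\tau + 4\exp(-c_2 \tau^{2}/\alpha) + O((n_a q)^{-1/2})$.

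Choosing $\tau = \alpha^{1/4}$ makes the first term $O(\alpha^{1/4})$ and the second term $O(\exp(-c_2/\sqrt\alpha))$. Picking $\alpha_0 = \alpha_0(\epsilon)$ small enough and $L = L(\epsilon)$ large enough then forces the whole bound below $\epsilon$, proving the lemma. (As an alternative route, one can sidestep the conditioning by writing both degrees as sums of i.i.d.\ bivariate Bernoulli vectors over $j\in C_a$ with correlation $\rho = 1-\alpha/(1-q)$, and invoking a bivariate Berry--Esseen bound; the limiting bivariate normal has sign-disagreement probability $\arccos(\rho)/\pi = O(\sqrt{\alpha})$, and the Berry--Esseen error is $O((n_a q)^{-1/2})$, giving the same conclusion.)

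\textbf{Main obstacle.} The delicate point is the anti-concentration of $W$: we need that the centered binomial $Z(1-\alpha) - n_a q$ does not place too much mass in a $\tau\sqrt{n_a q}$-window around zero. Tail concentration (Bernstein) is not enough for this direction; a local-CLT-type estimate, or equivalently an application of the Berry--Esseen bound for the marginal at the single point $0$, is required. Everything else is standard binomial concentration once the correct threshold $\tau = \alpha^{1/4}$ is identified.
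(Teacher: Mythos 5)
Your conceptual skeleton — split each centered degree into a ``common shift'' from the parent graph plus independent subsampling noise, then anti-concentrate the shift and concentrate the noise — is sound and parallels the paper's actual argument, which instead anti-concentrates $\deg^a_G(i)-n_a q$ directly (via their Lemma~\ref{lem:binomial 1}) and bounds the difference $\deg^a_G(i)-\deg^a_{G'}(i)$ (via their Lemma~\ref{lem:correlation of neighbor}), intersecting the two events. The gap is in your specific threshold $\tau=\alpha^{1/4}$. The Bernstein bound you quote, $\P\{|\tilde X|\geq \tau\sqrt{n_aq}\mid Z\}\leq 4\exp(-c_2\tau^2/\alpha)$, requires the \emph{sub-Gaussian} branch of Bernstein, which holds only when the deviation $\tau\sqrt{n_aq}$ is at most comparable to the conditional variance $\asymp n_aq\,\alpha$, i.e.\ when $\tau\lesssim \alpha\sqrt{n_aq}$. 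With $\tau=\alpha^{1/4}$ this forces $n_aq\gtrsim \alpha^{-3/2}$, which is \emph{not} guaranteed by any fixed $L=L(\epsilon)$ as $\alpha\to 0$. In the complementary regime $\alpha\ll (n_aq)^{-3/2}$ one lands in the Poisson-tail branch of Bernstein, giving $\exp(-c\,\tau\sqrt{n_aq})=\exp(-c\,\alpha^{1/4}\sqrt{n_aq})$, which with $n_aq\asymp L$ fixed tends to $1$ as $\alpha\to 0$ and proves nothing. Since the lemma quantifies over all $\alpha\in(0,\alpha_0)$ and all $n_aq\geq L$ with both constants depending only on $\epsilon$, this regime cannot be discarded.

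The fix is to let $\tau$ depend on both parameters, e.g.\ $\tau\asymp \sqrt{\alpha}+(n_aq)^{-1/2}$ with $\epsilon$-dependent constants, which makes the Bernstein bound small in \emph{both} branches while still keeping the anti-concentration cost $C\tau\leq \epsilon/3$ once $\alpha_0$ and $L$ are chosen. This is exactly the structure baked into the paper's choice $t=\frac{r\sqrt{n_aq(1-q)}}{8}\wedge \frac{r^2(1-q)}{64\alpha}$ when invoking Lemma~\ref{lem:correlation of neighbor}: the minimum switches between a term controlled by large $n_aq$ and one controlled by small $\alpha$. Your parenthetical bivariate Berry--Esseen route (sign-disagreement of the Gaussian limit $=\arccos(\rho)/\pi=O(\sqrt{\alpha})$ plus a $O((n_aq)^{-1/2})$ CLT error) sidesteps the threshold issue entirely and would be a clean alternative proof; but as written, the main route has a genuine hole in the small-$\alpha$ regime.
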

	\begin{proof} From Lemma \ref{lem:binomial 1}, we can show that there exists a universal constant $C>0$ such that for any $r>0$,
	\begin{equation}\label{eq:binomial}
		\P\left\{ |\operatorname{deg}^{a}_{G}(i)-n_{a}q| > r\sqrt{n_{a}q(1-q)}  \right\} \geq 1-Cr.
	\end{equation}
	By applying Lemma \ref{lem:correlation of neighbor} with $J=J'=C_{a}$,  for any $t>0$, we have
	\begin{equation}\label{eq:correlation of same node}
		\P\left\{\left| \operatorname{deg}^{a}_{G}(i)-\operatorname{deg}^{a}_{G'}(i) \right| \geq 4\left(t+ \sqrt{t\alpha q n_{a}}\right) \right\} \leq 6 \exp(-t) + 2 \exp \left(-\frac{qn_{a}}{3(1-\alpha)}\right).
	\end{equation}
	Let $t= \frac{r\sqrt{n_{a}q(1-q)}}{8} \wedge \frac{r^{2}(1-q)}{64\alpha} $. Then, we can get
	\begin{equation}
		4\left(t+ \sqrt{t\alpha q n_{a}}\right) \leq r\sqrt{n_{a}q(1-q)}.
	\end{equation}
	From \eqref{eq:binomial} and \eqref{eq:correlation of same node},  we have
	\begin{equation}
		\begin{aligned}
			&\P\left\{\sign(\operatorname{deg}^{a}_{G}(i)-n_{a}q) = \sign(\operatorname{deg}^{a}_{G'}(i)-n_{a}q)  \right\} \\
			& \geq \P\left\{ |\operatorname{deg}^{a}_{G}(i)-n_{a}q| > r\sqrt{n_{a}q(1-q)} \text{ and }   \left| \operatorname{deg}^{a}_{G}(i)-\operatorname{deg}^{a}_{G'}(i) \right| < r\sqrt{n_{a}q(1-q)}  \right\} \\
			&\geq 1-Cr-6\exp(-t)- 2 \exp \left(-\frac{qn_{a}}{3(1-\alpha)}\right).
		\end{aligned}
	\end{equation}
        We can take a small enough $r$ to make $Cr<\epsilon/3$. We can also make $ 2 \exp \left(-\frac{qn_{a}}{3(1-\alpha)}\right)<\epsilon/3$ and $6\exp(-t) < \epsilon/3$ by setting $t= \frac{r\sqrt{n_{a}q(1-q)}}{8} \wedge \frac{r^{2}(1-q)}{64\alpha} $, since $n_{a}q\geq L$ for a sufficiently large constant $L$ depending on $\epsilon$ and we can take a small enough $\alpha_0$. Thus, the proof is complete.
	\end{proof}

 We will prove condition $(B5)$ using the above results. Similar to Proposition 5.1 in \citep{MRT21a}, we prove that there exists a significant overlap between the partitioning nodes of a correct pair.
	
\begin{proposition}[Overlap between the partitioning nodes of a correct pair]\label{prop:common Ts}
    For any constants $D,\delta,\epsilon >0$, there exist constants $m_0,Q,k_1$ and $\alpha_1$, which depend only on $D,\delta$ and $\epsilon$, with the properties below. 
    Consider the two graphs $G$ and $G'$, which are generated from the correlated SBMs defined in Sec. \ref{sec:model}, 
with correlation $1-\alpha$. 
Suppose that community labels $(C_1,C_2,\dots, C_k)$ are given. Assume a given instance of $G_{0}$.  For a fixed positive integer $\ell$, assume that 
    $$
    m\geq m_0,\quad  (\log m)^{1+\delta} \leq m p/(1-\alpha) \leq m^{1/20} , \quad  k' \leq k_1 \log mp, \quad mq\geq Q k'^{2},\quad \alpha \in (0,\alpha_1).
    $$
    Then, for every vertex $i \in \operatorname{Typ}_{G_{0}(C_{k})}(\ell,\epsilon)$, satisfying Definition \ref{def:type}, and for any $\bss^{\ell} \in\{-1,1\}^{k'\ell}$,
    $$
    \left|T_{\bss^{\ell}}^{\ell}(i, G) \cap T_{\bss^{\ell}}^{\ell}\left(i, G'\right)\right| \geq \left(\frac{mp}{2^{k'}}\right)^{\ell}(1-6\epsilon)^{k'\ell},
    $$
    with probability at least $1-m^{-D}$.
\end{proposition}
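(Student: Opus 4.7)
The plan is to induct on the depth $d \in \{0, 1, \ldots, \ell\}$, showing that for each $\bss^d \in \{-1, 1\}^{k'd}$ simultaneously, $|T^d_{\bss^d}(i, G) \cap T^d_{\bss^d}(i, G')| \geq (mp / 2^{k'})^d (1-6\epsilon)^{k'd}$. The base case $d = 0$ is immediate since $T^0_{\varnothing} = \{i\}$ in both graphs. For the union bound over the at most $2^{k'\ell + 1}$ tree nodes to fit into the $m^{-D}$ budget, I will need each single-node failure probability to be bounded by roughly $m^{-D} / (2^{k'\ell + 1} \ell)$, which the upper density condition $mp \leq m^{1/20}$ together with $k' \leq k_1 \log mp$ and $\ell \leq \frac{\log m}{R \log mp}$ renders polylogarithmic and therefore easily absorbed.

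The key structural observation to exploit is a clean decoupling of the randomness: by the typical condition $(A1)$, $G_0(\caB_{G_0(C_k)}(i, \ell))$ is a tree, so each candidate vertex $j$ has a unique ancestral path $i = j_0, j_1, \ldots, j_d = j$ in $G_0(C_k)$. Membership of $j$ in $T^d_{\bss^d}(i, G) \cap T^d_{\bss^d}(i, G')$ then factors into (i) survival of all $d$ path edges in both $G$ and $G'$, governed by intra-$C_k$ subsampling, and (ii) matching of the signature $\sign(\deg^a_G(j_r) - n_a q) = \sign(\deg^a_{G'}(j_r) - n_a q) = \bss_r(a)$ for every $r \in [d]$ and $a \in [k']$, governed by the inter-community edges to $C_1, \ldots, C_{k'}$. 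Since these two edge populations are disjoint, and across distinct pairs $(j_r, a)$ the sign events also depend on disjoint edges, everything decouples into independent factors except within a single $(j_r, a)$ where the signatures in $G$ and $G'$ are merely correlated in a way controlled by Lemma \ref{lem:correct signature}.

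For the inductive step $d \mapsto d+1$, the typical conditions $(A2)$–$(A3)$ guarantee that each surviving parent $v \in T^d_{\bss^d}(i, G) \cap T^d_{\bss^d}(i, G')$ has at least $(1 - 2\epsilon) mp / (1-\alpha)$ unused children in $G_0(C_k)$. For each candidate child $w$, the inclusion probability factors as $(1-\alpha)^2$ (edge $(v, w)$ in both subsamples) times $\prod_{a \in [k']} \mathbb{P}[\sign_G(w, a) = \sign_{G'}(w, a) = \bss_{d+1}(a)]$. To lower-bound the last quantity I plan to combine Lemma \ref{lem:correct signature}, which controls the probability of sign disagreement between $G$ and $G'$ by $\epsilon$, with the binomial near-symmetry from Lemma \ref{lem:binomial 1}: the inequality $\mathbb{P}[A \cap B] \geq \mathbb{P}[A] - \mathbb{P}[\text{signs disagree}]$ yields at least $(1 - O(\epsilon))/2 - O(1 / \sqrt{mq})$ per community, and the assumption $mq \geq Q k'^2$ applied via Lemma \ref{lem:1+n2x} keeps the $k'$-fold product at least $((1 - 3\epsilon)/2)^{k'}$. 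Hence the expected number of good children of $v$ is at least $mp / 2^{k'} \cdot (1 - 6\epsilon)^{k'}$; because distinct candidate children depend on disjoint edges, this count is binomial and Bernstein's inequality (Lemma \ref{lem:bernstein's inequality}) upgrades the expectation to the required pointwise lower bound at the prescribed failure probability.

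The main obstacles I anticipate are the constant bookkeeping and the dependency accounting. The slack $(1-6\epsilon)^{k'}$ per level must tightly absorb the path-survival factor $(1-\alpha)^{2\ell}$, the $O(k' / \sqrt{mq})$ binomial-median error across communities, and the Chernoff deviation; the choice $\alpha < \alpha_1(\epsilon)$ and $mq \geq Q k'^2$ is what gives me room. A subtler point is that the $j_r$ along different paths from $i$ are in fact distinct (by the tree property $(A1)$), so their inter-community degrees genuinely use disjoint edges, which is precisely what lets the per-vertex sign probabilities multiply without paying a correction; I should verify this carefully while setting up the inductive step.
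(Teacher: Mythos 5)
Your proposal is correct and takes essentially the same approach as the paper's proof: both induct on the tree depth $d$, use the tree property $(A1)$ to decouple the intra-community edge-survival randomness from the inter-community sign-matching randomness, lower-bound the per-vertex sign-agreement probability via Lemma~\ref{lem:correct signature} together with the binomial-median approximation (Eq.~\eqref{eq:signature}, Lemma~\ref{lem:1+n2x}), and concentrate the child count with a Chernoff-type bound. The only cosmetic differences are that the paper first conditions on a uniform lower bound on $\deg_{G\cap G'(C_k)}$ (event $\mathcal{F}(i,d)$) before counting children and applies Hoeffding over the pooled children of all surviving parents at once, whereas you compute the joint per-child inclusion probability $(1-\alpha)^2\prod_a\P[\cdots]$ directly and concentrate per parent with Bernstein, then union-bound over parents — both are sound since $mp/2^{2k'}$ remains poly-logarithmic under $k'\le k_1\log mp$.
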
	
\begin{proof} 
Let $\P_{0}$ represent the conditional probability given an instance of $G_{0}$. Fix a vertex $i \in \operatorname{Typ}_{G_{0}(C_{k})}(\ell,\epsilon)$, and let $d\in \{0,1,\ldots,\ell-1\}$.
For any $\epsilon >0$ and $j \in \mathcal{S}_{G_{0}(C_{k})}(i,d)$ ,  we have
\begin{equation}
    \begin{aligned}
        &\P_{0}\{\operatorname{deg}_{G\cap G'(C_{k})}(j) \leq (1-3\epsilon)mp\} \\
        &\leq \P_{0}\{(1-\alpha)^{2}\operatorname{deg}_{G_{0}(C_{k})}(j)- \operatorname{deg}_{G\cap G'(C_{k})}(j) > (1-\alpha)^{2}\operatorname{deg}_{G_{0}(C_{k})}(j) - (1-3\epsilon)mp\}\\
        &\stackrel{(a)}{\leq} \P_{0}\{(1-\alpha)^{2}\operatorname{deg}_{G_{0}(C_{k})}(j)- \operatorname{deg}_{G\cap G'(C_{k})}(j) > \epsilon mp\} \stackrel{(b)}{\leq} \exp\left(-K mp \right) \leq m^{-D-2},
    \end{aligned}
\end{equation}
where $(a)$ holds from condition $(A3)$ and by choosing $\alpha<\epsilon$. We have that $\operatorname{deg}_{G_{0}(C_{k})}(j) \leq 2m\frac{p}{1-\alpha}$ by condition $(A2)$ and $\operatorname{deg}_{G\cap G'(C_{k})}(j) $ is $\operatorname{Binomial}\left(\operatorname{deg}_{G_{0}(C_{k})}(j) ,(1-\alpha)^{2}\right)$. Therefore, using Bernstein's inequality (Lemma \ref{lem:bernstein's inequality}), it can be confirmed that the inequality $(b)$ holds for a constant $K$ depending only on $\epsilon$. The last inequality holds since $mp/(1-\alpha) \geq (\log m)^{1+\delta}$, $\alpha<\epsilon$ and $m\geq m_0(D,\delta,\epsilon)$.

Define an event
\begin{equation}
    \mathcal{F}(i,d):=\left\{\operatorname{deg}_{G\cap G'(C_{k})}(j) \geq (1-3\epsilon)mp, \; \; \forall j\in \mathcal{S}_{G_{0}(C_{k})}(i,d)\right\}.
\end{equation}
By applying a union bound over $j\in \mathcal{S}_{G_{0}(C_{k})}(i,d)$ and $d\in \{0,\ldots,\ell-1\}$, we can obtain
\begin{equation}
    \P_{0}\left\{\cap^{\ell-1}_{d=0} \mathcal{F}(i,d)\right\} \geq 1-m^{-D-1}.
\end{equation}

    On the event $\cap^{\ell-1}_{d=0} \mathcal{F}(i,d)$, we will prove that 
    \begin{equation}\label{eq:Ts induction}
             \left|T_{\bss^{d}}^{d}(i, G) \cap T_{\bss^{d}}^{d}\left(i, G'\right)\right| \geq \left(\frac{mp}{2^{k'}}\right)^{d}(1-6\epsilon)^{k'd}
    \end{equation}
for all $d\in \{0,1,\ldots,\ell \}$ and $\bss^d \in \{-1,1\}^{k'd}$. The case where $d=0$ holds trivially. Assume that \eqref{eq:Ts induction} holds for $d \in \{0,1,\ldots,\ell -1\}$ and $\bss^d  \in \{-1,1\}^{k'd} $.    For any $\bss^{d} \in \{-1,1\}^{k'd} \text{ and } \bss_{d+1} \in \{-1,1\}^{k'}$, let 	$\bss^{d+1}=(\bss^{d},\bss_{d+1}) \in \{-1,1\}^{k'(d+1)}$.
On the event $\cap^{\ell-1}_{d=0} \mathcal{F}(i,d)$, we also have that
\begin{equation}
\begin{aligned}
      \left|\mathcal{N}_{G\cap G'(C_{k})}\left(T_{\bss^{d}}^{d}(i, G) \cap T_{\bss^{d}}^{d}\left(i, G'\right)\right)  \cap \mathcal{S}_{G_0(C_k)}(i,d+1) \right| &\geq (1-4\epsilon)mp \left| \left(T_{\bss^{d}}^{d}(i, G) \cap T_{\bss^{d}}^{d}\left(i, G'\right)\right) \right|
\end{aligned}
\end{equation}
because of the condition $(A1)$. For any $j \in \mathcal{N}_{G\cap G'(C_{k})}\left(T_{\bss^{d}}^{d}(i, G)  \cap T_{\bss^{d}}^{d}\left(i, G'\right)\right) \cap \mathcal{S}_{G_0(C_k)}(i,d+1)$, we have
\begin{equation}
    \P_{0}\left\{j \in T_{\bss^{d+1}}^{d+1}(i, G) \cap T_{\bss^{d+1}}^{d+1}\left(i, G'\right) \right\} \geq \frac{(1-\epsilon)^{k'}}{2^{k'}}\left(1-\frac{2Ck'}{\sqrt{mq}}\right) \geq  \frac{(1-\epsilon)^{k'}}{2^{k'}}\cdot \left(1-\epsilon\right)
\end{equation}
for a universal constant $C$ and $\alpha \in (0,\alpha_{0})$ because of Lemma \ref{lem:correct signature} and \eqref{eq:signature} with $(1-x)^{n}\geq 1-nx$. The last inequality holds from $mq\geq Qk'^{2}$ for a sufficiently large constant $Q$ depending on $\epsilon$. Therefore, by Hoeffding's inequality (Lemma \ref{lem:Hoeffding}), we obtain
\begin{equation}
\begin{aligned}
   &\P_{0}\left\{ \left| T_{\bss^{d+1}}^{d+1}(i, G) \cap T_{\bss^{d+1}}^{d+1 } (i,G')\right | \geq \left(\frac{1-6\epsilon}{2}\right)^{k'}mp \left| T_{\bss^{d}}^{d}(i, G) \cap T_{\bss^{d}}^{d}\left(i, G'\right) \right|  \right\} \\
   & \geq 1-\exp \left(-K' \left(\frac{1-6\epsilon}{2}\right)^{2k'} mp \left| T_{\bss^{d}}^{d}(i, G) \cap T_{\bss^{d}}^{d}\left(i, G'\right) \right| \right)    \\
   & \geq 1-m^{-D-1}
\end{aligned}
  \end{equation}
for a positive constant $K'$ which depend on $\epsilon$. The last inequality holds since $k' \leq k_1 \log mp$ for a constant $k_1$ depending on $\epsilon,\delta$ such that $\left(\frac{1-6\epsilon}{2}\right)^{2k_1 \log mp} \geq (\log m)^{-0.5\delta/(1+\delta)} $ and $m\geq m_0(D,\delta,\epsilon)$.
Let $\alpha_{1}= \alpha_{0} \wedge \epsilon$. Taking a union bound over $d\in \{0,1,\ldots,\ell-1 \}$ and $\bss^d \in \{-1,1\}^{k'd}$, we have 
\begin{equation}
    \left|T_{\bss^{\ell}}^{\ell}(i, G) \cap T_{\bss^{\ell}}^{\ell}\left(i, G'\right)\right| \geq \left(\frac{mp}{2^{k'}}\right)^{\ell}(1-6\epsilon)^{k'\ell}
\end{equation}
with probability at least $1-m^{-D}$.
\end{proof}

\section{Proof of Lemma \ref{lem:wrong final}}\label{sec:proof of wrong final}

Consider a random subset $J$ uniformly drawn  from $\{-1,1\}^{k'\ell}$ with cardinality $2w$, where $w$ is an integer satisfying $2w > \log m$.   We define the event $\mathcal{H}$ to be true if and only if there exist positive constants $K$ and $K'$ such that the following conditions hold:
	\begin{align}
			&\label{eq:L max}\left(\max _{\bss^\ell \in J}\left|L_{\bss^\ell}\left(i, i'\right)\right|\right) \vee\left(\max _{\bss^\ell \in J}\left|L'_{\bss^\ell}\left(i, i'\right)\right|\right) \leq K \frac{(m p)^{\ell}}{2^{k'\ell}}, \\
			&\label{eq: L sum}\left(\sum_{\bss^\ell \in J}\left|L_{\bss^\ell}\left(i, i'\right)\right|\right) \vee\left(\sum_{\bss^\ell \in J}\left|L'_{\bss^\ell}\left(i, i'\right)\right|\right) \leq K \frac{(m p)^{\ell}}{2^{k'\ell}}\left(\frac{w}{m p(1-\alpha)^{\ell-1}}+\sqrt{w \log m}\right),
		\end{align}
		where
		\begin{align}
			&\label{eq:Ls}L_{\bss^\ell}\left(i, i'\right):=T^{\ell}_{\bss^\ell}(i) \cap \mathcal{B}_{G_{0}(C_{k})}\left(i', \ell\right) , \\
			&\label{eq:Ls'}L'_{\bss^\ell}\left(i, i'\right):=T'^{\ell}_{\bss^\ell}\left(i'\right) \cap \mathcal{B}_{G_{0}(C_{k})}(i, \ell) 
		\end{align}
  and 
  	\begin{equation}\label{eq:zeta max}
			\left(\max _{\bss^\ell \in J}\left|\zeta_{\bss^\ell}\left(i, i'\right)\right|\right) \vee\left(\max _{\bss^\ell \in J}\left|\zeta'_{\bss^\ell}\left(i, i'\right)\right|\right) \leq K' \frac{(m p)^{(\ell+1) / 2}}{2^{k'\ell / 2}} \sqrt{\log m},
		\end{equation}
		where
		\begin{equation}\label{eq:zeta def}
			\zeta_{\bss^\ell}\left(i, i'\right):=\sum_{j \in L_{\bss^\ell}\left(i, i'\right)}\left(\operatorname{deg}_{G(C_{k})}(j)-1-m p\right) \quad \text { and } \quad \zeta'_{\bss^\ell}\left(i, i'\right):=\sum_{j \in L_{\bss^\ell}'\left(i, i'\right)}\left(\operatorname{deg}_{G'(C_{k})}(j)-1-m p\right) .
		\end{equation}
Let us define the four conditions $(D1)-(D4)$ as follows.
\begin{flalign*}
    &(D1) \; G_{0}\left(\mathcal{B}_{G_{0}(C_{k})}(i, \ell) \cup \mathcal{B}_{G_{0}(C_{k})}\left(i', \ell\right)\right) \text{ forms a tree or a forest of two trees.}\\
    &(D2) \; \left|\mathcal{B}_{G_{0}(C_{k})}(i, \ell)\right|+\left|\mathcal{B}_{G_{0}(C_{k})}\left(i', \ell\right)\right| \leq m^{0.1}.\\
    &(D3) \; \text{Event } \mathcal{H} \text{ holds.}\\
    &(D4) \; \left|T^{\ell}_{\bss^{\ell}}(i)\right| \wedge\left|T'^{\ell}_{\bss^{\ell}}\left(i'\right)\right| \geq(m p / 2^{k'})^{\ell}(1-6\epsilon)^{k'\ell} ,\text{ for a sufficiently small positive } \epsilon \text{ and for all } \bss^\ell \in\{-1,1\}^{k'\ell}.  &&
\end{flalign*} 
We will show that there exists a sufficiently large vertex set $\mathcal{I} \subset C_k$ such that conditions $(D1)-(D4)$ hold for every pair $i,i'\in \mathcal{I}$, $i\neq i'$, and for all such wrong pairs \eqref{eq:wrong result} holds with high probability. The conditions $(D1),(D2)$ and $(D4)$ will be shown to be held with high probability by using the analysis from Sec. \ref{sec:vertex classes} and Sec.\ref{sec:proof of B5}, and the condition $(D3)$ will be proven in Sec. \ref{sec:proof of D3}.


\subsection{Proof of Lemma \ref{lem:wrong final}}\label{sec:proof of wrong final-1}

We consider a fixed distinct pair $i,i'\in C_k$ and a subset $J$ uniformly drawn  from $\{-1,1\}^{k'\ell}$ with $|J|=2w$. In addition, we condition on the subgraphs 
$$	G_{0}\left(\mathcal{B}_{G_{0}(C_{k})}(i, \ell) \cup \mathcal{B}_{G_{0}(C_{k})}\left(i', \ell\right)\right),  G\left(\mathcal{B}_{G_{0}(C_{k})}(i, \ell) \cup \mathcal{B}_{G_{0}}(C_{k})\left(i', \ell\right)\right),  G'\left(\mathcal{B}_{G_{0}(C_{k})}(i, \ell) \cup \mathcal{B}_{G_{0}(C_{k})}\left(i', \ell\right)\right)
	$$ 
as well as every edge between
$$
\mathcal{S}_{G_0(C_k)}(i',\ell) \cap \mathcal{B}_{G_0(C_k)}(i,\ell) \text{ and } \mathcal{S}_{G_0(C_k)}(i',\ell+1) 
$$
 in $G_0(C_k)$, $G(C_k)$ and $G'(C_k)$, and every edge between
  $$
\mathcal{S}_{G_0(C_k)}(i,\ell) \cap \mathcal{B}_{G_0(C_k)}(i',\ell) \text{ and } \mathcal{S}_{G_0(C_k)}(i,\ell+1) 
$$
 in graphs $G_0(C_k)$, $G(C_k)$ and $G'(C_k)$
such that conditions $(D1)-(D4)$ hold. Under these conditions, $\zeta_{\bss^\ell}(i,i')$ and $\zeta_{\bss^\ell}(i',i)$ are deterministic quantities defined in \eqref{eq:zeta def}. Furthermore, we assume that the remaining edges are randomly generated.
Denote the conditional probability and expectation as $\overline{\mathbb{P}}$ and $\overline{\mathbb{E}}$, respectively. By conditions $(D1)$ and $(D2)$,  we can see that $\operatorname{deg}_{G(C_{k})}(j)-1$  follows a binomial distribution with parameters $\bar{m}$ and $p$, where 
$$
\bar{m}:=m-\left|\mathcal{B}_{G_{0}(C_{k})}(i, \ell) \cup \mathcal{B}_{G_{0}(C_{k})}(i', \ell)\right| \geq m-m^{0.1},
$$
for any $j\in \mathcal{S}_{G_0(C_{k})}(i,\ell) \backslash \mathcal{B}_{G_0(C_{k})}(i',\ell)$. The same statement holds for $\operatorname{deg}_{G'(C_{k})}(j)-1$. Furthermore, it is independent for each $j\in \mathcal{S}_{G_0(C_{k})}(i,\ell) \backslash \mathcal{B}_{G_0(C_{k})}(i',\ell)$. Since vertices $i$ and $i'$ are fixed, we omit $(i,i')$ from $\zeta_{\bss^\ell}(i,i')$ and $ \zeta'_{\bss^\ell}(i,i')$ defined in \eqref{eq:zeta def} for notational simplicity. We also use $L_{\bss^\ell}$ and $L'_{\bss^\ell}$ for the quantities defined in \eqref{eq:Ls} and \eqref{eq:Ls'}.
	
\begin{lemma}[Similar to Lemma $6.3$ in \citep{MRT21a}]\label{lem:entrywise wrong}
     For each $\bss^\ell \in J$, we can express the difference between $f_{\bss^\ell}(i)$ and $f'_{\bss^\ell}\left(i'\right)$ as
    $$
    f_{\bss^\ell}(i)-f'_{\bss^\ell}\left(i'\right)=Z_{\bss^\ell}+\Delta_{\bss^\ell}
    $$
    where $Z_{\bss^\ell}$ is a random variable and $\Delta_{\bss^\ell}$  is a deterministic value that satisfy the following:
    \begin{itemize}
        \item $\overline{\mathbb{E}}\left[Z_{\bss^\ell}\right]=0$
        \item $\mathrm{v}_{\bss^\ell}(i)+\mathrm{v}'_{\bss^\ell}\left(i'\right)-2 m^{0.2} p-p(1-p) \bar{m}\left(\left|L_{\bss^\ell}\right|+\left|L'_{\bss^\ell}\right|\right) \leq \overline{\mathbb{E}}\left[Z_{\bss^\ell}^{2}\right] \leq \mathrm{v}_{\bss^\ell}(i)+\mathrm{v}'_{\bss^\ell}\left(i'\right)$;
        \item $\overline{\mathbb{P}}\left\{\left|Z_{\bss^\ell}\right| \geq t\right\} \leq 2 \exp \left(\frac{-t^{2} / 2}{E\left[Z_{\bss^\ell}^{2}\right]+t / 3}\right)$
        \item $\left|\Delta_{\bss^\ell}\right| \leq\left|\zeta_{\bss^\ell}\right|+\left|\zeta'_{\bss^\ell}\right|+2 m^{0.2} p$
    \end{itemize}		
  Furthermore, the random variables $Z_{\bss^\ell}$ are conditionally independent for different $\bss^\ell \in J$.
\end{lemma}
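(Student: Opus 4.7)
The plan is to mirror the proof of Lemma \ref{lem:entrywise difference between signatures} (the correct-pair analog established above) and of Lemma 6.3 in \citet{MRT21a}, exploiting the key structural simplification that, for a wrong pair $i\ne i'$, the degree-sum contributions to $f_{\bss^\ell}(i)$ and $f'_{\bss^\ell}(i')$ are built from disjoint pools of free $G_0$-edges on the $G$ and $G'$ sides, so the variance decouples. Concretely, I would split
\[
T^\ell_{\bss^\ell}(i)=L_{\bss^\ell}\sqcup\tilde T_{\bss^\ell},\qquad T'^\ell_{\bss^\ell}(i')=L'_{\bss^\ell}\sqcup\tilde T'_{\bss^\ell},
\]
where $\tilde T_{\bss^\ell}:=T^\ell_{\bss^\ell}(i)\setminus L_{\bss^\ell}\subset \mathcal B_{G_0(C_k)}(i,\ell)\setminus\mathcal B_{G_0(C_k)}(i',\ell)$ (and symmetrically for $\tilde T'_{\bss^\ell}$), and set $S:=C_k\setminus(\mathcal B_{G_0(C_k)}(i,\ell)\cup\mathcal B_{G_0(C_k)}(i',\ell))$, $\bar m:=|S|$. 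Under $\overline{\mathbb P}$ the contributions of $L_{\bss^\ell}$ and $L'_{\bss^\ell}$ are deterministic and collapse exactly to $\zeta_{\bss^\ell}-\zeta'_{\bss^\ell}$, while for each $j\in\tilde T_{\bss^\ell}$ the only random part of $\deg_{G(C_k)}(j)$ is an independent $\mathrm{Bin}(\bar m,p)$ count of $G$-edges from $j$ to $S$ (analogously for $G'$ on $\tilde T'_{\bss^\ell}$). Re-centering these random sums by $\bar mp$ and gathering the deterministic residuals yields the decomposition $f_{\bss^\ell}(i)-f'_{\bss^\ell}(i')=Z_{\bss^\ell}+\Delta_{\bss^\ell}$, with $Z_{\bss^\ell}$ a centered sum of independent $[-p,1-p]$-valued Bernoulli deviations and $\Delta_{\bss^\ell}$ carrying $\zeta_{\bss^\ell}-\zeta'_{\bss^\ell}$ together with the $(m-\bar m)p$ centering correction and, in the single-tree case of $(D1)$, the fixed $G$-edges joining $\tilde T$ to $\mathcal B_{G_0(C_k)}(i',\ell)$.

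The four bulleted properties then fall out by direct computation. Mean zero is immediate. Given the independence established in the final paragraph, $\overline{\mathbb E}[Z_{\bss^\ell}^2]=p(1-p)\bar m(|\tilde T_{\bss^\ell}|+|\tilde T'_{\bss^\ell}|)$, so $\bar m\le m$ and $|\tilde T|\le|T^\ell|$ give the upper bound $\mathrm v_{\bss^\ell}(i)+\mathrm v'_{\bss^\ell}(i')$, and the algebraic identity
\[
\bar m p(1-p)(|\tilde T|+|\tilde T'|)=\mathrm v_{\bss^\ell}(i)+\mathrm v'_{\bss^\ell}(i')-(m-\bar m)p(1-p)(|T^\ell|+|T'^\ell|)-p(1-p)\bar m(|L_{\bss^\ell}|+|L'_{\bss^\ell}|)
\]
combined with the bound $(m-\bar m)(|T^\ell|+|T'^\ell|)\le m^{0.1}\cdot m^{0.1}=m^{0.2}$ from $(D2)$ yields the lower bound. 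The same $(D2)$ size estimate, together with the fact that $G_0$ restricted to the union of balls is a tree or forest (condition $(D1)$) and therefore carries at most $m^{0.1}$ internal edges, controls the residual piece of $\Delta_{\bss^\ell}$ by $2m^{0.2}p$. The sub-exponential tail is Bernstein's inequality (Lemma \ref{lem:bernstein's inequality}) applied to $Z_{\bss^\ell}$, with $\overline{\mathbb E}[Z_{\bss^\ell}^2]$ in the denominator as stated.

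The heart of the proof, and the main obstacle, is verifying the \emph{joint} conditional independence of $\{Z_{\bss^\ell}\}_{\bss^\ell\in J}$ across distinct $\bss^\ell$ and across the $G/G'$ sides, which underpins both the variance identity and the Bernstein bound. Pairwise disjointness of the partition-tree leaves gives that $\{\tilde T_{\bss^\ell}\}_{\bss^\ell}$ and $\{\tilde T'_{\bss^\ell}\}_{\bss^\ell}$ are each collections of disjoint source vertices, so the free edges entering $Z_{\bss^\ell}$ and $Z_{\bst^\ell}$ from the $G$-side (respectively $G'$-side) live on disjoint vertex pairs. For the cross-independence of the $G$-side and $G'$-side, I would invoke $(D1)$: any potential $G_0$-edge $(j,j')$ with $j\in\bigcup_{\bss^\ell}\tilde T_{\bss^\ell}\subset \mathcal B_{G_0(C_k)}(i,\ell)\setminus\mathcal B_{G_0(C_k)}(i',\ell)$ and $j'\in\bigcup_{\bss^\ell}\tilde T'_{\bss^\ell}\subset \mathcal B_{G_0(C_k)}(i',\ell)\setminus\mathcal B_{G_0(C_k)}(i,\ell)$ would close a cycle in the single-tree case or bridge the two components of the forest case, both forbidden, so such an edge is absent from $G_0$; the remaining candidate cross-pairs $(j,x)$ for $G$-randomness and $(j',y)$ for $G'$-randomness with $x,y\in S$ are automatically distinct as edges, since $j\in\mathcal B(i,\ell)$, $j'\in\mathcal B(i',\ell)$, $x,y\in S$ lie in pairwise disjoint vertex sets. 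Distinct $G_0$-edges are independent Bernoulli$\bigl(p/(1-\alpha)\bigr)$, and edge-wise independent $G/G'$ subsampling preserves this, yielding the joint independence of the collection $\{Z_{\bss^\ell}\}_{\bss^\ell\in J}$ under $\overline{\mathbb P}$.
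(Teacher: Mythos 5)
Your decomposition matches the paper's exactly: split $T^\ell_{\bss^\ell}(i)=L_{\bss^\ell}\sqcup\tilde T_{\bss^\ell}$ with $\tilde T_{\bss^\ell}:=T^\ell_{\bss^\ell}(i)\setminus\mathcal B_{G_0(C_k)}(i',\ell)$, re-center the degree sums at $\bar m p$, and read off the mean, variance, Bernstein tail and independence from the disjointness of the source sets. The variance identity you write down and the $(D2)$-based $(m-\bar m)p\le m^{0.2}p$ control are both correct, and the cross-$G/G'$ independence argument (distinct $G_0$-edges because $\tilde T$, $\tilde T'$, and $S$ are pairwise disjoint) is sound.

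However, your proposal contradicts itself on one delicate point, and that contradiction hides a genuine gap. In the $\Delta$-accounting you say that fixed cross-ball $G$-edges from $\tilde T$ into $\mathcal B_{G_0(C_k)}(i',\ell)$ must be carried by $\Delta_{\bss^\ell}$; in the independence paragraph you assert that any $G_0$-edge $(j,j')$ with $j\in\bigcup\tilde T$ and $j'\in\bigcup\tilde T'$ ``would close a cycle in the single-tree case or bridge the two components of the forest case, both forbidden.'' The second claim is false. If $\operatorname{dist}_{G_0(C_k)}(i,i')=2\ell+1$, the union $\mathcal B_{G_0(C_k)}(i,\ell)\cup\mathcal B_{G_0(C_k)}(i',\ell)$ is a \emph{single} tree — which $(D1)$ allows — and the tree edge between $j^*$, the vertex at distance $\ell$ from $i$ on the $i\to i'$ path, and its successor $j'^*$ at distance $\ell+1$ satisfies $j^*\in\mathcal S_{G_0(C_k)}(i,\ell)\setminus\mathcal B_{G_0(C_k)}(i',\ell)$ and $j'^*\in\mathcal S_{G_0(C_k)}(i',\ell)\setminus\mathcal B_{G_0(C_k)}(i,\ell)$; far from creating a cycle, this edge is precisely what makes the union connected. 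So such edges do occur, and they must be folded into $\Delta_{\bss^\ell}$. But then your justification of the fourth bullet, ``at most $m^{0.1}$ internal tree edges controls the residual by $2m^{0.2}p$,'' does not follow: a fixed-edge degree offset is an integer $\ge 1$, not a quantity proportional to $p$, while $2m^{0.2}p\ll1$ in the regime $mp\le m^{1/20}$. To close the gap you need the sharper structural observation that \emph{only} the boundary vertex of the $i\to i'$ tree path on each side can acquire such a cross-ball edge, and it acquires at most one; this bounds the extra offset by an additive $2$, which is still not $\le 2m^{0.2}p$ but is harmless downstream since $\Delta_{\bss^\ell}$ is always normalized by $\sqrt{\mathrm v_{\bss^\ell}(i)+\mathrm v'_{\bss^\ell}(i')}\gg1$. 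As written, neither branch of your argument carries the fourth bullet.
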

\begin{proof}
         The proof of Lemma \ref{lem:entrywise wrong} follows a similar approach to that of Lemma 6.3 in \citep{MRT21a}, so we omit the detailed proof here. The key observation is that we can express $f_{\bss^\ell}(i)-f'_{\bss^\ell}(i')$ as the sum of $Z_{\bss^\ell}$ and $\Delta_{\bss^\ell}$, where
$$
\begin{aligned}
Z_{\bss^\ell}:= & \sum_{j \in T^{\ell}_{\bss^\ell}(i) \backslash \mathcal{B}_{G_{0}(C_{k})}(i',\ell)}\left(\operatorname{deg}_{G(C_{k})}(j)-1-\bar{m} p\right) \\
& -\sum_{j \in\ T'^{\ell}_{\bss^\ell}(i')\backslash \mathcal{B}_{G_{0}(C_{k})}(i,\ell)  }\left(\operatorname{deg}_{G'(C_{k})}(j)-1-\bar{m} p\right)
\end{aligned}
$$
and
$$
\Delta_{\bss^\ell}:=\zeta_{\bss^\ell}-\zeta'_{\bss^\ell}+(\bar{m}-m) p\left(\left|T^{\ell}_{\bss^\ell}(i)\right|-\left|T'^{\ell}_{\bss^\ell}(i')\right|\right) .
$$
\end{proof}

	\begin{lemma}[Lower bound on the normalized distance of sparsified signature vectors for a wrong pair]\label{lem:wrong}
		For any constants $C, D, K, K',\delta>0$, there exist constants $m_{0},\epsilon>0$ with the properties below. Let $m\geq m_0$, $\alpha \in(0, \epsilon)$, $w \geq\left\lfloor(\log m)^{5}\right\rfloor $, and
    \beq
    \begin{split}\nonumber
    (\log m)^{1+\delta} \leq m \frac{p}{1-\alpha} \leq m^{1/20}, \; k'\ell \leq C \log \log m.
    \end{split}
    \eeq
		Furthermore, let conditions $(D1)-(D4)$ hold with constants $K, K', \epsilon>0$, and consider a subset $J$ which has cardinality $2w$. Then, with a conditional probability larger than $1-m^{-D}$, the following statement holds:
		$$
		\sum_{\bss^\ell \in J} \frac{\left(f_{\bss^\ell}(i)-f'_{\bss^\ell}\left(i'\right)\right)^{2}}{\mathrm{v}_{\bss^\ell}(i)+\mathrm{v}'_{\bss^\ell}\left(i'\right)} \geq 2 w\left(1-\frac{1}{(\log m)^{0.9}}\right) .
		$$
	\end{lemma}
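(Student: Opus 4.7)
The plan is to mirror the structure of the proof of Lemma \ref{lem:correct}, but with the directions of the key inequalities reversed since we now want a lower bound for a wrong pair. Set
$$X_{\bss^\ell} \;:=\; \frac{f_{\bss^\ell}(i)-f'_{\bss^\ell}(i')}{\sqrt{\mathrm{v}_{\bss^\ell}(i)+\mathrm{v}'_{\bss^\ell}(i')}} \;=\; \frac{Z_{\bss^\ell}+\Delta_{\bss^\ell}}{\sqrt{\mathrm{v}_{\bss^\ell}(i)+\mathrm{v}'_{\bss^\ell}(i')}},$$
where $Z_{\bss^\ell}$ and $\Delta_{\bss^\ell}$ come from Lemma \ref{lem:entrywise wrong} and the $Z_{\bss^\ell}$ are conditionally independent with $\overline{\mathbb{E}}[Z_{\bss^\ell}]=0$. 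The target sum decomposes as $\sum_{\bss^\ell\in J} X_{\bss^\ell}^2 = \sum Z_{\bss^\ell}^2/(\mathrm{v}_{\bss^\ell}+\mathrm{v}'_{\bss^\ell}) + 2\sum Z_{\bss^\ell}\Delta_{\bss^\ell}/(\mathrm{v}_{\bss^\ell}+\mathrm{v}'_{\bss^\ell}) + \sum \Delta_{\bss^\ell}^2/(\mathrm{v}_{\bss^\ell}+\mathrm{v}'_{\bss^\ell})$. The third summand is nonnegative and can simply be discarded, so it suffices to lower-bound the first summand and to show that the cross term is small.

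For the first summand, the lower bound on $\overline{\mathbb{E}}[Z_{\bss^\ell}^2]$ from Lemma \ref{lem:entrywise wrong} gives
$$\overline{\mathbb{E}}[X_{\bss^\ell}^2] \;\geq\; 1 - \frac{2m^{0.2}p + p(1-p)\bar{m}(|L_{\bss^\ell}|+|L'_{\bss^\ell}|)}{\mathrm{v}_{\bss^\ell}(i)+\mathrm{v}'_{\bss^\ell}(i')}.$$
Condition $(D4)$ gives $\mathrm{v}_{\bss^\ell}(i)+\mathrm{v}'_{\bss^\ell}(i') \geq 2mp(1-p)(mp/2^{k'})^\ell (1-6\epsilon)^{k'\ell}$, and summing over $\bss^\ell\in J$ while applying the sum bound \eqref{eq: L sum} from $(D3)$ yields
$$\sum_{\bss^\ell\in J}\overline{\mathbb{E}}[X_{\bss^\ell}^2] \;\geq\; 2w - \frac{K}{(1-6\epsilon)^{k'\ell}}\!\left(\frac{w}{mp(1-\alpha)^{\ell-1}} + \sqrt{w\log m}\right) - o(1).$$
Under the hypotheses $w\geq (\log m)^5$, $mp\geq (\log m)^{1.1}$, $k'\ell \leq C\log\log m$, and $\alpha,\epsilon$ sufficiently small, each factor $(1-6\epsilon)^{-k'\ell}$ and $(1-\alpha)^{-(\ell-1)}$ is at most $\mathrm{polylog}(\log m)$, and the subtracted quantity is $o(w/(\log m)^{0.9})$. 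Thus $\sum \overline{\mathbb{E}}[X_{\bss^\ell}^2] \geq 2w(1-\tfrac{1}{2(\log m)^{0.9}})$.

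It remains to concentrate $\sum Z_{\bss^\ell}^2/(\mathrm{v}+\mathrm{v}')$ and the cross term $\sum Z_{\bss^\ell}\Delta_{\bss^\ell}/(\mathrm{v}+\mathrm{v}')$ around their means. Thanks to the conditional independence of $Z_{\bss^\ell}$ and the subexponential tail from Lemma \ref{lem:entrywise wrong}, applying a Bernstein-type inequality (as in Lemma \ref{lem:hoeffding's with truncation}, used analogously in the proof of Lemma \ref{lem:correct}) to each sum gives deviations of order at most $\sqrt{w}(\log m)^{3/2}$ plus a term scaling with $\max_{\bss^\ell}|\Delta_{\bss^\ell}|/\sqrt{\mathrm{v}+\mathrm{v}'}$; the latter is controlled by the max bound \eqref{eq:zeta max} in $(D3)$ together with the $(D4)$ lower bound on $\mathrm{v}+\mathrm{v}'$, giving $\max_{\bss^\ell}|\Delta_{\bss^\ell}|/\sqrt{\mathrm{v}+\mathrm{v}'} = O(\sqrt{\log m}/(1-6\epsilon)^{k'\ell/2})$. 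All these deviation terms are $o(w/(\log m)^{0.9})$ under the stated hypotheses, and therefore with conditional probability at least $1-m^{-D}$,
$$\sum_{\bss^\ell\in J} X_{\bss^\ell}^2 \;\geq\; 2w\left(1-\frac{1}{(\log m)^{0.9}}\right).$$
The main obstacle is the delicate bookkeeping in the error terms: one must simultaneously control (i) the polylog-blow-ups $(1-6\epsilon)^{-k'\ell}$ and $(1-\alpha)^{-(\ell-1)}$ which force $\epsilon$ and $\alpha$ to be chosen sufficiently small depending on $C$, and (ii) the cross-term concentration, which requires exploiting both \eqref{eq:zeta max} and \eqref{eq: L sum} from condition $(D3)$ — the sum bound is essential because a term-by-term estimate would be too lossy to reach the $w(\log m)^{-0.9}$ threshold.
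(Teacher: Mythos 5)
Your proof is correct and follows essentially the same route as the paper: same normalization $X_{\bss^\ell}$, same use of Lemma~\ref{lem:entrywise wrong}, conditions $(D3)$--$(D4)$, and the same Bernstein-type concentration from Lemma~\ref{lem:hoeffding's with truncation}. The only cosmetic difference is that you manually expand $X_{\bss^\ell}^2=(Z_{\bss^\ell}+\Delta_{\bss^\ell})^2/(\mathrm{v}_{\bss^\ell}+\mathrm{v}'_{\bss^\ell})$ into three pieces and concentrate the $Z^2$ and cross terms by hand, whereas the paper applies Lemma~\ref{lem:hoeffding's with truncation} directly to $\sum X_{\bss^\ell}^2$ (which packages exactly that decomposition into its statement); both are the same argument.
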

 Lemma \ref{lem:wrong} will be proven in Section \ref{sec:proof of lemma wrong}.
\begin{proof}[Proof of Lemma \ref{lem:wrong final}]
    First, we will establish the existence of a subset $\mathcal{I} \subset C_k$ of cardinality $|\mathcal{I}|>m-m^{1-c}$ such that for any $i, i' \in \mathcal{I}$, $i\neq i'$, conditions $(D1)-(D4)$ are satisfied with a probability of at least $1 - m^{-D-1}$.
    \begin{itemize}
	\item Using Proposition \ref{prop:type}, we can deduce that with a probability of at least $1 - m^{-D-2}$ (by choosing a sufficiently large constant $R$), there exists a subset $\operatorname{Typ}_{G_0(C_k)}(3\ell,\epsilon)$ such that $|\operatorname{Typ}_{G_0(C_k)}(3\ell,\epsilon)| \geq m - m^{1-c_1}$, where $c_1 \in (0, 0.5)$ is a constant. For any distinct vertices $i$ and $i'$ in $\operatorname{Typ}_{G_0(C_k)}(3\ell,\epsilon)$, condition $(D1)$ holds because if $\operatorname{dist}_{G_0(C_k)}(i,i')\leq 2\ell $, $G_{0}\left(\mathcal{B}_{G_{0}(C_{k})}(i, \ell) \cup \mathcal{B}_{G_{0}(C_{k})}\left(i', \ell\right)\right)$ becomes a forest, and if $\operatorname{dist}_{G_0(C_k)}(i,i') > 2\ell $, $G_{0}\left(\mathcal{B}_{G_{0}(C_{k})}(i, \ell) \cup \mathcal{B}_{G_{0}(C_{k})}\left(i', \ell\right)\right)$ becomes two trees.
	
        \item By Lemma \ref{lem:Sizes of neighborhoods and their intersections}, we have $|\mathcal{B}_{G_{0}(C_{k})}(i,\ell)| =O\left( (mp)^{\ell}\right) \leq \frac{1}{2}m^{0.1}$,  $\forall i\in C_k$ and  $\ell \leq \frac{\log m}{11 \log mp}$, with probability at least $1-m^{-D-2}$. Thus, the condition $(D2)$ holds for all $i\in C_k$ with probability at least $1-m^{-D-2}$.

        \item We can assert that the condition $(D3)$ holds with $\mathcal{I}_1 \subset C_k$ of cardinality at least $m-m^{1-c_2}$ for a constant $c_2\in(0,0.5)$ with probability at least $1-m^{-D-2}$ by Lemma \ref{lem:condition E3}.

        \item By Proposition \ref{prop:type}, $|\operatorname{Typ}_{G_0(C_k)}(\ell,\epsilon)|\geq m-m^{1-c_3}$ for a constant $c_3 \in (0,0.5)$ with probability at least $1-m^{-D-2}$. Thus, the condition $(D4)$ holds for at least $m-m^{1-c_3}$ vertices $i \in C_k$ with probability at least $1-2m^{-D-2}$ by Lemma \ref{prop:common Ts} since  $\left|T^{\ell}_{\bss^{\ell}}(i)\right   | \geq  \left|T_{\bss^{\ell}}^{\ell}(i, G) \cap T_{\bss^{\ell}}^{\ell}\left(i, G'\right)\right|$.
                
    \end{itemize}
The proof is complete by applying Lemma \ref{lem:wrong} .
\end{proof}

\subsection{Proof of Lemma \ref{lem:wrong}}\label{sec:proof of lemma wrong}

\begin{proof}[Proof of Lemma \ref{lem:wrong}]
    For fixed distinct vertices $i,i' \in C_k$, by condition $(D4)$, we have 
\begin{equation}
    |T^{\ell}_{\bss^\ell}(i)| \wedge |T'^{\ell}_{\bss^\ell}(i')| \geq (mp/2^{k'})^{\ell}(1-6\epsilon)^{k'\ell}
\end{equation}
 for any $\bss^\ell \in\{-1,1\}^{k'\ell}$.
	Thus, we obtain
	\begin{equation}\label{eq:vs wrong}
 \begin{aligned}
	    \mathrm{v}_{\bss^\ell}(i) &=m p(1-p)\left|T^{\ell}_{\bss^\ell}(i)\right| \geq m p(1-p)\left(|T^{\ell}_{\bss^\ell}(i)|\wedge |T'^{\ell}_{\bss^\ell}(i') | \right) \\
		& \geq (1-p)\left(\frac{1-6\epsilon}{2}\right)^{k'\ell}(mp)^{\ell+1}.
	\end{aligned}
	\end{equation}
Similarly, we obtain 
 \begin{equation}\label{eq:v's wrong}
 \begin{aligned}
	    \mathrm{v}'_{\bss^\ell}(i') \geq (1-p)\left(\frac{1-6\epsilon}{2}\right)^{k'\ell}(mp)^{\ell+1}.
	\end{aligned}
	\end{equation}
 One can see from Lemma \ref{lem:entrywise wrong} that for a wrong pair of vertices the difference  between the entries of signatures $f_{\bss^\ell}(i)$ and $f'_{\bss^\ell}(i')$ at some $\bss^\ell\in\{-1,1\}^{k'\ell}$ can be decomposed into the  random variable part $Z_{\bss^\ell}$ and the deterministic part $\Delta_{\bss^\ell}$. Define
 $$
	X_{\bss^\ell}:=\frac{f_{\bss^\ell}(i)-f'_{\bss^\ell}\left(i'\right)}{\sqrt{\mathrm{v}_{\bss^\ell}(i)+\mathrm{v}'_{\bss^\ell}\left(i'\right)}}=\frac{Z_{\bss^\ell}+\Delta_{\bss^\ell}}{\sqrt{\mathrm{v}_{\bss^\ell}(i)+\mathrm{v}'_{\bss^\ell}\left(i'\right)}},
	$$
 where $Z_{\bss^\ell}$ and $\Delta_{\bss^\ell}$ are defined in Lemma \ref{lem:entrywise wrong}.	
 
 We will first estimate expectation and variance of $X_{\bss^\ell}$. By Lemma \ref{lem:entrywise wrong}, \eqref{eq:zeta max}, and \eqref{eq:vs wrong}, we can have
	\begin{equation}\label{eq:E_var}
		\left|\overline{\mathbb{E}}\left[X_{\bss^\ell}\right]\right|=\frac{\left|\Delta_{\bss^\ell}\right|}{\sqrt{\mathrm{v}_{\bss^\ell}(i)+\mathrm{v}'_{\bss^\ell}\left(i'\right)}} \leq \frac{2 K' \frac{(m p)^{(\ell+1) / 2}}{2^{k'\ell / 2}} \sqrt{\log m}+2 m^{0.2} p}{\sqrt{2(1-p)(m p)^{\ell+1}((1-6\epsilon)/2)^{k'\ell}}} \leq \frac{6 K' \sqrt{\log m}}{(1-6\epsilon)^{k'\ell / 2}}
	\end{equation}
	
	and
	\begin{equation}\label{eq:Var Xs wrong}
	    \begin{aligned}
		\operatorname{Var}\left(X_{\bss^\ell}\right)=\frac{\overline{\mathbb{E}}\left[Z_{\bss^\ell}^{2}\right]}{\mathrm{v}_{\bss^\ell}(i)+\mathrm{v}'_{\bss^\ell}\left(i'\right)} & \geq \frac{\mathrm{v}_{\bss^\ell}(i)+\mathrm{v}'_{\bss^\ell}\left(i'\right)-2 m^{0.2} p-p(1-p) \bar{m}\left(\left|L_{\bss^\ell}\right|+\left|L'_{\bss^\ell}\right|\right)}{\mathrm{v}_{\bss^\ell}(i)+\mathrm{v}'_{\bss^\ell}\left(i'\right)} \\
		& \geq 1-\frac{2 m^{0.2} p+p(1-p) \bar{m}\left(\left|L_{\bss^\ell}\right|+\left|L'_{\bss^\ell}\right|\right)}{2(1-p)(mp)^{\ell+1}((1-6\epsilon) /2)^{k'\ell}},
	\end{aligned}
	\end{equation}
	for any $\bss^\ell \in J$. Moreover, we have that  $\overline{\mathbb{P}}\left\{\left|X_{\bss^\ell}-\overline{\mathbb{E}}\left[X_{\bss^\ell}\right]\right| \geq t\right\} \leq 2 \exp \left(\frac{-t^{2} / 2}{1+t / 3}\right)$ since $\overline{\mathbb{P}}\left\{\left|Z_{\bss^\ell}\right| \geq t\right\} \leq 2 \exp \left(\frac{-t^{2} / 2}{E\left[Z_{\bss^\ell}^{2}\right]+t / 3}\right)$  and $\operatorname{Var}\left(X_{\bss^\ell}\right)=\frac{\overline{\mathbb{E}}\left[Z_{\bss^\ell}^{2}\right]}{\mathrm{v}_{\bss^\ell}(i)+\mathrm{v}'_{\bss^\ell}\left(i'\right)} \leq 1$ hold from Lemma \ref{lem:entrywise wrong}. Therefore, applying Lemma \ref{lem:hoeffding's with truncation}, we can obtain
	\begin{equation}\label{eq:Xs bound wrong}
	    \sum_{\bss^\ell \in J} X_{\bss^\ell}^{2} \geq \sum_{\bss^\ell \in J} \overline{\mathbb{E}}\left[X_{\bss^\ell}^{2}\right]-C_{1} \sqrt{w}(\log m)^{3 / 2}-C_{1}\left(\max _{\bss^\ell \in J}\left|\overline{\mathbb{E}}\left[X_{\bss^\ell}\right]\right|\right)(\sqrt{w \log m}+\log m)
	\end{equation}
	 with conditional probability at least $1-m^{-D}$ for $J\subset \{-1,1\}^{k'\ell}$ such that $|J|=2w$ and for a sufficiently large constant $C_{1}$ depending on $D$. By \eqref{eq: L sum} in condition $(D3)$, we can get
	\begin{equation}\label{eq:Xs bound wrong-1}
	    \begin{aligned}
		\sum_{\bss^\ell \in J} \overline{\mathbb{E}}\left[X_{\bss^\ell}^{2}\right] & \geq  \sum_{\bss^\ell \in J} \operatorname{Var}(X_{\bss^\ell}) \geq 2 w-\sum_{\bss^\ell \in J} \frac{2 m^{0.2} p+p(1-p) \bar{m}\left(\left|L_{\bss^\ell}\right|+\left|L'_{\bss^\ell}\right|\right)}{2(1-p)(m p)^{\ell+1}((1- 6\epsilon)/2)^{k'\ell}} \\
		& \geq 2 w-m^{-0.7}-\frac{p(1-p) \bar{m}}{2(1-p)(mp)^{\ell+1}((1-6\epsilon)/2)^{k'\ell}} \cdot 2K \frac{(m p)^{\ell}}{2^{k'\ell}}\left(\frac{w}{m p(1-\alpha)^{\ell-1}}+\sqrt{w \log m}\right) \\
		& \geq 2 w\left(1-\frac{K_{1}}{(1-7\epsilon)^{k'\ell} \log m}\right)
	\end{aligned}
	\end{equation}
	for a sufficiently large constant $K_{1}$ depending on $K$, if $w \geq(\log m)^3, \; \alpha<\epsilon$, $mp/(1-\alpha) \geq (\log m)^{1+\delta}$ and $m\geq m_0(\epsilon)$. Finally, by \eqref{eq:E_var}, \eqref{eq:Xs bound wrong} and \eqref{eq:Xs bound wrong-1}, we can obtain
	$$
	\begin{aligned}
		\sum_{\bss^\ell \in J} X_{\bss^\ell}^{2} & \geq 2 w\left(1-\frac{K_{1}}{(1-7\epsilon)^{k'\ell} \log m}\right)-C_{1} \sqrt{w}(\log m)^{3 / 2}-C_{1} \frac{6 K' \sqrt{\log m}}{(1-6\epsilon)^{k'\ell / 2}}(\sqrt{w \log m}+\log m) \\
		& \stackrel{(a)}{\geq} 2 w\left(1-\frac{K_{2}}{(1-7\epsilon)^{k'\ell} \log m}\right) \geq 2w\left(1-\frac{1}{(\log m)^{0.9}}\right).
	\end{aligned}
 $$
	The inequality $(a)$ holds by $w \geq\left\lfloor(\log m)^{5}\right\rfloor$ and the last inequality holds since $k'\ell \leq C \log \log m$ and $\epsilon>0$ is small enough which depends on $C$. This completes the proof.
	\end{proof}

	\subsection{Proof of Condition $(D3)$}\label{sec:proof of D3}
	As in the analysis for correct pairs, we consider the sparsification for a distinct pair of vertices $(i,i')$ in the lemma below (similar to Lemma $6.1$ in \citep{MRT21a}). 
	\begin{lemma}\label{lem:sparsification of wrong}
		 For constants $K_{1}, D>0$, there exists $K>0$ with the properties below. 
		 Consider a random subset $J$ uniformly drawn  from $\{-1,1\}^{k'\ell}$ with cardinality $2w$, where $w$ is an integer satisfying $2w \geq \log m$.   
		 Then, for any distinct vertices $i, i' \in C_{k}$ if $(i, i')$ satisfies that
  \begin{flalign*} 
&(E1) \;  G_{0}\left(\mathcal{B}_{G_{0}(C_{k})}(i, \ell)\right) \text{ and }G_{0}\left(\mathcal{B}_{G_{0}(C_{k})}\left(i', \ell\right)\right) \text{ form trees;}\\
&(E2) \; \left|T^{\ell}_{\bss^\ell}(i)\right| \vee\left|T'^{\ell}_{\bss^\ell}\left(i'\right)\right| \leq 6\left(\frac{m p}{2^{k'}}\right)^{\ell} \text{ for all }\bss^\ell \in\{-1,1\}^{k'\ell}; \\
&(E3) \; \left|\mathcal{B}_{G_{0}(C_{k})}(i, \ell) \cap \mathcal{B}_{G_{0}(C_{k})}\left(i', \ell\right)\right| \leq K_{1}\left(\frac{mp}{1-\alpha}\right)^{\ell-1}, &&
\end{flalign*} 
		then $L_{\bss^\ell}\left(i, i'\right)$ and $L'_{\bss^\ell}\left(i, i'\right)$ defined in \eqref{eq:Ls} and \eqref{eq:Ls'} satisfy \eqref{eq:L max} and \eqref{eq: L sum} with probability at least $1-m^{-D}$.
	\end{lemma}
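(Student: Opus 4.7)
The plan is to dispose of \eqref{eq:L max} deterministically and then reduce \eqref{eq: L sum} to a concentration inequality over the random subset $J$. Since $L_{\bss^\ell}(i,i') \subseteq T^{\ell}_{\bss^\ell}(i)$ and $L'_{\bss^\ell}(i,i') \subseteq T'^{\ell}_{\bss^\ell}(i')$, condition $(E2)$ immediately gives $|L_{\bss^\ell}(i,i')| \vee |L'_{\bss^\ell}(i,i')| \leq 6 (mp/2^{k'})^{\ell}$ for every $\bss^{\ell}$, so any choice of $K \geq 6$ already establishes \eqref{eq:L max}, with no probabilistic content needed.

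For \eqref{eq: L sum}, I would first condition on both $G$ and $G'$, treating the labels $\bss^\ell(v)$ assigned by the partition tree as deterministic. By construction the leaves $\{T^{\ell}_{\bss^\ell}(i)\}_{\bss^\ell}$ partition $\caS_{G(C_k)}(i,\ell)$, so each $v$ in that sphere carries a unique label $\bss^\ell(v)$ and
\[
\sum_{\bss^\ell \in J}|L_{\bss^\ell}(i,i')| \;=\; \sum_{v\in \caV} \indi\{\bss^\ell(v)\in J\}, \qquad \caV:= \caS_{G(C_k)}(i,\ell)\cap \caB_{G_0(C_k)}(i',\ell).
\]
Using $\caS_{G(C_k)}(i,\ell)\subseteq \caB_{G_0(C_k)}(i,\ell)$ together with $(E3)$ gives $|\caV|\leq N := K_1(mp/(1-\alpha))^{\ell-1}$. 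Since $J$ is uniformly drawn of size $2w$ among $2^{k'\ell}$ labels, each indicator above has mean $2w/2^{k'\ell}$, and the indicators are negatively associated (sampling without replacement), so Bernstein-type upper-tail bounds apply. The expectation of the sum is at most $2Nw/2^{k'\ell}\leq 2K_1 w(mp)^{\ell-1}/[2^{k'\ell}(1-\alpha)^{\ell-1}]$, which matches the first term in \eqref{eq: L sum} up to the constant $K$, while the Bernstein deviation of order $\sqrt{Nw/2^{k'\ell}\cdot \log m}+\log m$ is, under the parameter regime of the surrounding arguments ($mp\geq (\log m)^{1.1}$, $k'\ell = O(\log\log m)$, $\ell = O(\log\log m)$), easily absorbed into the second term $K(mp)^\ell\sqrt{w\log m}/2^{k'\ell}$. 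An identical argument for $L'_{\bss^\ell}(i,i')$ with $G$ and $G'$ swapped, followed by a union bound over the two sums, yields \eqref{eq: L sum} with the stated probability $1-m^{-D}$.

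The main obstacle is technical rather than conceptual: one needs to justify the Bernstein upper tail under without-replacement sampling of $J$. This can be handled either by invoking the negative association of indicators of a uniform random subset (which preserves Chernoff/Bernstein upper tails) or by a direct Hoeffding--Serfling-type inequality. Once either tool is in hand, what remains is the arithmetic check that $\sqrt{Nw/2^{k'\ell}\cdot\log m}$ is dominated by $(mp)^\ell\sqrt{w\log m}/2^{k'\ell}$, which reduces to $(mp)^{\ell+1}(1-\alpha)^{\ell-1}\gg 2^{k'\ell}$ and holds comfortably thanks to the lower bound on $mp$ and the logarithmic upper bound on $k'\ell$.
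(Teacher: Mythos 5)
Your disposal of \eqref{eq:L max} via $(E2)$ is correct and identical to the paper. The treatment of \eqref{eq: L sum}, however, contains a genuine gap. You rewrite $\sum_{\bss^\ell \in J}\lvert L_{\bss^\ell}(i,i')\rvert = \sum_{v\in \caV}\indi\{\bss^\ell(v)\in J\}$ and then claim that these per-vertex indicators are negatively associated so that a Bernstein bound with variance proxy $\lvert\caV\rvert \cdot 2w/2^{k'\ell}\leq Nw/2^{k'\ell}$ applies. That NA claim is false: two vertices $v,v'\in\caV$ with the same partition-tree label carry \emph{identical} indicators $\indi\{\bss^\ell(v)\in J\}=\indi\{\bss^\ell(v')\in J\}$, i.e.\ they are perfectly positively correlated, which contradicts negative association. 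The quantity you are bounding is really $\sum_{\bss^\ell} a_{\bss^\ell}\indi\{\bss^\ell\in J\}$ with weights $a_{\bss^\ell}=\lvert L_{\bss^\ell}(i,i')\rvert$, and under either NA of the label indicators or the paper's Lemma~\ref{lem:bernstein's inequality for sampling without replacement}, the controlling variance is $\tfrac{2w}{2^{k'\ell}}\sum_{\bss^\ell}a_{\bss^\ell}^2$, not $\tfrac{2w}{2^{k'\ell}}\sum_{\bss^\ell}a_{\bss^\ell}$; the range term also picks up a factor $\max_{\bss^\ell}a_{\bss^\ell}$. Because here $\max_{\bss^\ell}a_{\bss^\ell}$ can be as large as $6(mp/2^{k'})^{\ell}$, $\sum a_{\bss^\ell}^2$ can exceed $N$ by a huge factor, so the deviation of order $\sqrt{Nw\log m/2^{k'\ell}}+\log m$ that you quote is not justified.

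The final inequality \eqref{eq: L sum} is still true, but via different arithmetic: using $\sum a_{\bss^\ell}^2 \leq (\max_{\bss^\ell}a_{\bss^\ell})\sum_{\bss^\ell}a_{\bss^\ell} \leq 6(mp/2^{k'})^{\ell}\cdot K_1(mp/(1-\alpha))^{\ell-1}$ the Bernstein fluctuation becomes $O\!\big((mp)^{\ell}2^{-k'\ell}\sqrt{w\log m}\big)$ and the range term $O\!\big((mp)^{\ell}2^{-k'\ell}\log m\big)$, both absorbed by the second term of \eqref{eq: L sum}, which is exactly what the paper does in its display \eqref{eq:as}. The domination condition you claim to need, namely $(mp)^{\ell+1}(1-\alpha)^{\ell-1}\gg 2^{k'\ell}$, is what you get from the too-small variance proxy; with the correct proxy the condition becomes $mp\gtrsim (1-\alpha)^{-(\ell-1)}$, which happens to hold under the stated parameter regime but is a different check from the one you performed. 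So the proof needs to be repaired by: (1) dropping the NA-of-vertex-indicators claim and instead invoking Bernstein for sampling $J$ without replacement applied to the fixed sequence $\{a_{\bss^\ell}\}_{\bss^\ell\in\{-1,1\}^{k'\ell}}$, using $(E2)$ for the $L^\infty$-bound and $(E1),(E3)$ for the $L^1$-bound on $\{a_{\bss^\ell}\}$, and (2) redoing the arithmetic with the correct variance and range terms. One also needs the union bound over pairs $(i,i')$ at the end, which you omit.
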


\begin{proof}
For fixed vertices $i,i' \in C_k$, $i\neq i' $, let
	$$
	a_{\bss^\ell}:=\left|L_{\bss^\ell}\left(i, i'\right)\right| ,\;a'_{\bss^\ell}:=\left|L'_{\bss^\ell}\left(i, i'\right)\right|.
	$$
	Since  $L_{\bss^\ell} (i,i') \subset T^\ell _{\bss^\ell} (i)$, $L'_{\bss^\ell} (i,i') \subset T'^\ell _{\bss^\ell} (i)$ and from the condition $(E2)$, we get
	\begin{equation}
		a_{\bss^\ell} \vee a'_{\bss^\ell} \leq 6\left(\frac{m p}{2^{k'}}\right)^{\ell}.
	\end{equation}
        Therefore, \eqref{eq:L max} holds.  By conditions $(E1)$ and $(E3)$, we can obtain that 
	$$
	\sum_{\bss^\ell \in\{-1,1\}^{k'\ell}} a_{\bss^\ell} \leq\left|\mathcal{B}_{G_{0}(C_{k})}(i, \ell) \cap \mathcal{B}_{G_{0}(C_{k})}\left(i', \ell\right)\right| \leq K_{1}\left(\frac{mp}{1-\alpha}\right)^{\ell-1} .
	$$
	For a random subset $J$ uniformly drawn  from $\{-1,1\}^{k'\ell}$ with cardinality $2w\geq 2\log m$, by applying Lemma \ref{lem:bernstein's inequality for sampling without replacement} we have
	\begin{equation}\label{eq:as}
	    \begin{aligned}
		\sum_{\bss^\ell \in J} a_{\bss^\ell} & \leq \frac{|J|}{2^{k'\ell}} \sum_{\bss^\ell \in\{-1,1\}^{k'\ell}} a_{\bss^\ell}+K_{2} \sqrt{\frac{|J|}{2^{k'\ell}} \sum_{\bss^\ell \in\{-1,1\}^{k'\ell}} a_{\bss^\ell}^{2} \cdot \log m}+K_{2} \max _{\bss^\ell \in\{-1,1\}^{k'\ell}}\left|a_{\bss^\ell}\right| \cdot \log m \\
		& \leq K_{3} \frac{|J|}{2^{k'\ell}}\left(\frac{mp}{1-\alpha}\right)^{\ell-1}+K_{3} \frac{(m p)^{\ell}}{2^{k'\ell}} \sqrt{|J| \cdot \log m}+K_{3} \frac{(m p)^{\ell}}{2^{k'\ell}} \log m \\
		& \leq 3 K_{3} \frac{(m p)^{\ell}}{2^{k'\ell}}\left(\frac{w}{m p(1-\alpha)^{\ell-1}}+\sqrt{w \log m}\right),
	\end{aligned}
	\end{equation}
     with probability at least $1-m^{-D-3}$ for sufficiently large constants $K_{2}, K_{3}$ depending on $K_{1}$, and $D$. Similarly, we obtain
     \begin{equation}\label{eq:a's}
	    \begin{aligned}
		\sum_{\bss^\ell \in J} a'_{\bss^\ell}  \leq 3 K_{3} \frac{(m p)^{\ell}}{2^{k'\ell}}\left(\frac{w}{m p(1-\alpha)^{\ell-1}}+\sqrt{w \log m}\right).
	\end{aligned}
	\end{equation}
    Hence, \eqref{eq: L sum} holds from \eqref{eq:as} and \eqref{eq:a's}. By applying a union bound over all distinct vertices pairs $i,i' \in C_k$, the proof is completed.
\end{proof}

	\begin{lemma}[Similar to Lemma 6.2 in \citep{MRT21a}]\label{lem:condition E3}
		 For any constants $D,\delta>0$, there exists $c \in(0,1 / 2)$ and $R,Q_1,Q_2,K, K',m_0$, which depend on $D,\delta$, with the properties below. 
		 Consider a random subset $J$ uniformly drawn  from $\{-1,1\}^{k'\ell}$ with cardinality $2w$, where $w$ is an integer satisfying $2w > \log m$.   
		Suppose that
		\begin{equation}
			m\geq m_0, \quad (\log m)^{1+\delta} \leq m p/(1-\alpha) \leq m^{1/20} , \quad mq\geq Q_1 k'^2\ell^2,\quad k'\leq Q_2 \log mp, \quad \ell \leq \frac{\log m}{R \log mp}.
		\end{equation}
        Then,  there exists a subset $\mathcal{I} \subset C_{k}$ with $|\mathcal{I}| \geq m-m^{1-c}$ such that the event $\mathcal{H}$ (or condition $(D3)$) holds for every pair $i,i'\in \mathcal{I}$, where $i\neq i'$, with probability at least $1-m^{-D}$.
	\end{lemma}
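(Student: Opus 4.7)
The plan is to identify a large subset $\mathcal{I} \subset C_k$ of typical vertices on which the structural preconditions $(E1)$--$(E3)$ of Lemma \ref{lem:sparsification of wrong} are met, then to invoke that lemma to get the $L$-bounds \eqref{eq:L max} and \eqref{eq: L sum}, and finally to obtain the $\zeta$-bound \eqref{eq:zeta max} by an independent Bernstein-type concentration argument. Concretely, I would take $\mathcal{I} := \operatorname{Typ}_{G_{0}(C_{k})}(3\ell,\epsilon)$ for a suitably small absolute constant $\epsilon$; Proposition \ref{prop:type} (applied with $3\ell \leq \log m / (R'\log mp)$ at the cost of enlarging $R$) gives $|\mathcal{I}| \geq m - m^{1-c}$ with probability at least $1-m^{-D-3}$. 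For every $i \in \mathcal{I}$, condition $(E1)$ is immediate from Definition \ref{def:type}, while condition $(E2)$ follows from Lemma \ref{lem:sizes of vertex class} under the assumed density/community-count constraints, with probability at least $1-m^{-D-3}$ after a union bound over $i \in C_k$.

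For each pair $i,i' \in \mathcal{I}$ I would split on $d := \operatorname{dist}_{G_0(C_k)}(i,i')$. If $d > 2\ell$, the two $\ell$-balls are disjoint and $(E3)$ is trivial. If $d \leq 2\ell$, then the $3\ell$-neighborhood of $i$ is a tree by the typicality of $i$, so Lemma \ref{lem:Sizes of neighborhoods and their intersections}, applied to the Erd\H{o}s--R\'enyi graph $G_0(C_k) \sim \caG(m, p/(1-\alpha))$, yields
\[
\left|\mathcal{B}_{G_{0}(C_{k})}(i,\ell) \cap \mathcal{B}_{G_{0}(C_{k})}(i',\ell)\right| \;\leq\; K\bigl(mp/(1-\alpha)\bigr)^{\ell - \lceil d/2\rceil} \;\leq\; K_1 \bigl(mp/(1-\alpha)\bigr)^{\ell-1},
\]
giving $(E3)$. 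Plugging $(E1)$--$(E3)$ into Lemma \ref{lem:sparsification of wrong}, the $L$-part of $\mathcal{H}$ holds for the ordered pair $(i,i')$ with probability at least $1 - m^{-D-3}$ over the choice of $J$ and the remaining randomness.

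For the $\zeta$-bound, I condition on the $\sigma$-algebra $\caF$ generated by the parent graph $G_0$, the subgraphs $G$ and $G'$ restricted to $\mathcal{B}_{G_0(C_k)}(i,\ell)\cup\mathcal{B}_{G_0(C_k)}(i',\ell)$, and the inter-community edges used to determine the partition labels at each tree level; this suffices to fix $L_{\bss^\ell}(i,i')$ and $L'_{\bss^\ell}(i,i')$. Because conditions $(E1)$--$(E2)$ hold, for each $j \in L_{\bss^\ell}(i,i') \subset \mathcal{S}_{G_0(C_k)}(i,\ell)$, the quantity $\deg_{G(C_k)}(j)-1-mp$ is a centered sum (up to an $O(m^{0.1}p)$ bias absorbed into the leading constant) of at least $\tilde m \geq m - m^{0.1}$ independent Bernoulli$(p)$ variables — the edges from $j$ to $C_k \setminus (\mathcal{B}_{G_0(C_k)}(i,\ell) \cup \mathcal{B}_{G_0(C_k)}(i',\ell))$ — and these sums are mutually independent across distinct $j$. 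Bernstein's inequality (Lemma \ref{lem:bernstein's inequality}) together with the max bound $|L_{\bss^\ell}(i,i')| \leq K(mp)^{\ell}/2^{k'\ell}$ from \eqref{eq:L max} then gives
\[
|\zeta_{\bss^\ell}(i,i')| \;\leq\; C_D\Bigl(\sqrt{mp\,|L_{\bss^\ell}(i,i')|\log m} \;+\; \log m\Bigr) \;\leq\; K'\,\frac{(mp)^{(\ell+1)/2}}{2^{k'\ell/2}}\sqrt{\log m}
\]
with conditional probability at least $1 - m^{-D-5}$, and the same bound holds for $\zeta'_{\bss^\ell}(i,i')$.

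A union bound over the $2w \leq m$ values $\bss^\ell \in J$ and over the at most $m^2$ ordered pairs $i,i' \in \mathcal{I}$ yields event $\mathcal{H}$ simultaneously for all such pairs with probability at least $1-m^{-D}$, completing the proof. The main delicate point I anticipate is Step 3: ensuring that the edges contributing to $\deg_{G(C_k)}(j)$ for $j \in L_{\bss^\ell}$ are genuinely fresh — i.e.\ independent of the conditioning used to define $L_{\bss^\ell}$ — which is why it is crucial that the partition labels at depth $\ell$ depend only on inter-community edges and on intra-community edges inside the $\ell$-ball, leaving the outgoing intra-community edges from the leaf sphere untouched. The rest of the argument is routine given the earlier lemmas.
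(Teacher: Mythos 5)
Your overall strategy coincides with the paper's: choose $\mathcal{I}=\operatorname{Typ}_{G_0(C_k)}(3\ell,\epsilon)$ via Proposition~\ref{prop:type}, verify $(E1)$ and $(E2)$ directly from typicality and Lemma~\ref{lem:sizes of vertex class}, obtain $(E3)$ from Lemma~\ref{lem:Sizes of neighborhoods and their intersections}, and then feed $(E1)$--$(E3)$ into Lemma~\ref{lem:sparsification of wrong} to get the $L$-bounds \eqref{eq:L max}--\eqref{eq: L sum}. This part is correct and is precisely what the paper does, including the union bound over pairs.

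The gap is in the $\zeta$-step, which is exactly the step the paper does \emph{not} carry out itself but delegates to Lemma~6.2 of \citet{MRT21a}. Your argument conditions on ``the $\sigma$-algebra generated by the parent graph $G_0$'' and then asserts that for $j\in L_{\bss^\ell}(i,i')$, $\deg_{G(C_k)}(j)-1-mp$ is, up to an $O(m^{0.1}p)$ bias, a sum of $\tilde m$ fresh $\mathrm{Bernoulli}(p)$'s. This is false under your conditioning: once all of $G_0$ is fixed, the conditional law of $\deg_{G(C_k)}(j)-1$ is $\mathrm{Bin}\bigl(\deg_{G_0(C_k)}(j)-1,\,1-\alpha\bigr)$, not $\mathrm{Bin}(\tilde m,p)$, and its conditional mean $(\deg_{G_0(C_k)}(j)-1)(1-\alpha)$ deviates from $mp$ by a fixed amount of typical size $\Theta(\sqrt{mp})$ per vertex $j$; that is not $O(m^{0.1}p)$ and, summed over $|L_{\bss^\ell}|$ vertices, it competes with (and in general dominates) the right-hand side of \eqref{eq:zeta max}. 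If instead you condition only on $G_0$ restricted to $\mathcal{B}(i,\ell)\cup\mathcal{B}(i',\ell)$ — the weaker conditioning you would actually want — then for $j\in L_{\bss^\ell}\subset \mathcal{S}_{G_0(C_k)}(i,\ell)\cap\mathcal{B}_{G_0(C_k)}(i',\ell)$ a different problem arises: if $j\in\mathcal{B}_{G_0(C_k)}(i',\ell-1)$, \emph{all} of its $G_0$-neighbours are forced into $\mathcal{B}_{G_0(C_k)}(i',\ell)$ by the ball conditioning, so none of its deg-contributing edges are ``genuinely fresh'' Bernoulli$(p)$ — they are $G_0$-conditioned with only the $1-\alpha$ subsampling left random. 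So your ``up to an $O(m^{0.1}p)$ bias'' claim does not survive either choice of conditioning; the crux you flagged at the end of your sketch (edges from the leaf sphere being untouched) is indeed the delicate point, but the resolution you state is not adequate, because the determination of $\mathcal{B}_{G_0(C_k)}(i',\ell)$ inevitably pins down many $G_0$-edges incident to vertices of $L_{\bss^\ell}$. The concentration of $\zeta_{\bss^\ell}$ thus requires a more careful decomposition (e.g.\ keeping the $G_0$-fluctuations of the boundary degrees random and controlling the inherited bias via a second Bernstein bound over $G_0$, together with the intersection bound $(E3)$), which is what the cited Lemma 6.2 of \citet{MRT21a} supplies and what this paper simply invokes.

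Everything else in your argument is sound; the weak point is the $\zeta$-bound justification, which would need to be substantially redone.
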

	
\begin{proof}
To apply Lemma \ref{lem:sparsification of wrong}, we first show that existence of a subset $\mathcal{I} \subset C_k$ with $|\mathcal{I}|\geq m-m^{1-c}$ such that conditions $(E1),(E2)$ and $(E3)$ are satisfied for all pairs $i,i'\in \mathcal{I}$, where $i\neq i'$.
\begin{itemize}
    \item  By Proposition \ref{prop:type}, there exists a subset $\operatorname{Typ}_{G_0(C_k)}(3\ell,\epsilon)$ such that $\left|\operatorname{Typ}_{G_0(C_k)}(3\ell,\epsilon)\right|\geq m-m^{1-c}$, where $c\in (0,0.5)$ with probability at least $1-m^{-D-2}$. Since $G_0 \left(\mathcal{B}_{G_0 (C_k)}(i,3\ell)\right)$ and $G_0 \left(\mathcal{B}_{G_0 (C_k)}(i',3\ell)\right)$ are trees for any $i,i'\in \operatorname{Typ}_{G_0(C_k)}(3\ell,\epsilon)$, the condition $(E3)$ holds with probability at least $1-m^{-D-2}$ for $K_1$ depending on $D$ by Lemma \ref{lem:Sizes of neighborhoods and their intersections}. (Since $3\ell$ is greater than or equal to 3, in order to apply Proposition \ref{prop:type}, we have set $mp/(1-\alpha) \leq m^{1/20}$.) 

    \item The condition $(E1)$ is directly satisfied since $G_0 \left(\mathcal{B}_{G_0 (C_k)}(i,3\ell)\right)$ and $G_0 \left(\mathcal{B}_{G_0 (C_k)}(i',3\ell)\right)$ form trees for any $ i,i'\in \mathcal{I}$.

    \item By Lemma \ref{lem:sizes of vertex class}, the condition $(E2)$ holds with probability at least $1-m^{-D-2}$ for any $i,i' \in \mathcal{I}$.
\end{itemize}
    Thus, by applying Lemma \ref{lem:sparsification of wrong}, we can have \eqref{eq:L max} and \eqref{eq: L sum} for all pairs $i,i'\in \mathcal{I}$, where $i\neq i'$, with probability at least $1-2m^{-D-1}$. 
    The verification of \eqref{eq:zeta max} follows a similar approach as presented in Lemma 6.2 of \citep{MRT21a}; thus, we omit the detailed explanation.
\end{proof}

	\section{Technical Tools} \label{sec:appx tool}
	\begin{theorem}[Tail bounds on binomial distribution \citep{Oka59}]\label{thm:binomial tail}
		Let $X \sim \operatorname{Bin}(n, p)$. It holds that
		$$
		\begin{aligned}
			&\mathbb{P}\{X \leq n t\} \leq \exp \left(-n(\sqrt{p}-\sqrt{t})^{2}\right), \quad \forall 0 \leq t \leq p \\
			&\mathbb{P}\{X \geq n t\} \leq \exp \left(-2 n(\sqrt{t}-\sqrt{p})^{2}\right), \quad \forall p \leq t \leq 1 \\
		\end{aligned}
		$$
	\end{theorem}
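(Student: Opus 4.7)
The plan is to prove both bounds by the standard Chernoff method and then reduce the resulting Kullback--Leibler exponent to the square-root form via a convexity argument. Writing $X = \sum_{i=1}^n X_i$ with i.i.d.\ Bernoulli$(p)$ summands, Markov's inequality applied to $e^{\lambda X}$ together with the factorized MGF $(1-p+pe^{\lambda})^n$ yields, after optimizing at $e^{\lambda} = \tfrac{t(1-p)}{p(1-t)}$, the sharp entropy bound $\mathbb{P}\{X \geq nt\} \leq \exp(-n D(t\|p))$ for $t \geq p$, where $D(t\|p) = t\log\tfrac{t}{p} + (1-t)\log\tfrac{1-t}{1-p}$ is the binary KL divergence. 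The target upper-tail estimate will then follow from the pointwise inequality $D(t\|p) \geq 2(\sqrt{t}-\sqrt{p})^2$ on $[p,1]$.

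To establish that inequality, I would set $\varphi(t) := D(t\|p) - 2(\sqrt{t}-\sqrt{p})^2$ and differentiate directly: one checks $\varphi(p) = 0$ and $\varphi'(p) = 0$, and computes
$$\varphi''(t) \;=\; \frac{1}{t(1-t)} - \frac{\sqrt{p}}{t^{3/2}},$$
which is non-negative on $[p,1]$ exactly when $\sqrt{t} \geq (1-t)\sqrt{p}$; since $t \geq p$ gives $\sqrt{t} \geq \sqrt{p} \geq (1-t)\sqrt{p}$, the claim holds and $\varphi(t) \geq 0$ follows from $\varphi(p) = \varphi'(p) = 0$. The lower tail is treated analogously: taking $\lambda < 0$ in the Chernoff argument gives $\mathbb{P}\{X \leq nt\} \leq \exp(-n D(t\|p))$ for $t \leq p$, and the weaker exponent is obtained from $D(t\|p) \geq (\sqrt{p}-\sqrt{t})^2$. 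This last inequality I would verify by setting $\psi(t) := D(t\|p) - (\sqrt{p}-\sqrt{t})^2$ and showing $\psi(p) = 0$ together with $\psi'(t) \leq 0$ on $[0,p]$, which after algebra reduces to an elementary monotonicity check in $t$.

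The main obstacle is the sharp constant $2$ on the upper tail: the crude Pinsker estimate $D(t\|p) \geq 2(t-p)^2$ only yields the coefficient $2$ in a different parametrization, so one genuinely needs the second-derivative argument above. On the lower side the constant $1$ is essentially tight, as seen from $D(0\|p) = -\log(1-p) \sim p$, so a naive symmetric argument cannot succeed there; this asymmetry explains why the two bounds in the statement carry different prefactors and forces the lower-tail argument to go through monotonicity of $\psi$ rather than through convexity alone.
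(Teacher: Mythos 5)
The paper does not supply its own proof of this theorem: it simply cites Okamoto's 1959 paper, so there is no internal argument to compare your proposal against. Your Chernoff$+$KL route is the standard modern way to derive both bounds, and your upper-tail argument is correct as written: with $\varphi(t)=D(t\|p)-2(\sqrt t-\sqrt p)^2$ one indeed has $\varphi(p)=\varphi'(p)=0$ and $\varphi''(t)=\tfrac{1}{t(1-t)}-\tfrac{\sqrt p}{t^{3/2}}\ge 0$ on $[p,1]$, so convexity closes the bound.

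The lower-tail step, however, contains a genuine error. You claim that $\psi(t):=D(t\|p)-(\sqrt p-\sqrt t)^2$ satisfies $\psi'(t)\le 0$ on all of $[0,p]$, and that this is ``an elementary monotonicity check.'' It is not: one computes $\psi'(t)=\log\tfrac{t(1-p)}{p(1-t)}-1+\sqrt{p/t}$, and as $t\to 0^+$ the term $\sqrt{p/t}$ blows up like $t^{-1/2}$ while the logarithm only diverges like $\log t$, so $\psi'(t)\to+\infty$. Thus $\psi'$ is strictly positive near $0$ and the claimed monotonicity fails. The inequality $\psi\ge 0$ on $[0,p]$ is nevertheless true, but it requires a different closing argument. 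One clean fix is to observe that $\psi'$ changes sign exactly once on $(0,p)$: substituting $u=\sqrt{p/t}\ge 1$ gives $\psi'=h(u):=\log\tfrac{1-p}{u^2-p}-1+u$ with $h(1)=0$, $h'(u)=1-\tfrac{2u}{u^2-p}$, so $h$ first decreases and then increases, hence has a unique interior zero $u_0>1$; translating back, $\psi$ is increasing on $(0,t_0)$ and decreasing on $(t_0,p)$ for $t_0=p/u_0^2$, i.e.\ $\psi$ is unimodal and attains its minimum on $[0,p]$ only at the endpoints. Checking those endpoints, $\psi(p)=0$ and $\psi(0)=-\log(1-p)-p\ge 0$, gives $\psi\ge 0$ and hence the lower-tail bound. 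You should also note, as you rightly observe about $D(0\|p)\sim p$, that the constant $1$ on the lower tail cannot be improved to $2$, so the two tails genuinely require separate comparison functions.
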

		
	\begin{lemma}[Lemma 22 in \citep{MRT21b}]\label{lem:binomial 1}
		Let $X\sim \operatorname{Bin}(n,p)$ with real number $p \in(0,1)$. For any $t, r>0$, it holds that
		$$
		\mathbb{P}\{X \in[t, t+r]\} \leq \frac{C r}{\sqrt{n p(1-p)}}
		$$
  with a universal constant $C>0$ .
	\end{lemma}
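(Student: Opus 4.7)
The approach is to combine a counting bound on the integers contained in $[t, t+r]$ with a uniform bound on the binomial point mass. Since $X$ takes values in $\{0,1,\ldots,n\}$, we have
\[
\mathbb{P}\{X \in [t, t+r]\} \;\leq\; \big(\lfloor r \rfloor + 1\big) \cdot \max_{0 \leq k \leq n} \mathbb{P}\{X = k\},
\]
so the task reduces to controlling the mode probability $p_\star := \max_k \binom{n}{k}p^k(1-p)^{n-k}$.

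The key step is to show that $p_\star \leq C'/\sqrt{np(1-p)}$ for a universal constant $C'$. The mode of $\mathrm{Bin}(n,p)$ is $m^\star = \lfloor(n+1)p\rfloor$, so $m^\star = np + O(1)$ and $n - m^\star = n(1-p) + O(1)$. I would apply the two-sided Stirling bound $\sqrt{2\pi k}(k/e)^k \leq k! \leq e\sqrt{k}(k/e)^k$ to each factorial in $\binom{n}{m^\star}p^{m^\star}(1-p)^{n-m^\star}$. After cancellation, the exponential factors collapse (using $\log(np/m^\star) = O(1/n)$ and an analogous estimate for $1-p$), and the remaining polynomial prefactor is exactly of order $1/\sqrt{np(1-p)}$.

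Combining the two ingredients, for $r \geq 1$ we have $\lfloor r \rfloor + 1 \leq 2r$, which immediately gives the claimed bound with $C = 2C'$. For $r < 1$ the interval contains at most one integer, so the probability is bounded by $p_\star$; since the inequality is non-trivial only when $np(1-p)$ is at least a constant, this boundary regime can be absorbed into $C$. The main obstacle is the constant-tracking in the Stirling step, together with handling the degenerate regime where $np$ or $n(1-p)$ is $O(1)$: there $1/\sqrt{np(1-p)}$ is itself $\Theta(1)$, and the inequality holds trivially after enlarging $C$, because probabilities are bounded by $1$.
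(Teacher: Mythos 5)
The paper cites \citet{MRT21b} for this lemma without reproducing a proof, so there is no in-paper argument to compare against; your decomposition into the number of integers in $[t,t+r]$ times the maximal point mass, followed by a Stirling estimate of the mode, is the standard route and is correct for $r\geq 1$. The mode bound $p_\star\leq C'/\sqrt{np(1-p)}$ does hold universally: when $np=O(1)$ or $n(1-p)=O(1)$ one has $1/\sqrt{np(1-p)}$ bounded below by a constant and $p_\star\leq 1$ suffices, as you note, and otherwise Stirling gives the $1/\sqrt{np(1-p)}$ prefactor with the exponential part controlled because the KL divergence at the mode is $O(1/n)$.

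The genuine gap is the treatment of $0<r<1$. You claim this regime ``can be absorbed into $C$'' because ``the inequality is non-trivial only when $np(1-p)$ is at least a constant,'' but the obstruction there is that $r$ is small, not that $np(1-p)$ is small. Take $np(1-p)$ large and $r$ tiny with $[t,t+r]$ containing the mode $k^\star$: then $\P\{X\in[t,t+r]\}=\P\{X=k^\star\}$ is of order $1/\sqrt{np(1-p)}$, while the claimed right-hand side is $Cr/\sqrt{np(1-p)}$, smaller by a factor of $r$. No universal $C$ can compensate as $r\to 0$, so the inequality as written for \emph{all} $r>0$ is false; it is the paper's transcription that is imprecise, and the source (Lemma 22 of \citet{MRT21b}) assumes $r\geq 1$. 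With that hypothesis your step $\lfloor r\rfloor+1\leq 2r$ closes the argument, and the $r<1$ case should simply be dropped rather than ``absorbed.''
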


	\begin{lemma}[Lemma 23 in \citep{MRT21b}]\label{lem:binomial 2}
		 Let $X\sim \operatorname{Bin}(n, p)$ with real number $p \in(0,1)$. If $n p(1-p) \geq C$ then for any $r \geq 2$, it holds that
		$$
		\mathbb{P}\{X>n p-r\} \geq \frac{1}{2}+c\left(\frac{r}{\sqrt{n p(1-p)}} \wedge 1\right), \quad \mathbb{P}\{X<n p+r\} \geq \frac{1}{2}+c\left(\frac{r}{\sqrt{n p(1-p)}} \wedge 1\right) 
		$$
  with universal constants $C,c>0$.
	\end{lemma}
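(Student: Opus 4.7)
The plan is to reduce both inequalities to a single Gaussian anti-concentration statement via the Berry--Esseen bound. First, by the substitution $Y = n - X \sim \mathrm{Bin}(n, 1-p)$, which preserves $\sqrt{np(1-p)}$, the lower-tail inequality $\P\{X > np - r\} \geq 1/2 + c(r/\sigma \wedge 1)$ and the upper-tail inequality $\P\{X < np + r\} \geq 1/2 + c(r/\sigma \wedge 1)$ are exchanged (via $\P\{X > np - r\} = \P\{Y < n(1-p) + r\}$), so it suffices to prove one, say the upper-tail, where $\sigma := \sqrt{np(1-p)}$.

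Next, I would invoke the Berry--Esseen theorem applied to $X = \sum_{i=1}^n X_i$ with $X_i \sim \mathrm{Bernoulli}(p)$. Since $\E|X_i - p|^3 \leq p(1-p)$, the uniform Kolmogorov distance between the CDF of $(X - np)/\sigma$ and the standard normal $\Phi$ is at most $C_{\mathrm{BE}}/\sigma$ for an absolute constant $C_{\mathrm{BE}}$. Hence
\[
\P\{X < np + r\} \geq \Phi(r/\sigma) - C_{\mathrm{BE}}/\sigma.
\]
Combining with the elementary Gaussian lower bound $\Phi(u) - 1/2 \geq u\,\phi(u) \geq \phi(1)\, u$ on $u \in [0,1]$, together with $\Phi(u) - 1/2 \geq \Phi(1) - 1/2$ for $u \geq 1$, yields
\[
\P\{X < np + r\} - \tfrac{1}{2} \geq c_0 (r/\sigma \wedge 1) - C_{\mathrm{BE}}/\sigma,
\]
where $c_0 := \phi(1) = e^{-1/2}/\sqrt{2\pi}$.

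The main obstacle is absorbing the error $C_{\mathrm{BE}}/\sigma$ into the leading term. Set $u := r/\sigma$. When $u \leq 1$, one has $\sigma(u \wedge 1) = r \geq 2$, so $c_0(u \wedge 1) - C_{\mathrm{BE}}/\sigma = (u \wedge 1)(c_0 - C_{\mathrm{BE}}/r) \geq (c_0 - C_{\mathrm{BE}}/2)(u \wedge 1)$; picking any $c < c_0 - C_{\mathrm{BE}}/2$ works, which is positive since sharp Berry--Esseen constants satisfy $C_{\mathrm{BE}} < 2\phi(1) \approx 0.484$ (e.g., Shevtsova's bound $C_{\mathrm{BE}} < 0.4748$). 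When $u > 1$, we instead need $\sigma \geq C_{\mathrm{BE}}/(c_0 - c)$, which is precisely what the hypothesis $np(1-p) \geq C$ guarantees once $C$ is chosen large enough in terms of $c$. The only delicate point is thus the quantitative inequality $C_{\mathrm{BE}} < 2\phi(1)$; if one prefers to avoid relying on sharp constants, one can replace Berry--Esseen by a non-uniform CLT bound (Nagaev) whose error scales as $O(\sigma^{-1}(1+|u|)^{-3})$, which decouples the two regimes entirely and makes the argument automatic.
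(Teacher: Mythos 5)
The paper does not contain a proof of this statement; it is imported verbatim as Lemma~23 of the cited reference, so there is no internal argument to compare yours against. On its own merits, your proof is correct. The reduction $Y = n - X \sim \mathrm{Bin}(n,1-p)$ does swap the two inequalities while keeping $\sigma=\sqrt{np(1-p)}$ fixed; the Berry--Esseen error $C_{\mathrm{BE}}/\sigma$ (from $\E|X_1-p|^3 \le p(1-p)$) governs both $\P\{X\le\cdot\}$ and $\P\{X<\cdot\}$ uniformly because $\Phi$ is continuous; and $\Phi(u)-\tfrac{1}{2} \ge \phi(1)\,(u\wedge 1)$ for all $u\ge 0$ follows from the monotonicity of $\phi$ on $[0,\infty)$ (apply $u\mapsto u\phi(u)$ on $[0,1]$ and monotonicity of $\Phi$ for $u\ge 1$). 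The single fragile point is exactly the one you flagged yourself: in the regime $r/\sigma\le 1$ the absorption of $C_{\mathrm{BE}}/\sigma$ uses only $r\ge 2$ and genuinely requires $C_{\mathrm{BE}}<2\phi(1)\approx 0.484$, which holds for the sharp Shevtsova-type constants but fails for the classical values $0.7975$ or $0.56$, so the appeal to a sharp constant (or to your fallback via a non-uniform CLT) is not decorative --- it is load-bearing and should be stated as a hypothesis of the argument. With that caveat in place the proof stands. If you would rather avoid sharp constants altogether, a more robust route is to first use the fact that the binomial median $m$ satisfies $\lfloor np\rfloor\le m\le\lceil np\rceil$, so that $r\ge 2$ gives $\P\{X<np+r\}\ge\P\{X\le m\}\ge \tfrac{1}{2}$ outright, and then lower-bound only the remainder $\P\{m<X<np+r\}$ by $c\,(r/\sigma\wedge 1)$ via a local central limit estimate near the mode; this decouples the constants and works once $np(1-p)$ exceeds any fixed threshold.
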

	   
         \begin{lemma}[Hoeffding's inequality]\label{lem:Hoeffding}
             Let $X_1, \ldots, X_n$ be independent random variables such that $a_{i}\leq X_{i} \leq b_{i}$. Define $S_{n}:=X_{1}+\ldots+X_{n}$. Then, for all $t>0$, we have
             \begin{equation}
                 \begin{aligned}
                     &\P\left\{S_{n}-E[S_{n}] \geq t  \right\} \leq  \exp \left(-\frac{2t^{2}}{\sum^{n}_{i=1}(b_{i}-a_{i})^{2}}\right),\\
                     &\P\left\{ \left|S_{n}-E[S_{n}] \right| \geq t  \right\} \leq 2 \exp \left(-\frac{2t^{2}}{\sum^{n}_{i=1}(b_{i}-a_{i})^{2}}\right).
                 \end{aligned}
             \end{equation}
         \end{lemma}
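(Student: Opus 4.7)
The plan is to prove Hoeffding's inequality via the standard Chernoff (exponential Markov) method combined with Hoeffding's lemma controlling the moment generating function of a bounded random variable. First I would center the variables by setting $Y_i := X_i - E[X_i]$, so that $Y_i$ has mean zero and satisfies $a_i - E[X_i] \leq Y_i \leq b_i - E[X_i]$, a range of length $b_i - a_i$. Applying Markov's inequality to $e^{\lambda (S_n - E[S_n])}$ for any $\lambda > 0$ and using independence to factor the MGF gives
$$\P\{S_n - E[S_n] \geq t\} \leq e^{-\lambda t}\prod_{i=1}^{n} E[e^{\lambda Y_i}].$$

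The main obstacle is proving Hoeffding's lemma: for any mean-zero random variable $Y$ with $\alpha \leq Y \leq \beta$, one has $E[e^{\lambda Y}] \leq \exp(\lambda^{2}(\beta-\alpha)^{2}/8)$. To establish this I would use convexity of $x \mapsto e^{\lambda x}$ on $[\alpha,\beta]$, giving the pointwise bound $e^{\lambda x} \leq \frac{\beta-x}{\beta-\alpha} e^{\lambda \alpha} + \frac{x-\alpha}{\beta-\alpha} e^{\lambda \beta}$. Taking expectations and using $E[Y]=0$ reduces the right-hand side to $e^{\phi(u)}$, where $u := \lambda(\beta-\alpha)$, $p := -\alpha/(\beta-\alpha) \in [0,1]$, and $\phi(u) := -pu + \log(1-p+pe^{u})$. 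The key analytic step is to show $\phi(u) \leq u^{2}/8$: one verifies $\phi(0)=\phi'(0)=0$ and then bounds $\phi''(u) = \frac{p(1-p)e^{u}}{(1-p+pe^{u})^{2}} \leq \tfrac{1}{4}$ by AM--GM applied to $(1-p)$ and $pe^{u}$. A second-order Taylor expansion with remainder then delivers the claimed bound.

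Given Hoeffding's lemma, applying it to each $Y_i$ yields
$$\P\{S_n - E[S_n] \geq t\} \leq \exp\!\left(-\lambda t + \frac{\lambda^{2}}{8}\sum_{i=1}^{n}(b_i - a_i)^{2}\right),$$
and optimizing over $\lambda > 0$ with the choice $\lambda^{*} = 4t / \sum_{i}(b_i - a_i)^{2}$ produces the one-sided bound $\exp\bigl(-2t^{2}/\sum_{i}(b_i - a_i)^{2}\bigr)$. For the two-sided statement, I would run the identical argument with $-Y_i$ in place of $Y_i$ (the range is unchanged) to control the lower deviation $\P\{S_n - E[S_n] \leq -t\}$, and then take a union bound, which produces the leading factor of $2$. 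No additional hypotheses on the $X_i$ beyond independence and boundedness are needed, so the argument is self-contained.
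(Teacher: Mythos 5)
Your proof is correct: it is the standard Chernoff-bound derivation of Hoeffding's inequality, with the moment-generating-function bound supplied by Hoeffding's lemma (proved via convexity of $x \mapsto e^{\lambda x}$, the substitution $\phi(u) = -pu + \log(1-p+pe^u)$, the second-derivative bound $\phi''(u) \le 1/4$, and Taylor's theorem), followed by optimizing $\lambda^* = 4t/\sum_i(b_i-a_i)^2$ and a union bound for the two-sided version. The paper states this lemma as a classical technical tool without proof, so there is no in-paper argument to compare against; your derivation is the standard one and fills that role correctly.
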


  \begin{lemma}[Bernstein's inequality]\label{lem:bernstein's inequality}
            Let $X_1, \ldots, X_n$ be independent random variables such that $a_{i}\leq X_{i} \leq b_{i}$. Define $\sigma^2:=\frac{1}{n} \sum_{i=1}^n \mathbb{E}\left[\left(X_i-\mathbb{E}\left[X_i\right]\right)^2\right]$. Then, for any $t>0$, we have
        $$
        \mathbb{P}\left\{\left|\sum_{i=1}^n\left(X_i-\mathbb{E}\left[X_i\right]\right)\right| \geq t\right\} \leq 2 \exp \left(\frac{-t^2 / 2}{\sigma^2 n+(b-a) t / 3}\right)
        $$
        \end{lemma}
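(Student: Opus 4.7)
The plan is to apply the standard Chernoff--Cram\'er exponential moment method to the centered sum $S_n := \sum_{i=1}^n (X_i - \mathbb{E}[X_i])$. For any $\lambda > 0$, Markov's inequality applied to $e^{\lambda S_n}$ gives
\[
\mathbb{P}\{S_n \geq t\} \leq e^{-\lambda t}\, \mathbb{E}[e^{\lambda S_n}] = e^{-\lambda t} \prod_{i=1}^n \mathbb{E}[e^{\lambda(X_i - \mathbb{E}[X_i])}],
\]
where the factorization uses independence. The crux of the argument is then to bound the MGF of each centered summand.

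Writing $Y_i := X_i - \mathbb{E}[X_i]$, one has $\mathbb{E}[Y_i] = 0$, $|Y_i| \leq b_i - a_i \leq b-a$ (interpreting $b-a$ as the uniform range bound appearing on the right-hand side of the lemma), and $\mathbb{E}[Y_i^2]$ contributing to $n\sigma^2$. Expanding $e^{\lambda Y_i}$ as a power series and using the moment bound $\mathbb{E}[Y_i^k] \leq \mathbb{E}[Y_i^2](b-a)^{k-2}$ for $k \geq 2$, together with $k! \geq 2\cdot 3^{k-2}$, yields the textbook MGF inequality
\[
\mathbb{E}[e^{\lambda Y_i}] \leq \exp\!\left(\frac{\lambda^2 \mathbb{E}[Y_i^2]/2}{1 - \lambda(b-a)/3}\right)
\]
for $0 < \lambda < 3/(b-a)$. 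Multiplying over $i$ and using $\sum_i \mathbb{E}[Y_i^2] = n\sigma^2$ gives
\[
\mathbb{P}\{S_n \geq t\} \leq \exp\!\left(-\lambda t + \frac{\lambda^2 n\sigma^2/2}{1 - \lambda(b-a)/3}\right).
\]

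Finally, I would optimize over $\lambda$ by choosing $\lambda^\star = t/(n\sigma^2 + (b-a)t/3)$, which satisfies $\lambda^\star(b-a) < 3$ and produces the one-sided bound
\[
\mathbb{P}\{S_n \geq t\} \leq \exp\!\left(-\frac{t^2/2}{n\sigma^2 + (b-a)t/3}\right).
\]
Applying the same argument to $-S_n$ and taking a union bound then gives the two-sided version with the factor $2$ stated in the lemma.

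The only non-routine step is the geometric-series bound on the MGF of a bounded centered variable; the Chernoff reduction, the independence splitting, and the quadratic-in-$\lambda$ optimization are entirely mechanical. Since this is a classical textbook result, the natural strategy in the paper is to give a one-line pointer to a standard reference (e.g., Boucheron--Lugosi--Massart) rather than reproducing the three computations above.
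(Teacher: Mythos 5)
Your proof is correct and is the standard Chernoff--Cram\'er derivation of Bernstein's inequality. The paper itself states Lemma~\ref{lem:bernstein's inequality} without proof, as a quoted textbook tool in its appendix of technical results, so there is no in-paper argument to compare against; what you wrote is precisely what one would insert to make it self-contained. Two small bookkeeping remarks. The moment inequality driving the power-series expansion should be stated for absolute moments, $\mathbb{E}[|Y_i|^k] \leq \mathbb{E}[Y_i^2](b-a)^{k-2}$ for $k\geq 2$, since odd moments of $Y_i$ can be negative; this changes nothing downstream because one then bounds $\lambda^k\mathbb{E}[Y_i^k]/k! \leq \lambda^k\mathbb{E}[|Y_i|^k]/k!$ term by term. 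Also, the lemma's own statement introduces per-index ranges $a_i \leq X_i \leq b_i$ yet writes $b-a$ in the exponent without defining $a,b$; you implicitly (and correctly) read $b-a$ as the uniform range $\max_i(b_i-a_i)$, which is the only interpretation under which $|Y_i| \leq b-a$ and the $k! \geq 2\cdot 3^{k-2}$ geometric-series bound deliver the stated $(b-a)t/3$ denominator term. Finally, your choice $\lambda^\star = t/\bigl(n\sigma^2 + (b-a)t/3\bigr)$ satisfies $\lambda^\star(b-a) < 3$, and substituting it collapses $-\lambda t + \tfrac{\lambda^2 n\sigma^2/2}{1-\lambda(b-a)/3}$ exactly to $-t^2/\bigl(2(n\sigma^2 + (b-a)t/3)\bigr)$; the two-sided bound follows by the same estimate applied to $-S_n$ and a union bound, as you say.
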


	\begin{lemma}[Bernstein's inequality for sampling without replacement]\label{lem:bernstein's inequality for sampling without replacement}
  Let $\left\{x_{1}, \ldots, x_{N}\right\}$ be given for a finite positive integer $N$. For a positive integer $n<N$, let $X_{1}, \ldots, X_{n}$ be drawn from $\left\{x_{1}, \ldots, x_{N}\right\}$ without replacement. Define
		$$
		a:=\min _{i \in[N]} x_{i}, \quad b:=\max _{i \in[N]} x_{i}, \quad \bar{x}:=\frac{1}{N} \sum_{i=1}^{N} x_{i}, \quad \sigma^{2}:=\frac{1}{N} \sum_{i=1}^{N}\left(x_{i}-\bar{x}\right)^{2} .
		$$
		Then, for any $t>0$, we have
		$$
		\mathbb{P}\left\{\left|\sum_{i=1}^{n} X_{i}-n \bar{x}\right| \geq t\right\} \leq 2 \exp \left(\frac{-t^{2} / 2}{\sigma^{2} n+(b-a) t / 3}\right) .
		$$
	\end{lemma}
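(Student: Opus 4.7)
The plan is to reduce the concentration bound for sampling without replacement to the classical Bernstein inequality for sums of i.i.d.\ bounded random variables. The key tool is Hoeffding's 1963 reduction theorem: if $X_1, \ldots, X_n$ are drawn without replacement from $\{x_1, \ldots, x_N\}$ and $Y_1, \ldots, Y_n$ are drawn with replacement (i.e.\ i.i.d.\ uniform on the same multiset), then for every continuous convex function $\phi$,
\[
\mathbb{E}\left[\phi\left(\sum_{i=1}^{n} X_i\right)\right] \;\leq\; \mathbb{E}\left[\phi\left(\sum_{i=1}^{n} Y_i\right)\right].
\]
This domination is exactly what is needed to push MGF-based concentration bounds from the i.i.d.\ setting to the without-replacement setting.

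First I would center the samples, writing $\tilde{X}_i := X_i - \bar{x}$ and $\tilde{Y}_i := Y_i - \bar{x}$, so that $\mathbb{E}[\tilde{Y}_i]=0$, $\mathrm{Var}(\tilde{Y}_i)=\sigma^{2}$, and $|\tilde{Y}_i| \leq b-a$ almost surely. Applying the reduction theorem with the convex map $\phi(x) = e^{\lambda x}$, $\lambda \geq 0$, yields
\[
\mathbb{E}\!\left[\exp\!\left(\lambda \sum_{i=1}^{n}\tilde{X}_i\right)\right] \;\leq\; \mathbb{E}\!\left[\exp\!\left(\lambda \sum_{i=1}^{n}\tilde{Y}_i\right)\right] \;=\; \left(\mathbb{E}\!\left[e^{\lambda \tilde{Y}_1}\right]\right)^{n}.
\]
I would then invoke the standard single-variable Bernstein MGF bound for a centered random variable bounded by $b-a$ with variance $\sigma^{2}$, namely
\[
\mathbb{E}\!\left[e^{\lambda \tilde{Y}_1}\right] \;\leq\; \exp\!\left(\frac{\lambda^{2}\sigma^{2}/2}{1-\lambda(b-a)/3}\right) \qquad \text{for } 0<\lambda<\tfrac{3}{b-a},
\]
which is proved via the Taylor expansion of the cumulant function together with the bound $\mathbb{E}[\tilde{Y}_1^{k}] \leq \sigma^{2}(b-a)^{k-2}$ for $k\geq 2$.

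Combining these two steps with a Chernoff argument and optimizing $\lambda$ at $\lambda^{*} = t / (\sigma^{2} n + (b-a)t/3)$ gives the one-sided tail
\[
\mathbb{P}\!\left\{\sum_{i=1}^{n}\tilde{X}_i \geq t\right\} \;\leq\; \exp\!\left(\frac{-t^{2}/2}{\sigma^{2} n + (b-a)t/3}\right).
\]
Applying the same argument to $-\tilde{X}_i$ (which satisfies the same bounds after flipping $a$ and $b$) and taking a union bound of the two one-sided tails gives the claimed two-sided inequality. The only delicate point — and the main conceptual obstacle — is the invocation of Hoeffding's reduction theorem, which is what allows the i.i.d.\ Bernstein bound to survive the absence of independence under sampling without replacement; everything else is the routine Bernstein proof.
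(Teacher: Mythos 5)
The paper does not supply a proof of this lemma: it is stated without a citation in the ``Technical Tools'' section alongside other classical concentration inequalities, and the authors rely on it as a known result (it goes back to Hoeffding's 1963 paper and appears in this exact form in the literature, e.g.\ as a consequence of Theorem 4 of Hoeffding 1963). There is therefore no ``paper proof'' to compare against, only the reader's understanding that this is a standard fact.

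Your argument is correct and is in fact the canonical derivation. The central idea --- reducing the without-replacement sum to the with-replacement (i.i.d.) sum via the convex-order domination $\mathbb{E}[\phi(\sum X_i)] \leq \mathbb{E}[\phi(\sum Y_i)]$ for convex $\phi$ --- is exactly Theorem 4 of Hoeffding (1963), and it is precisely what lets moment-generating-function arguments survive the loss of independence. The remaining steps are the standard Bernstein machinery: centering gives $|\tilde{Y}_1|\le b-a$, $\mathbb{E}[\tilde{Y}_1]=0$, $\mathbb{E}[\tilde{Y}_1^2]=\sigma^2$; the moment bound $\mathbb{E}[|\tilde Y_1|^{k}]\le \sigma^{2}(b-a)^{k-2}$ together with $1/k!\le \tfrac{1}{2}\,3^{-(k-2)}$ for $k\ge 2$ yields the single-variable MGF bound $\mathbb{E}[e^{\lambda\tilde Y_1}]\le \exp\bigl(\tfrac{\lambda^{2}\sigma^{2}/2}{1-\lambda(b-a)/3}\bigr)$ for $0<\lambda<3/(b-a)$; the Chernoff step at $\lambda^{*}=t/(\sigma^{2}n+(b-a)t/3)$ lands exactly on $\exp\bigl(-t^{2}/(2(\sigma^{2}n+(b-a)t/3))\bigr)$ and respects the constraint $\lambda^{*}<3/(b-a)$; and symmetrizing plus a union bound gives the two-sided form with the factor $2$. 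All of these pieces are correct as you have assembled them, so the proof is complete.
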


	\begin{lemma}[Lemma 13 in \citep{MRT21b}]\label{lem:correlation of neighbor}
		Fix $k \in \mathbb{N}_{+}, p \in(0,1)$ and $\alpha \in[0,1-p]$. Let $\Gamma_{0}\sim \mathcal{G} \left(k, \frac{p}{1-\alpha}\right)$. We can get two subgraphs $\Gamma$ and $\Gamma'$ by independently subsampling every edge of $\Gamma_0$ with probability $1-\alpha$. Fix subsets $J, J' \subset[k]$ and a vertex $i \in$ $[k] \backslash\left(J \cup J'\right)$. For any $t>0$, we have that with probability at least $1-6 \exp (-t)-2 \exp \left(\frac{-p\left|J \cap J'\right|}{3(1-\alpha)}\right)$,
		$$
		\left|| \mathcal{N}_{\Gamma}(i ; J)|-p| J|-| \mathcal{N}_{\Gamma'}\left(i ; J'\right)|+p| J'|\right| \leq 4\left(t+\sqrt{t \alpha p\left|J \cap J'\right|}+\sqrt{t p\left|J \triangle J'\right|}\right).
		$$
	\end{lemma}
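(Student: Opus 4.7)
The plan is to split the quantity
\[
\Xi := \bigl(\lvert\mathcal N_{\Gamma}(i;J)\rvert - p\lvert J\rvert\bigr) - \bigl(\lvert\mathcal N_{\Gamma'}(i;J')\rvert - p\lvert J'\rvert\bigr)
\]
according to how each potential neighbor $j$ relates to the symmetric difference of $J$ and $J'$. Writing $J=(J\cap J')\sqcup(J\setminus J')$ and $J'=(J\cap J')\sqcup(J'\setminus J)$ gives the decomposition $\Xi = A+B-C$, where
\[
A:=\sum_{j\in J\cap J'}\bigl(\mathds{1}_{(i,j)\in\Gamma}-\mathds{1}_{(i,j)\in\Gamma'}\bigr),\quad
B:=\sum_{j\in J\setminus J'}\bigl(\mathds{1}_{(i,j)\in\Gamma}-p\bigr),\quad
C:=\sum_{j\in J'\setminus J}\bigl(\mathds{1}_{(i,j)\in\Gamma'}-p\bigr).
\]
The first bracket pairs up edges that $\Gamma$ and $\Gamma'$ both ``see'' through $\Gamma_0$ (and hence are correlated), while the second and third are independent Bernoulli sums with mean zero, so each part is analyzed by a different concentration argument.

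For $B$ and $C$, the summands are i.i.d.\ centered Bernoulli$(p)$ with variance $p(1-p)\le p$, so Bernstein's inequality (Lemma~\ref{lem:bernstein's inequality}) gives
\[
|B|\le \tfrac{4}{3}\bigl(\sqrt{tp\lvert J\setminus J'\rvert}+t\bigr),\qquad
|C|\le \tfrac{4}{3}\bigl(\sqrt{tp\lvert J'\setminus J\rvert}+t\bigr),
\]
each with probability at least $1-2e^{-t}$. Using $\sqrt{a}+\sqrt{b}\le\sqrt{2(a+b)}$ absorbs these into $\sqrt{tp\lvert J\triangle J'\rvert}$ (up to an absolute constant).

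For $A$, the key step is to condition on $\Gamma_0$. Let $d:=\lvert\mathcal N_{\Gamma_0}(i;J\cap J')\rvert\sim\operatorname{Bin}\!\bigl(\lvert J\cap J'\rvert,\tfrac{p}{1-\alpha}\bigr)$. Given $\Gamma_0$, each of the $d$ surviving edges independently contributes $\pm1$ with probability $\alpha(1-\alpha)$ and $0$ otherwise, so $A$ is a sum of $d$ independent mean-zero terms in $[-1,1]$ with total variance $2d\alpha(1-\alpha)$. A Chernoff bound (Theorem~\ref{thm:binomial tail}) gives $d\le 2p\lvert J\cap J'\rvert/(1-\alpha)$ except on an event of probability at most $2\exp\!\bigl(-\tfrac{p\lvert J\cap J'\rvert}{3(1-\alpha)}\bigr)$, so $d\alpha(1-\alpha)\le 2p\alpha\lvert J\cap J'\rvert$; Bernstein then yields $|A|\le C_0\bigl(\sqrt{tp\alpha\lvert J\cap J'\rvert}+t\bigr)$ with conditional probability at least $1-2e^{-t}$.

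Finally, a union bound over the three Bernstein applications (each costing $2e^{-t}$) and the Chernoff bound on $d$ (costing $2\exp(-p\lvert J\cap J'\rvert/(3(1-\alpha)))$) gives a total failure probability of at most $6e^{-t}+2\exp(-p\lvert J\cap J'\rvert/(3(1-\alpha)))$, exactly as in the statement; combining the three tail bounds produces $|\Xi|\le 4\bigl(t+\sqrt{t\alpha p\lvert J\cap J'\rvert}+\sqrt{tp\lvert J\triangle J'\rvert}\bigr)$ after absorbing constants. The only subtle point, and really the crux of the argument, is the conditioning trick for $A$: without first removing the random fluctuation of $d$ via the $\Gamma_0$-level Chernoff bound, the variance proxy $2d\alpha(1-\alpha)$ would be a random variable and Bernstein would not directly yield the clean $\sqrt{t\alpha p\lvert J\cap J'\rvert}$ rate. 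All remaining arithmetic (absorbing constants, applying $\sqrt{a}+\sqrt{b}\le\sqrt{2(a+b)}$) is routine.
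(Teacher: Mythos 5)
Your proof is correct, and the decomposition $\Xi = A + B - C$ by intersection versus symmetric difference, with conditional Bernstein for $A$ after a Chernoff bound on $d=|\mathcal N_{\Gamma_0}(i;J\cap J')|$, and plain Bernstein for the two independent sums $B,C$, is exactly the natural route — indeed, the $\sqrt{t\alpha p|J\cap J'|}$ and $\sqrt{tp|J\triangle J'|}$ terms in the claimed bound essentially force this split, since the first has variance driven by the edge-disagreement probability $2\alpha(1-\alpha)$ on edges present in $\Gamma_0$ and the second by the marginal Bernoulli$(p)$ variance. The constants check out: Bernstein (Lemma~\ref{lem:bernstein's inequality}) with range $2$ and conditional variance $\le 4\alpha p|J\cap J'|$ yields $|A|\le \tfrac{4t}{3}+2\sqrt{2}\sqrt{t\alpha p|J\cap J'|}$, while for $B,C$ with range $1$ and $\sqrt{a}+\sqrt{b}\le\sqrt{2(a+b)}$ one gets $|B|+|C|\le \tfrac{4t}{3}+2\sqrt{tp|J\triangle J'|}$; summing gives a constant under $4$ in front of each of $t$, $\sqrt{t\alpha p|J\cap J'|}$, $\sqrt{tp|J\triangle J'|}$, and the failure budget $2e^{-t}$ per Bernstein application plus the binomial tail gives $6e^{-t}+2\exp(-p|J\cap J'|/(3(1-\alpha)))$ as required. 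Note one contextual point: this paper only \emph{cites} the lemma (as Lemma~13 of \citep{MRT21b}) in its technical-tools appendix and does not reprove it, so there is no in-paper proof to compare against; your argument is, however, the standard one and I would expect it to coincide in substance with the proof in the cited reference.
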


\begin{lemma}[Lemma 5.5 in \citep{MRT21a}]\label{lem:sparsification lem}
     Given a constant $S>0$ and an even integer $k \in \mathbb{N}$, let $\Omega$ and $\Omega'$ be two finite sets that can be expressed as $\Omega=\bigcup_{i=1}^{k} \Omega_{i}$ and $\Omega'=\bigcup_{i=1}^{k} \Omega_{i}'$ where $\left|\Omega'_{i}\right| \leq S / k$ for all $i \in [k]$. Additionally, let $w \in \{2,3,\ldots,k/2\}$, and $I$ be a random subset uniformly drawn  from $[k]$ with a cardinality of $2w$. Then, for any given values of $L \geq 1$ and $\rho \in (0,1/4)$, where $\rho w$ is an integer, we have that
    $$  
		\mathbb{P}\left\{\mid\left\{i \in I: \exists j \in I \backslash\{i\} \text { s.t. }\left|\Omega_{i} \cap \Omega'_{j}\right| \geq L S / k^{2}\right\} \mid \geq 2 \rho w\right\} \leq\left(\frac{8 w^{3}}{L}\right)^{\rho w} .
		$$
	\end{lemma}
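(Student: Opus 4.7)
The plan is to translate the intersection-size threshold into a bound on the number of ``bad'' pairs in $[k]^2$, extract from the event a combinatorial certificate inside $I$, and then apply a union bound.

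First, I would bound the total number of bad pairs. Define $B := \{(i,j) \in [k]^2 : i \neq j,\; |\Omega_i \cap \Omega'_j| \geq LS/k^2\}$ and let $B_u \subset \binom{[k]}{2}$ be the corresponding collection of unordered pairs. Since the $\Omega_i$'s form a disjoint decomposition of $\Omega$ and $|\Omega'_j| \leq S/k$, for every fixed $j$ one has $\sum_{i=1}^k |\Omega_i \cap \Omega'_j| \leq |\Omega'_j| \leq S/k$. Hence at most $(S/k)/(LS/k^2) = k/L$ indices $i$ can satisfy the threshold, and summing over $j$ yields $|B_u| \leq |B| \leq k^2/L$.

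Next I would reduce the target event to a bad-pair count inside $I$. Denote it by $E := \{|I_{\text{bad}}| \geq 2\rho w\}$, where $I_{\text{bad}} := \{i \in I : \exists\, j \in I \setminus \{i\},\, (i,j) \in B\}$. Under $E$, assign to each $i \in I_{\text{bad}}$ a witness $f(i) \in I \setminus \{i\}$ with $(i,f(i)) \in B$. An unordered pair $\{a,b\} \subset I$ can occur as $\{i, f(i)\}$ only when $i \in \{a,b\}$, i.e.\ for at most two values of $i$, so the multiset $\{\{i,f(i)\} : i \in I_{\text{bad}}\}$ contains at least $|I_{\text{bad}}|/2 \geq \rho w$ distinct elements, all lying in $\binom{I}{2} \cap B_u$. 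Therefore $E$ implies $Y \geq \rho w$, where $Y := |\{e \in B_u : e \subset I\}|$.

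Finally, I would control $\P(Y \geq \rho w)$ by a union bound over which $\rho w$-subset $F$ of $B_u$ sits inside $I$,
\[
\P(E) \;\leq\; \sum_{F \subset B_u,\, |F| = \rho w} \P\bigl(V(F) \subset I\bigr),
\]
combined with the hypergeometric estimate $\P(V(F) \subset I) = \binom{k-r}{2w-r}/\binom{k}{2w} \leq (2w/k)^r$ for $r := |V(F)|$. Stratifying the sum by $r$ and counting $\rho w$-edge subgraphs of $B_u$ with $r$ vertices (using $|B_u| \leq k^2/L$ and standard binomial inequalities $\binom{n}{s} \leq (en/s)^s$) should yield the claimed tail bound. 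The main obstacle is precisely this last bookkeeping step: the count of configurations with $r$ vertices grows in $r$ while $(2w/k)^r$ decays in $r$, so one must balance these two effects across the admissible range $r \in [\lceil \sqrt{2\rho w}\, \rceil, 2\rho w]$. A cleaner alternative is a factorial-moment calculation of $\mathbb{E}[\binom{Y}{\rho w}]$, or equivalently a Chernoff-type bound exploiting the negative association of the indicators $\{\mathbf{1}\{i \in I\}\}_{i \in [k]}$ arising from sampling without replacement.
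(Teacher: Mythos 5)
The paper does not prove Lemma \ref{lem:sparsification lem}; it quotes it from \citet{MRT21a} without argument, so there is no in-paper proof to compare against. Assessing your blind attempt on its own, the gap is deeper than the bookkeeping you flag at the end: once you symmetrise to $B_u$ and retain only $|B_u|\leq k^2/L$, the quantity $\P(Y\geq\rho w)$ that you propose to bound is not in general bounded by $(8w^3/L)^{\rho w}$, so no completion of the union bound can succeed. The hypothesis as stated controls only the in-degree of the directed relation $B$: for each $j$, at most $k/L$ indices $i$ satisfy $(i,j)\in B$, using disjointness of the $\Omega_i$ and $|\Omega'_j|\leq S/k$. Nothing bounds the out-degree $|\{j:(i,j)\in B\}|$, because no bound on $|\Omega_i|$ is assumed. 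Concretely, take $\Omega_{v_0}=\Omega'$, $\Omega_v=\emptyset$ for $v\neq v_0$, the $\Omega'_j$ disjoint of size exactly $S/k$, and $L=k$. Then $B=\{(v_0,j):j\neq v_0\}$, so $B_u$ is a star with $k-1\leq k^2/L$ edges, and $\P(Y\geq\rho w)=\P(v_0\in I)=2w/k$ (once $v_0\in I$, already $2w-1\geq\rho w$ edges of $B_u$ lie inside $I$). For $w=9$, $\rho w=2$, $k=2\times10^6$ this equals $9\times10^{-6}$, exceeding $(8w^3/L)^{\rho w}\approx 8.5\times10^{-6}$; meanwhile the true event has probability zero here, because $I_{\mathrm{bad}}\subseteq\{v_0\}$ (only $v_0$ can play the $\Omega$-role in a bad pair). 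The asymmetry you discarded by passing to $B_u$ is exactly what the lemma uses.

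To repair this, either keep the directed structure (for instance, enumerate witness indices $j$ first, each of which admits at most $k/L$ bad partners $i$), or also invoke $|\Omega_i|\leq S/k$, which is precisely what the paper's application supplies: condition (C2) in Lemma \ref{lem:sparsification of correct} bounds both $|T^{\ell}_{\bss^\ell}(i)|$ and $|T'^{\ell}_{\bss^\ell}(i)|$, and the leaf cells there are disjoint. This is almost certainly how \citet{MRT21a} actually state the lemma. With both size bounds every vertex of $B_u$ has degree at most $2k/L$, the star disappears, and a sequential edge exposure over the $\rho w$ edges of $F$ becomes workable: the $j$-th edge contributes at most $4w^2/L$, $8(j-1)w/L$, or $\min\{2(j-1)^2,\,2(j-1)k/L\}$ to $\sum_{F}\P(V(F)\subseteq I)$ according as it introduces $2$, $1$, or $0$ new vertices. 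Even then the small-$|V(F)|$ stratum is the delicate one, and it is exactly where the per-vertex degree bound, not merely $|B_u|\leq k^2/L$, is what closes the estimate; so the bookkeeping you were dreading is real, but it is unsolvable at all without first fixing the hypothesis or the reduction.
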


	\begin{lemma}[Hoeffding's inequality with truncation (Lemma 5.10 in \citep{MRT21a})]\label{lem:hoeffding's with truncation}
        Let $X_{1}, \ldots, X_{N}$ be independent random variables such that $\left|\mathbb{E}\left[X_{i}\right]\right| \leq \tau$ for $\tau>0$, and that
        $$
        \mathbb{P}\left\{\left|X_{i}-\mathbb{E}\left[X_{i}\right]\right| \geq t\right\} \leq 2 \exp \left(\frac{-c t^{2}}{1+t}\right), \quad \forall t>0,
        $$
        for a constant $c>0$ for each $i \in[N]$. Then there exists a constant $C>0$ depending only on $c$ such that, for any $\epsilon \in(0,0.1)$,
        $$
        \mathbb{P}\left\{\left|\sum_{i=1}^{N}\left(X_{i}^{2}-\mathbb{E}\left[X_{i}^{2}\right]\right)\right| \geq C \log (N / \epsilon) \sqrt{N \log (1 / \epsilon)}+C \tau(\sqrt{N \log (1 / \epsilon)}+\log (1 / \epsilon))\right\} \leq \epsilon .
        $$
	\end{lemma}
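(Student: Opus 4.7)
The plan is to decompose $X_i^2 - \mathbb{E}[X_i^2]$ into a linear part and a quadratic part in the centered variable $Y_i := X_i - \mathbb{E}[X_i]$, and then apply Bernstein's inequality (Lemma~\ref{lem:bernstein's inequality}) to each part separately. The hypothesized tail $\mathbb{P}\{|Y_i|\geq t\}\leq 2\exp(-ct^2/(1+t))$ says that $Y_i$ is centered and sub-exponential with sub-exponential norm $O(1)$, and the algebraic identity
\[
X_i^2-\mathbb{E}[X_i^2] \;=\; 2\,\mathbb{E}[X_i]\,Y_i \;+\; \bigl(Y_i^2 - \mathbb{E}[Y_i^2]\bigr)
\]
splits the sum of interest into a linear piece $A:=\sum_i 2\,\mathbb{E}[X_i]\, Y_i$ and a quadratic piece $B:=\sum_i(Y_i^2-\mathbb{E}[Y_i^2])$. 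I would bound $A$ and $B$ separately on events of probability at least $1-\epsilon/2$ each and then combine them via a union bound.

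For the linear piece $A$, the coefficients $a_i:=2\,\mathbb{E}[X_i]$ satisfy $|a_i|\leq 2\tau$, and $\sum_i a_i Y_i$ is a weighted sum of independent centered sub-exponentials. Bernstein's inequality immediately yields $|A|\leq C_1\tau\bigl(\sqrt{N\log(1/\epsilon)}+\log(1/\epsilon)\bigr)$ with probability at least $1-\epsilon/3$, which is exactly the $\tau$-dependent piece of the target bound.

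For the quadratic piece $B$ I would use truncation. Set $M:=C_2\log(N/\epsilon)$ and $\widetilde Y_i:=Y_i\,\mathbf{1}\{|Y_i|\leq M\}$; the tail hypothesis together with a union bound gives $\mathbb{P}\{\exists i:|Y_i|>M\}\leq 2N\exp(-cM/2)\leq \epsilon/3$ once $C_2$ is large enough, so on a good event $\mathcal{E}$ one has $Y_i^2=\widetilde Y_i^2$ for every $i$. The deterministic truncation bias $\sum_i\bigl|\mathbb{E}[Y_i^2]-\mathbb{E}[\widetilde Y_i^2]\bigr|$ is made $O(1)$ by standard tail integration against the sub-exponential bound. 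Then Bernstein applied to $\sum_i(\widetilde Y_i^2-\mathbb{E}[\widetilde Y_i^2])$, with almost-sure bound $|\widetilde Y_i^2|\leq M^2$ and variance $\operatorname{Var}(\widetilde Y_i^2)\leq \mathbb{E}[Y_i^4]=O(1)$, produces $|B|\leq C_3\bigl(\sqrt{N\log(1/\epsilon)}+M^2\log(1/\epsilon)\bigr)$ with probability at least $1-\epsilon/3$. Since $M^2\log(1/\epsilon)=O(\log^2(N/\epsilon)\log(1/\epsilon))$, this is absorbed by the $C\log(N/\epsilon)\sqrt{N\log(1/\epsilon)}$ term in the stated bound.

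The main technical obstacle is ensuring that the Bernstein \emph{variance} term for the quadratic piece does not pick up the truncation level: the naive estimate $\operatorname{Var}(\widetilde Y_i^2)\leq M^4$ would contribute an extra $M^2\sqrt{N}$ factor and destroy the stated rate. The remedy is to exploit that the mixed Gaussian/exponential tail in the hypothesis automatically yields uniform polynomial-moment bounds $\mathbb{E}[Y_i^{2k}]\leq C_k$ (independent of any truncation), in particular $\mathbb{E}[Y_i^4]=O(1)$; this is what preserves the clean $\sqrt{N\log(1/\epsilon)}$ scaling after truncation. With this in hand, the remaining work is bookkeeping of constants and a final union bound over the three failure events.
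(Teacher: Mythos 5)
The paper gives no proof of this lemma; it is cited verbatim from \citet{MRT21a}, so there is no internal argument to compare against. Judged on its own, your plan---center, write $X_i^2 - \mathbb{E}[X_i^2] = 2\,\mathbb{E}[X_i]Y_i + (Y_i^2-\mathbb{E}[Y_i^2])$, control the linear piece by sub-exponential concentration and the quadratic piece by truncation at $M=\Theta(\log(N/\epsilon))$ followed by Bernstein---is the natural route, and you correctly single out $\operatorname{Var}(\widetilde Y_i^2)\le\mathbb{E}[Y_i^4]=O(1)$, uniformly in the truncation level, as the reason truncation preserves the $\sqrt{N\log(1/\epsilon)}$ variance term.

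The gap is in your final absorption step. With range $M^2$, bounded Bernstein gives $|B|\lesssim\sqrt{N\log(1/\epsilon)}+M^2\log(1/\epsilon)$, and you assert the second term sits inside $C\log(N/\epsilon)\sqrt{N\log(1/\epsilon)}$. That requires $\log^2(N/\epsilon)\,\log(1/\epsilon)\lesssim N$, which is \emph{not} among the lemma's hypotheses: $N$ and $\epsilon\in(0,0.1)$ are free, and the inequality fails whenever $N$ is small relative to $\log(1/\epsilon)$. So your argument establishes the lemma only under the extra regime assumption $N\gtrsim\log^2(N/\epsilon)\log(1/\epsilon)$; that assumption happens to hold in every place the paper invokes the lemma ($N=2w\asymp(\log m)^5$ and $\epsilon=m^{-D}$), but it is not what the statement claims. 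To remove it you must exploit that $\widetilde Y_i^2-\mathbb{E}[\widetilde Y_i^2]$ has $\psi_1$-norm only $O(M)$, not merely $\|\cdot\|_\infty\le M^2$, or run a peeling/moment argument for sub-Weibull$(1/2)$ sums; either brings the $M^2$ down to $M$ (or better) in the linear-in-$t$ term. A smaller issue: for the linear piece you cite Lemma~\ref{lem:bernstein's inequality}, which is stated for almost-surely bounded summands, while the $Y_i$ are unbounded---you need a sub-exponential version of Bernstein there (not in the paper's toolbox), or you must truncate $Y_i$ as well and then account for the same kind of range-dependent term.
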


        \begin{lemma}\label{lem:1+n2x}
            Let $n$  be a positive integer and $x$ be a positive real number. If $nx\leq 1$, then $(1+x)^{n} \leq 1+2nx$.
        \end{lemma}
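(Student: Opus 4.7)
The plan is to prove the inequality by bounding the binomial expansion of $(1+x)^n$ directly. Writing
\begin{equation*}
(1+x)^n = 1 + nx + \sum_{k=2}^n \binom{n}{k} x^k,
\end{equation*}
the problem reduces to showing that the tail sum $\sum_{k=2}^n \binom{n}{k} x^k$ is at most $nx$ under the hypothesis $nx \leq 1$.

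To bound the tail, I would first apply the elementary estimate $\binom{n}{k} \leq n^k/k!$, which gives $\binom{n}{k} x^k \leq (nx)^k / k!$. The key observation is that since $nx \leq 1$, we have $(nx)^k = (nx) \cdot (nx)^{k-1} \leq nx$ for every $k \geq 1$. Therefore
\begin{equation*}
\sum_{k=2}^n \binom{n}{k} x^k \;\leq\; \sum_{k=2}^n \frac{(nx)^k}{k!} \;\leq\; nx \sum_{k=2}^\infty \frac{1}{k!} \;=\; nx\,(e - 2) \;<\; nx,
\end{equation*}
and adding the first two terms of the binomial expansion yields $(1+x)^n \leq 1 + 2nx$.

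There is no substantive obstacle: the only numerical input is that $\sum_{k=2}^\infty 1/k! = e - 2 < 1$, which absorbs the factor of $nx$ cleanly. An alternative route would be to combine $(1+x)^n \leq e^{nx}$ with the one-variable estimate $e^y \leq 1 + 2y$ on $[0,1]$ (verifiable by checking $f(y) = 1 + 2y - e^y$ has $f(0)=0$, $f(1) = 3 - e > 0$, and a single interior critical point at $y=\ln 2$ where $f(\ln 2) = 2\ln 2 - 1 > 0$), but the binomial route avoids any calculus and is the shorter presentation.
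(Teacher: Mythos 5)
Your proof is correct. It takes the same general plan as the paper (expand $(1+x)^n$ binomially and bound the tail $\sum_{k\geq 2}\binom{n}{k}x^k$ by $nx$), but the mechanism for the tail bound differs: the paper observes that consecutive terms satisfy $\binom{n}{k+1}x^{k+1}\leq \tfrac{1}{2}\binom{n}{k}x^k$ (because $\tfrac{(n-k)x}{k+1}\leq \tfrac{nx}{2}\leq \tfrac{1}{2}$ for $k\geq 1$), so the whole sum from $k=1$ is dominated by a geometric series $nx\sum_{j\geq 0}2^{-j}=2nx$; you instead use $\binom{n}{k}\leq n^k/k!$ and $(nx)^k\leq nx$ to reduce to the constant $\sum_{k\geq 2}1/k!=e-2<1$. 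Both are fully elementary. The paper's ratio argument is marginally more self-contained (no appeal to $e$), while yours is perhaps the more standard ``textbook'' bound and, as you note, generalizes cleanly via $(1+x)^n\leq e^{nx}$. Either is an acceptable proof of the lemma.
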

	\begin{proof}
  For $k=1,\ldots,n-1$, we have
 \begin{equation}
     {n \choose k+1}x^{k+1}={n \choose k}x^{k} \frac{(n-k)x}{k+1} \leq \frac{1}{2}{n \choose k}x^{k}
 \end{equation}
	since $nx\leq 1$.
 Thus, we get 
 \begin{equation}
     (1+x)^{n}=1+\sum^{n}_{k=1} {n \choose k}x^{k} \leq 1+ 2nx.
 \end{equation}
\end{proof}

	\section{Refinement Matching Algorithm }\label{app:sec:exact}

In this section, we summarize the refinement matching algorithm from \citet{MRT21a}.

\begin{algorithm}[thb]
    \caption{RefinementMatching \citep{MRT21a}}\label{alg:algorithm4}
    \begin{algorithmic}
        \STATE {\bfseries Input:} \text{two graphs $\Gamma$ and $\Gamma'$ on $[n]$, a permutation $\tilde{\pi}:[n] \rightarrow [n]$, and a parameter $\epsilon>0$} 
        \STATE {\bfseries Output:} \text{a permutation $\hat{\pi} : [n] \rightarrow [n]$}
        \STATE \text{$\pi_{0}\leftarrow \tilde{\pi}$}
        \FOR{$t=1,\ldots,\left\lceil \log_{2}n \right\rceil$}
        \FOR{$i=1,\ldots,n$}
        \IF{there is a vertex $i'\in[n]$ such that}
        \STATE $\bullet$ $|\pi^{-1}_{t-1}\left(\caN_{\Gamma}(i) \right)\cap \caN_{\Gamma'}(j')| \geq \epsilon^{2}pn/512$ for all $j'\in[n] \backslash \{i'\}$
        \STATE $\bullet$  $|\pi^{-1}_{t-1}\left(\caN_{\Gamma}(i) \right)\cap \caN_{\Gamma'}(j')| \geq \epsilon^{2}pn/512$ for all $j'\in[n] \backslash \{i'\}$
        \STATE $\bullet$  $|\pi^{-1}_{t-1}\left(\caN_{\Gamma}(j) \right) \cap \caN_{\Gamma'}(i')| \geq \epsilon^{2}pn/512$ for all $j\in [n]\backslash \{i\}$
        \ENDIF
        \ENDFOR
        \STATE extend $\pi_{t}$ to a permutation on $[n]$ in an arbitrary manner
        \ENDFOR
        \STATE $\hat{\pi}$ $\leftarrow \pi_{\left\lceil \log_{2} n \right\rceil}$
        \STATE \textbf{return} $\hat{\pi}$
    \end{algorithmic}
\end{algorithm}



\end{document}